\newtheorem{theorem}{Theorem}
\newtheorem{lemma}{Lemma}
\newtheorem{claim}{Claim}
\newtheorem{definition}{Definition}
\newenvironment{proof}{\noindent \emph{Proof.}\ }{\hfill
    $\Box$\vspace{1em}}
\def\claim{$$\vcenter\bgroup\advance\hsize by -8em\noindent
\refstepcounter{claim}\ignorespaces\it}	    
\def\endclaim{\rm\egroup\leqno(\theclaim)$$\global\@ignoretrue}
\newenvironment{proofclaim}[1][]%
	{\noindent {}{#1}{}}{ This proves claim~(\arabic{claim}).\hfill
    $\Diamond$\vspace{1em}}
  \newcommand{\md}{\mathcal D} \newcommand{\mv}{\mathcal V}
  \newcommand{\ms}{\mathcal S_\mv} \newcommand{\mr}{\mathbb R}
\newcommand{\mz}{\mathbb Z}
  \title{Toroidal maps : Schnyder woods, orthogonal surfaces and
    straight-line representations\thanks{This work was partially
      supported by the ANR grant GRATOS ANR-09-JCJC-0041 and ANR grant
      EGOS ANR-12-JCJC-xxxx.}}
  \author{Daniel Gon\c{c}alves, Benjamin L\'ev\^eque\\ \ \\
    LIRMM, CNRS, Universit\'e Montpellier 2\\
    161 rue Ada - 34095 Montpellier Cedex 5 France}
\begin{document}
\maketitle

\begin{abstract}
  A Schnyder wood is an orientation and coloring of the edges of a
  planar map satisfying a simple local property.  We propose a
  generalization of Schnyder woods to graphs embedded on the torus
  with application to graph drawing.  We prove several properties on
  this new object. Among all we prove that a graph embedded on the
  torus admits such a Schnyder wood if and only if it is an
  essentially 3-connected toroidal map. We show that these Schnyder
  woods can be used to embed the universal cover of an essentially
  3-connected toroidal map on an infinite and periodic orthogonal
  surface. Finally we use this embedding to obtain a straight-line
  flat torus representation of any toroidal map in a polynomial size
  grid.
\end{abstract}

\section{Introduction}

A closed curve on a surface is \emph{contractible} if it can be
continuously transformed into a single point.  Given a graph embedded
on the torus, a \emph{contractible loop} is an edge forming a
contractible cycle. Two \emph{homotopic multiple edges} are two edges
with the same extremities such that their union forms a contractible
cycle.  In this paper, we will almost always consider graphs embedded
on the torus with no contractible loop and no homotopic multiple
edges. We call these graphs \emph{toroidal graphs} for short and keep
the distinction with \emph{graph embedded on the torus} that may have
contractible loops or homotopic multiple edges.  A \emph{map} on a
surface is a graph embedded on this surface where every face is
homeomorphic to an open disk. A \emph{map embedded on the torus} is a
graph embedded on the torus that is a map (it may contains
contractible loops or homotopic multiple edges). A \emph{toroidal map}
is a toroidal graph that is a map (it has no contractible loop and no
homotopic multiple edges).  A \emph{toroidal triangulation} is a
toroidal map where every face has size three.  A general graph
(i.e. not embedded on a surface) is \emph{simple} if it contains no
loop and no multiple edges. Since some loops and multiple edges are
allowed in toroidal graphs, the class of toroidal graphs is larger
than the class of simple toroidal graphs.

The torus is represented by a parallelogram in the plane whose
opposite sides are pairwise identified.  This representation is called
the \emph{flat torus}. The \emph{universal cover} $G^\infty$ of a
graph $G$ embedded on the torus is the infinite planar graph obtained
by replicating a flat torus representation of $G$ to tile the plane
(the tiling is obtained by translating the flat torus along two
vectors corresponding to the sides of the parallelogram).  Note that a
graph $G$ embedded on the torus has no contractible loop and no
homotopic multiple edges if and only if $G^\infty$ is simple.

Given a general graph $G$, let $n$ be the number of vertices and $m$
the number of edges. Given a graph embedded on a surface, let $f$ be the
number of faces.  Euler's formula says that any  map on
 a surface of genus $g$ satisfies $n-m+f=2-2g$. Where the
plane is the surface of genus $0$, and  the torus the surface of
genus $1$.

Schnyder woods where originally defined for planar triangulations by
Schnyder~\cite{Sch89}:

\begin{definition}[Schnyder wood, Schnyder property]
\label{def:schnyder}
Given a planar triangulation $G$, a \emph{Schnyder wood} is an
orientation and coloring of the edges of $G$ with the colors $0$, $1$,
$2$ where each inner vertex $v$ satisfies the  \emph{Schnyder
  property}, (see Figure~\ref{fig:LSP} where each color is represented
by a different type of arrow):

\begin{itemize}
\item Vertex $v$ has out-degree one in each color.
\item The edges $e_0(v)$, $e_1(v)$, $e_2(v)$ leaving $v$ in colors
  $0$, $1$, $2$, respectively, occur in counterclockwise order.
\item Each edge entering $v$ in color $i$ enters $v$ in the
  counterclockwise sector from $e_{i+1}(v)$ to $e_{i-1}(v)$ (where
  $i+1$ and $i-1$ are understood modulo $3$).
\end{itemize}
\end{definition}

\begin{figure}[!h]
\center
\includegraphics[scale=0.5]{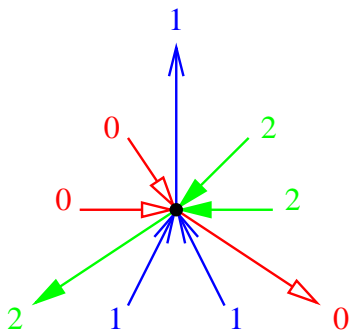}
\caption{Schnyder property}
\label{fig:LSP}
\end{figure}

For higher genus triangulated surfaces, a generalization of Schnyder
wood has been proposed by Castelli Aleardi et al.~\cite{CFL09}, with
applications to encoding.  Unfortunately, in this definition, the
simplicity and the symmetry of the original Schnyder wood are
lost. Here we propose an alternative generalization of Schnyder woods
for toroidal graphs, with application to graph drawings.

By Euler's formula, a planar triangulation satisfies $m=3n-6$. Thus
there is not enough edges in the graph for all vertices to be of
out-degree three. This explain why just some vertices (inner ones) are
required to verify the Schnyder property.  For a toroidal
triangulation, Euler's formula gives exactly $m=3n$ so there is hope
for a nice object satisfying the Schnyder property for every
vertex. This paper shows that such an object exists. Here we do not
restrict ourselves to triangulations and we directly define Schnyder
woods in a more general framework.

Felsner~\cite{Fel01, Fel03} (see also \cite{Mil02}) has generalized
Schnyder woods to 3-connected planar maps by allowing edges to be
oriented in one direction or in two opposite directions.  We also
allow edges to be oriented in two directions in our definition:

\begin{definition}[Toroidal Schnyder wood]
  Given a toroidal graph $G$, a \emph{(toroidal) Schnyder wood} of $G$
  is an orientation and coloring of the edges of $G$ with the colors
  $0$, $1$, $2$, where every edge $e$ is oriented in one direction or
  in two opposite directions (each direction having a distinct color),
  satisfying the following (see example of
  Figure~\ref{fig:example-schnyder}):

\begin{itemize}
\item[(T1)] Every vertex $v$ satisfies the Schnyder property (see Definition~\ref{def:schnyder})
\item[(T2)] Every monochromatic cycle of color $i$ intersects at least one
  monochromatic cycle of color $i-1$ and at least one monochromatic
  cycle of color $i+1$.
\end{itemize}
\end{definition}

\begin{figure}[h!]
\center
\includegraphics[scale=0.5]{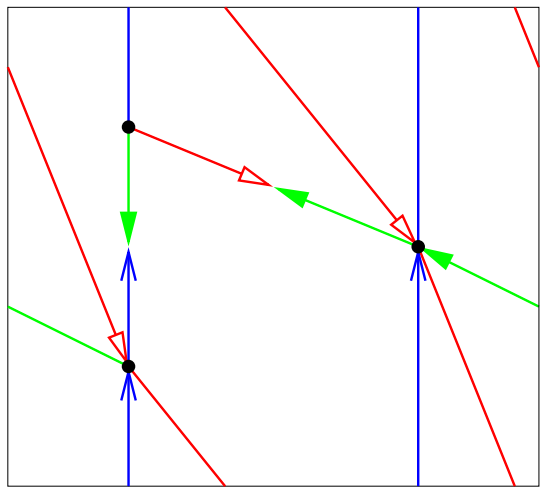}
\caption{Example of a Schnyder wood of a toroidal graph.}
\label{fig:example-schnyder}
\end{figure}

In the case of toroidal triangulations, $m=3n$ implies that there are
too many edges to have bi-oriented edges. Thus, we can use this
general definition of Schnyder wood for toroidal graphs and keep in mind
that when restricted to toroidal triangulations all edges are oriented
in one direction only.

Extending the notion of essentially 2-connectedness \cite{MR98}, we
say that a toroidal graph $G$ is \emph{essentially k-connected} if its
universal cover is k-connected. Note that an essentially 1-connected
toroidal graph is a toroidal map.  We prove that essentially
3-connected toroidal maps are characterized by existence of Schnyder
woods.

\begin{theorem}
\label{th:existence}
A toroidal graph admits a Schnyder wood if and only if it is an
essentially 3-connected toroidal map.
\end{theorem}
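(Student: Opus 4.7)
The plan is to prove the two implications separately, using the universal cover $G^\infty$ as the main bridge.

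\textbf{Necessity.} Suppose $G$ admits a Schnyder wood. To see that $G$ must be a toroidal map, I would first use the Schnyder property together with Euler's formula $n - m + f = 0$ to control face sizes, and rule out contractible loops or homotopic multiple edges via the local out-degree structure. The key step is then to lift the Schnyder wood to the universal cover $G^\infty$: since (T1) is a local condition, every vertex of $G^\infty$ inherits out-degree one in each color, with outgoing edges in counterclockwise order and incoming edges in the prescribed sectors. To show $G^\infty$ is 3-connected, I would suppose for contradiction that $\{a,b\}$ is a 2-cut, and then follow monochromatic outgoing edges from any vertex $u$. In $G$ these walks either go forever or enter a monochromatic cycle; condition (T2) ensures that such cycles exist in all three colors and intersect cyclically, which forces them to be non-contractible and of distinct homology classes. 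Consequently, their lifts to $G^\infty$ are bi-infinite paths, so every vertex of $G^\infty$ has three infinite monochromatic outgoing paths in the prescribed cyclic order. The counterclockwise arrangement then prevents all three from being trapped by any two vertices, contradicting the existence of a 2-cut.

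\textbf{Sufficiency.} Let $G$ be an essentially 3-connected toroidal map. I would first augment $G$ to a toroidal triangulation $T$ by triangulating each face with new edges, checking that this can be done while preserving essential 3-connectedness and without creating contractible loops or homotopic multiple edges. For a toroidal triangulation, Euler's formula gives $m = 3n$, so one aims for an orientation where every vertex has out-degree exactly three. Existence of such a 3-orientation can be proved via a network-flow or Hall-type argument, with essential 3-connectedness used to exclude deficient vertex sets. Given such an orientation, the three outgoing edges at each vertex are colored $0,1,2$ in counterclockwise order in a way that is globally consistent around each triangular face, yielding (T1). The Schnyder wood on $T$ is then transferred to $G$ by removing the added edges (or replacing them by bi-oriented configurations) while carefully preserving the properties at each endpoint.

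\textbf{Main obstacle.} The hard part is condition (T2). In the planar case monochromatic subgraphs of a Schnyder wood are trees rooted at the outer face, whereas on the torus they can contain non-contractible monochromatic cycles, and nothing in a 3-orientation alone guarantees that cycles of different colors actually cross. Without (T2), monochromatic walks in $G^\infty$ could close up into small loops rather than lift to bi-infinite paths, which would break the connectivity argument in the necessity direction. I expect the sufficiency proof to rely on choosing a distinguished 3-orientation (for instance one that is extremal in a natural lattice of $\alpha$-orientations, analogous to Felsner's planar construction) and then showing that the homology classes of its monochromatic cycles are forced to be pairwise non-parallel, which is precisely what (T2) requires.
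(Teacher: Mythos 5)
Your necessity argument follows essentially the same route as the paper's proof of Lemma~\ref{lem:essentially}: lift the Schnyder wood to the universal cover, use (T2) to force the monochromatic cycles to be non-contractible and to cross so that their lifts form a grid of bi-infinite lines, and then route between any two vertices using the three pairwise-disjoint paths $P_0(v),P_1(v),P_2(v)$ to defeat any $2$-cut. That direction is sound and matches the paper.

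The sufficiency direction, however, has a genuine gap, and the paper itself exhibits counterexamples to the intermediate claims you rely on. First, it is not true that an arbitrary $3$-orientation of a toroidal triangulation can be colored ``in a globally consistent way'' to satisfy (T1): Figure~\ref{fig:outdegree} shows a one-vertex toroidal triangulation with a $3$-orientation (three consecutive outgoing edges) that admits no Schnyder coloring at all, so a network-flow or Hall-type argument producing some $3$-orientation does not suffice. Second, even an orientation and coloring satisfying (T1) can fail (T2) (Figure~\ref{fig:nocross}), and your proposed remedy --- choosing an extremal orientation in an $\alpha$-orientation lattice and arguing that extremality forces the monochromatic cycles into distinct homology classes --- is exactly the missing step: the paper explicitly records as open whether toroidal Schnyder woods carry such a lattice structure, and no mechanism is given by which extremality would control homology. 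Third, transferring a Schnyder wood from the triangulation $T$ back to $G$ by deleting the added edges destroys the out-degree-one-per-color condition at the endpoints of the deleted edges, so (T1) is not preserved by this reduction. The paper's actual sufficiency proof proceeds by an entirely different mechanism: it contracts edges while preserving essential $3$-connectivity (Lemmas~\ref{lem:contractibleedges} and~\ref{lem:contractionbrique}) down to one of two explicit base maps (the $3$-loops or the brick, with basic maps treated separately via Lemma~\ref{lem:basic}), and then decontracts step by step through the long case analysis of Lemma~\ref{lem:contractype1}, keeping the global condition (T2) under control via the relaxed conditions (T1'), (T2'), (T3') of Lemma~\ref{lem:tri-relax}. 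You would need either to supply a proof that some canonical $3$-orientation yields crossing monochromatic cycles, or to adopt a contraction/decontraction scheme of this kind.
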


In our definition of Schnyder woods, two properties are required : a
local one (T1) and a global one (T2). This second property is important
to use Schnyder woods to embed toroidal graphs on orthogonal surfaces
like it has been done in the plane by Miller~\cite{Mil02} (see also
\cite{Fel03}).

\begin{theorem}
\label{th:triortho}
The universal cover of an essentially 3-connected toroidal map admits
a geodesic embedding on an infinite and periodic orthogonal surface.
\end{theorem}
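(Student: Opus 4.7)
The plan is to use the Schnyder wood of $G$ provided by Theorem~\ref{th:existence}, lift it to the universal cover $G^\infty$, and then mimic the orthogonal-surface construction of Felsner~\cite{Fel03} and Miller~\cite{Mil02} for $3$-connected planar maps, adapted so that the output is periodic under the two fundamental translations of the flat torus.

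First I would lift the Schnyder wood of the essentially $3$-connected toroidal map $G$ to a coloring and orientation of $G^\infty$. The local Schnyder property~(T1) is preserved under covers, so every vertex $v$ of $G^\infty$ has three outgoing monochromatic paths $P_0(v), P_1(v), P_2(v)$. A key preliminary is to use~(T2) to show that these lifted paths are infinite and pairwise disjoint, so that together they cut the plane into three unbounded regions $R_0(v), R_1(v), R_2(v)$. Condition~(T2) should translate, on the cover, into the statement that the three color classes of monochromatic cycles have pairwise non-parallel homology vectors in $H_1 = \mathbb{Z}^2$; otherwise one could exhibit a bounded region enclosed by a single color or by two parallel lifts, contradicting the Schnyder property at an interior vertex.

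Next I would define integer coordinates $\phi(v) = (v_0, v_1, v_2) \in \mathbb{Z}^3$ for every vertex $v \in V(G^\infty)$. The planar recipe of counting inner faces inside each region $R_i(v)$ no longer converges, so instead I would fix a base vertex $v^*$ and define $\phi(v)-\phi(v^*)$ as the integral along any path from $v^*$ to $v$ of a discrete $1$-form recording the local face-count change dictated by the Schnyder wood when crossing an edge. Well-definedness reduces to checking that the sum of contributions around every face of $G^\infty$ vanishes, which is a short case analysis based on~(T1). The invariance of the Schnyder wood under the two fundamental translations $T_1, T_2$ then yields vectors $\tau_1, \tau_2 \in \mathbb{Z}^3$ with $\phi(T_j v) = \phi(v)+\tau_j$, and the homological content of~(T2) forces $\tau_1, \tau_2$ to be linearly independent and non-parallel to any coordinate axis. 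Once $\phi$ is in place, I would extend it to edges and faces exactly as in the planar theory (axis-parallel segments or L-shapes for edges, staircase faces on the orthogonal surface $\{x\in\mathbb{R}^3:\max_i(x_i-v_i)=0 \text{ for some vertex } v\}$), and then invoke Felsner's local arguments to conclude that the resulting complex is a geodesic embedding of $G^\infty$. The equivariance $\phi\circ T_j=\phi+\tau_j$ makes the whole orthogonal surface periodic under the $\mathbb{Z}^2$-action spanned by $\tau_1,\tau_2$.

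The main obstacle is the coordinate step: in the plane, the three outer vertices anchor the three region counts, but on the cover one must replace this by a \emph{periodic} global recipe whose periods are generic enough. This is where~(T2) is used in full force: it is exactly the hypothesis that the three color classes provide monochromatic cycles with transverse homology, which plays the role of the three outer vertices of the planar theory and guarantees both that the $1$-form is closed and that $\tau_1,\tau_2$ are generic enough for the resulting orthogonal surface to be a true geodesic embedding of $G^\infty$.
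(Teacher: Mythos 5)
Your high-level plan coincides with the paper's: lift the Schnyder wood to $G^\infty$, assign integer coordinates to vertices, prove equivariance under the two fundamental translations, and verify the geodesic-embedding axioms. But the step that carries all the difficulty --- actually defining the coordinates --- is not done, and the route you sketch for it does not work. There is no \emph{local} discrete $1$-form whose face-by-face closedness certifies the region counts: for two adjacent vertices $u,v$ the quantity $v_i-u_i$ is the (signed) number of faces swept between the paths $P_{i-1}(u)\cup P_{i+1}(u)$ and $P_{i-1}(v)\cup P_{i+1}(v)$, which can be of order $\Omega(nf)$ (see Figure~\ref{fig:examplelongue} and Lemma~\ref{lem:edgesbounded}) and is not determined by data near the edge; so a ``short case analysis based on (T1)'' cannot establish well-definedness of any $1$-form reproducing these increments. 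Moreover, the paper's coordinates are not just face-count differences: they carry an extra term $N\bigl(f_{i+1}(L_{i+1}(v))-f_{i-1}(L_{i-1}(v))\bigr)$ with $N\geq n$, and this amplification is exactly what makes the order-compatibility of Lemma~\ref{lem:regionorder} hold (via the bound of Lemma~\ref{lem:regionslongue}), which in turn is what gives pairwise incomparability of the points and properties (D2)--(D4). Nothing in your sketch supplies this, and ``invoking Felsner's local arguments'' does not close the gap because those arguments rest precisely on the order lemma that must be re-proved in the toroidal setting.

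A second, independent error: your reading of (T2) as ``the three colour classes have pairwise non-parallel homology'' is false. By Theorem~\ref{lem:type} there are Type~2 Schnyder woods in which $\mathcal C_{i-1}=(\mathcal C_{i+1})^{-1}$, so two of the three classes are homotopic (indeed reversal of one another); and by Lemma~\ref{lem:basic} the basic toroidal maps --- which are essentially 3-connected and hence must be covered by the theorem --- admit \emph{only} Type~2 Schnyder woods. Consequently the deductions you base on transverse homology (the three regions $R_i(v)$, the linear independence of $\tau_1,\tau_2$, the intersection pattern of lifted lines) all require the separate Type~2 analysis that the paper carries out in Lemmas~\ref{lem:linesintersection}, \ref{lem:sum} and~\ref{lem:regionorder}; they do not follow from your stated hypothesis.
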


A \emph{straight-line flat torus representation} of a toroidal map
$G$ is the restriction to a flat torus of a periodic straight-line
representation of $G^{\infty}$. 
% Note that if a graph admits a
%straight-line flat torus representation then it has no contractible
%loop and no homotopic multiple edges.  
The problem to find a straight
line flat torus representation of a toroidal map was previously solved
on exponential size grids~\cite{Moh96}. There are several works to
represent a toroidal map inside a parallelogram in a polynomial size
grid~\cite{CEG11,DGK11}, but in these representations the opposite
sides of the parallelogram do not perfectly match.  In the embeddings
obtained by Theorem~\ref{th:triortho}, vertices are not coplanar but
we prove that for toroidal triangulations one can project the vertices
on a plane to obtain a periodic straight-line representation of
$G^{\infty}$. This gives the first straight-line flat torus
representation of any toroidal map in a polynomial size grid.  

% For
% planar maps, straight-line representations in polynomial size grids
% also have the property that edges have polynomial lengths.  For
% toroidal maps, this is not necessarily the case as an edge can turn
% several times around the flat representation.  We also show that in
% our representations the edges have polynomial lengths. Thus we obtain
% the following theorem:

\begin{theorem}
\label{cor:straightline}
A toroidal graph admits a straight-line flat torus representation in a
polynomial size grid.
\end{theorem}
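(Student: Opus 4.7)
The plan is to reduce the general case to that of an essentially 3-connected toroidal triangulation, apply Theorems \ref{th:existence} and \ref{th:triortho} to lift the universal cover onto an orthogonal surface, and then project that surface onto a plane to obtain a straight-line drawing, as announced in the paragraph preceding the statement.

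First I would augment the given toroidal graph $G$ to an essentially 3-connected toroidal triangulation $G'$. This amounts to filling in non-disk faces and triangulating them while avoiding the creation of contractible loops, homotopic multiple edges, or small separators in the universal cover. Inserting a new vertex in each non-triangular face and joining it to the face boundary, possibly together with stellations to enforce essential 3-connectivity, should suffice. Theorem \ref{th:existence} then yields a toroidal Schnyder wood on $G'$, and Theorem \ref{th:triortho} places $(G')^\infty$ on an infinite, doubly periodic orthogonal surface $\mathcal S$ in $\mathbb R^3$ with vertex coordinates in $\mathbb Z^3$.

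The heart of the argument is to project this embedding onto the plane $x+y+z=0$ along the direction $(1,1,1)$, as in Miller's and Felsner's planar construction. One would show that each triangular face of $(G')^\infty$ lies on a non-degenerate orthogonal triangle of $\mathcal S$ and therefore projects to a non-degenerate Euclidean triangle, and that images of distinct faces meet only along shared edges. Periodicity of $\mathcal S$ along two lattice vectors transverse to $(1,1,1)$ then gives a periodic straight-line representation of $(G')^\infty$; its restriction to a fundamental parallelogram is a straight-line flat torus representation of $G'$, and removing the auxiliary edges of $G'\setminus G$ yields one for $G$. The polynomial grid bound follows from polynomial bounds on the Schnyder-wood coordinates, inherited by $\mathcal S$ and hence by its projection.

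The main obstacle will be proving that this projection is crossing-free. In the planar case, the three rooted path-forests of a Schnyder wood furnish barycentric-type coordinates whose convexity forces planarity; on the torus these structures are doubly periodic with no root at infinity, so the classical argument does not apply directly. I expect the global property (T2), together with the fact that we are projecting an embedded orthogonal surface rather than merely assigning abstract coordinates, to provide the substitute: local non-degeneracy of orthogonal triangles yields local crossing-freeness, and periodicity plus the classification of planar coverings of the torus should promote this to a global statement. Checking that the triangulation step preserves the relevant properties, and that the edges of $G'\setminus G$ can be removed without ruining planarity, will be the main technical points to attend to.
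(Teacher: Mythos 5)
Your high-level plan coincides with the paper's: triangulate $G$, get a Schnyder wood and the region-vector embedding of the universal cover, project onto a positive plane, and restrict to a fundamental domain. But the entire mathematical content of the theorem lives in the step you defer --- proving the projection is crossing-free --- and the mechanism you propose for it does not match what is actually needed. The difficulty is not only that the planar barycentric argument has no root at infinity; it is that, unlike in the plane, the region vectors are \emph{not coplanar}: by Lemma~\ref{lem:sum} the sum $v_0+v_1+v_2$ equals the number of faces in the bounded region cut out by the three lines $L_i(v)$, which varies from vertex to vertex. The paper must therefore bound this deviation (Lemma~\ref{lem:boundedtriangle}: at most $t=(5\min\gamma_i+\max\gamma_i)f$) and then \emph{enlarge the scaling parameter to} $N=t+n$, strictly larger than the $N=n$ used for the orthogonal surface of Theorem~\ref{th:triortho}, so that in the delicate case of Lemma~\ref{lem:vectoriel} (a face with two outgoing edges of the same color at one vertex) the quantity $(N-n-t)|R(L_1,L_1')|$ is still positive. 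Your proposal takes the surface of Theorem~\ref{th:triortho} as given and projects it; without re-choosing $N$ and without the bound of Lemma~\ref{lem:boundedtriangle}, the orientation of some projected faces cannot be certified, and the paper explicitly warns that it only proves the technique works ``for $N$ sufficiently large.'' Also, faces of the graph drawn on $\mathcal S$ are bounded by elbow geodesics, not flat ``orthogonal triangles,'' so non-degeneracy of such triangles is not the right local statement; the right one is that the cross product of the two edge vectors of each face has strictly positive coordinates.

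The local-to-global step is also not a soft consequence of ``periodicity plus the classification of planar coverings of the torus.'' Local injectivity of a periodic map does not by itself give global injectivity; the paper proves Lemma~\ref{th:faceorientation} by a concrete combinatorial argument: summing oriented angles over vertices and faces, invoking Euler's formula to force every vertex to have winding number one, and then inducting by deleting a vertex (with a separate analysis when every vertex carries a loop). If you want your write-up to stand on its own, you need (i) the coplanarity-defect bound and the corresponding choice of $N$, (ii) the cross-product positivity for every face, and (iii) the winding-number argument promoting correctly oriented faces to a genuine straight-line embedding. As it stands, the proposal restates the strategy announced in the introduction and names the obstacles, but does not overcome them.
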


In Section~\ref{sec:schnyderwoods}, we explain how our definition of
Schnyder woods in the torus generalize the planar case. In
Section~\ref{sec:properties}, we show that our Schnyder woods are of
two fundamentally different types.  In Section~\ref{sec:universal}, we
study the behavior of Schnyder woods in the universal cover, we define
the notion of regions and show that the existence of Schnyder woods
for a toroidal graph implies that the graph is an essentially
3-connected toroidal map.
In Section~\ref{sec:dual}, we define the dual of a Schnyder wood.  In
Section~\ref{sec:relax}, we show how the definition of Schnyder woods
can be relax for one of the two types of Schnyder wood. This
relaxation is used in the next sections for proving existence of
Schnyder wood.  In Section~\ref{sec:existence}, we use a result of
Fijavz~\cite{Fij} on existence of non homotopic cycles in simple
toroidal triangulations to obtain a short proof of existence of
Schnyder wood for simple triangulations.  In Section~\ref{sec:lemma},
we prove a technical lemma showing how a Schnyder wood of a graph $G$
can be derived from a Schnyder wood of the graph $G'$, where $G'$ is
obtained from $G$ by contracting an edge. This lemma is then used in
Section~\ref{sec:existence3connected} to prove the existence of
Schnyder woods for any essentially 3-connected toroidal maps. In
Section~\ref{sec:ortho}, we use Schnyder woods to embed the universal
cover of essentially 3-connected toroidal maps on periodic and
infinite orthogonal surfaces by generalizing the region vector method
defined in the plane. In Section~\ref{sec:dualortho}, we show that the
dual map can also be embedded on this orthogonal surface. In
Section~\ref{sec:straight}, we show that, in the case of
toroidal triangulations, this orthogonal surface can be projected on a plane to
obtain a straight-line flat torus representation.

\section{Generalization of the planar case}
\label{sec:schnyderwoods}

Felsner~\cite{Fel01, Fel03} has generalized planar Schnyder woods by allowing
edges to be oriented in one direction or in two opposite
directions. The formal definition is the following: 

\begin{definition}[Planar Schnyder wood]
  Given a planar map $G$. Let $x_0$, $x_1$, $x_2$ be three distinct
  vertices occurring in counterclockwise order on the outer face of
  $G$. The \emph{suspension} $G^\sigma$ is obtained by attaching a
  half-edge that reaches into the outer face to each of these special
  vertices.  A \emph{(planar) Schnyder wood} rooted at $x_0$, $x_1$, $x_2$ is
  an orientation and coloring of the edges of $G^\sigma$ with the
  colors $0$, $1$, $2$, where every edge $e$ is oriented in one
  direction or in two opposite directions (each direction having a
  distinct color), satisfying the following (see example of
  Figure~\ref{fig:example-planar}):

\begin{itemize}
\item[(P1)] Every vertex $v$ satisfies the Schnyder property and the
  half-edge at $x_i$ is directed outwards and colored $i$
\item[(P2)] There is no monochromatic cycle.
\end{itemize}
\end{definition}

\begin{figure}[!h]
\center
\includegraphics[scale=0.5]{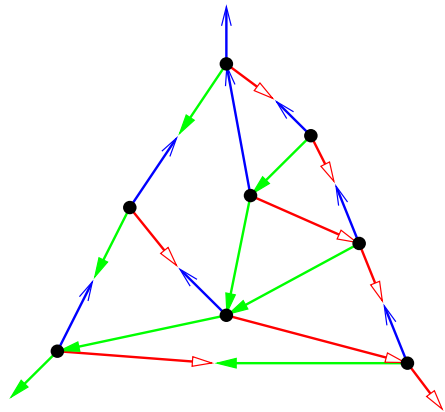}
\caption{Example of a Schnyder wood of a planar map.}
\label{fig:example-planar}
\end{figure}

In the definition given  by Felsner~\cite{Fel03}, property (P2) is
in fact replaced by ``There is no interior face the boundary of which is a
monochromatic cycle'', but the two are equivalent by results
of~\cite{Fel01, Fel03}.

With our definition of Schnyder wood for toroidal graph, the goal is to
generalize the definition of Felsner. In the torus, property (P1) can
be simplified as every vertex plays the same role: there is no special
outer vertices with a half-edge reaching into the outer face. This
explain property (T1) in our definition. Then if one asks that every
vertex satisfies the Schnyder property, there is necessarily
monochromatic cycles and (P2) is not satisfied. This explain why (P2)
has been replaced by (T2) in our generalization to the torus.

 It would have been possible to replace (P2) by ``there is no contractible monochromatic
 cycles'' but this is no enough to suit our needs.  Our goal is to use
 Schnyder woods to embed universal cover of toroidal graphs on
 orthogonal surfaces like it has been done in the plane by
 Miller~\cite{Mil02} (see also \cite{Fel03}). The difference being
  that our surface is infinite and periodic. In such a
 representation the three colors $0$, $1$, $2$ corresponds to the
 three directions of the space. Thus the monochromatic cycles with
 different colors have to intersect each other in a particular
 way. This explains why property (T2) is required.
 Figure~\ref{fig:nocross} gives an example of an orientation and
 coloring of the edges of a toroidal triangulation satisfying (T1) but
 not (T2) as there is no pair of intersecting monochromatic cycles.

\begin{figure}[!h]
\center
\includegraphics[scale=0.5]{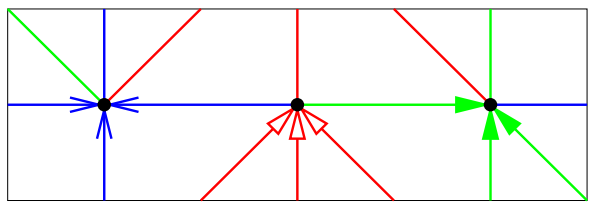}
\caption{An orientation and coloring of the edges of a toroidal
  triangulation satisfying (T1) but not (T2) as there is no pair of
  intersecting monochromatic cycles.}
\label{fig:nocross}
\end{figure}

Let $G$ be a toroidal graph given with a Schnyder wood.  Let $G_i$ be
the directed graph induced by the edges of color $i$. This definition
includes edges that are half-colored $i$, and in this case, the edges
gets only the direction corresponding to color $i$.  Each graph $G_i$
has exactly $n$ edges, so it does not induce a rooted tree (contrarily
to planar Schnyder woods) and the term ``wood'' has to be handle with
care here.  Note also that $G_i$ is not necessarily connected (for
example in the graph of Figure~\ref{fig:notconnected}, every Schnyder
wood has one color which corresponding subgraph is not connected). But
each components of $G_i$ has exactly one outgoing arc for each of its
vertices. Thus each connected component of $G_i$ has exactly one
directed cycle that is a \emph{monochromatic cycle} of color $i$, or
\emph{$i$-cycle} for short. Note that monochromatic cycles can contain
edges oriented in two directions with different colors, but the
\emph{orientation} of a $i$-cycle is the orientation given by the
(half-)edges of color $i$.  The graph $G_i^{-1}$ is the graph obtained
from $G_i$ by reversing all its edges.  The graph $G_i\cup
G_{i-1}^{-1}\cup G_{i+1}^{-1}$ is obtained from the graph $G$ by
orienting edges in one or two direction depending on whether this
orientation is present in $G_i$, $G_{i-1}^{-1}$ or $G_{i+1}^{-1}$.
The following Lemma shows that our property (T2) in fact implies that
there is no contractible monochromatic cycles.

 \begin{figure}[!h]
 \center
 \includegraphics[scale=0.5]{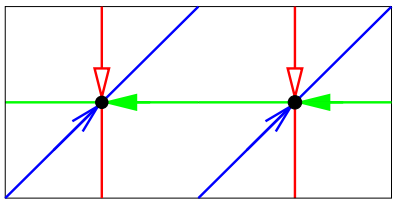}
 \caption{A toroidal graph where every Schnyder wood has one color which
   corresponding subgraph is not connected.}
 \label{fig:notconnected}
 \end{figure}

\begin{lemma}
\label{lem:nocontractiblecycle}
The graph $G_i\cup G_{i-1}^{-1}\cup G_{i+1}^{-1}$ contains no
contractible directed cycle.
\end{lemma}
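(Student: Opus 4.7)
The plan is proof by contradiction. Assume $H := G_i\cup G_{i-1}^{-1}\cup G_{i+1}^{-1}$ contains a contractible directed cycle and pick among such cycles one, $C$, whose bounded disk $D$ contains the minimum number of faces. I aim to exhibit a strictly smaller contractible directed cycle of $H$ inside $\bar D$, contradicting this minimality. It is convenient to lift to the universal cover, where $C$ lifts to a simple closed curve in the plane bounding an honest disk, and the local Schnyder property (T1) lifts unchanged.

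At each $v\in V(C)$ property (T1) forces the counterclockwise rotation around $v$ to split into six angular blocks, in order: $e_i(v)$, the $G$-incoming color-$(i-1)$ edges, $e_{i+1}(v)$, the $G$-incoming color-$i$ edges, $e_{i-1}(v)$, and the $G$-incoming color-$(i+1)$ edges. Translating to $H$, the first, second and sixth blocks are $H$-outgoing and form one contiguous ccw arc passing through $e_i(v)$, while the remaining three form the complementary $H$-incoming arc. Hence the outgoing $C$-edge $C_{\mathrm{out}}$ at $v$ sits in the $H$-outgoing arc and $C_{\mathrm{in}}$ sits in the $H$-incoming arc. A case analysis on where $C_{\mathrm{out}}$ and $C_{\mathrm{in}}$ land within these blocks lets me locate $e_i(v),e_{i-1}(v),e_{i+1}(v)$ relative to $D$: in the two ``straight'' cases $e_i(v)$ itself either lies on $C$ or points strictly inside $D$; in the residual ``crossed'' case it is $e_{i+1}(v)$ (a $G$-outgoing edge, though $H$-incoming at $v$) that is forced to point into $D$.

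Chaining these inward arcs from any vertex of $C$ produces a directed walk that stays in $\bar D$; since $\bar D$ is finite, the walk must close up into a monochromatic directed cycle $C'\subseteq\bar D$, obtained by sticking to a single color whenever possible. Being monochromatic, $C'$ is a directed cycle of $H$; being contained in a disk, it is contractible and bounds a disk $D'\subseteq D$. The main obstacle is proving the strict containment $D'\subsetneq D$, which is where the global property (T2) genuinely enters: should one have $D'=D$, the angular bookkeeping at vertices of $C$ would force $C'$ to be disjoint from every monochromatic cycle of one of the two other colors, violating (T2). Once this strict containment is secured we obtain a strictly smaller contractible directed cycle of $H$, the desired contradiction. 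Handling the crossed case, in which the inward step at some $v\in V(C)$ corresponds to a $G$-outgoing arc that is $H$-incoming at $v$, is the most delicate point: one must re-index the walk at the next vertex to keep moving forward in $H$ and use (T2) to close the argument.
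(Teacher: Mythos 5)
Your overall skeleton (minimal contractible directed counterexample in $H$, a (T1) angular analysis at the vertices of $C$, and a final appeal to (T2)) is the same as the paper's, but neither of the two load-bearing steps is correctly executed. First, to produce a monochromatic cycle inside the closed disk $D$ you need a \emph{single} color $j$, the same at every vertex of $C$, such that $e_j(v)$ never leaves $D$; only then does the walk $v\mapsto e_j(v)$ stay in $D$ and close up into a $j$-cycle. That color is determined globally by the orientation of $C$: if $C$ turns clockwise around $D$, then (T1) forces every $e_{i-1}(v)$, $v\in C$, to lie on $C$ or point into $D$ (and it is $e_{i+1}$ in the counterclockwise case). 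It is not $e_i(v)$, which can point out of $D$ when $C$ leaves $v$ along a reversed incoming $(i+1)$-edge. Your vertex-by-vertex assignment (``$e_i(v)$ in the straight cases, $e_{i+1}(v)$ in the crossed case'') yields inward edges of varying colors, and a walk that ``sticks to a single color whenever possible'' while switching colors is not monochromatic; the construction does not close.

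Second, the logical role of minimality is different from what you describe, and your (T2) step skips a crucial intermediate fact. Minimality is not contradicted by exhibiting a smaller cycle; it is used, twice, to force the cycles found inside $D$ to coincide with $C$. The $(i-1)$-cycle trapped in $D$ must equal $C$ (reversed), so $C$ is an $(i-1)$-cycle; knowing this, (T1) then traps color $i$ in $D$ as well, and minimality again forces the resulting $i$-cycle to be $C$ itself. Only now is $C$ bi-oriented ($i$ clockwise, $i-1$ counterclockwise), and only for a bi-oriented $C$ does the angular bookkeeping show that every color-$(i+1)$ edge at a vertex of $C$ points out of $D$, so that no $(i+1)$-cycle can meet $C$, contradicting (T2) since $C$ is an $i$-cycle. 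Without this second bootstrapping step, your claim that ``$C'$ would be disjoint from every monochromatic cycle of one of the two other colors'' does not follow from (T1) alone. As written, the proposal has the right ingredients but a genuine gap at both of its key steps.
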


\begin{proof}
  Suppose there is a contractible directed cycle in $G_i\cup
  G_{i-1}^{-1}\cup G_{i+1}^{-1}$. Let $C$ be such a cycle containing
  the minimum number of faces in the closed disk $D$ bounded by
  $C$. Suppose by symmetry that $C$ turns around $D$ clockwisely Then,
  by (T1), there is no edge of color $i-1$ leaving the closed disk
  $D$. So there is a $(i-1)$-cycle in $D$ and this cycle is $C$ by
  minimality of $C$.  Then, by (T1), there is no edge of color $i$
  leaving $D$. So, again by minimality of $C$, the cycle $C$ is a
  $i$-cycle.  Thus all the edges of $C$ are oriented in color $i$
  clockwisely and in color $i-1$ counterclockwisely.  Then, by (T1),
  all the edges of color $i+1$ incident to $C$ have to leave $D$. Thus
  there is no $(i+1)$-cycle intersecting $C$, a contradiction to
  property (T2).
\end{proof}

Let $G$ be a planar map and $x_0$, $x_1$, $x_2$ be three
distinct vertices occurring in counterclockwise order on the outer
face of $G$. One can transform $G^\sigma$ into the following toroidal
map $G^+$ (see Figure~\ref{fig:planar-tore}): Add a vertex $v$ in
the outer face of $G$. Add three non-parallel and non-contractible
loops on $v$. Connect the three half edges leaving $x_i$ to $v$ such
that there is no two such edge entering $v$ consecutively.  Then we
have the following.

\begin{theorem}
\label{th:planar}
  The Schnyder woods of a planar map $G$ rooted at $x_0$, $x_1$, $x_2$
  are in bijection with the Schnyder woods of the toroidal map $G^+$.
\end{theorem}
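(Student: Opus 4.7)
The plan is to exhibit explicit maps in both directions and verify that they are mutually inverse. Define the forward map $\Phi$ from planar Schnyder woods of $G$ rooted at $x_0,x_1,x_2$ to toroidal Schnyder woods of $G^+$ as follows: keep the orientations and colors of all edges of $G$ unchanged; orient the edge $x_i v$ (which used to be the half-edge colored $i$ at $x_i$) from $x_i$ to $v$ in color $i$; then mono-orient and mono-color the three loops at $v$, assigning one color $i \in \{0,1,2\}$ per loop, choosing the assignment so that the three resulting outgoing ends at $v$, interleaved with the three incoming edges $x_iv$, satisfy the Schnyder property at $v$. The construction of $G^+$ (in particular, the requirement that no two edges $x_iv$ are consecutive around $v$, combined with the fact that the three loops are non-parallel and non-contractible) ensures that there is a unique such assignment.

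Next I would check that $\Phi(S)$ satisfies (T1) and (T2). Property (T1) holds at every vertex of $G$ because the Schnyder property at an inner vertex is inherited verbatim from $S$, and at $x_i$ the outgoing half-edge of $S$ is replaced by an honest outgoing edge of the same color $i$ to $v$; it holds at $v$ by construction. For (T2), I would use (P2) and the fact that adding the edges $x_iv$ (oriented $x_i\to v$ with color $i$) to the color-$i$ tree of $S$ creates no new cycle, because $x_i$ is the root of that tree. Hence the only monochromatic cycles in $\Phi(S)$ are the three loops at $v$, one per color, and these pairwise intersect at $v$, which gives (T2).

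For the inverse map $\Psi$, I would start from a toroidal Schnyder wood $S'$ of $G^+$ and first establish a structural lemma: in $S'$, the three loops at $v$ are each mono-oriented and mono-colored with the three distinct colors, and each edge $x_iv$ is oriented from $x_i$ to $v$ in color $i$ (after relabeling). Granted this, the map $\Psi$ deletes $v$ and the three loops, and replaces each edge $x_iv$ by an outgoing half-edge at $x_i$ of color $i$. Property (P1) at vertices of $G$ is inherited from (T1), and (P2) follows from Lemma~\ref{lem:nocontractiblecycle}: a monochromatic cycle in $G$ would sit inside a planar region of $G^+$ and hence be a contractible monochromatic directed cycle in $G^+_i \cup (G^+_{i-1})^{-1} \cup (G^+_{i+1})^{-1}$, which is forbidden. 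Finally, $\Phi$ and $\Psi$ are clearly inverses on the data (all edges of $G$, edges $x_iv$, loops), so this gives the bijection.

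The main obstacle is the structural lemma used to define $\Psi$. The argument I would run is: by Lemma~\ref{lem:nocontractiblecycle} every monochromatic cycle of $S'$ is non-contractible, so it must traverse at least one loop at $v$; combined with (T2), which requires a monochromatic cycle of each color, and with the Schnyder property (T1) at $v$ constraining out-degrees and cyclic sector positions at $v$, one performs a finite case analysis on how the six loop-ends at $v$ can be colored and oriented. The topological constraint that the three loops are pairwise non-homotopic, together with the separation condition on the edges $x_iv$ around $v$, rules out every configuration except the canonical one in which each loop is mono-oriented in a distinct color and each $x_iv$ points into $v$ with the matching color. This case analysis is the technical heart of the proof; the rest is bookkeeping.
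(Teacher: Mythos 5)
Your proposal is correct and follows essentially the same route as the paper: the same explicit maps in both directions, the same verification of (T1) and (T2) for the forward direction (the only monochromatic cycles of $G^+$ being the three loops at $v$, which meet at $v$), and the same use of Lemma~\ref{lem:nocontractiblecycle} to recover (P2) in the reverse direction. The one step you defer to an unspecified case analysis --- that each loop is uni-directed in a distinct color and each edge $x_iv$ enters $v$ --- the paper settles by a one-line count: every loop, whether uni- or bi-directed, occupies at least one of the three outgoing edge-ends that (T1) permits at $v$, so the three loops take exactly one each in three distinct colors, forcing all three edges $x_iv$ to enter $v$ (in three distinct colors, by the non-consecutivity condition in the construction of $G^+$).
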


\begin{proof}
  ($\Longrightarrow$) Given a Schnyder wood of the planar graph $G$,
  rooted at $x_0$, $x_1$, $x_2$. Orient and color the graph $G^+$ as in
  the example of Figure~\ref{fig:planar-tore}, i.e. the edges of the
  original graph $G$ have the same color and orientation as in
  $G^\sigma$, the edge from $x_i$ to $v$ is colored $i$ and leaving
  $x_i$, the three loops around $v$ are colored and oriented
  appropriately so that $v$ satisfies the Schnyder property. Then it is
  clear that all the vertices of $G^+$ satisfy (T1). By (P2), we know
  that $G^\sigma$ has no monochromatic cycles. All the edges between
  $G$ and $v$ are leaving $G$, so there is no monochromatic cycle of
  $G^+$ involving vertices of $G$. Thus the only monochromatic cycles
  of $G^+$ are the three loops around $v$ and they satisfy (T2).

  ($\Longleftarrow$) Given a Schnyder wood of $G^+$, the restriction
  of the orientation and coloring to $G$ and the three edges leaving
  $v$ gives a Schnyder wood of $G^\sigma$. The three loops around $v$
  are three monochromatic cycles corresponding to three edges leaving
  $v$, thus they have different colors by (T1). Thus the three edges
  between $G$ and $v$ are entering $v$ with three different
  colors. The three loops around $v$ have to leave $v$ in
  counterclockwise order $0,1,2$ and we can assume by maybe permuting
  the colors that the edge leaving $x_i$ is colored $i$. Clearly all
  the vertices of $G^\sigma$ satisfies (P1). By
  Lemma~\ref{lem:nocontractiblecycle}, there is no contractible
  monochromatic cycles in $G^+$, so $G^\sigma$ satisfies (P2).
\end{proof}

\begin{figure}[!h]
\center
\includegraphics[scale=0.3]{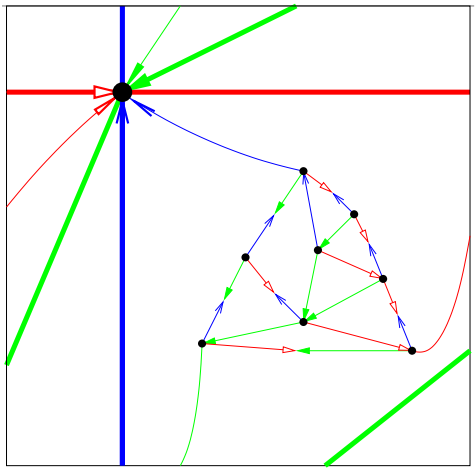}\\
\caption{The toroidal Schnyder wood corresponding to the planar
  Schnyder wood of Figure~\ref{fig:example-planar}.}
\label{fig:planar-tore}
\end{figure}

A planar map $G$ is \emph{internally 3-connected} if there exists
three vertices on the outer face such that the graph obtained from $G$
by adding a vertex adjacent to the three vertices is
3-connected. Miller~\cite{Mil02} (see also~\cite{Fel01}) proved that a
planar map admits a Schnyder wood if and only if it is internally
3-connected.  The following results show that the notion of
essentially 3-connected is the natural generalization of internally
3-connected to the torus. 

\begin{theorem}
  A planar map $G$ is internally 3-connected if and only if there
  exists three vertices on the outer face of $G$ such that $G^+$ is an
  essentially 3-connected toroidal map.
\end{theorem}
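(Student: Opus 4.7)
The plan is to deduce this from the structural picture of $(G^+)^\infty$. The single vertex $v$ together with its three non-parallel, non-contractible loops lifts to an infinite $6$-regular graph $T$ on $\{v_{(a,b)}\}_{(a,b)\in\mathbb{Z}^2}$, where each copy of $v$ has two neighbors for each of the three loop-homotopy classes; since these classes are pairwise non-parallel, $T$ is far more than $3$-connected. The condition that no two of the edges $v-x_i$ enter $v$ consecutively is crucial: combined with the fact that $G$ sits inside one specific face incident to $v$, it forces the three edges $v-x_0, v-x_1, v-x_2$ to occupy three \emph{different} sectors at $v$. Consequently, in $(G^+)^\infty$, each copy $G_c$ of $G$ is attached to three distinct vertices $v^{(0)}_c, v^{(1)}_c, v^{(2)}_c$ of $T$ (the three corners of the triangular face of $T$ containing $G_c$), via one edge $v^{(i)}_c\,x_i^c$ per $i$, and apart from these three edges no edge leaves $G_c$.

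Given this picture, I would prove the direction ``$G$ internally $3$-connected $\Rightarrow$ $G^+$ essentially $3$-connected'' by fixing $x_0, x_1, x_2$ so that $G^{*} = G\cup\{u\}$ (with $u$ adjacent to $x_0,x_1,x_2$) is $3$-connected, and analyzing the removal of any two vertices $y, z$ from $(G^+)^\infty$. The graph $T-\{y,z\}$ stays connected, and every copy $G_c$ has at least one attachment $v^{(i)}_c$ surviving. If neither $y$ nor $z$ lies in $G_c$, then $G_c$ is intact and reaches $T$ via any remaining $x_j^c$. If at least one of $y,z$ lies in $G_c$, then $3$-connectedness of $G^{*}$ guarantees that every component of $G_c - \{y,z\}$ contains some $x_j^c \notin \{y,z\}$, hence connects to the surviving $v^{(j)}_c \in T-\{y,z\}$.

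For the converse I argue the contrapositive. Suppose $G^{*}$ has a separator $\{y,z\}$. If $y,z \in V(G)$, some component of $G-\{y,z\}$ meets no $x_i \notin \{y,z\}$, so the corresponding component of $G_c-\{y^c,z^c\}$ has no attachment left and is disconnected from $T$ in $(G^+)^\infty - \{y^c,z^c\}$. Otherwise $\{y,z\}=\{u,w\}$, so $G-w$ is disconnected; by pigeonhole (three special vertices split among at least two components) some component $C$ of $G-w$ meets at most one of $x_0,x_1,x_2$, and removing $w^c$ together with the vertex of $T$ attached to that single $x_i^c$ (or any attachment of $C^c$ if $C$ avoids all $x_i$) isolates $C^c$ in $(G^+)^\infty$.

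The main obstacle is the structural claim of the first paragraph: verifying that, under the ``no two consecutive'' condition and the non-parallelism of the loops, the three edges $v-x_i$ really lift so that each copy $G_c$ attaches to three distinct vertices of $T$. This is a careful topological/combinatorial argument about the cyclic arrangement of the nine half-edges at $v$ and the two faces of the single-vertex three-loop map on the torus. Without it the entire strategy collapses: if the three edges lifted to a single copy of $v$, removing that one vertex would already separate an entire copy of $G$ from the rest, so $(G^+)^\infty$ could never be even $2$-connected, contradicting the theorem in nontrivial cases.
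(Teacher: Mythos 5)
Your proof is correct and rests on the same structural fact the paper uses: $(G^+)^\infty$ is the triangular grid formed by the lifts of $v$, with each copy of $G$ glued into a triangular face and attached by one edge to each of its three distinct corners (the paper asserts this picture just as briefly as you do). The only cosmetic difference is that the paper routes both directions through the auxiliary graph $G''$ obtained by replacing the apex by a triangle $y_0y_1y_2$ joined to the $x_i$, whereas you analyze separators of the apex graph and of $(G^+)^\infty$ directly.
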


\begin{proof}
  ($\Longrightarrow$) Let $G$ be an internally 3-connected planar
  map. By definition, there exists three vertices $x_0$, $x_1$, $x_2$
  on the outer face such that the graph $G'$ obtained from $G$ by
  adding a vertex adjacent to these three vertices is 3-connected.
  Let $G''$ be the graph obtained from $G$ by adding three vertices
  $y_0$, $y_1$, $y_2$ that form a triangle and by adding the three
  edges $x_iy_i$.  It is not difficult to check that $G''$ is
  3-connected.  Since $G^{\infty}$ can be obtained from the (infinite)
  triangular grid, which is 3-connected, by gluing copies of $G''$
  along triangles, $G^{\infty}$ is clearly 3-connected. Thus $G^+$ is
  an essentially 3-connected toroidal map.

  ($\Longleftarrow$) Suppose there exists three vertices on the outer
  face of $G$ such that $G^+$ is an essentially 3-connected toroidal
  map, i.e. $G^{\infty}$ is 3-connected.  A copy of $G$ is contained
  in a triangle $y_0y_1y_2$ of $G^{\infty}$.  Let $G''$ be the
  subgraph of $G^{\infty}$ induced by this copy plus the triangle, and
  let $x_i$ be the unique neighbor of $y_i$ in the copy of $G$. Since
  $G''$ is connected to the rest of $G^{\infty}$ by a triangle, $G''$
  is also 3-connected. Let us now prove that this implies that $G$ is
  internally 3-connected for $x_0, x_1$ and $x_2$. This is equivalent
  to say that the graph $G'$, obtained by adding a vertex $z$
  connected to $x_0, x_1$ and $x_2$, is 3-connected.  If $G'$ had a
  separator $\{a,b\}$ or $\{a,z\}$, with $a,b\in V(G')\setminus
  \{z\}$, then $\{a,b\}$ or $\{a,y_i\}$, for some $i\in [0,2]$, would
  be a separator of $G''$. This would contradict the 3-connectedness
  of $G''$. So $G$ is internally 3-connected.
\end{proof}

\section{Two different types of Schnyder woods}
\label{sec:properties}

Two non contractible closed curves are \emph{homotopic} if one can be
continuously transformed into the other. The following are general
useful lemmas on the torus.

\begin{lemma}
\label{lem:intersect2}
  Let $C_1,C_2$ be two non contractible closed curve  on the Torus. If $C_1,C_2$
  are not homotopic, then their intersection is non empty.
\end{lemma}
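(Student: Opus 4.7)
The plan is to work in the universal cover $\mathbb{R}^2\to T^2$ and track the homotopy classes of the two curves. Each non-contractible closed curve $C_i$ represents a non-zero class $v_i\in\pi_1(T^2)=\mathbb{Z}^2$, and for the simple closed curves relevant to the paper's applications this class is primitive; two such curves are freely homotopic on $T^2$ exactly when $v_1=\pm v_2$. Under the hypothesis $v_1\ne\pm v_2$, I would separately treat the case where $v_1,v_2$ are non-parallel and the case where they are parallel, deriving a contradiction with $C_1\cap C_2=\emptyset$ in each.

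In the non-parallel case, I would fix a single connected lift $\tilde C_1\subset\mathbb{R}^2$, which is invariant under the translation $\tau_{v_1}\colon x\mapsto x+v_1$ and stays within a bounded tubular neighborhood of the line $\mathbb{R}v_1$; by a Jordan-type argument applied to this proper embedding of $\mathbb{R}$, it separates $\mathbb{R}^2$ into two components. If $\tilde C_2$ is a lift of $C_2$ disjoint from $\tilde C_1$, it sits in one of these components, and so does every $\mathbb{Z}^2$-translate of it. But $v_2$ has a non-zero component transverse to $v_1$, so the translates $\tilde C_2+kv_2$ drift to infinity in the direction orthogonal to $\mathbb{R}v_1$ as $|k|\to\infty$; one of them must then leave any bounded tubular neighborhood of $\mathbb{R}v_1$ and hence cross $\tilde C_1$, contradicting the assumed disjointness.

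In the parallel case, $v_1$ and $v_2$ are both non-zero integer multiples of a common primitive vector $w\in\mathbb{Z}^2$, and since the homotopy class of a simple non-contractible closed curve on $T^2$ is primitive (else distinct translates of a lift would project to the same image, forcing a self-intersection), one has $v_i=\pm w$. Hence $v_1=\pm v_2$, so $C_1$ and $C_2$ are homotopic, again contradicting the hypothesis.

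The step I expect to be the main obstacle is the geometric argument in the non-parallel case, specifically the escape-to-infinity of the translates $\tilde C_2+kv_2$ across the separating set $\tilde C_1$. A clean way to formalize it is to pass to the quotient cylinder $\Sigma=\mathbb{R}^2/\langle\tau_{v_1}\rangle$, on which $\tilde C_1$ descends to a compact simple closed curve separating $\Sigma$ into two half-cylinders; the images of the translates $\tilde C_2+kv_2$ all land in the same half-cylinder but project to go to $\pm\infty$ along the cylinder axis, which is impossible without meeting the image of $\tilde C_1$. This reduction to the cylinder replaces the delicate planar estimate by a qualitative statement about non-contractibility on a surface with boundary.
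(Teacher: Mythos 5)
The paper states Lemma~\ref{lem:intersect2} without proof --- it is offered as one of several ``general useful lemmas on the torus'' --- so there is no authorial argument to compare yours against; what you have written is a genuine proof where the paper has none. Your argument is essentially correct, and the route (lifting to $\mathbb{R}^2$, reading off the classes $v_1,v_2\in\pi_1(T^2)=\mathbb{Z}^2$, and splitting into the non-parallel and parallel cases) is the standard one. Two remarks. First, you are right to insert the hypothesis that the curves are \emph{simple}: as literally stated the lemma is false for arbitrary closed curves, since a curve of class $(1,0)$ and a curve of class $(2,0)$ are non-contractible, non-homotopic, and can be drawn in disjoint annuli; your parallel case, which invokes primitivity of the class of a simple non-contractible closed curve to force $v_1=\pm v_2$, is exactly where this hypothesis is consumed, and it is satisfied in all of the paper's applications (the curves are cycles of an embedded graph).

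Second, the one genuinely shaky step is the planar phrasing of the non-parallel case: the assertion that a translate $\tilde C_2+kv_2$ ``must leave any bounded tubular neighborhood of $\mathbb{R}v_1$ and hence cross $\tilde C_1$'' is a non sequitur as written --- a set lying entirely outside that neighborhood sits on one side of $\tilde C_1$ without crossing it. (There is also a mild ambiguity: if $\tilde C_2$ denotes the full connected lift, it is $\tau_{v_2}$-invariant and its $v_2$-translates do not ``drift'' at all; the drifting objects are the successive period-arcs, and the contradiction must come from connecting an arc deep on one side of $\tilde C_1$ to an arc deep on the other side through the chain of shared endpoints.) You anticipate this, and your cylinder reformulation does repair it cleanly: in $\Sigma=\mathbb{R}^2/\langle\tau_{v_1}\rangle$ the curve $\tilde C_1$ descends to a compact essential simple closed curve separating the two ends, while a connected lift of $C_2$ is non-compact (because $v_2\notin\langle v_1\rangle$) and accumulates at both ends, so it must meet the separating curve. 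I would make the cylinder argument the primary one and drop, or clearly subordinate, the planar sketch.
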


\begin{lemma}
\label{lem:homotopic3}
Let $C_1,C_2,C_3$ be three non contractible closed curve on the
Torus. If $C_1,C_2$ are homotopic and $C_2,C_3$ are homotopic, then
$C_1,C_3$ are homotopic.
\end{lemma}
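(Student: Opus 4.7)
The plan is to reduce the claim to the elementary fact that equality in the abelian group $\mz^2$ is transitive, by associating to each non-contractible closed curve a well-defined element of the fundamental group of the torus and showing that this element characterizes the free homotopy class.

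First I would pass to the universal cover. The torus is $\mr^2/\Lambda$, where $\Lambda\cong\mz^2$ is the lattice generated by the two translation vectors identifying opposite sides of the flat torus. Any closed curve $C$ on the torus lifts to a path $\tilde C\colon[0,1]\to\mr^2$, and its two endpoints differ by some lattice vector $v(C)\in\Lambda$. A different choice of starting lift translates the whole path by an element of $\Lambda$ and hence yields the same endpoint-difference, so $v(C)$ is well defined up to sign (reversing the orientation of $C$ negates $v(C)$). Moreover $v(C)=0$ iff $C$ is contractible, so our three curves give nonzero vectors.

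Next I would show that two non-contractible closed curves $C,C'$ are homotopic if and only if $v(C)=\pm v(C')$. For the ``only if'' direction, a free homotopy $H\colon[0,1]\times[0,1]\to\mr^2/\Lambda$ lifts (once a starting lift of $C$ is fixed) to a continuous map $\tilde H\colon[0,1]\times[0,1]\to\mr^2$; the difference of the two endpoints of $\tilde H(\cdot,s)$ is a continuous function of $s$ taking values in the discrete lattice $\Lambda$, hence constant. For the ``if'' direction, if two lifted paths have the same endpoint-difference, the straight-line homotopy between them in $\mr^2$ has endpoint-difference constantly equal to that vector, so it projects to a free homotopy on the torus.

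Transitivity is then immediate: from $C_1\sim C_2$ we get $v(C_1)=\pm v(C_2)$, and from $C_2\sim C_3$ we get $v(C_2)=\pm v(C_3)$, hence $v(C_1)=\pm v(C_3)$ and $C_1\sim C_3$. The only subtlety is the sign ambiguity coming from free (unbased) homotopy, but since the ambiguity lives in the two-element group $\{\pm 1\}$ and composes correctly, this is not a genuine obstacle. The main conceptual step is the well-definedness of $v(C)$ and its invariance under homotopy; once that is in hand, the lemma is just transitivity of equality in $\Lambda$.
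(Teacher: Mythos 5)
Your argument is correct. Note, however, that the paper offers no proof of this lemma at all: it is stated (together with Lemmas~\ref{lem:intersect2} and~\ref{lem:intersect3}) as a standard topological fact about the torus, so there is nothing to compare against line by line. Your route --- lifting to the universal cover $\mr^2$, attaching to each closed curve its endpoint-displacement $v(C)$ in the lattice $\Lambda\cong\mz^2$, and showing that free homotopy classes are classified by $v(C)$ up to sign --- is the standard way to make the fact rigorous, and all the steps (well-definedness of $v(C)$, constancy of the discrete-valued displacement along a lifted homotopy, the straight-line homotopy for the converse) are sound. Two remarks. First, under the reading where ``homotopic'' means free homotopy of oriented loops, the lemma is immediate from transitivity of the homotopy relation (concatenate the two homotopies); the only reason any argument beyond that is needed is the paper's convention that ``homotopic'' ignores orientation, and your $\pm$ bookkeeping handles exactly that. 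Second, the classification $C\sim C'$ iff $v(C)=\pm v(C')$ that you establish is strictly more than Lemma~\ref{lem:homotopic3} requires, but it is precisely the statement that also underlies the unproved companion Lemmas~\ref{lem:intersect2} and~\ref{lem:intersect3} (non-homotopic means non-proportional lattice vectors, hence the projected lines must cross), so your extra effort is not wasted in the context of the paper.
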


\begin{lemma}
\label{lem:intersect3}
Let $C_1,C_2,C_3$ be three non contractible closed curve  on the Torus. If
$C_1,C_2$ are homotopic and $C_1,C_3$ are not homotopic. Then $C_2,C_3$ are
not homotopic and thus their intersection is non empty.
\end{lemma}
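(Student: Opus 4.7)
The plan is to derive Lemma~\ref{lem:intersect3} directly from the two preceding lemmas, using homotopy as an equivalence relation on non-contractible closed curves.

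First I would establish the non-homotopy of $C_2$ and $C_3$ by contradiction. Suppose $C_2$ and $C_3$ were homotopic. Together with the hypothesis that $C_1$ and $C_2$ are homotopic, Lemma~\ref{lem:homotopic3} (applied with the roles of the curves matched to the statement, i.e.\ transitivity through the common curve $C_2$) would force $C_1$ and $C_3$ to be homotopic. This directly contradicts the hypothesis, so $C_2$ and $C_3$ cannot be homotopic.

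Once non-homotopy is established, I would invoke Lemma~\ref{lem:intersect2} applied to the pair $(C_2,C_3)$: since both are non-contractible closed curves on the torus and they are not homotopic, their intersection is non-empty. This yields the second conclusion of the lemma.

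There is no real obstacle here; the lemma is essentially a packaging of transitivity of homotopy together with the intersection property for non-homotopic non-contractible curves. The only point deserving care is to make sure Lemma~\ref{lem:homotopic3} can legitimately be applied, which requires each of $C_1,C_2,C_3$ to be non-contractible — but this is precisely the standing assumption of the statement.
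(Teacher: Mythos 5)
Your argument is correct and is exactly the intended derivation: the paper states this lemma without proof, presenting it as an immediate consequence of Lemma~\ref{lem:homotopic3} (transitivity of homotopy among non-contractible curves) and Lemma~\ref{lem:intersect2}, which is precisely how you proceed. Nothing is missing.
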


Two non contractible oriented closed curves on the torus are
\emph{fully-homotopic} if one can be continuously transformed into the
other by preserving the orientation.  We say that two monochromatic
directed cycles $C_i,C_j$ of different colors are \emph{reversal} if
one is obtained from the other by reversing all the edges
($C_i=C_j^{-1}$). We say that two monochromatic cycle are
\emph{crossing} if they intersects but are not reversal. We define the
\emph{right side} of a $i$-cycle $C_i$, as the right side while
``walking'' along the directed cycle by following the orientation
given by the edges colored $i$.

Let $G$ be a toroidal graph given with a Schnyder wood.

\begin{lemma}
\label{lem:allhomotopic}
All $i$-cycles are non contractible, non intersecting and
fully-homotopic.
\end{lemma}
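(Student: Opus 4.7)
The lemma contains three statements about $i$-cycles. Non-contractibility and pairwise disjointness are quick: every $i$-cycle is a directed cycle in $G_i\subseteq G_i\cup G_{i-1}^{-1}\cup G_{i+1}^{-1}$, hence non-contractible by Lemma~\ref{lem:nocontractiblecycle}; and two $i$-cycles sharing a vertex would each leave it by the unique outgoing $i$-edge permitted by (T1), so by induction they coincide.

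The core work is fully-homotopicness. By the first two parts combined with Lemma~\ref{lem:intersect2}, any two $i$-cycles are disjoint non-contractible curves on the torus, hence homotopic as unoriented curves, and it suffices to exclude a reverse-homotopic pair. I argue by minimal counterexample: suppose such a pair exists, and choose one $(C_1,C_2)$ bounding an annulus $A$ with the minimum number of faces. On the torus, a disjoint pair of homotopic non-contractible cycles is fully-homotopic exactly when $A$ lies on opposite local sides of the two cycles, and reverse-homotopic exactly when $A$ lies on the same local side of both; after possibly permuting colors I assume $A$ is on the ``left'' of both (the opposite side from the \emph{right side} defined in the paper). Unpacking (T1) at any $v\in C_1\cup C_2$ then shows that the outgoing $(i+1)$-edge is on the left of the cycle while every incoming $(i+1)$-edge comes from the right.

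Combining these angular conditions with the fact that an edge embedded in the torus cannot cross $C_1\cup C_2$ except through its vertices, one obtains the global statements that every outgoing $(i+1)$-edge from a vertex of $\partial A$ has its head in $\mathrm{int}(A)$, and no $(i+1)$-edge from $\bar A$ reaches $\partial A$. Following outgoing $(i+1)$-edges from any vertex of $\bar A$ therefore remains in $\bar A$ and by finiteness closes into a monochromatic $(i+1)$-cycle $E$ that must avoid $\partial A$, so $E\subset\mathrm{int}(A)$. Applying (T2) to $E$ yields an $i$-cycle $C'$ meeting $E$; since $C'$ is disjoint from $C_1$ and from $C_2$, is connected, and meets $\mathrm{int}(A)$, it lies entirely in $\mathrm{int}(A)$. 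Then $C'$ is fully-homotopic to exactly one of $C_1,C_2$; whichever of $(C_1,C')$ or $(C',C_2)$ is the resulting reverse-homotopic pair bounds an annulus strictly contained in $A$ and hence with strictly fewer faces, contradicting the minimality of $(C_1,C_2)$.

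The step I expect to be the main obstacle is this local-to-global passage: (T1) provides only angular information at each individual vertex, and promoting it to the statement that $(i+1)$-edges cannot leave $\bar A$ in the forward direction relies on the non-crossing of embedded edges in the torus. Matching ``reverse-homotopic'' with ``$A$ on the same local side of both $C_1$ and $C_2$'' is another place that needs a careful but short topological check.
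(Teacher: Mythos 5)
Your first three claims (non-contractibility via Lemma~\ref{lem:nocontractiblecycle}, disjointness via uniqueness of the outgoing $i$-edge, unoriented homotopy via Lemma~\ref{lem:intersect2}) coincide with the paper's. For fully-homotopicness your route is genuinely different --- a minimal annulus $A$, an $(i+1)$-cycle trapped in $\mathrm{int}(A)$, then (T2) applied to that cycle to produce an $i$-cycle inside $A$ contradicting minimality --- whereas the paper argues directly that the $(i+1)$-cycle which (T2) forces to cross $C_i$ must exit the annulus lying on the right of both cycles and can never re-enter it. Your scheme would be a valid, if longer, alternative, except for one genuine gap.

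The gap is in the local claim that at $v\in C_1\cup C_2$ ``the outgoing $(i+1)$-edge is on the left of the cycle.'' This fails when $e_{i+1}(v)$ is a bi-oriented edge whose other direction is the incoming $i$-edge of the boundary cycle at $v$: then $e_{i+1}(v)$ lies \emph{along} $C_1$ and its head is on $\partial A$, not in $\mathrm{int}(A)$. This is not a pathology you may ignore --- it is exactly the Type~2 phenomenon of Theorem~\ref{lem:type}, in which the reversal of an entire monochromatic cycle is a monochromatic cycle of another color. In that situation a walk started at a vertex of $C_1$ simply traverses $C_1^{-1}$ forever, the cycle $E$ it closes into is $C_1^{-1}\subseteq\partial A$, and the rest of your argument collapses. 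The paper guards against precisely this with the step ``suppose by symmetry that $C_i^{-1}$ is not an $(i+1)$-cycle.'' Your proof is repairable: if some vertex of $\mathrm{int}(A)\cup\partial A$ has its outgoing $(i+1)$-edge pointing strictly into $\mathrm{int}(A)$, start the walk there and everything you wrote goes through; in the remaining case ($C_1^{-1}$ and $C_2^{-1}$ both $(i+1)$-cycles and $\mathrm{int}(A)$ vertex-free) one checks that no edge meets $\mathrm{int}(A)$ at all, hence vertices of $C_1$ receive no incoming $(i-1)$-edge, so no $(i-1)$-cycle meets $C_1$ and (T2) is violated. But this case analysis is absent, and the step you flagged as the main obstacle --- the local-to-global passage via non-crossing of embedded edges --- is actually the sound part; the delicate point is the local statement itself in the presence of bi-oriented edges.
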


\begin{proof}
  By Lemma~\ref{lem:nocontractiblecycle}, all $i$-cycles are non
  contractible.  If there exists two such distinct $i$-cycles that are
  intersecting.  Then there is a vertex that has two outgoing edge of
  color $i$, a contradiction to (T1). So the $i$-cycles are non
  intersecting.  Then, by Lemma~\ref{lem:intersect2}, they are
  homotopic.

  Suppose that there exists two $i$-cycles $C_i,C_i'$ that are not
  fully-homotopic.  By the first part of the proof, cycles $C_i,C_i'$
  are non contractible, non intersecting and homotopic.  Let $R$ be
  the region between $C_i$ and $C_i'$ situated on the right of $C_i$.
  Suppose by symmetry that $C_i^{-1}$ is not a $(i+1)$-cycle. By (T2),
  there exists a cycle $C_{i+1}$ intersecting $C_i$ and thus $C_{i+1}$
  is crossing $C_i$. By Property (T1), $C_{i+1}$ is entering $C_i$
  from its right side and so it is leaving the region $R$ when it
  crosses $C_i$. To enter the region $R$, the cycle $C_{i+1}$ has to
  enter $C_i$ or $C_i'$ from their left side, a contradiction to
  property (T1).
\end{proof}

\begin{lemma}
\label{lem:twonothomotopic}
If two monochromatic cycles are crossing then they are of different
colors and they are not homotopic.
\end{lemma}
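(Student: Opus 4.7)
My plan is to treat the two parts of the conclusion separately. For the colors, the statement is a direct consequence of Lemma~\ref{lem:allhomotopic}, which asserts that distinct $i$-cycles are pairwise non-intersecting. Hence two crossing cycles, which by definition intersect and are distinct, cannot share a color.

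For the non-homotopy part, suppose for contradiction that two crossing cycles $C_i$ and $C_j$ with $i\neq j$ are homotopic; up to relabeling colors, assume $j = i+1$. I would first do a local analysis at any shared vertex $v$ using the Schnyder property (T1): the outgoing edge $e_{i+1}(v)$ of $C_{i+1}$ lies in the sector strictly to the left of $C_i$ at $v$, while any incoming $(i+1)$-edge at $v$ lies in the sector strictly to the right of $C_i$ at $v$. Thus, whenever $C_{i+1}$ crosses $C_i$ transversely, it goes from the right side of $C_i$ to its left side, i.e. contributes $+1$ to the algebraic intersection number (with the standard counterclockwise orientation convention for the torus).

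I then invoke the topological fact that on the torus the algebraic intersection number depends only on homology classes and the intersection form on $H_1(T^2;\mathbb Z)$ is alternating. Under the homotopy hypothesis $[C_{i+1}] = \pm [C_i]$, this intersection number must equal zero. On the other hand, the local analysis shows that every intersection contributes $+1$, so the number is strictly positive, a contradiction.

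The main obstacle to this plan is that $C_i$ and $C_{i+1}$ may share some bicolored edges, along which the intersection is tangential rather than transverse. I would handle this case by a small perturbation argument: for any maximal shared sub-path $P$, the Schnyder analysis at the two endpoints of $P$ shows that $C_{i+1}$ enters the right side of $C_i$ at one endpoint of $P$ and leaves its left side at the other. Perturbing $C_{i+1}$ slightly off $P$ therefore produces exactly one transverse right-to-left crossing per shared sub-path, so the signed count remains strictly positive and the contradiction goes through.
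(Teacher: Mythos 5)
Your proof is correct, and while it rests on the same local consequence of (T1) as the paper's proof, it packages the topology differently. The first part (different colors) is identical to the paper's, via Lemma~\ref{lem:allhomotopic}. For the second part, the paper argues with the pair $C_{i-1},C_{i+1}$ as follows: since the cycles are not reversal and intersect, $C_{i+1}$ leaves $C_{i-1}$ somewhere, necessarily on its right side by (T1); being homotopic and non-contractible, $C_{i+1}$ must then re-enter $C_{i-1}$ from that same right side, which (T1) forbids. That is an annulus-type ``same side in, same side out'' argument. You instead sum signed transverse crossings and play the strictly positive count against the vanishing of the self-intersection number in $H_1(T^2;\mathbb Z)$. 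Both arguments hinge on exactly the same (T1) computation --- that at a vertex of $C_i$ the outgoing $(i+1)$-edge lies on the left of $C_i$ and incoming $(i+1)$-edges lie on the right, so every crossing has the same sign --- so the approaches are close cousins; yours buys a cleaner global bookkeeping (especially your explicit perturbation treatment of shared bi-oriented sub-paths, which the paper's proof passes over silently), at the cost of invoking the homological intersection form, whereas the paper stays elementary. Two small points worth making explicit in your write-up: (i) the hypothesis that the cycles are \emph{crossing}, i.e.\ not reversal, is what guarantees that your maximal shared sub-paths are proper (have two endpoints) and hence that at least one $+1$ contribution exists --- if the cycles were reversal your perturbation could push them apart entirely; and (ii) since the paper's notion of ``homotopic'' is unoriented, you correctly allow $[C_{i+1}]=\pm[C_i]$, either sign of which kills the intersection number.
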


\begin{proof}
  By Lemma~\ref{lem:allhomotopic}, two crossing monochromatic cycles
  are not of the same color. Suppose that there exists two
  monochromatic cycles $C_{i-1}$ and $C_{i+1}$, of color $i-1$ and
  $i+1$, that are crossing and homotopic.  By
  Lemma~\ref{lem:nocontractiblecycle}, the cycles $C_{i-1}$ and
  $C_{i+1}$ are not contractible.  Since $C_{i-1}\neq C_{i+1}^{-1}$
  and $C_{i-1}\cap C_{i+1}\neq \emptyset$, the cycle $C_{i+1}$ is
  leaving $C_{i-1}$. It is leaving $C_{i-1}$ on its right side by
  (T1).  Since $C_{i-1}$ and $C_{i+1}$ are homotopic, the cycle
  $C_{i+1}$ is entering $C_{i-1}$ at least once from its right
  side. This is in contradiction with (T1).
\end{proof}

Let $\mathcal C_i$ be the set of $i$-cycles of $G$. Let $(\mathcal
C_i)^{-1}$ denote the set of cycles obtained by reversing all the
cycles of $\mathcal C_i$.  By Lemma~\ref{lem:allhomotopic}, the cycles
of $\mathcal C_i$ are non contractible, non intersecting and
fully-homotopic. So we can order them as follow $\mathcal
C_i=\{C_i^0,\ldots,C_i^{k_i-1}\}$, $k_i\geq 1$, such that, for $0\leq
j\leq k_i-1$, there is no $i$-cycle in the region $R(C_i^j,C_i^{j+1})$
between $C_i^j$ and $C_i^{j+1}$ containing the right side of $C_i^j$
(superscript understood modulo $k_i$).

We show that Schnyder woods are of two different types (see
figure~\ref{fig:case}):

\begin{figure}[!h]
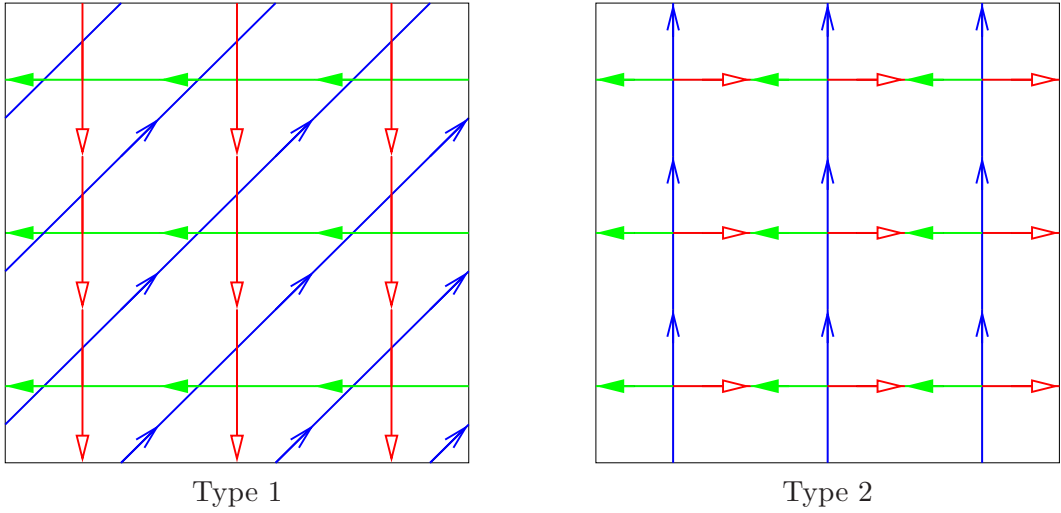

\center
\begin{tabular}{ccc}
\includegraphics[scale=0.4]{case3d.eps}
& \hspace{2em} & 
\includegraphics[scale=0.4]{case2d.eps} \\
Type 1& & Type 2  
\end{tabular}
\caption{The two types of Schnyder woods on toroidal graphs}
\label{fig:case}
\end{figure}

\begin{theorem}
  \label{lem:type}
Let $G$ be a toroidal graph given with a Schnyder wood, then all $i$-cycles are non contractible, non intersecting and
fully-homotopic and either:
\begin{itemize}
\item For every pair of two
  monochromatic cycles $C_i,C_j$ of different colors $i,j$, the two
  cycles $C_i$ and $C_j$ are not homotopic and thus intersect (We say
  the Schnyder wood is of \emph{Type 1}).
\end{itemize}
or
\begin{itemize}
\item There exists a color $i$ such that $\mathcal C_{i-1}=(\mathcal
  C_{i+1})^{-1}$ and for any pair of monochromatic cycles $C_i,C_j$ of
  colors $i,j$, with $j\neq i$, the two cycles $C_i$ and $C_j$ are not
  homotopic and thus intersect (We say the Schnyder wood is of \emph{Type 2},
  or \emph{Type 2.i} if we want to specify the color $i$).
\end{itemize}

Moreover, if $G$ is a toroidal triangulation, then there is no edges
oriented in two directions and the Schnyder wood is of Type 1.
\end{theorem}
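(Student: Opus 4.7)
The first assertion of the theorem---that all $i$-cycles are non-contractible, non-intersecting and fully-homotopic---is exactly Lemma~\ref{lem:allhomotopic}. Armed with this, my plan is to assign to each color $i$ a well-defined unoriented homotopy class $h_i$ of non-contractible simple closed curves on the torus, and then dichotomize on the coincidences among $h_0, h_1, h_2$. If the three classes are pairwise distinct, any two monochromatic cycles of different colors lie in distinct homotopy classes and are therefore not homotopic; Lemma~\ref{lem:intersect2} forces them to intersect, yielding Type 1.

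Otherwise some $h_{i-1} = h_{i+1}$, and I would first establish $\mathcal{C}_{i-1} = (\mathcal{C}_{i+1})^{-1}$: given any $(i-1)$-cycle $C$, (T2) provides an $(i+1)$-cycle $C'$ intersecting $C$; since $C$ and $C'$ are homotopic, Lemma~\ref{lem:twonothomotopic} forbids crossing, and intersection without crossing forces reversal, i.e., $C = (C')^{-1}$. The symmetric argument starting from an $(i+1)$-cycle provides the reverse assignment, yielding a bijection by reversal. The next step is to show $h_i \neq h_{i\pm 1}$: if, say, $h_i = h_{i-1}$ (hence also $h_i = h_{i+1}$ by transitivity, Lemma~\ref{lem:homotopic3}), then applying the same reversal argument to the pairs $(i,i-1)$ and $(i,i+1)$ would give $\mathcal{C}_i = (\mathcal{C}_{i-1})^{-1}$ and $\mathcal{C}_i = (\mathcal{C}_{i+1})^{-1}$, hence $\mathcal{C}_{i-1} = \mathcal{C}_{i+1}$ as sets of directed cycles. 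This would force a single oriented edge to carry two distinct colors, contradicting the definition of a Schnyder wood. With $h_i \neq h_{i\pm 1}$ in hand, Lemma~\ref{lem:intersect2} delivers ``not homotopic, thus intersect'' for every pair $C_i, C_j$ with $j \neq i$, establishing Type 2.i.

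For the moreover, a toroidal triangulation satisfies $m = 3n$ by Euler's formula, while (T1) forces each vertex to have out-degree exactly $3$. Letting $m_1, m_2$ count uni- and bi-oriented edges, $m_1 + m_2 = m = 3n$ and $m_1 + 2m_2 = 3n$, whence $m_2 = 0$: no edge is bi-oriented. A Type 2 wood, with $\mathcal{C}_{i-1} = (\mathcal{C}_{i+1})^{-1}$, would contain a cycle all of whose edges are bi-oriented (colored $i-1$ one way and $i+1$ the other), so Type 2 is impossible and the wood is of Type 1. The main obstacle in the argument is the step showing $h_i \neq h_{i\pm 1}$ in the Type 2 case: it rules out the degenerate possibility $h_0 = h_1 = h_2$ and crucially relies on the axiom that each oriented edge of the graph carries at most one color.
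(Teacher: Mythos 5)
Your proof is correct and follows essentially the same route as the paper: Lemma~\ref{lem:allhomotopic} for the first assertion, then a dichotomy on homotopy coincidences among the colors, resolved via (T2), Lemma~\ref{lem:twonothomotopic} (homotopic intersecting cycles of different colors must be reversal) and the fact that a single directed edge carries only one color. The only minor variation is in showing $h_i\neq h_{i\pm1}$ in the Type~2 case, where the paper argues directly that a $j$-cycle, being bi-oriented in colors $i-1$ and $i+1$, cannot be the reversal of an $i$-cycle and then invokes Lemma~\ref{lem:intersect3}, whereas you reach the same contradiction through the set identities $\mathcal C_i=(\mathcal C_{i\pm1})^{-1}$.
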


\begin{proof}
  By Lemma~\ref{lem:allhomotopic}, all $i$-cycles are non
  contractible, non intersecting and fully-homotopic.  Suppose that
  there exists a $(i-1)$-cycle $C_{i-1}$ and a $(i+1)$-cycle $C_{i+1}$
  that are homotopic, i.e. $C_{i-1}=C_{i+1}^{-1}$. We prove that the
  Schnyder wood is of Type 2.i.  We first prove that $\mathcal
  C_{i-1}=(\mathcal C_{i+1})^{-1}$.  Let $C'_{i-1}$ be any
  $(i-1)$-cycle.  By (T2), $C'_{i-1}$ intersects a $(i+1)$-cycle
  $C'_{i+1}$.  By Lemma~\ref{lem:allhomotopic}, $C'_{i-1}$
  (resp. $C'_{i+1}$) is homotopic to $C_{i-1}$ (resp. $C_{i+1}$). So,
  by Lemma~\ref{lem:homotopic3}, $C'_{i-1}$ and $C'_{i+1}$ are
  homotopic. By Lemma~\ref{lem:twonothomotopic}, $C'_{i-1}$ and
  $C'_{i+1}$ are reversal. Thus $\mathcal C_{i-1}\subseteq(\mathcal
  C_{i+1})^{-1}$ and so by symmetry $\mathcal C_{i-1}=(\mathcal
  C_{i+1})^{-1}$.  Now we prove that for any pair of monochromatic
  cycles $C'_i,C'_j$ of colors $i,j$, with $j\neq i$, the two cycles
  $C'_i$ and $C'_j$ are not homotopic.  By (T2), $C'_{j}$ intersects a
  $i$-cycle $C_{i}$. Since $\mathcal C_{i-1}=(\mathcal C_{i+1})^{-1}$,
  cycle $C'_{j}$ is bi-oriented in color $i-1$ and $i+1$, thus we
  cannot have $C'_{j}=C_{i}^{-1}$. So $C'_{j}$ and $C_{i}$ are
  crossing and by Lemma~\ref{lem:twonothomotopic}, they are not
  homotopic.  By Lemma~\ref{lem:allhomotopic}, $C'_{i}$ and $C_i$ are
  homotopic. Thus, by Lemma~\ref{lem:intersect3}, $C'_{j}$ and
  $C'_{i}$ are not homotopic. Thus the Schnyder wood is of Type 2.i.

  If there is no two monochromatic cycles of different colors that are
  homotopic, then the Schnyder wood is of Type 1.

  For toroidal triangulation, $m=3n$ by Euler's formula, so there is no edges oriented in
  two directions, so only Type 1 is possible.
\end{proof}

Note that in a Schnyder wood of Type 1, we may have edges that are in
two monochromatic cycles of different colors (see
Figure~\ref{fig:example-schnyder}).

We do not know if the set of Schnyder woods of a given toroidal graph
has a kind of lattice structure like in the planar case~\cite{Fel04}.
De Fraysseix et al. \cite{FO01} proved that Schnyder woods of a planar
triangulation are in one-to-one correspondence with orientation of the
edges of the graph where each inner vertex has out-degree three. It is
possible to retrieve the coloring of the edges of a Schnyder wood from
the orientation. The situation is different for toroidal
triangulations. There exists orientations of toroidal triangulations
where each vertex has out-degree three but there is no corresponding
Schnyder wood.  For example, if one consider a toroidal triangulation
with just one vertex, the orientations of edges that satisfies (T1)
are the orientations where there is no three consecutive edges leaving
the vertex (see Figure~\ref{fig:outdegree}).

\begin{figure}[!h]
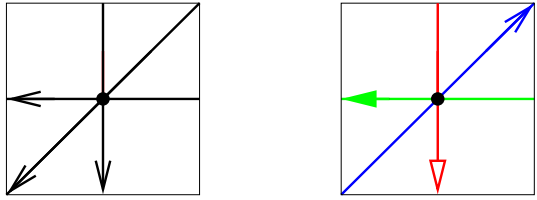

\center
\includegraphics[scale=0.5]{orientation.eps}
\hspace{4em}
\includegraphics[scale=0.5]{orientation-col.eps}
\caption{Two different orientations of a toroidal triangulation. Only
  the second one corresponds to a Schnyder wood.}
\label{fig:outdegree}
\end{figure}

\section{Schnyder woods in the universal cover}
\label{sec:universal}

Let $G$ be a toroidal graph given with a Schnyder wood. Consider the
orientation and coloring of the edges of $G^\infty$ that corresponds
to the Schnyder wood of $G$.

\begin{lemma}
  The orientation and coloring of the edges of $G^\infty$ satisfies
  the following:

\begin{itemize}
\item[(U1)]   Every vertex of $G^\infty$ verifies the Schnyder property 
\item[(U2)]   There is no monochromatic cycle in $G^\infty$.
\end{itemize}
\end{lemma}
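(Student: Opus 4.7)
The plan is to prove (U1) by a direct local argument and (U2) by projecting an alleged cycle in $G^\infty$ down to $G$ and deriving a contradiction with Lemma~\ref{lem:allhomotopic}.

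For (U1), I would observe that the covering projection $\pi: G^\infty \to G$ is, by construction of the universal cover from the flat torus representation, a local isomorphism of embedded graphs: a small neighborhood of any vertex $v \in G^\infty$ maps bijectively and in a cyclic-order preserving way to a neighborhood of $\pi(v)$. By definition, the orientation and coloring on $G^\infty$ is the pull-back along $\pi$ of that of $G$. Since the three clauses of the Schnyder property (out-degree one in each color, counterclockwise order of the three outgoing edges, and the counterclockwise sector condition for incoming edges) only depend on this local cyclic data, they hold at $v$ if and only if they hold at $\pi(v)$, and the latter is guaranteed by (T1).

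For (U2), I would argue by contradiction. Suppose $C^\infty$ is a directed monochromatic cycle of color $i$ in $G^\infty$, and pick a vertex $u$ on it. By (U1) each vertex of $G^\infty$ has a unique outgoing edge of color $i$, so iterating this gives a sequence $u = u_0, u_1, u_2, \ldots$ that walks along $C^\infty$ and satisfies $u_n = u_0$ for $n = |C^\infty|$. Projecting, $\pi(u_0), \pi(u_1), \ldots$ is a closed walk in $G_i$ of length $n$ obtained by iterating the unique $i$-out-neighbor map. This forces $\pi(u_0)$ to lie on an $i$-cycle $C_i$ of $G$, and $n$ to be a positive multiple of $|C_i|$.

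Now I would lift $C_i$ back to $G^\infty$ starting at $u_0$. Since $\pi$ is a covering and the out-edge of color $i$ is unique at every vertex, the lift of the infinite periodic out-walk in $G$ starting from $\pi(u_0)$ is exactly the sequence $u_0, u_1, u_2, \ldots$. By Lemma~\ref{lem:allhomotopic}, the cycle $C_i$ is non-contractible on the torus, so one full traversal of $C_i$ is realized in $G^\infty$ by a non-trivial deck transformation $\tau$ (a translation of the plane): $u_{k|C_i|} = \tau^k(u_0)$ for every $k \geq 1$, and $\tau^k(u_0) \neq u_0$ because $\tau$ has no fixed points. Since $n$ is a positive multiple of $|C_i|$, this contradicts $u_n = u_0$, proving (U2).

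The main obstacle is the bookkeeping around lifting: one needs to be precise that iterating the $i$-out-neighbor map in $G^\infty$ coincides with the unique lift through $\pi$ of the corresponding iteration in $G$, and that a full period in $G$ translates to a non-trivial deck transformation in $G^\infty$ precisely because of the non-contractibility supplied by Lemma~\ref{lem:allhomotopic}. Once this lifting correspondence is spelled out, the contradiction is immediate.
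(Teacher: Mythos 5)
Your proof is correct and follows essentially the same route as the paper: project the alleged monochromatic cycle of $G^\infty$ down to a monochromatic cycle of $G$ and contradict the non-contractibility of monochromatic cycles of $G$ (the non-contractibility you quote from Lemma~\ref{lem:allhomotopic} is exactly Lemma~\ref{lem:nocontractiblecycle}, which the paper invokes directly, and neither depends on the present lemma, so there is no circularity). If anything, your deck-transformation bookkeeping handles the case where the projected walk wraps around the $i$-cycle several times more carefully than the paper's terse ``self-intersection'' argument.
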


\begin{proof}
  Clearly, (U1) is satisfied.  Now we prove (U2). Suppose by
  contradiction that there is a monochromatic cycle $U$ of color $i$
  in $G^\infty$.  Let $C$ be the closed curve of $G$ corresponding to
  edges of $U$.  If $C$ self intersects, then there is a vertex of $G$
  with two edges leaving $v$ in color $i$, a contradiction to (T1). So
  $C$ is a monochromatic cycle of $G$.  Since $C$ corresponds to a
  cycle of $G^\infty$, it is a contractible cycle of $G$, a
  contradiction to Lemma~\ref{lem:nocontractiblecycle}.
\end{proof}

One can remark that properties (U1) and (U2) are the same as in the
definition of Schnyder wood for 3-connected planar graphs (properties
(P1) and (P2)). Note that if the orientation and coloring of the edges
of $G^\infty$, corresponding to an orientation and coloring of the
edges of $G$, satisfies properties (U1) and (U2), we do not
necessarily have a Schnyder wood of $G$. For example the graph
$G^\infty$ obtained by replicating the graph $G$ of
Figure~\ref{fig:nocross} satisfies (U1) and (U2) whereas the
orientation and coloring of $G$ is not a Schnyder wood as (T2) is not
satisfied.

Recall that the notation $\mathcal C_i=\{C_i^0,\ldots,C_i^{k_i-1}\}$
denotes the set of $i$-cycles of $G$ such that there is no $i$-cycle
in the region $R(C_i^j,C_i^{j+1})$. As monochromatic cycles are not
contractible by Lemma~\ref{lem:nocontractiblecycle}, a directed
monochromatic cycles $C_i^j$ corresponds to a family of infinite
directed monochromatic paths of $G^\infty$ (infinite in both
directions of the path). This family is denoted $\mathcal L_i^j$. Each
element of $\mathcal L_i^j$ is called a \emph{monochromatic line} of
color $i$, or \emph{$i$-line} for short. By
Lemma~\ref{lem:allhomotopic}, all $i$-lines are non intersecting and
oriented in the same direction. Given any two $i$-lines $L$, $L'$, the
unbounded region between $L$ and $L'$ is noted $R(L,L')$.  We say that
two $i$-lines $L,L'$ are \emph{consecutive} if there is no $i$-lines
contained in $R(L,L')$.

Let $v$ be a vertex of $G^\infty$. For each color $i$, vertex $v$ is
the starting vertex of a unique infinite directed monochromatic path
of color $i$, denoted $P_i(v)$. Indeed this is a path since there is
no monochromatic cycle in $G^\infty$ by Property (U2), and it is
infinite (in one direction of the path only) because every reached
vertex of $G^\infty$ has exactly one edge leaving in color $i$ by
Property (U1).  As $P_i(v)$ is infinite, it necessarily contains two
vertices $u,u'$ of $G^\infty$ that are copies of the same vertex of
$G$. The subpath of $P_i(v)$ between $u$ and $u'$ corresponds to a
$i$-cycle of $G$ and thus is part of a $i$-line of $G^\infty$.  Let
$L_i(v)$ be the $i$-line intersecting $P_i(v)$.

\begin{lemma}
  \label{lem:nocontractiblecycleuniversal}
  The graph $G^\infty_i\cup (G_{i-1}^{^\infty})^{-1}\cup
  (G_{i+1}^{^\infty})^{-1}$ contains no directed cycle.
\end{lemma}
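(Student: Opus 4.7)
The plan is to adapt the proof of Lemma~\ref{lem:nocontractiblecycle} to the universal cover, where the argument becomes strictly simpler because $G^\infty$ is planar (so every directed cycle bounds a finite topological disk) and, by (U2), no monochromatic cycles exist at all.

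First, I would argue by contradiction. Suppose there is a directed cycle $U$ in $G^\infty_i\cup (G^\infty_{i-1})^{-1}\cup (G^\infty_{i+1})^{-1}$, and choose one that minimizes the number of faces of $G^\infty$ contained in the closed disk $\overline D$ bounded by $U$. Up to exchanging the roles of colors $i-1$ and $i+1$, I may assume that $U$ turns around $\overline D$ clockwisely.

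Next, I would reuse the local sector analysis from the proof of Lemma~\ref{lem:nocontractiblecycle}: at every vertex $v$ of $U$, the two edges of $U$ at $v$ together with property (U1) force the edge $e_{i-1}(v)$ leaving $v$ in color $i-1$ to lie inside $\overline D$. Since any interior vertex of $\overline D$ automatically has all its outgoing edges in $\overline D$, the infinite directed path $P_{i-1}(v)$ obtained by iteratively following outgoing $(i-1)$-edges from any vertex $v\in U$ never leaves $\overline D$.

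Finally, $\overline D$ contains only finitely many vertices of $G^\infty$, so $P_{i-1}(v)$ must revisit a vertex, producing a monochromatic cycle of color $i-1$ in $G^\infty$, in direct contradiction with~(U2). The only delicate step is verifying the sector claim that $e_{i-1}(v)\subset \overline D$ for every $v\in U$; but this is exactly the step carried out in Lemma~\ref{lem:nocontractiblecycle}, and it transfers verbatim to the universal cover because~(U1) is~(T1) lifted to $G^\infty$.
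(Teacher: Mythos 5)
Your proposal is correct and follows essentially the same route as the paper: use planarity of $G^\infty$ so the cycle bounds a disk, apply (U1) to see that no edge of color $i-1$ (after the clockwise normalization) leaves the disk, and conclude that the finite disk must contain a monochromatic $(i-1)$-cycle, contradicting (U2). The minimality choice of $U$ is superfluous but harmless, and your extra detail on why the trapped paths $P_{i-1}(v)$ must close up is a correct elaboration of the paper's terser statement.
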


\begin{proof}
  Suppose there is a contractible directed cycle $C$ in
  $G^\infty_i\cup (G_{i-1}^{^\infty})^{-1}\cup
  (G_{i+1}^{^\infty})^{-1}$. Let $D$ be the closed disk bounded by
  $C$.  Suppose by symmetry that $C$ turns around $D$
  clockwisely. Then, by (U1), there is no edge of color $i-1$ leaving
  the closed disk $D$. So there is a $(i-1)$-cycle in $D$, a
  contradiction to (U2).
\end{proof}

\begin{lemma}
  \label{lem:nocommon}
  For every vertex $v$ and color $i$, the two paths
  $P_{i-1}(v)$ and $P_{i+1}(v)$ have $v$ as only common vertex.
\end{lemma}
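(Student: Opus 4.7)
The plan is to argue by contradiction, exhibiting a closed directed walk in a graph that Lemma~\ref{lem:nocontractiblecycleuniversal} already guarantees to be acyclic. So suppose for contradiction that $P_{i-1}(v)$ and $P_{i+1}(v)$ share a vertex $w\neq v$.

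First I extract the two relevant subpaths. Let $Q^-$ be the portion of $P_{i-1}(v)$ from $v$ to $w$, and let $Q^+$ be the portion of $P_{i+1}(v)$ from $v$ to $w$. Since each $P_j(v)$ is a genuine directed path starting at $v$, both $Q^-$ and $Q^+$ are well-defined finite directed subpaths. Every edge of $Q^-$ is oriented in color $i-1$ in the direction from $v$ to $w$, so $Q^-$ is a directed path in $G^\infty_{i-1}$; similarly every edge of $Q^+$ is a directed path in $G^\infty_{i+1}$ from $v$ to $w$.

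Next I build the bad closed walk. Concatenate $Q^-$ with the reverse of $Q^+$ to obtain a closed walk $W$ based at $v$, of positive length because $w\neq v$. The reversed $Q^+$ edges, traversed against their $(i+1)$-orientation, lie in $(G^\infty_{i+1})^{-1}$, so $W$ is a closed directed walk in $G^\infty_{i-1}\cup (G^\infty_{i+1})^{-1}$. Reading indices modulo $3$ we have $i+1\equiv (i-1)-1$ and $i\equiv (i-1)+1$, hence this subgraph sits inside $G^\infty_{i-1}\cup (G^\infty_{(i-1)-1})^{-1}\cup (G^\infty_{(i-1)+1})^{-1}$, which Lemma~\ref{lem:nocontractiblecycleuniversal} applied with color $i-1$ asserts to be acyclic. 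Any closed directed walk of positive length contains a directed cycle, which is the required contradiction.

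The only delicate point is the indexing modulo~$3$: once one picks the shift $j=i-1$, the two colors supplied by $Q^-$ and the reverse of $Q^+$ line up exactly with the pattern in Lemma~\ref{lem:nocontractiblecycleuniversal}. I deliberately work with a closed walk rather than a simple cycle; this sidesteps the technical choice of $w$ needed to make $Q^-$ and $Q^+$ internally disjoint, since extracting a cycle from a closed directed walk is automatic and is all the acyclicity lemma demands.
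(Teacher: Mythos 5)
Your proof is correct and takes essentially the same route as the paper's: concatenating the $(i-1)$-path from $v$ to $w$ with the reversed $(i+1)$-path yields a directed cycle in $G^\infty_{i-1}\cup (G_{i+1}^{\infty})^{-1}$, contradicting Lemma~\ref{lem:nocontractiblecycleuniversal}. The paper states this in a single line; your version only adds the routine detail of extracting a directed cycle from the closed walk and of checking the index shift modulo $3$.
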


\begin{proof}
  If $P_{i-1}(v)$ and $P_{i+1}(v)$  intersect on two vertices,
  then $G^\infty_{i-1}\cup (G_{i+1}^{^\infty})^{-1}$  contains a cycle, contradicting
  Lemma~\ref{lem:nocontractiblecycleuniversal}.
\end{proof}

By Lemma~\ref{lem:nocommon}, for every vertex $v$, the three paths
$P_0(v)$, $P_1(v)$, $P_2(v)$ divide $G^\infty$ into three unbounded
regions $R_0(v)$, $R_1(v)$ and $R_2(v)$, where $R_i(v)$ denotes the
region delimited by the two paths $P_{i-1}(v)$ and $P_{i+1}(v)$. Let
$R_i^{\circ}(v)=R_i(v)\setminus (P_{i-1}(v)\cup P_{i+1}(v))$.

\begin{figure}[!h]
\center
\input{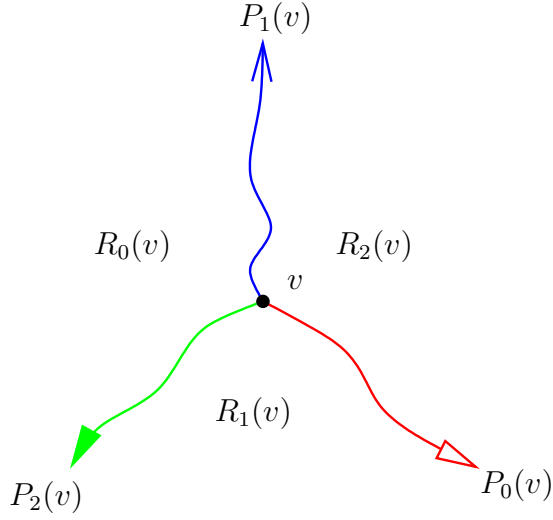}
\caption{Regions corresponding to a vertex}
\label{fig:regions}
\end{figure}

\begin{lemma}
  \label{lem:regionss} For all distinct vertices $u$, $v$, we have:
\begin{description}
\item[(i)] If $u\in R_i (v)$, then $R_i(u)\subseteq R_i(v)$.
\item[(ii)] If $u\in R_i^\circ (v)$, then $R_i(u)\subsetneq R_i(v)$.
\item[(iii)] There exists $i$ and $j$ with $R_i(u) \subsetneq R_i(v)$
  and $R_j(v)\subsetneq R_j (u)$.
\end{description}
\end{lemma}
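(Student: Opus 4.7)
I would prove (i), (ii) and (iii) in that order.

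For (i), the goal is to show that the two bounding paths $P_{i-1}(u)$ and $P_{i+1}(u)$ of $R_i(u)$ lie entirely in $R_i(v)$. Since $G^\infty$ is a planar embedding and edges do not cross, such a path can only leave $R_i(v)$ by passing through a vertex $w$ on the boundary $P_{i-1}(v)\cup P_{i+1}(v)$ and then taking an edge into the complementary region. I would split on which boundary path contains $w$. If $w\in P_{i+1}(v)$, then by (U1) the unique outgoing color-$(i+1)$ edge at $w$ is the next edge of $P_{i+1}(v)$, so $P_{i+1}(u)$ merges into $P_{i+1}(v)$ and remains on the boundary of $R_i(v)$. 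If $w\in P_{i-1}(v)\setminus\{v\}$, I would analyze the cyclic order of edges around $w$: (U1) forces the incoming $(i-1)$-edge to lie in the counterclockwise sector from $e_i(w)$ to $e_{i+1}(w)$; placing edges in the standard picture shows that $R_i(v)$ lies on the side of $P_{i-1}(v)$ where $e_{i+1}(w)$ points, so the unique continuation of $P_{i+1}(u)$ at $w$, namely $e_{i+1}(w)$, stays in $R_i(v)$. The symmetric argument handles $P_{i-1}(u)$.

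For (ii), (i) gives $R_i(u)\subseteq R_i(v)$. Since $u$ is the common starting vertex of $P_{i-1}(u)$ and $P_{i+1}(u)$, it belongs to the topological boundary of $R_i(u)$. If the inclusion were an equality, $u$ would also lie on the topological boundary of $R_i(v)$, i.e., on $P_{i-1}(v)\cup P_{i+1}(v)$; this contradicts $u\in R_i^\circ(v)$.

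For (iii), since $u\neq v$, two cases arise. If $u\in R_k^\circ(v)$ for some $k$, part (ii) immediately yields $R_k(u)\subsetneq R_k(v)$. Otherwise $u$ lies on a unique $P_k(v)\setminus\{v\}$, so $u\in R_{k-1}(v)\cap R_{k+1}(v)$ and (i) gives $R_{k\pm 1}(u)\subseteq R_{k\pm 1}(v)$. Suppose for contradiction that both are equalities. Matching boundaries as sets of edges, the (non-empty) $v$-to-$u$ portion of $P_k(v)$, whose edges carry color $k$, would have to sit both in $P_{k+1}(u)$ (forcing color $k+1$ on each of its edges) and in $P_{k-1}(u)$ (forcing color $k-1$); this makes each such edge carry three colors, contradicting the fact that edges of a Schnyder wood carry at most two opposite colors. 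Hence at least one of $R_{k-1}(u)\subseteq R_{k-1}(v)$ and $R_{k+1}(u)\subseteq R_{k+1}(v)$ is strict. The symmetric argument with $u$ and $v$ exchanged produces some $R_j(v)\subsetneq R_j(u)$; the resulting $i$ and $j$ must differ, since $R_i(u)\subsetneq R_i(v)$ precludes the reverse strict inclusion.

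The main obstacle is the local Schnyder-property analysis at $w\in P_{i-1}(v)$ in part (i): correctly identifying the counterclockwise sectors around $w$ and verifying that the $R_i(v)$-side of $P_{i-1}(v)$ at $w$ is exactly the side where $e_{i+1}(w)$ points requires careful bookkeeping of the cyclic order of the three outgoing edges with respect to the incoming $(i-1)$-edge, and relating it to the angular picture inherited from the CCW ordering of $P_{i-1}(v),P_i(v),P_{i+1}(v)$ around $v$.
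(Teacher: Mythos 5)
The decisive gap is in part (i). Knowing that the two bounding paths $P_{i-1}(u)$ and $P_{i+1}(u)$ stay inside $R_i(v)$ is necessary but not sufficient to conclude $R_i(u)\subseteq R_i(v)$: these two paths cut the plane into two regions, and $R_i(u)$ is by definition the one \emph{not} containing $P_i(u)$. From $\partial R_i(u)\subseteq R_i(v)$ you only learn that exactly one of the two sides is contained in $R_i(v)$; nothing in your plan identifies that side as $R_i(u)$ rather than $R_{i-1}(u)\cup R_{i+1}(u)$. The bad configuration --- $P_i(u)$ contained entirely in $R_i(v)$, so that $R_i(u)$ is the side containing the complement of $R_i(v)$, giving two ``interlocking'' half-plane-like regions with $R_i(u)\supseteq \mathbb{R}^2\setminus R_i(v)$ --- is topologically consistent and is not excluded by the Schnyder property at any single vertex, because $P_i(u)$ is an infinite path and $R_i(v)$ is an unbounded region, so no counting or local-sector argument forces $P_i(u)$ out.

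What is needed is the genuinely global fact that $P_i(u)$ eventually \emph{leaves} $R_i(v)$; this is exactly Claim~\ref{cl:leaves} in the paper's proof, and it rests on the monochromatic-line machinery: assuming by symmetry (via Theorem~\ref{lem:type}) that the Schnyder wood is not of Type 2.$(i{+}1)$, every $i$-line crosses every $(i-1)$-line exactly once and from right to left, so every path $P_i(w)$ with $w$ to the right of the $(i-1)$-line $L_{i-1}$ bounding the strip around $v$ must cross $L_{i-1}$ and hence exit the region containing $R_i(v)$. Since $P_i(u)$ exits $R_i(v)$ while $\partial R_i(u)$ does not, $R_i(u)$ is the side contained in $R_i(v)$. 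Your plan contains no substitute for this step; the sector bookkeeping you flag as the main obstacle is the easy half of the argument. Parts (ii) and (iii) are essentially fine --- for (iii) you argue by colours of edges on the common boundary instead of, as the paper does, selecting via Lemma~\ref{lem:nocommon} a path $P_{i\pm1}(u)$ avoiding $v$, and both routes work, though you should also justify (using (U2)) why an edge of the $v$-to-$u$ portion of $P_k(v)$ cannot lie on $P_k(u)$ before concluding it lies on $P_{k+1}(u)$.
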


\begin{proof}
  (i) Suppose by symmetry that the Schnyder wood is not of Type
  2.(i+1).  Then in $G$, $i$-cycles are not homotopic to
  $(i-1)$-cycles. Thus in $G^\infty$, every $i$-line crosses every
  $(i-1)$-line. Moreover a $i$-line crosses a $(i-1)$-line exactly
  once and from its right side to its left side by (U1).  Vertex $v$
  is between two consecutive monochromatic $(i-1)$-lines
  $L_{i-1},L'_{i-1}$, with $L'_{i-1}$ situated on the right of
  $L_{i-1}$.  Let $R$ be the region situated on the right of
  $L_{i-1}$, so $v\in R$.

  \begin{claim}
    \label{cl:leaves}
    For any vertex $w$ of $R$, the path $P_{i}(w)$ leaves the region $R$.
  \end{claim}

  \begin{proofclaim}
    The $i$-line $L_i(w)$ has to cross $L_{i-1}$ exactly once and from
    right to left, thus $P_{i}(w)$ leaves the region $R$.
  \end{proofclaim}

  The path $P_{i+1}(v)$ cannot leave the region $R$ as this would
  contradict (U1).  Thus by Claim~\ref{cl:leaves} for $w=v$, we have
  $R_i(v)\subseteq R$ and so $u\in R$.  Moreover the paths
  $P_{i-1}(u)$ and $P_{i+1}(u)$ cannot leave region $R_i(v)$ as this
  would contradict (U1).  Thus by Claim~\ref{cl:leaves} for $w=u$, the
  path $P_{i}(u)$ leaves the region $R_i(v)$ and so $R_i(u)\subseteq
  R_i(v)$.

  (ii) By (i), $R_i(u)\subseteq R_i(v)$, so the paths $P_{i-1}(u)$ and
  $P_{i+1}(u)$ are contained in $R_i(v)$. Then none of them can
  contain $v$ as this would contradict (U1).  So all the faces of
  $R_i(v)$ incident to $v$ are not in $R_i(u)$ (and there is at least
  one such face).

  (iii) By symmetry, we prove that there exists $i$ with $R_i(u)
  \subsetneq R_i(v)$.  If $u\in R_i^\circ(v)$ for some color $i$, then
  $R_i(u) \subsetneq R_i(v)$ by (ii).  Suppose now that $u\in P_i(v)$
  for some $i$.  By Lemma~\ref{lem:nocommon}, at least one of the two
  paths $P_{i-1}(u)$ and $P_{i+1}(u)$ does no contain $v$. Suppose by
  symmetry that $P_{i-1}(u)$ does not contain $v$. As $u\in
  P_i(v)\subseteq R_{i+1}(v)$, we have $R_{i+1}(u)\subseteq
  R_{i+1}(v)$ by (i), and as none of $P_{i-1}(u)$ and $P_{i}(u)$
  contains $v$, we have $R_{i+1}(u) \subsetneq R_{i+1}(v)$.
\end{proof}

\begin{lemma}
\label{lem:essentially}
If a toroidal graph $G$ admits a Schnyder wood, then $G$ is
essentially 3-connected.
\end{lemma}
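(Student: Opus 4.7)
The plan is to prove the contrapositive by contradiction: assume $G^\infty$ admits a separator $S$ with $|S|\le 2$, and use the three infinite paths $P_0(v), P_1(v), P_2(v)$ provided by the Schnyder wood together with planar topology to contradict Lemma~\ref{lem:nocommon}.

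First I would record the topological fact that, because $G$ is a toroidal map, every face of $G$ is a disk on the compact torus and therefore lifts to a bounded open disk in the plane; hence $G^\infty$ is a locally finite plane graph in which every face is bounded and there is no unbounded face. Assuming then that $S\subseteq V(G^\infty)$ with $|S|\le 2$ is a separator, the bounded-face property lets me extract a simple closed curve $\gamma$ in the plane that meets $G^\infty$ exactly at the vertices of $S$ and bounds a disk $D$ containing some vertex $v\notin S$. Concretely, for a cut vertex $\{a\}$ one takes $\gamma$ through $a$ once and enclosing one of the planar blocks attached at $a$; for a two-cut $\{a,b\}$ one takes $\gamma$ to consist of two arcs from $a$ to $b$ routed through faces of $G^\infty$, chosen so that the disk they enclose contains a vertex of a component different from its outside. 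In both cases $v$ lies in the interior of $D$, and $D$ contains only finitely many vertices.

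The finish is then a short pigeonhole argument. By Lemma~\ref{lem:nocommon}, the three paths $P_0(v), P_1(v), P_2(v)$ are pairwise disjoint except at $v$. Each is an infinite path in the locally finite graph $G^\infty$, hence its image in the plane is unbounded and cannot remain inside the bounded disk $D$; since $\gamma$ meets $G^\infty$ only at the vertices of $S$, each such path must leave $D$ through a vertex of $S$. With three pairwise internally disjoint paths and at most two vertices in $S$, pigeonhole forces two distinct paths $P_i(v)$ and $P_j(v)$ to share a vertex $s\in S$, and $s\ne v$ because $v\notin S$. This contradicts Lemma~\ref{lem:nocommon}. Consequently $G^\infty$ has no separator of size at most $2$; since $G^\infty$ is infinite (so $|V(G^\infty)|\ge 4$), it is 3-connected, i.e.\ $G$ is essentially 3-connected.

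The main obstacle is the topological step producing the separating curve $\gamma$: one needs to handle the cut-vertex case and the two-cut case uniformly, using only the planar embedding of $G^\infty$ and the fact that every face is a bounded disk, in order to guarantee that ``leaving $D$'' really does force a path to visit $S$. Once $\gamma$ and the interior vertex $v$ are in hand, the rest is immediate from the infiniteness of the $P_i(v)$ and Lemma~\ref{lem:nocommon}.
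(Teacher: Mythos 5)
Your endgame is sound and genuinely different from the paper's: the paper argues positively, exhibiting for any four vertices $u,v,x,y$ an explicit $u$--$v$ path avoiding $\{x,y\}$ inside $P_i(v)\cup P_j(u)\cup L_0\cup L_0'\cup L_2\cup L_2'$, using the grid formed by the $0$- and $2$-lines, whereas you argue by contradiction from a small separator and finish with a pigeonhole on the three pairwise internally disjoint infinite paths $P_0(v),P_1(v),P_2(v)$. The problem is the step you yourself flag as the main obstacle. You launch the argument from ``because $G$ is a toroidal map, every face is a disk and lifts to a bounded disk in the plane.'' But the lemma is stated for toroidal \emph{graphs}, and being a map is equivalent to $G^\infty$ being connected: the paper notes that an essentially $1$-connected toroidal graph is a toroidal map, and it deduces ``so it is a map and each face is a disk'' \emph{from} Lemma~\ref{lem:essentially}. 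So the bounded-face property you rely on is part of the conclusion, and it is precisely where (T2) has to do work. The graph of Figure~\ref{fig:not3connected} satisfies (T1), has no contractible monochromatic cycle, hence satisfies (U1), (U2) and Lemma~\ref{lem:nocommon}, and has three infinite pairwise internally disjoint paths $P_i(v)$ at every vertex of $G^\infty$ --- yet $G^\infty$ is disconnected with all components unbounded, and your contradiction never materializes because a face fails to be a disk. Since your proof invokes (T2) only through Lemma~\ref{lem:nocommon}, which that example also satisfies, an essential input is missing.

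A secondary issue: even once all faces of $G^\infty$ are known to be bounded, the Jordan curve $\gamma$ is both under-justified and unnecessary. For a two-cut $\{a,b\}$ you would need two faces of $G^\infty$ each incident to both $a$ and $b$, positioned so that the disk they cut off contains a whole component; nothing in the sketch guarantees this. What is true and suffices is the weaker statement that a separator $S$ with $|S|\le 2$ leaves at most one infinite component: taking $R$ large enough that $S$ lies in the open disk $D_R$, the outer boundary of the union of the finitely many faces meeting $D_R$ is a connected subgraph avoiding $S$ that surrounds $D_R$, so every component of $G^\infty\setminus S$ not containing that subgraph lies in a compact set and is finite. Once some component $C$ is finite no curve is needed: an infinite path starting in $C$ must leave $C$, and its first vertex outside $C$ is adjacent to $C$ and hence lies in $S$, after which your pigeonhole against Lemma~\ref{lem:nocommon} applies verbatim. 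So the proof can be repaired as (i) show $G$ is a map (the genuinely missing idea, requiring (T2) beyond Lemma~\ref{lem:nocommon}), (ii) the ``at most one infinite component'' fact, (iii) your pigeonhole; as written, step (i) is assumed rather than proved.
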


\begin{proof}
  Let $u,v,x,y$ be any four distinct vertices of $G^\infty$. Let us
  prove that there exists a path between $u$ and $v$ in
  $G^\infty\setminus\{x,y\}$. Suppose by symmetry, that the Schnyder
  wood is of Type 1 or Type 2.1. Then the monochromatic lines of color
  $0$ and $2$ form a kind of grid, i.e. the $0$-lines intersect all
  the $2$-lines.  Let $L_0, L_0'$ be $0$-lines and $L_2, L_2'$ be
  $2$-lines, such that $u,v,x,y$ are all in the interior of the
  bounded region $R(L_0, L_0')\cap R(L_2,L'_2)$.

  By Lemma~\ref{lem:nocommon}, the three paths $P_i(v)$, for $0\leq
  i\leq 2$, are disjoint except on $v$. Thus there exists $i$, such
  that $P_i(v)\cap\{x,y\}=\emptyset$ .  Similarly there exists $j$,
  such that $P_j(u)\cap\{x,y\}=\emptyset$. The two paths $P_i(v)$ and
  $P_j(u)$ are infinite, so they intersect the boundary of $R(L_0,
  L_0')\cap R(L_2,L'_2)$.  Thus $P_i(v)\cup P_j(u)\cup L_0\cup
  L_0'\cup L_2\cup L'_2$ contains a path from $u$ to $v$ in
  $G^\infty\setminus\{x,y\}$.
\end{proof}

By Lemma~\ref{lem:essentially}, if $G$ admits a Schnyder wood, then it
is essentially 3-connected, so it is a map and each face is a disk. 

Note that if (T2) is not required in the definition of Schnyder wood,
then Lemma~\ref{lem:essentially} is false.
Figure~\ref{fig:not3connected} gives an example of an orientation and
coloring of the edges of a toroidal graph satisfying (T1), such that
there is no contractible monochromatic cycles but where (T2) is not
satisfied as there is a $0$-cycle not intersecting any $2$-cycle. This
graph is not essentially 3-connected, indeed $G^\infty$ is not
connected.

\begin{figure}[!h]
\center
\includegraphics[scale=0.5]{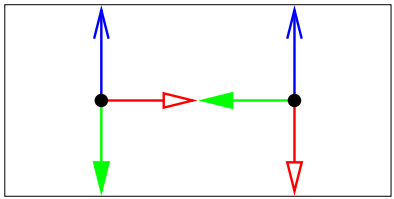}
\caption{An orientation and coloring of the edges of a toroidal graph
  satisfying (T1) but that is not essentially 3-connected.}
\label{fig:not3connected}
\end{figure}

\section{Duality of Schnyder woods}
\label{sec:dual}

Given a planar map $G$, and $x_0$, $x_1$, $x_2$ three distinct
vertices occurring in counterclockwise order on the outer face of $G$.
A \emph{Schnyder angle labeling}~\cite{Fel01} of $G$ with respect to
$x_0$, $x_1$, $x_2$ is a labeling of the angles of $G^\sigma$
satisfying the following:

\begin{itemize}
\item[(L1)] The label of the angles at each vertex form, in
  counterclockwise order, nonempty intervals of $0$'s, $1$'s and $2$'s.
  The two angles at the half-edge at $x_i$ have labels $i+1$ and $i-1$
\item[(L2)] The label of the angles at each inner face form, in
  counterclockwise order, nonempty intervals of $0$'s, $1$'s and
  $2$'s.  At the outer-face the same is true in clockwise order.
\end{itemize}

Felsner~\cite{Fel03} proved that, for planar maps, Schnyder woods are
in bijection with Schnyder angle labellings. In the toroidal case, we
do not see a simple definition of Schnyder angle labeling that would
be equivalent to our definition of Schnyder woods. This is due to the
fact that contrarily to (P2) that is local and can be verified just by
considering faces, (T2) is global. Nevertheless we have one
implication.

  The \emph{angle labeling corresponding to} a Schnyder wood of a
  toroidal map $G$ is a labeling of the angles of $G$ such that the
  angles at a vertex $v$ in the counterclockwise sector between
  $e_{i+1}(v)$ and $e_{i-1}(v)]$ are labeled $i$ (see
  Figure~\ref{fig:angle}).

\begin{figure}[!h]
\center
\includegraphics[scale=0.5]{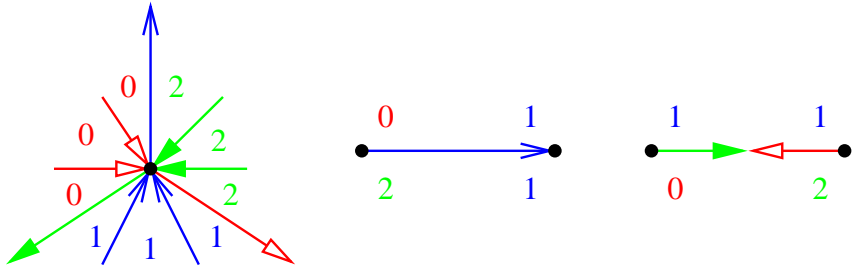}
\caption{Angle labeling around  vertices and edges}
\label{fig:angle}
\end{figure}

\begin{lemma}
\label{lem:angle}
The angle labeling corresponding to a Schnyder wood of a toroidal map
satisfies the following: the angles at each vertex and at each face
form, in counterclockwise order, nonempty intervals of $0$'s, $1$'s
and $2$'s.
\end{lemma}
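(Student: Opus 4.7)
The vertex part is immediate from the definition of the labeling together with the Schnyder property (T1): the three outgoing half-edges $e_0(v), e_1(v), e_2(v)$ at each vertex $v$ are distinct and appear in counterclockwise order, so they partition the incident angles into three nonempty sectors, which by definition carry labels $0$, $1$ and $2$ respectively.

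For the face part, my plan is a local edge analysis combined with a global counting argument. I would first classify each edge as either monochromatic (in one color $k$) or bi-oriented. For a bi-oriented edge $uv$ with outgoing color $k$ at $u$ and outgoing color $k'$ at $v$, applying the Schnyder property at $u$ forces $k' \in \{k-1, k+1\}$ since the half-edge $e_k(u)$ must lie on the boundary of the sector in which incoming color-$k'$ edges reside. Using the rule that the sector just counterclockwise (resp.\ clockwise) of $e_\ell(w)$ at $w$ has label $\ell-1$ (resp.\ $\ell+1$), a direct case check shows that when walking counterclockwise around any face $F$, the label transitions across an edge as follows: monochromatic edges give $+1 \pmod 3$ on both incident faces; bi-oriented edges give a $+1$ transition on one side and a $0$ transition on the other, with the side depending on whether $k' = k-1$ or $k' = k+1$ relative to the walk direction.

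Let $t(F)$ denote the number of $+1$ transitions around $F$. Since labels return to their starting value after going around $F$, $t(F) \equiv 0 \pmod 3$. If $t(F) = 0$, then all transitions are $0$, which forces every edge of $F$ to be bi-oriented with walking direction carrying the same color $i+1$ throughout (where $i$ is the common angle label); the boundary of $F$ is then a directed monochromatic cycle of color $i+1$ bounding the disk $F$, which is therefore contractible --- contradicting Lemma~\ref{lem:nocontractiblecycle}. Hence $t(F) \geq 3$ for every face. To obtain equality, I would count globally: writing $m_{\mathrm{mono}}$ and $m_{\mathrm{bi}}$ for the numbers of monochromatic and bi-oriented edges, the out-degree condition (one outgoing arc per color per vertex) gives $m_{\mathrm{mono}} + 2m_{\mathrm{bi}} = 3n$. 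Since each monochromatic edge contributes a $+1$ transition to both incident faces while each bi-oriented edge contributes to exactly one, we obtain $\sum_F t(F) = 2m_{\mathrm{mono}} + m_{\mathrm{bi}} = 3m - 3n = 3f$ by Euler's formula $n - m + f = 0$ on the torus. Combined with $t(F) \geq 3$, this forces $t(F) = 3$ for every face, which is precisely the statement that the angle labels around each face form three nonempty intervals of $0$'s, $1$'s, and $2$'s in counterclockwise order.

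The main obstacle is carrying out the local edge analysis for bi-oriented edges correctly: one must identify the constraint $k' = k \pm 1$ forced by the Schnyder property and verify that exactly one of the two orientations relative to $F$'s walk yields the $0$-transition. Once this local dichotomy of $0$ versus $+1$ transitions is established, the global counting argument combining the no-contractible-monochromatic-cycle obstruction with Euler's formula concludes the proof cleanly.
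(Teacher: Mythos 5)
Your proof is correct and follows essentially the same route as the paper's: the labels can only stay fixed or advance by one going counterclockwise around a face, a face with zero changes would bound a contractible monochromatic cycle (excluded by Lemma~\ref{lem:nocontractiblecycle}), and a global count via Euler's formula forces exactly three changes per face. The only cosmetic difference is the bookkeeping for $\sum_F t(F)=3f$: the paper sums the three label changes around each edge and subtracts those around each vertex ($3m-3n$), whereas you classify edges as uni- or bi-directed and invoke the out-degree identity $m_{\mathrm{mono}}+2m_{\mathrm{bi}}=3n$ --- the same computation.
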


\begin{proof}
  Clearly the property is true at each vertex by (T1). To prove that
  the property is true at each faces we count the number of color
  changes around vertices, faces and edges. This number of changes is
  denoted $d$. For a vertex $v$ there are exactly three changes, so
  $d(v)=3$ (see Figure~\ref{fig:angle}).  For an edge $e$, that can be
  either oriented in one or two direction, there are also exactly
  three changes, so $d(e)=3$ (see Figure~\ref{fig:angle}).  Now
  consider a face $F$.  Suppose we cycle counterclockwise around $F$,
  then an angle colored $i$ is always followed by an angle colored $i$
  or $i + 1$. Consequently, $d(F)$ must be a multiple of three.
  Suppose that $d(F)=0$, then all its angles are colored with one
  color $i$. In that case the cycle around face $F$ would be
  completely oriented in counterclockwise order in color $i+1$ (and in
  clockwise order in color $i-1$). This cycle being contractible this
  would contradict Lemma~\ref{lem:nocontractiblecycle}. So $d(F)\geq
  3$.

  The sum of the changes around edges must be equal to the sum of the
  changes around faces and vertices. Thus $3m=\sum_e d(e)=\sum_v d(v)
  + \sum_F d(F)=3n + \sum_F d(F)$. Euler's formula gives $m=n+f$, so
  $\sum_F d(F)=3f$ and this is possible only if $d(F)=3$ for every
  face $F$.
\end{proof}

There is no converse to Lemma~\ref{lem:angle}.
Figure~\ref{fig:nocross} gives an example of a coloring and
orientation of the edges of a toroidal triangulation not satisfying
(T2) but where the angles at each vertex and at each face form, in
counterclockwise order, nonempty intervals of $0$'s, $1$'s and $2$'s.

Let $G$ be a toroidal graph given with a Schnyder wood. By
Lemma~\ref{lem:essentially}, $G$ is an essentially 3-connected
toroidal map, thus the dual $G^*$ of $G$ has no contractible loop and
no homotopic multiple edges.  Let $\widetilde{G}$ be a simultaneous
drawing of $G$ and $G^*$ such that only dual edges intersect.

The \emph{dual} of the Schnyder wood is the orientation and coloring
of the edges of $G^*$ obtained by the following method (see
Figure~\ref{fig:rule-dual} and~\ref{fig:example-dual-tore}): Let $e$
be an edge of $G$ and $e^*$ the dual edge of $e$.  If $e$ is oriented
in one direction only and colored $i$, then $e^*$ is oriented in two
direction, entering $e$ from the right side in color $i-1$ and from
the left side in color $i+1$ (the right side of $e$ is the right side
while following the orientation of $e$).  Symmetrically, if $e$ is
oriented in two direction in color $i+1$ and $i-1$, then $e^*$ is
oriented in one direction only and colored $i$ such that $e$ is
entering $e^*$ from its right side in color $i-1$.

\begin{figure}[!h]
\center
\includegraphics[scale=0.5]{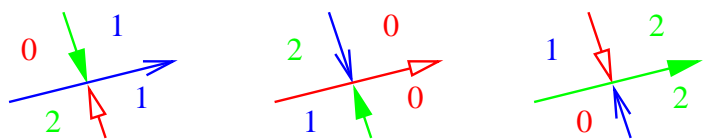}
\caption{Rules for the dual Schnyder wood and angle labeling.}
\label{fig:rule-dual}
\end{figure}

\begin{figure}[!h]
\center
\includegraphics[scale=0.5]{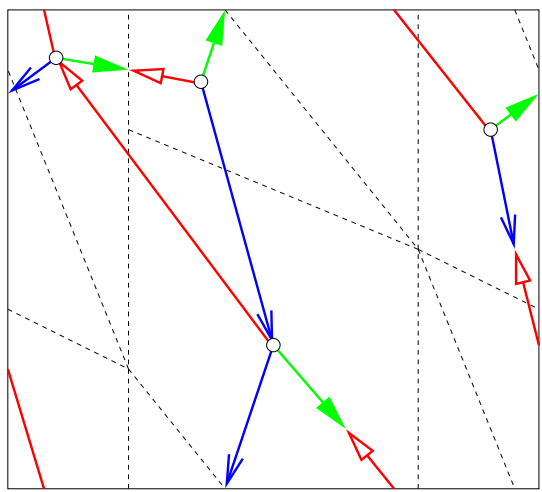}
\caption{Dual Schnyder wood of the Schnyder wood of
  Figure~\ref{fig:example-schnyder}.}
\label{fig:example-dual-tore}
\end{figure}

  \begin{lemma}
\label{lem:dual}
    Let $G$ be a toroidal map. 
    The dual of a Schnyder wood of a toroidal map $G$ is a Schnyder
    wood of the dual $G^*$. Moreover we have :

    \begin{itemize}
    \item[(i)] On the simultaneous drawing $\widetilde{G}$ of $G$ and
      $G^*$, the $i$-cycles of the dual Schnyder wood are homotopic to
      the $i$-cycles of the primal Schnyder wood and oriented in
      opposite direction.

    \item[(ii)] The dual of a Schnyder wood is of Type 2.i if and only if
      the primal Schnyder wood is of Type 2.i.
    \end{itemize}
  \end{lemma}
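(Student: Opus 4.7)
My plan is to proceed in three stages: verify (T1) for the dual Schnyder wood, establish claim (i), and derive (T2) for the dual along with claim (ii).

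For (T1), I use the angle labeling of $G$ provided by Lemma~\ref{lem:angle}. At each face $F$ of $G$, the boundary angles form three nonempty monochromatic intervals of colors $0$, $1$, $2$ in counterclockwise order. The cyclic order of angles around $F$ coincides with that around the corresponding dual vertex $v^*_F$, so the same interval structure persists at $v^*_F$. The dual rule in Figure~\ref{fig:rule-dual} is calibrated so that between each pair of consecutive intervals of colors $i+1$ (before) and $i-1$ (after, in counterclockwise order) at $v^*_F$, the unique incident dual edge is colored $i$ and outgoing from $v^*_F$. A routine check of the four local configurations in the figure (primal edge uni- or bi-directional, face $F$ on either side) confirms this, yielding (T1) at every dual vertex.

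For claim (i), I use the fact that in the angle labeling, the boundary between the regions of colors $i+1$ and $i-1$ crosses, at each primal vertex $v$, the outgoing color-$i$ edge $e_i(v)$, and, at each dual vertex $v^*_F$ (by the previous paragraph), the outgoing dual color-$i$ edge. Fix a primal $i$-cycle $C_i$. Walking along $C_i$ in its orientation, the right-hand faces bear the label $i+1$ at each vertex of $C_i$; the sequence of dual vertices corresponding to these faces is linked by outgoing dual color-$i$ edges, forming a closed dual walk homotopic to $C_i^{-1}$. By (T1) (out-degree one in color $i$ at each dual vertex) this walk is in fact a dual $i$-cycle $D_i$. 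The converse---that every dual $i$-cycle arises this way---follows because the duality rule is involutive: dualizing twice restores each edge's original coloring and orientation, as is checked directly from the rule in Figure~\ref{fig:rule-dual}.

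Given (i), both (T2) for the dual and claim (ii) follow cleanly. For (T2), any two dual monochromatic cycles $D_i, D_j$ of distinct colors are fully-homotopic to $C_i^{-1}, C_j^{-1}$ for some primal monochromatic cycles by (i); by primal (T2) and Lemma~\ref{lem:twonothomotopic}, $C_i$ and $C_j$ are not homotopic, hence neither are $D_i, D_j$, and Lemma~\ref{lem:intersect2} provides the required intersection. For (ii), suppose the primal is Type 2.$i$, so $\mathcal{C}_{i-1} = (\mathcal{C}_{i+1})^{-1}$; by (i) every dual $(i-1)$-cycle is fully-homotopic to some primal $(i+1)$-cycle and every dual $(i+1)$-cycle to the reverse of a primal $(i+1)$-cycle, so the two families are homotopic as unoriented curves. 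Applying Lemma~\ref{lem:twonothomotopic} to the dual (now a Schnyder wood by the preceding step) forces them to be reversal, giving $\mathcal{C}^*_{i-1} = (\mathcal{C}^*_{i+1})^{-1}$. The converse implication follows from the involutivity of the duality. The main obstacle is the middle stage: the explicit construction of the dual $i$-cycle alongside each primal $i$-cycle requires careful local tracking of the duality rule through the angular structure that may vary at different vertices of $C_i$.
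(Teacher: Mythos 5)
Your outline of (T1) matches the paper's argument, and your reduction of (T2) to part (i) is sound in the Type 1 case. But there are two genuine gaps. The first, which you partly acknowledge, is in part (i): the companion-walk construction is not correct as stated. Walking along a primal $i$-cycle $C_i$, the faces on the right at a vertex $v$ carry angle labels $i$ or $i+1$ (not uniformly $i+1$), and the dual edge crossing a uni-directed primal edge of color $i$ leaves the right-hand face in color $i-1$, not $i$; so the right-hand faces are not linked by outgoing dual edges of color $i$ in any evident way, and the involutivity argument for the converse presupposes that the dual is already known to be a Schnyder wood. The paper avoids this construction entirely: it first shows dual monochromatic cycles are non-contractible (a contractible dual $i$-cycle would trap a primal $(i-1)$-cycle inside its disk, contradicting Lemma~\ref{lem:nocontractiblecycle}), then observes that primal and dual edges carrying the same color never cross, so a dual $i$-cycle is disjoint from every primal $i$-cycle and hence homotopic to it by Lemma~\ref{lem:intersect2}; the orientation reversal is read off by comparing how $C_i$ and $C_i^*$ enter a transversal $(i+1)$-cycle (left side versus right side).

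The second gap is fatal for the Type 2 case and is not acknowledged. Your derivation of (T2) for the dual asserts that for distinct colors the primal cycles $C_i$, $C_j$ are not homotopic ``by primal (T2) and Lemma~\ref{lem:twonothomotopic}''; but in a Type 2.$k$ primal Schnyder wood the $(k-1)$- and $(k+1)$-cycles are reversal, hence homotopic, so this step fails exactly where it is needed. For $\{i,j\}=\{k-1,k+1\}$ all dual $(k-1)$- and $(k+1)$-cycles are homotopic to one another, Lemma~\ref{lem:intersect2} yields nothing, and the contrapositive of Lemma~\ref{lem:twonothomotopic} --- even if it could legitimately be applied before knowing the dual is a Schnyder wood --- only gives ``reversal or disjoint,'' where the disjoint alternative is precisely the violation of (T2) that must be excluded. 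This exclusion is the hard part of the paper's proof: it takes a dual $(i-1)$-cycle $C^*$ assumed not to be the reversal of any dual $(i+1)$-cycle, considers the region between $C^*$ and the first primal $(i-1)$-cycle on its right, locates inside it a dual $(i-1)$-cycle fully-homotopic to $C^*$, and derives a contradiction from a primal path $P_{i-1}(v)$ trapped in a subregion containing no primal $(i-1)$-cycle. Your proposal contains no substitute for this argument, so neither (T2) for the dual nor claim (ii) is established in the Type 2 case.
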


  \begin{proof}
    In every face of $\widetilde{G}$, there is exactly one angle of
    $G$ and one angle of $G^*$. Thus a Schnyder angle labeling of $G$
    corresponds to an angle labeling of $G^*$.  The dual of the
    Schnyder wood is defined such that an edge $e$ is leaving $F$ in
    color $i$ if and only if the angle at $F$ on the left of $e$ is
    labeled $i-1$ and the angle at $F$ on the right of $e$ is
    labeled $i+1$, and such that an edge $e$ is entering $F$ in color
    $i$ if and only if at least one of the angles at $F$ incident to
    $e$ is labeled $i$ (see Figure~\ref{fig:rule-dual}).  By
    Lemma~\ref{lem:angle}, the angles at a face form, in
    counterclockwise order, nonempty intervals of $0$'s, $1$'s and
    $2$'s. Thus the edges around a vertex of $G^*$ satisfy property
    (T1).

    Consider $\widetilde{G}$ with the orientation and coloring of
    primal and dual edges.

    Let $C$ be a $i$-cycle of $G^*$. Suppose, by contradiction, that
    $C$ is contractible. Let $D$ be the disk delimited by $C$. Suppose
    by symmetry that $C$ is going anticlockwise around $D$.  Then all
    the edges of $G$ that are dual to edges of $C$ are entering
    $D$ in color $i-1$. Thus $D$ contains a $i-1$-cycle of $G$, a
    contradiction to Lemma~\ref{lem:nocontractiblecycle}. Thus every
    monochromatic cycle of $G^*$ is non contractible.

    The dual of the Schnyder wood is defined in such a way that an
    edge of $G$ and an edge of $G^*$ of the same color never
    intersect in $\widetilde{G}$. Thus the $i$-cycles of $G^*$ are
    homotopic to $i$-cycles of $G$.  Consider a $i$-cycle $C_i$
    (resp. $C_i^*$) of $G$ (resp. $G^*$). The two cycles $C_i$ and
    $C_i^*$ are homotopic.  By symmetry, we assume that the primal
    Schnyder wood is not of Type 2.(i-1). Let $C_{i+1}$ be a
    $(i+1)$-cycle of $G$. The two cycles $C_i$ and $C_{i+1}$ are not
    homotopic and $C_i$ is entering $C_{i+1}$ on its left side. Thus,
    by Lemma~\ref{lem:intersect3}, the two cycles $C_i^*$ and
    $C_{i+1}$ are not homotopic and by the dual rules $C_i^*$ is
    entering $C_{i+1}$ on its right side. So $C_i$ and $C_i^*$ are
    homotopic and going in opposite directions.

    Suppose the Schnyder wood of $G$ is of Type 1.  Then two
    monochromatic cycles of $G$ of different colors are not
    homotopic. Thus the same is true for monochromatic cycles of the
    dual. So (T2) is satisfied and the dual of the Schnyder wood is a
    Schnyder wood of Type 1.

    Suppose now that the Schnyder wood of $G$ is of Type 2.  Assume by
    symmetry that it is of Type 2.i.  Then all monochromatic cycles of
    color $i$ and $j$, with $j\in \{i-1,i+1\}$, intersect. Now
    suppose, by contradiction, that there is a $j$-cycle $C^*$, with
    $j\in \{i-1,i+1\}$, that is not equal to a monochromatic cycle of
    color in $\{i-1,i+1\}\setminus \{j\}$. By symmetry we can assume
    that $C^*$ is of color $i-1$.  Let $C$ be the $(i-1)$-cycle of the
    primal that is the first on the right side of $C^*$ in
    $\widetilde{G}$. By definition of Type 2.i, $C^{-1}$ is a
    $(i+1)$-cycle of $G$. Let $R$ be the region delimited by $C^*$ and
    $C$ situated on the right side of $C^*$.  Cycle $C^*$ is not a
    $(i+1)$-cycle so there is at least one edge of color $i+1$ leaving
    a vertex of $C^*$. By (T1) in the dual, this edge is entering the
    interior of the region $R$. An edge of $G^*$ of color $i+1$ cannot
    intersect $C$ and cannot enter $C^*$ from its right side. So in
    the interior of the region $R$ there is at least one $(i+1)$-cycle
    $C^{*}_{i+1}$ of $G^*$.  Cycle $C^{*}_{i+1}$ is homotopic to
    $C^{*}$ and going in opposite direction (i.e. $C^{*}_{i+1}$ and
    $C^{*}$ are not fully-homotopic). If $C^{*}_{i+1}$ is not a
    $(i+1)$-cycle, then we can define $R'\subsetneq R$ the region
    delimited by $C^{*}_{i+1}$ and $C$ situated on the left side of
    $C^{*}_{i+1}$ and as before we can prove that there is a
    $(i-1)$-cycle of $G^*$ in the interior of $R'$. So in any case,
    there is a $(i-1)$-cycle $C^*_{i-1}$ of $G^*$ in the interior of
    $R$ and $C^{*}_{i-1}$ is fully-homotopic to $C^{*}$. Let
    $R''\subsetneq R$ be the region delimited by $C^*$ and $C^*_{i-1}$
    situated on the right side of $C^*$. Clearly $R''$ does not
    contain $C$. Thus by definition of $C$, the region $R''$ does not
    contain any $(i-1)$-cycle of $G$. But $R''$ is non empty and
    contains at least one vertex $v$ of $G$. The path $P_{i-1}(v)$
    cannot leave $R''$, a contradiction. So (T2) is satisfied and the
    dual Schnyder wood is of Type 2.i.
  \end{proof}

  By Lemma~\ref{lem:dual}, we have the following:

  \begin{theorem}
\label{th:bijdual}
    There is a bijection between Schnyder woods of a toroidal map and
    Schnyder woods of its dual.
  \end{theorem}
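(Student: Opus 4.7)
The plan is to derive Theorem~\ref{th:bijdual} as an immediate consequence of Lemma~\ref{lem:dual} combined with the involutive nature of the dual-coloring rule. Lemma~\ref{lem:dual} already supplies one direction: given a Schnyder wood $S$ of $G$, the construction produces a valid Schnyder wood $S^*$ of $G^*$. Since $G$ is essentially 3-connected (by Lemma~\ref{lem:essentially}), its dual $G^*$ is also an essentially 3-connected toroidal map and $(G^*)^*=G$ under the canonical identification. Thus it suffices to show that the map $S\mapsto S^*$ is an involution, so that applying it twice to a Schnyder wood of $G$ recovers $S$.

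The core verification is purely local, on a single edge $e$ and its dual $e^*$. First I would check that the two cases in the dual rule (Figure~\ref{fig:rule-dual}) are exchanged by the operation. Case A: if $e$ is oriented in one direction only, colored $i$, then $e^*$ becomes bi-oriented, entering $e$ from the right in color $i-1$ and from the left in color $i+1$. Applying the dual rule to $e^*$ now falls into Case B (bi-oriented edge with colors $i-1$ and $i+1$), which produces a mono-oriented edge of color $i$ whose right side is the side entered by $e^*$ in color $i-1$. By construction this is exactly the original $e$, with the same color and same orientation. Case B is handled symmetrically: an edge bi-oriented in colors $i+1$ and $i-1$, whose dual is mono-oriented in color $i$, returns, after a second application, to its original bi-orientation, since the left/right sides relative to $e^*$ are determined in a way consistent with Case A.

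Having established the involution at the edge level, it remains to note two points. First, the well-definedness of the dual at the graph level uses the angle-labeling correspondence (Lemma~\ref{lem:angle}), which is itself self-dual: swapping the roles of faces and vertices in a Schnyder angle labeling of $G$ gives a Schnyder angle labeling of $G^*$, and vice versa. Second, Lemma~\ref{lem:dual} guarantees that the output in each direction is again a Schnyder wood (satisfying both (T1) and (T2)), so the map and its inverse are both well-defined on the set of Schnyder woods. Combining the two directions gives the bijection. The hardest step is really the edge-level involution check, but it is just a matter of unwinding the definitions of ``right side'' for mono-oriented and bi-oriented edges and observing that the two cases are mirror images of each other; no further global argument is required beyond Lemma~\ref{lem:dual}.
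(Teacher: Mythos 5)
Your proposal is correct and follows the same route as the paper, which simply derives the theorem from Lemma~\ref{lem:dual}; you additionally make explicit the edge-level involution check (that applying the dual rule of Figure~\ref{fig:rule-dual} twice recovers the original orientation and coloring), which the paper leaves implicit but which is exactly the missing detail needed to upgrade ``the dual of a Schnyder wood is a Schnyder wood'' to a bijection.
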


\section{Relaxing the definition}
\label{sec:relax}

In the plane, the proof of existence of Schnyder woods can be done without too
much difficulty as the properties  to be satisfied are 
only local. 
In the toroidal case, things are much more complicated as property
(T2) is global.  The following lemma shows that property (T2)
can be relaxed a bit in the case of Schnyder woods of Type 1.

\begin{lemma}
\label{lem:tri-relax}
Let $G$ be a toroidal graph given with an orientation and coloring of
the edges of $G$ with the colors $0$, $1$, $2$, where every edge $e$
is oriented in one direction or in two opposite directions.  The
orientation and coloring is a Schnyder wood of Type 1 if and only if
it satisfies the following:

\begin{itemize}
\item[(T1')] Every vertex $v$ satisfies the Schnyder property.

\item[(T2')] For each pair $i,j$ of different colors, there exists a
  $i$-cycle intersecting a $j$-cycle.

\item[(T3')] There is no monochromatic cycles $C_i,C_j$ of different
  colors $i,j$ such that $C_i=C_j^{-1}$.
\end{itemize}

\end{lemma}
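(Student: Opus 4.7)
\noindent The plan is to verify both implications.

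\emph{Forward direction.} This is routine: (T1') is identical to (T1); (T2') is implied by the fact that a Type~1 Schnyder wood forces every pair of monochromatic cycles of different colors to intersect; and (T3') holds because if $C_i=C_j^{-1}$ with $i\neq j$, then $C_i$ and $C_j$ are the same unoriented closed curve and hence homotopic, contradicting the Type~1 definition.

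\emph{Converse, no contractible monochromatic cycles.} Assume (T1'), (T2'), (T3'). The crux is to re-prove Lemma~\ref{lem:nocontractiblecycle} under these weaker hypotheses. Take a contractible directed cycle $C$ in $G_i\cup G_{i-1}^{-1}\cup G_{i+1}^{-1}$ minimizing the number of faces in its bounded disk $D$, and say $C$ goes clockwise around $D$. The original argument uses only (T1') and the minimality of $C$ to deduce that no $(i-1)$-edge leaves $D$, that an $(i-1)$-cycle exists in $D$ and by minimality coincides with $C$ (counterclockwise around $D$ in its natural $(i-1)$-orientation), and analogously that $C$ is also an $i$-cycle (clockwise). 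Setting $C_i:=C$ and $C_{i-1}:=C^{-1}$ yields $C_i=(C_{i-1})^{-1}$ with $i\neq i-1$, which violates (T3').

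\emph{Homotopy analysis.} By (T1') two distinct $i$-cycles are vertex-disjoint, since the unique outgoing $i$-edge at a shared vertex would force coincidence. Being disjoint non-contractible simple closed curves on the torus, they are all freely homotopic. The proof of Lemma~\ref{lem:twonothomotopic} uses only (T1), Lemma~\ref{lem:nocontractiblecycle}, and the same-color non-crossing property, all now available, so two crossing monochromatic cycles of different colors are not homotopic.

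\emph{Conclusion.} Suppose for contradiction that some $i$-cycle is homotopic to some $(i+1)$-cycle; by Lemma~\ref{lem:homotopic3} together with the homotopy analysis above, all $i$-cycles and $(i+1)$-cycles are pairwise homotopic. For any such pair $C_i',C_{i+1}'$: if they intersect they must be reversal (Lemma~\ref{lem:twonothomotopic} rules out the crossing case under the homotopy assumption), but (T3') rules out reversal; so every $i$-cycle is disjoint from every $(i+1)$-cycle, contradicting (T2'). Arguing symmetrically for the pair $(i,i-1)$, no $i$-cycle is homotopic to any monochromatic cycle of a different color. Lemma~\ref{lem:intersect2} then forces every such pair to intersect, which is precisely (T2) together with the Type~1 condition; with (T1)=(T1'), the orientation and coloring is a Schnyder wood of Type~1. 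The main obstacle is bootstrapping past Lemma~\ref{lem:nocontractiblecycle} without (T2); this succeeds because the original proof already exhibits $C_i=(C_{i-1})^{-1}$ strictly before invoking (T2), so (T3') is exactly the replacement needed.
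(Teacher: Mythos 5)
Your proof is correct and follows essentially the same route as the paper's: the forward direction is immediate, and the converse is established by first using the minimal-disk argument together with (T3') to exclude contractible monochromatic cycles, then showing same-color cycles are disjoint and homotopic, that crossing cycles of different colors are non-homotopic, and finally propagating the single intersection guaranteed by (T2') to all pairs via Lemmas~\ref{lem:intersect2}, \ref{lem:homotopic3} and~\ref{lem:intersect3}. The only (immaterial) differences are that you re-derive the full statement of Lemma~\ref{lem:nocontractiblecycle} where the paper only needs its restriction to monochromatic cycles, and you organize the last step as a contradiction showing no cross-color homotopy exists rather than deducing (T2) directly.
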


\begin{proof} 
  ($\Longrightarrow$) If we have a Schnyder wood of Type 1, then
  Property (T1') is satisfied as it is equal to Property (T1).
  Property (T1) implies that there always exist a monochromatic cycles
  of each color, thus Property (T2') is a relaxation of (T2). Property
  (T3') is implied by definition of Type 1 (see
  Theorem~\ref{lem:type}).

  ($\Longleftarrow$) Conversely, suppose we have an orientation and
  coloring satisfying (T1'), (T2'), (T3').  We prove several
  properties.

  \begin{claim}
    \label{cl:claim1} All $i$-cycles are non contractible, non
    intersecting and homotopic.
  \end{claim}

  \begin{proofclaim}
    Suppose there is a contractible monochromatic cycle. Let $C$ be
    such a cycle containing the minimum number of faces in the closed
    disk $D$ bounded by $C$. Suppose by symmetry that $C$ turns around
    $D$ clockwisely. Let $i$ be the color of $C$. Then, by (T1'), there
    is no edge of color $i-1$ leaving the closed disk $D$. So there is
    a $(i-1)$-cycle in $D$ and this cycle is $C$ by minimality of $C$,
    a contradiction to (T3').

    If there exists two distinct $i$-cycles that are intersecting.
    Then there is vertex that has two outgoing edge of color $i$, a
    contradiction to (T1'). So the $i$-cycles are non intersecting.
    Then, by Lemma~\ref{lem:intersect2}, they are homotopic.
  \end{proofclaim}

  \begin{claim}
    \label{cl:claim2}
    If two monochromatic cycles are intersecting then they are not
    homotopic.
  \end{claim}

  \begin{proofclaim}
    Suppose by contradiction that there exists $C,C'$ two distinct
    directed monochromatic cycles that are homotopic and
    intersecting. By Claim~(\ref{cl:claim1}), they are not
    contractible and of different color.  Suppose $C$ is a
    $(i-1)$-cycle and $C'$ a $(i+1)$-cycle.  By (T1'), $C'$ is
    leaving $C$ on its right side. Since $C,C'$ are homotopic, the
    cycle $C'$ is entering $C$ at least once from its right side, a
    contradiction with (T1').
  \end{proofclaim}

  We are now able to prove that (T2) is satisfied. Let $C_i$ be any
  $i$-cycle of color $i$. We have to prove that $C_i$ intersects at
  least one $(i-1)$-cycle and at least one $(i+1)$-cycle.  Let $j$ be
  either $i-1$ or $i+1$.  By (T2'), there exists a $i$-cycle $C'_i$
  intersecting a $j$-cycle $C'_j$ of color $j$. The two cycles $C'_i,
  C'_j$ are not reversal by (T3'), thus they are crossing.  By
  Claim~(\ref{cl:claim2}), $C'_i$ and $C'_j$ are not homotopic.  By
  Claim~(\ref{cl:claim1}), $C_i$ and $C'_i$ are homotopic. Thus by
  Lemma~\ref{lem:intersect3}, $C_i$ and $C'_j$ are intersecting.

Thus (T1) and (T2) are satisfied and thus the orientation and coloring
is a Schnyder wood. By (T3') and Theorem~\ref{lem:type} it is a
Schnyder of Type 1.
\end{proof}

Note that for toroidal triangulations, there is no edges oriented in
two directions in an orientation and coloring of the edges satisfying
(T1'), by Euler's formula. So (T3') is automatically satisfied. Thus in the case of
toroidal triangulations it is sufficient to have properties (T1') and
(T2') to have a Schnyder wood.  This is not true in general as shown
by the example of Figure~\ref{fig:relax} that satisfies (T1') and
(T2') but that is not a Schnyder wood. There is a monochromatic cycle
of color $1$ that is not intersecting any monochromatic cycle of color
$2$ so (T2) is not satisfied.

\begin{figure}[!h]
\center
\includegraphics[scale=0.5]{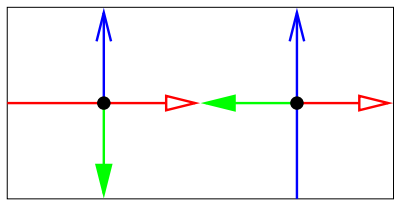}
\caption{An orientation and coloring of the edges of toroidal graph
  satisfying (T1') and (T2') but that is not a Schnyder wood.}
\label{fig:relax}
\end{figure}

  \section{Existence for simple triangulations}
\label{sec:existence}

In this section we present a short proof of existence of Schnyder
woods for simple triangulations. Sections~\ref{sec:lemma}
and~\ref{sec:existence3connected} contain the full proof of existence
for essentially 3-connected toroidal maps.

Fijavz~\cite{Fij} proved a useful result concerning existence of
particular non homotopic cycles in toroidal triangulations with no
loop and no multiple edges. (Recall that in this paper we are less
restrictive as we allow non contractible loops and non homotopic
multiple edges.)

\begin{theorem}[\cite{Fij}]
\label{th:fij} 
A simple toroidal triangulation contains three non contractible and
non homotopic cycles that all intersect on one vertex and that are
pairwise disjoint otherwise.
\end{theorem}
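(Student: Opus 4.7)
The approach I would take is via the universal cover. Lift $G$ to the infinite simply connected planar triangulation $G^\infty$, on which the deck group $\Gamma\cong\mathbb Z^2$ acts freely by simplicial automorphisms. Fix any vertex $v$ of $G$ with lift $\tilde v\in G^\infty$. Non-contractible closed walks through $v$ representing the homotopy class $\gamma\in\Gamma\setminus\{1\}$ are in bijection with paths in $G^\infty$ from $\tilde v$ to $\gamma\tilde v$, and such a walk is a \emph{simple} cycle exactly when the lifting path visits no other orbit point; in particular this forces $\gamma$ to be a primitive lattice vector. Two simple non-contractible cycles on the torus are freely homotopic iff their homotopy classes agree up to sign.

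Set $d(\gamma):=d_{G^\infty}(\tilde v,\gamma\tilde v)$, and successively pick $\gamma_1$ minimizing $d$ over $\Gamma\setminus\{1\}$, $\gamma_2$ minimizing $d$ over $\Gamma\setminus\{\pm\gamma_1,1\}$, and $\gamma_3$ minimizing $d$ over $\Gamma\setminus\{\pm\gamma_1,\pm\gamma_2,1\}$. These minima are attained because $\mathbb Z^2$ contains three pairwise non-equivalent primitive classes, e.g.\ $(1,0)$, $(0,1)$, $(1,1)$. For each $i$ choose a shortest $\tilde v$--$\gamma_i\tilde v$ path $P_i$ in $G^\infty$ and let $C_i$ be its projection to $G$. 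A translation-equivariance exchange---if $P_i$ visits both $\tilde u$ and $\gamma'\tilde u$ with $\gamma'\neq 1$, applying $\gamma'$ to the initial sub-path and concatenating with the terminal one yields a path from $\tilde v$ to $\gamma'^{-1}\gamma_i\tilde v$ of length $d(\gamma_i)-b<d(\gamma_i)$, where $b$ is the length of the $\tilde u$--$\gamma'\tilde u$ segment---combined with the minimality constraints and the primitivity of the $\gamma_i$ forces each $P_i$ to avoid orbit points other than its endpoints. Hence each $C_i$ is a simple non-contractible cycle through $v$, and by construction the three are pairwise non-homotopic.

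The heart of the argument, and the step I expect to be the main obstacle, is arranging $C_1,C_2,C_3$ to be pairwise disjoint outside $v$. First, the three $P_i$ can be taken to start at $\tilde v$ through three pairwise distinct edges of the rotation at $\tilde v$: otherwise two of them would share a leading arc of positive length, and swapping tails at its last common vertex would preserve both lengths and homotopy classes while producing two shortest paths that do leave $\tilde v$ through different edges. Next, suppose $C_i$ and $C_j$ met at some vertex $w\neq v$ of $G$; then $P_i$ passes through some lift $\tilde w$ and $P_j$ through some lift $\gamma\tilde w$ with $\gamma\in\Gamma$. A planar exchange in $G^\infty$ swapping the tails of $P_i$ at $\tilde w$ and of $P_j$ at $\gamma\tilde w$, coupled with the triangulation hypothesis (every face being a triangle allows the resulting paths to be locally rerouted around the swap point without increasing total length), yields a pair of paths realizing $d(\gamma_i)+d(\gamma_j)$ that decomposes into a path strictly shorter than $d(\gamma_k)$ in one of the forbidden homotopy classes, contradicting the choice of $\gamma_1,\gamma_2,\gamma_3$. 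The simplicity hypothesis on $G$ is used to rule out degenerate configurations (loops or parallel edges at $v$) that would spoil the three-edge separation around $\tilde v$; the triangulation hypothesis is essential for the local rerouting in the final exchange, and this planar exchange step is where the real technical work of the proof is concentrated.
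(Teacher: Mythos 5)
There is no proof of Theorem~\ref{th:fij} in the paper to compare against: the result is quoted from Fijavz's personal communication \cite{Fij}, and the authors note (in the final section) that his proof relies on the Robertson--Seymour results on disjoint paths across a disc. So your proposal has to stand on its own, and as written it has a fatal gap precisely at the step you identify as the heart of the argument. The tail-swapping exchange is length-preserving: if $P_i$ and $P_j$ meet at $\tilde w$ and $\gamma\tilde w$, the exchange only yields $d(\gamma^{-1}\gamma_j)+d(\gamma\gamma_i)\le d(\gamma_i)+d(\gamma_j)$. When $\gamma=1$ (the two lifts genuinely intersect in $G^\infty$, which is exactly the situation you must exclude) the new classes are $\gamma_j$ and $\gamma_i$ themselves and the inequality is vacuous; and even for $\gamma\ne 1$ the new classes may coincide with $\pm\gamma_1,\pm\gamma_2,\pm\gamma_3$, and the minimality conditions only give $d(\gamma_1)\le d(\gamma_2)\le d(\gamma_3)$, which is not enough to force a contradiction. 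Your fallback, that the triangulation hypothesis lets you ``locally reroute around the swap point without increasing total length,'' is false: detouring around a vertex of degree $d$ along its link replaces a two-edge subpath by one of length up to $\lceil d/2\rceil$, which is longer as soon as $d\ge 5$. A secondary gap of the same nature affects simplicity of the individual cycles: the shortening argument for $P_i$ fails when the translate $\gamma'$ satisfies $\gamma'=\gamma_i$ or $\gamma'^{-1}\gamma_i\in\{\pm\gamma_1,\pm\gamma_2\}$, so $C_2$ and $C_3$ are not guaranteed to be simple.

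A useful sanity check that something must be wrong: your disjointness argument makes no essential use of the simplicity of $G$ (you invoke it only to control the local picture at $\tilde v$), yet the paper exhibits a non-simple toroidal triangulation (Figure~\ref{fig:not-connected}) for which the conclusion of Theorem~\ref{th:fij} fails, while its universal cover is still a simple planar triangulation with a free $\mathbb Z^2$ action to which your whole construction applies verbatim. Any correct proof has to use simplicity globally, in the uncrossing step, and that is exactly where the genuine difficulty (and, per the authors, the Robertson--Seymour machinery) lives.
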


Theorem~\ref{th:fij} is not true  for all toroidal triangulations
as shown by the example on Figure~\ref{fig:not-connected}.

\begin{figure}[!h]
\center
\includegraphics[scale=0.5]{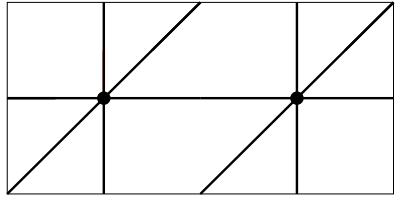}
\caption{A toroidal triangulation that does not contain three non
  contractible and non homotopic cycles that all intersect on one
  vertex and that are pairwise disjoint otherwise.}
\label{fig:not-connected}
\end{figure}

Theorem~\ref{th:fij} can be used to prove existence of particular
Schnyder woods for simple toroidal triangulations.

\begin{theorem}
\label{th:schnydersimple}
A simple toroidal triangulation admits a Schnyder wood with three
monochromatic cycles of different colors all intersecting on one
vertex and that are pairwise disjoint otherwise.
\end{theorem}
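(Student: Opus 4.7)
The plan is to use Fijavz's Theorem~\ref{th:fij} to produce the three desired monochromatic cycles and then extend to a full Schnyder wood. By Lemma~\ref{lem:tri-relax} together with the observation (from Euler's formula $m=3n$) that a toroidal triangulation admits no bi-oriented edges, it suffices to exhibit an orientation and coloring of $E(G)$ satisfying (T1') and (T2').

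First, apply Theorem~\ref{th:fij} to obtain three pairwise non-homotopic, non-contractible cycles $C_0, C_1, C_2$ sharing exactly the vertex $v$, and color the edges of $C_i$ with color $i$. Since the three cycles are pairwise non-homotopic and meet only at $v$, they cross transversally at $v$, and their six half-edges are mutually linked in the cyclic order around $v$; up to a global relabeling of the three colors, we may assume the counterclockwise cyclic color pattern at $v$ is $0,2,1,0,2,1$. We then orient each $C_i$ so that the three outgoing cycle half-edges appear in counterclockwise order as $e_0(v), e_1(v), e_2(v)$, forcing each incoming cycle half-edge of color $i$ to land in the counterclockwise sector from $e_{i+1}(v)$ to $e_{i-1}(v)$ (a direct check on the pattern $0,2,1,0,2,1$ shows the orientation with outgoing half-edges at the first occurrence of each color works). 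Property (T2') is now immediate since the three cycles pairwise meet at $v$.

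Next, cut the torus along $C_0 \cup C_1 \cup C_2$. The result is a planar triangulation $\tilde G$ whose outer face is a hexagon with six corners, all copies of $v$, and six sides given by the prescribed monochromatic arcs of colors $0,1,2,0,1,2$. Since $G$ is simple, $\tilde G$ is 3-connected, and after augmenting the hexagonal outer face appropriately (for instance, adding an apex vertex in the outer face adjacent to alternate corners), one may apply Felsner's planar Schnyder wood existence theorem~\cite{Fel01,Fel03} to extend the prescribed boundary data to an orientation and coloring of all of $\tilde G$ with the Schnyder property at every inner vertex. Re-identifying the cut edges returns an orientation and coloring of $G$ in which every vertex of $G \setminus \{v\}$ inherits (T1') from $\tilde G$, and $v$ satisfies (T1') by the compatibility of the six wedges engineered in the first step. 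Lemma~\ref{lem:tri-relax} then produces a toroidal Schnyder wood (necessarily of Type~1), with $C_0, C_1, C_2$ as its three distinguished monochromatic cycles, as required.

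The principal technical hurdle is the extension step on $\tilde G$: one must choose the augmentation of the hexagonal outer face and the placement of the Felsner roots so that the resulting planar Schnyder wood restricts on the six outer sides to exactly the prescribed monochromatic arcs, and so that at each of the six corners the incoming edges fall into sectors consistent with the Schnyder configuration already fixed at $v$. Once this compatibility is secured, the remainder of the argument is just topological bookkeeping and an application of the relaxed characterization of Lemma~\ref{lem:tri-relax}.
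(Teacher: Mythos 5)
Your overall strategy is the same as the paper's (Fijavz's theorem, cutting along the three cycles, planar Schnyder woods on the pieces, then Lemma~\ref{lem:tri-relax}), but the proposal has a genuine gap precisely at the step you yourself flag as ``the principal technical hurdle'': you never actually establish that a planar Schnyder wood of the cut-open surface can be made to agree with the prescribed boundary data, and the way you set up the cut makes this harder than it needs to be. First, a topological error: cutting the torus along $C_0\cup C_1\cup C_2$ (three pairwise non-homotopic cycles meeting only at $v$) does not produce one hexagonal disk but \emph{two} (Euler's formula gives $f=2$ for the embedded graph $C_0\cup C_1\cup C_2$). The paper works with both hexagonal regions $R_1,R_2$ separately, splitting $x$ into three outer vertices $x_0,x_1,x_2$ in each, and this two-piece structure is exactly what makes the boundary compatibility work: in a planar Schnyder wood rooted at $x_0,x_1,x_2$, the outer path from $x_{i-1}$ to $x_i$ is automatically bi-oriented in colors $i$ and $i-1$, so the side of $C_i$ facing $R_1$ contributes no edge of color $i+1$ entering a boundary vertex and the side facing $R_2$ contributes no edge of color $i-1$ entering it; pasting the two pieces and declaring $C_i$ uni-directed in color $i$ then satisfies (T1) at every boundary vertex without any further choice. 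In your one-piece version there is no such automatic mechanism, and an apex ``adjacent to alternate corners'' does not force the six boundary sides to carry the prescribed colors.

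Two further omissions: (i) the paper first shortens the $C_i$ to be \emph{homotopically chordless}; this is what guarantees the two planar pieces are internally 3-connected, so Miller's/Felsner's existence theorem applies. Your assertion ``since $G$ is simple, $\tilde G$ is 3-connected'' is unjustified (the boundary of a piece contains several copies of $v$, and a homotopic chord of some $C_i$ would create a forbidden separation). (ii) Your first paragraph fixes an orientation of the three cycles purely from the local picture at $v$, but verifying the Schnyder property at $v$ also requires knowing into which sectors the \emph{non-cycle} edges at $v$ fall; the paper gets this for free from the planar Schnyder woods of $G'_1$ and $G'_2$ together with the chosen splitting of $x$. So the skeleton of your argument is right, but the compatibility step is not a routine afterthought --- it is the content of the proof, and as written it is missing.
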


\begin{proof}
  Let $G$ be a simple toroidal triangulation.  By
  Theorem~\ref{th:fij}, let $C_0,C_1,C_2$ be three non contractible
  and non homotopic cycles of $G$ that all intersect on one vertex $x$
  and that are pairwise disjoint otherwise. By eventually shortening
  the cycles $C_i$, we can assume that the three cycles $C_i$ are
  homotopically chordless (i.e. there is no edge between two vertices
  of $C_i$ that can be continuously transformed into a part of $C_i$).
  By symmetry, we can assume that the six edges $e_i,e'_i$ of the
  cycles $C_i$ incident to $x$ appear around $x$ in the
  counterclockwise order $e_0,e'_2,e_1,e'_0,e_2,e'_1$.  The cycles
  $C_i$ divide $G$ into two regions, denoted $R_1,R_2$ such that $R_1$
  is the region situated in the counterclockwise sector between $e_0$
  and $e'_2$ of $x$ and $R_2$ is the region situated in the
  counterclockwise sector between $e'_2$ and $e_1$ of $x$. Let $G_i$
  be the subgraph of $G$ contained in the region $R_i$ (including the
  three cycles $C_i$). Let $G'_1$ (resp. $G'_2$) be the graph obtained
  from $G_1$ (resp. $G_2$) by replacing $x$ by three vertices $x_1,
  x_2, x_3$, such that $x_i$ is incident to the edges in the
  counterclockwise sector between $e_{i+1}$ and $e'_i$ (resp. $e'_i$
  and $e_{i-1}$).  The cycles $C_i$ being homotopically chordless, the two
  graphs $G'_1$ and $G'_2$ are internally 3-connected planar maps (for
  vertices $x_i$ chosen on their outer face).  The vertices
  $x_0,x_1,x_2$ appear in counterclockwise order on the outer face of
  $G'_1$ and $G'_2$.  By a result of Miller~\cite{Mil02} (see also
  \cite{Fel01, Fel03}), the two graphs $G'_i$ admit planar Schnyder
  woods rooted at $x_0,x_1,x_2$.  Orient and color the edges of $G$
  that intersect the interior of $R_i$ by giving them the same
  orientation and coloring as in a planar Schnyder wood of $G'_i$.
  Orient and color the cycle $C_i$ in color $i$ such that it is
  entering $x$ by edge $e'_i$ and leaving $x$ by edge $e_i$. We claim
  that the orientation and coloring that is obtained is a toroidal
  Schnyder wood of $G$.

  Clearly, any interior vertex of the region $R_i$ satisfies (T1).
  Let us show that (T1) is also satisfied for any vertex $v$ of a
  cycle $C_i$ distinct from $x$.  In a Schnyder wood of $G'_1$, the
  cycle $C_i$ is oriented in two direction, from $x_{i-1}$ to $x_{i}$
  in color $i$ and from $x_{i}$ to $x_{i-1}$ in color $i-1$. Thus the
  edge leaving $v$ in color $i+1$ is an inner edge of $G'_1$ and
  vertex $v$ has no edges entering in color $i+1$. Symmetrically, in
  $G'_2$ the edge leaving $v$ in color $i-1$ is an inner edge of
  $G'_2$ and vertex $v$ has no edges entering in color $i-1$. Then one
  can paste $G'_1$ and $G'_2$ along $C_i$, orient $C_i$ in color $i$
  and see that $v$ satisfies (T1). The definition of the $G'_i$, and
  the orientation of the cycles is done so that $x$ satisfies (T1).
  The cycles $C_i$ being pairwise intersecting, (T2') is satisfied, so
  by Lemma~\ref{lem:tri-relax}, the orientation and coloring is a
  Schnyder wood.
\end{proof}

Note that in the Schnyder wood obtained by
Theorem~\ref{th:schnydersimple}, we do not know if there are several
monochromatic cycles of one color or not.  So given any three
monochromatic cycles of different color, they might not all intersect
on one vertex. But for any two monochromatic cycles of different
color, we know that they intersect exactly once.  We wonder whether
Theorem~\ref{th:schnydersimple} can be modified as follow : Does a
simple toroidal triangulation admits a Schnyder wood such that there
is just one monochromatic cycle per color ? Moreover can one require
that the monochromatic cycles of different colors pairwise intersect
exactly once ?  Or like in Theorem~\ref{th:schnydersimple} that they
all intersect on one vertex and that they are pairwise disjoint
otherwise ?

\section{The contraction Lemma}
\label{sec:lemma}

We prove existence of Schnyder wood for essentially 3-connected
toroidal maps by contracting edges until we obtain a graph with just
few vertices. Then the graph can be decontracted step by step to
obtain a Schnyder wood of the original graph.

Given a toroidal map $G$, the \emph{contraction} of a non-loop-edge
$e$ of $G$ is the operation consisting of continuously contracting $e$
until merging its two ends. We note $G/e$ the obtained graph.  On
Figure~\ref{fig:contraction} the contraction of an edge $e$ is
represented. We consider three different cases corresponding to whether
the faces adjacent to the edge $e$ are triangles or not.  Note that
only one edge of each set of homotopic multiple edges that is
possibly created is preserved.

\begin{figure}[!h]
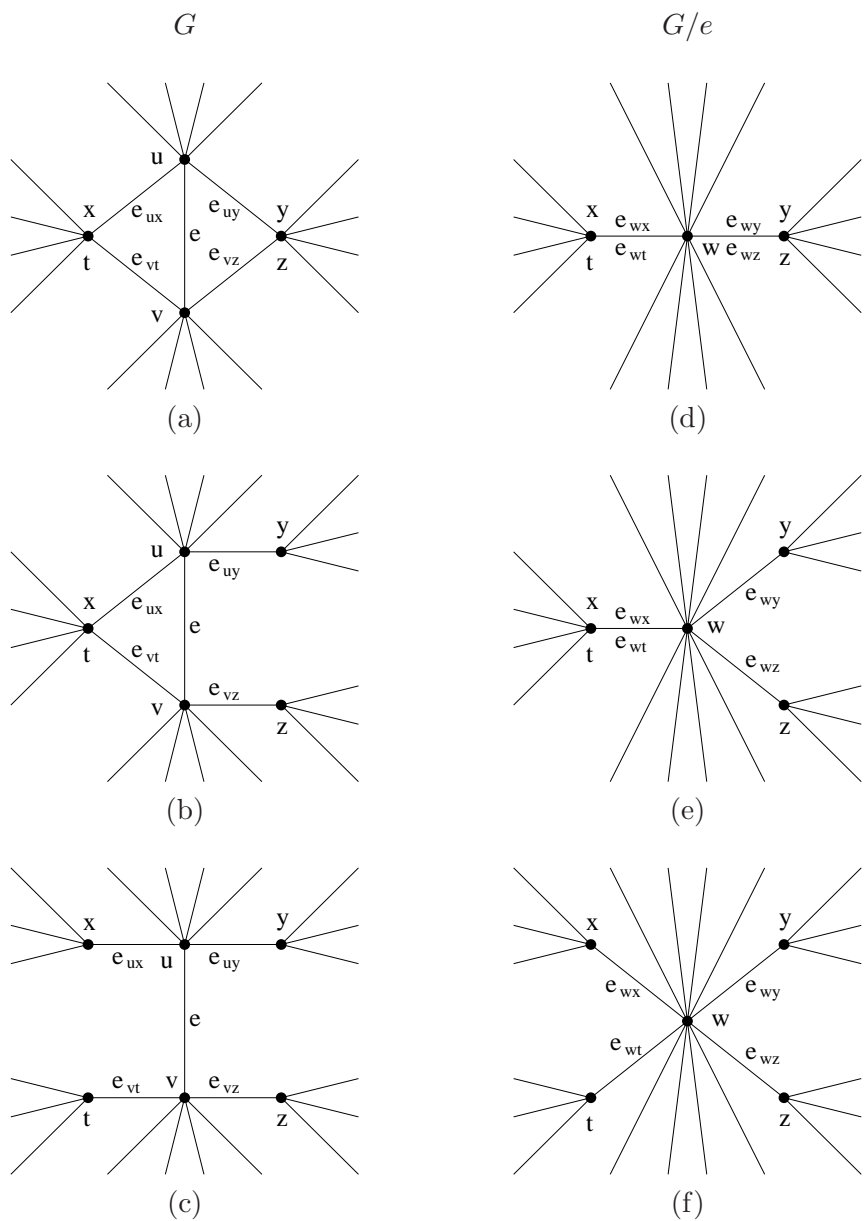

\center
\begin{tabular}{ccc}
$G$& &$G/e$ \\
& & \\
\includegraphics[scale=0.4]{contraction-1.eps}
& \hspace{3em} &
\includegraphics[scale=0.4]{contraction-2.eps}\\
(a) & & (d) \\
& & \\
\includegraphics[scale=0.4]{contraction-2.1.eps}
& \hspace{3em} &
\includegraphics[scale=0.4]{contraction-2.2.eps}\\
(b) & & (e) \\
& & \\
\includegraphics[scale=0.4]{contraction-3.1.eps}
& \hspace{3em} &
\includegraphics[scale=0.4]{contraction-3.2.eps}\\
(c) & & (f) \\
%& & \\
\end{tabular}
\caption{The contraction operation}
\label{fig:contraction}
\end{figure}

The goal of this section is to prove the following lemma, that plays a
key role in the proof of
Section~\ref{sec:existence3connected}.

\begin{lemma}
  \label{lem:contractype1}
  If $G$ is a toroidal map given with a non-loop edge $e$ whose extremities
  are of degree at least three and such that $G/ e$ admits a Schnyder
  wood of Type 1, then $G$ admits a Schnyder wood of type 1.
\end{lemma}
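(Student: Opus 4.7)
The plan is to construct a Schnyder wood of $G$ by keeping the orientations and colors of the edges of $G/e$ and inserting $e$ (along with any edges reinstated from homotopic multi-edges absorbed during the contraction) with orientations and colors chosen by a local analysis at the contracted vertex $w$.

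Let the edges of $w$ in $G/e$ split, in counterclockwise cyclic order, into two arcs $A_u=\{a_1,\dots,a_p\}$ and $A_v=\{b_1,\dots,b_q\}$ inherited from $u$ and $v$; since $u$ and $v$ have degree at least $3$ in $G$, we have $p,q\geq 2$. Let $e_0,e_1,e_2$ be the outgoing edges of $w$ in colors $0,1,2$ in CCW order, and let $t\in\{0,1,2\}$ be the number of triangular faces of $G$ incident to $e$. By Euler's formula, any Schnyder wood of $G$ has exactly $2-t$ more bi-oriented edges than any Schnyder wood of $G/e$. I would proceed by case analysis on the distribution of $\{e_0,e_1,e_2\}$ between $A_u$ and $A_v$. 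In each case, the outgoing colors already present at $u$ (resp.\ $v$) via its $A_u$ (resp.\ $A_v$) edges are identified, and the remaining ones must be supplied by $e$, by the reinstated edges (when $t\geq 1$), and, in some subcases, by switching specific mono-entering edges of $A_u\cup A_v$ to bi-oriented (adding an outgoing direction). The bi count is balanced by the Euler relation. The specification of orientations and colors of $e$ and the switched edges is essentially forced, up to symmetry, by the requirement that the CCW order of outgoing at $u$ (resp.\ $v$) be a cyclic shift of $0,1,2$, which follows from the CCW cyclic order at $w$ and the correct insertion position of $e$ between $a_p$ and $b_1$, and between $b_q$ and $a_1$.

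The only non-local effect of the construction is the mono-to-bi switch of certain edges in $A_u$ or $A_v$; when a mono-entering edge $b_j$ at $v$ (in some color $c'$) is switched to bi-oriented with a new outgoing direction in color $c$, at the other endpoint $z$ of $b_j$ this introduces a new entering direction in color $c$. I would verify directly that this new entering falls in the correct CCW sector at $z$, namely the sector adjacent to the already-outgoing edge $b_j$ at $z$ whose label equals $c$; this reduces to the Schnyder property of $G/e$ at $z$ together with the choice of $c$ (dictated by the missing outgoing color at $v$ and by the CCW order constraint).

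Finally, I would invoke Lemma~\ref{lem:tri-relax} to conclude that the constructed orientation and coloring is a Schnyder wood of Type~1: (T1') holds at every vertex by the analysis above; (T2') holds because pairs of monochromatic cycles of different colors in $G/e$, which intersected by the Type~1 hypothesis, lift to intersecting pairs in $G$ (the cycles are either preserved by the local modification or extended cleanly through $e$); and (T3') holds because the Type~1 assumption on $G/e$ precludes reversed pairs of monochromatic cycles of different colors in $G/e$, and the local modification creates no new ones. The main obstacle is managing the case analysis cleanly, particularly the subcase where all three of $e_0,e_1,e_2$ lie in one arc (say $A_u$): here the entire $A_v$ enters $w$ in a single color, and $v$ must acquire two missing outgoing colors. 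Since $e$ is forced to be mono-directional in this subcase (as $u$ already has three outgoing colors from $A_u$), two edges of $A_v$ (the extremal ones $b_1$ and $b_q$ adjacent to the insertion position of $e$) must be switched from mono to bi, and the propagation of these switches to their other endpoints must be verified to not disrupt the Schnyder property.
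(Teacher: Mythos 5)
Your overall strategy --- keep the Schnyder wood of $G/e$ on unchanged edges, do a case analysis at the contracted vertex $w$ to orient and color $e$ and the few reinstated or switched edges, check (T1'), (T2'), (T3') and invoke Lemma~\ref{lem:tri-relax} --- is the same as the paper's. The bi-orientation count via Euler's formula is correct, and your treatment of (T1') and (T3') matches what the paper does (and the paper agrees these parts are routine).

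The genuine gap is in your argument for (T2'). You assert that intersecting pairs of monochromatic cycles of $G/e$ ``lift to intersecting pairs in $G$ (the cycles are either preserved by the local modification or extended cleanly through $e$),'' and that the local coloring is ``essentially forced.'' Neither claim holds. A monochromatic cycle of $G'=G/e$ that passes through $w$ is, after decontraction, rerouted through $u$ only, through $v$ only, or through both, depending on the sector in which it enters and leaves $w$ \emph{and} on which local coloring you chose; two cycles of different colors whose only intersection in $G'$ was at $w$ can end up one through $u$ only and the other through $v$ only, hence disjoint in $G$. This is exactly why the paper keeps, for each local configuration, \emph{several} candidate colorings satisfying (T1') (the columns $\ell\geq 1$ of Figures~\ref{fig:case-a}--\ref{fig:case-c}) and selects among them according to a further case analysis on which colors admit a ``safe'' monochromatic cycle avoiding $w$. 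Worse, in some subcases (e.g.\ $a.1.\{2\}$ and $a.1.\{1\}$ in the paper) no purely local choice visibly works, and one must prove a global non-existence statement --- that a monochromatic path of some color from a neighbor of $w$ back to $w$ avoiding a given cycle cannot exist --- via Lemmas~\ref{lem:magic} and~\ref{lem:magic2}, which in turn rest on homotopy arguments (Lemmas~\ref{lem:intersect2} and~\ref{lem:homotopic3}). Your proposal contains no mechanism for choosing among the valid local colorings and no substitute for these global lemmas, so the (T2') step as written would fail. (By contrast, the propagation issue you flag as the main obstacle --- the new entering direction at the far endpoint of a switched edge --- is the easy part and is handled in the paper by the angle-labeling check for (T1').)
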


The proof of Lemma~\ref{lem:contractype1} is long and technical.  In
the planar case, an analogous lemma can be proved without too much
difficulty (see  Section~2.6 of~\cite{Fel-book}) as there are special
 outer-vertices where the contraction can be done to reduce
the case analysis  and as Properties (P1) and (P2)
are local and not too difficult to preserve during the decontraction
process.

In the toroidal case there is a huge case analysis for the following
reasons. One as to consider the three different kind of contraction
depicted on Figure~\ref{fig:contraction}. For each of these cases, one
as to consider the different ways that the edges can be oriented and
color (around the contraction point) in $G/ e$. For each of these
cases, one as to show that the Schnyder wood $G/e$ can be extended to
a Schnyder wood of $G$. This would be quite easy if one just have to
satisfy (T1) that is a local property, but satisfying (T2) is much
more complicated.  Instead of proving (T2), that is considering
intersections between every monochromatic cycles, we prove (T2') and
(T3'), that is equivalent for our purpose by
Lemma~\ref{lem:tri-relax}. Property (T2') is simpler than (T2) as it
is considering just one intersection for each pair of colors instead
of all the intersections. Even with this simplification proving (T2')
is the main difficulty of the proof.  For each considered cases, one
has to analyze the different ways that the monochromatic cycles go
through the contracted vertex or not and to show that there always
exists a coloring and orientation of $G$ where (T2') is
satisfied. Some cases are non trivial and involved the use of lemmas
like Lemmas~\ref{lem:magic} and~\ref{lem:magic2}.

\begin{lemma}
\label{lem:magic}
Let $G$ be a toroidal map given with a Schnyder wood and $y,w$ be two
vertices of $G$ such that $e_i(y)$ is entering $w$. Suppose that there
is a directed path $Q_{i-1}$ of color $i-1$ from $y$ to $w$, and a
directed path $Q_{i+1}$ of color $i+1$ from $y$ to $w$. Consider the
two directed cycles $C_{i-1}=Q_{i-1}\cup \{e_i(y)\}^{-1}$ and
$C_{i+1}=Q_{i+1}\cup \{e_i(y)\}^{-1}$.  Then $C_{i-1}$ and $C_{i+1}$
are not homotopic.
\end{lemma}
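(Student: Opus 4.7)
My plan is to pass to the universal cover $G^\infty$ and derive a contradiction from the absence of directed cycles given by Lemma~\ref{lem:nocontractiblecycleuniversal}. First I observe that $C_{i-1}$ is a directed closed walk in $G_{i-1}\cup G_i^{-1}$, and $C_{i+1}$ is one in $G_{i+1}\cup G_i^{-1}$; hence, by Lemma~\ref{lem:nocontractiblecycle} applied with appropriate index shifts, neither $C_{i-1}$ nor $C_{i+1}$ is contractible. Both therefore have a non-trivial homotopy class on the torus.

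Assume toward contradiction that $C_{i-1}$ and $C_{i+1}$ are homotopic as unoriented closed curves. Since both are non-contractible and $\pi_1$ of the torus is abelian, either $C_{i-1}$ and $C_{i+1}$ are fully-homotopic, or $C_{i-1}$ is fully-homotopic to $C_{i+1}^{-1}$. Fix a lift $\tilde{y}$ of $y$ in $G^\infty$, and lift $e_i(y)$, $Q_{i-1}$, $Q_{i+1}$ starting at $\tilde{y}$; let $\tilde{w}^*$, $\tilde{w}_-$, $\tilde{w}_+$ be the resulting endpoints, all lifts of $w$. By construction $\tilde{w}_-\in P_{i-1}(\tilde{y})$ and $\tilde{w}_+\in P_{i+1}(\tilde{y})$.

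In the first case, the translation vectors in the deck group of the lifts of $C_{i-1}$ and $C_{i+1}$ coincide, which forces $\tilde{w}_-=\tilde{w}_+$. Since $y\neq w$ (as $e_i(y)$ is not a loop), this common vertex is distinct from $\tilde{y}$, so $P_{i-1}(\tilde{y})$ and $P_{i+1}(\tilde{y})$ would share a vertex other than $\tilde{y}$, contradicting Lemma~\ref{lem:nocommon}.

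In the second case, the concatenation $C_{i-1}\cdot C_{i+1}$ has trivial homotopy class and is therefore null-homotopic in $G$. Lifting it starting at $\tilde{y}$ yields a non-trivial finite closed walk in $G^\infty$ whose edges lie entirely in $G^\infty_{i-1}\cup G^\infty_{i+1}\cup (G^\infty_i)^{-1}$. Any non-trivial directed closed walk contains a directed cycle, but the reversal of this subgraph is precisely the graph forbidden to contain any directed cycle by Lemma~\ref{lem:nocontractiblecycleuniversal} (applied at color $i$), and reversal preserves directed cycles, giving the desired contradiction. The main technical subtlety lies in Case~2, where I must verify that the lifted concatenation really closes up in $G^\infty$—this is exactly the arithmetic content of the two translation vectors summing to zero, i.e.\ of the null-homotopy hypothesis.
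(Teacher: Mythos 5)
Your proof takes a genuinely different route from the paper's. The paper argues directly on the torus: both cycles are non-contractible by Lemma~\ref{lem:nocontractiblecycle}, the path $Q_{i+1}$ leaves $C_{i-1}$ at $y$ on its right side by the Schnyder property, and if the two cycles were homotopic then $Q_{i+1}$ would have to re-enter $C_{i-1}$ from the right side in order to reach $w$, which (T1) forbids; this one crossing argument handles both relative orientations at once. Your lift-to-the-universal-cover argument is a legitimate alternative, and Case 1 is airtight: the uniqueness of edge lifts forces $\tilde{w}_-=\tilde{w}_+$, and Lemma~\ref{lem:nocommon} finishes it without any informal talk of ``sides''.

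Case 2, however, has a gap as written. The digraph $G^\infty_{i-1}\cup G^\infty_{i+1}\cup(G^\infty_i)^{-1}$ --- equivalently, the reversal of the digraph in Lemma~\ref{lem:nocontractiblecycleuniversal} --- genuinely contains directed $2$-cycles whenever $G$ has an edge bi-oriented in colors $i-1$ and $i+1$: such an edge contributes both arcs $u\to v$ and $v\to u$. So Lemma~\ref{lem:nocontractiblecycleuniversal} cannot be read as excluding these degenerate $2$-cycles (its proof only treats cycles bounding a disk), and the step ``any non-trivial closed directed walk contains a directed cycle'' is not enough: the cycle you extract from the lifted walk could be exactly a back-and-forth traversal of one bi-oriented edge shared by the lifts of $Q_{i-1}$ and $Q_{i+1}$, which yields no contradiction. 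The gap is fixable: your closed walk traverses a lift of $e_i(y)$ once in the reversed direction and never in the forward direction (a forward traversal would require an arc of color $i-1$ or $i+1$ leaving a lift of $y$ along the same underlying edge as $e_i(y)$ in the same direction, impossible since one direction of an edge carries a single color). Hence the circulation defined by the walk is unbalanced on that edge, so every decomposition of it into directed cycles contains a cycle of length at least~$3$; that is a simple cycle bounding a disk in the plane, and its reversal then does contradict Lemma~\ref{lem:nocontractiblecycleuniversal}. (For toroidal triangulations there are no bi-oriented edges and the issue vanishes, but Lemma~\ref{lem:magic} is invoked for general maps.)
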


\begin{proof}
  By Lemma~\ref{lem:nocontractiblecycle}, the cycles $C_{i-1}$ and
  $C_{i+1}$ are not contractible.  Suppose that $C_{i-1}$ and
  $C_{i+1}$ are homotopic.  The path $Q_{i+1}$ is leaving $C_{i-1}$ at
  $y$ on the right side of $C_{i-1}$.  Since $C_{i-1}$ and $C_{i+1}$
  are homotopic, the path $Q_{i+1}$ is entering $C_{i-1}$ at least
  once from its right side. This is in contradiction with (T1).
\end{proof}

The sector $[e_1,e_2]$ of a vertex $w$, for $e_1$ and $e_2$ two edges
incident to $w$, is the counterclockwise sector of $w$ between $e_1$
and $e_2$, including the edge $e_1$ and $e_2$. The sector $]e_1,e_2]$,
$[e_1,e_2[$, and $]e_1,e_2[$ are defined analogously by excluding the
corresponding edges from the sectors.

\begin{lemma}
\label{lem:magic2}
Let $G$ be a toroidal map given with a Schnyder wood and $w,x,y$ be
three vertices such that $e_{i-1}(x)$ and $e_{i+1}(y)$ are entering
$w$.  Suppose that there is a directed path $Q_{i-1}$ of color $i-1$
from $y$ to $w$, entering $w$ in the sector $[e_i(w),e_{i-1}(x)]$, and
a directed path $Q_{i+1}$ of color $i+1$ from $x$ to $w$, entering $w$
in the sector $[e_{i+1}(y),e_i(w)]$.  Consider the two directed cycles
$C_{i-1}=Q_{i-1}\cup \{e_{i+1}(y)\}^{-1}$ and $C_{i+1}=Q_{i+1}\cup
\{e_{i-1}(x)\}^{-1}$.  Then either $C_{i-1}$ and $C_{i+1}$ are not
homotopic or $C_{i-1}=C_{i+1}^{-1}$.
\end{lemma}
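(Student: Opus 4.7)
My plan is to argue by contradiction, in the spirit of Lemma~\ref{lem:magic} but via a careful analysis of local crossings. First I would remark that $C_{i-1}$ is a directed cycle in $G_{i-1}\cup G_{i+1}^{-1}$ and $C_{i+1}$ is a directed cycle in $G_{i+1}\cup G_{i-1}^{-1}$, so Lemma~\ref{lem:nocontractiblecycle} (applied with indices $i-1$ and $i+1$ respectively) ensures that both cycles are non-contractible. I then assume that $C_{i-1}$ and $C_{i+1}$ are homotopic and $C_{i-1}\neq C_{i+1}^{-1}$, and aim for a contradiction.

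The key local observation is at $w$. By (T1), the outgoing edges at $w$ appear counterclockwise in the order $e_i(w),e_{i+1}(w),e_{i-1}(w)$; every edge entering $w$ in color $i-1$ (in particular $e_{i-1}(x)$ and the last edge of $Q_{i-1}$) lies in the ccw sector $[e_i(w),e_{i+1}(w)]$; and every edge entering $w$ in color $i+1$ (in particular $e_{i+1}(y)$ and the last edge of $Q_{i+1}$) lies in $[e_{i-1}(w),e_i(w)]$. Combined with the sector hypotheses of the lemma, this pins down the ccw cyclic order around $w$ of the four relevant edges as $(\alpha_{\mathrm{in}},\beta_{\mathrm{out}},\alpha_{\mathrm{out}},\beta_{\mathrm{in}})$, where $\alpha=C_{i-1}$ enters via the last edge of $Q_{i-1}$ and exits along $e_{i+1}(y)$, while $\beta=C_{i+1}$ enters via the last edge of $Q_{i+1}$ and exits along $e_{i-1}(x)$. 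This interleaving shows that $C_{i-1}$ and $C_{i+1}$ cross transversely at $w$ with a definite local sign, say $-1$.

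Next I would invoke the fact that on the torus any loop has algebraic self-intersection $0$, because the intersection form on $H_1(T^2)$ is antisymmetric; homotopic loops therefore have algebraic intersection $0$. Since the crossing at $w$ contributes $-1$, there must be at least one further transverse crossing. Because the only non-colored edges of the two cycles are the single arcs $yw$ and $xw$, any other common vertex $u\neq w$ must lie in the interior of both $Q_{i-1}$ and $Q_{i+1}$. At such a $u$, (T1) forces $Q_{i-1}$ to enter in the sector $[e_i(u),e_{i+1}(u)]$ and exit via $e_{i-1}(u)$, and $Q_{i+1}$ to enter in $[e_{i-1}(u),e_i(u)]$ and exit via $e_{i+1}(u)$. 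The ccw order of these four edges around $u$ is exactly $(\alpha_{\mathrm{in}},\beta_{\mathrm{out}},\alpha_{\mathrm{out}},\beta_{\mathrm{in}})$ again, so the crossing at $u$ carries the same sign $-1$. Hence every transverse crossing has sign $-1$, and the total algebraic intersection is strictly negative: contradiction.

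The main obstacle I anticipate is the degenerate configuration where $Q_{i+1}$ ends with the edge $e_{i+1}(y)$ itself (so $Q_{i+1}$ passes through $y$ immediately before $w$), or symmetrically $Q_{i-1}$ ends with $e_{i-1}(x)$; in that case $C_{i-1}$ and $C_{i+1}$ share the edge $yw$ (or $xw$) traversed in opposite directions, and the intersection fails to be transverse along the shared segment. To handle this, I would perturb the cycles slightly across the shared segment and then apply (T1) at each endpoint of the segment; a direct check shows that the ccw interleaving pattern---and hence the local crossing sign---persists at both endpoints, so the total algebraic intersection remains nonzero and the proof goes through identically.
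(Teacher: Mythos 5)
Your proof is correct, but it formalizes the obstruction differently from the paper. The paper's proof of Lemma~\ref{lem:magic2} is the same four-line template as Lemmas~\ref{lem:twonothomotopic} and~\ref{lem:magic}: since $C_{i-1}\neq C_{i+1}^{-1}$, the cycle $C_{i+1}$ must leave $C_{i-1}$ somewhere; the sector hypotheses together with (T1) force every departure to be on the right side of $C_{i-1}$; homotopy then forces $Q_{i+1}$ to re-enter $C_{i-1}$ from its right side at least once, which (T1) forbids. You prove the same topological fact by summing signed crossings: the interleaving $(\alpha_{\mathrm{in}},\beta_{\mathrm{out}},\alpha_{\mathrm{out}},\beta_{\mathrm{in}})$ that you extract from (T1) at $w$ and at every other common vertex is exactly the paper's ``always crosses from left to right,'' and you then invoke the vanishing of the algebraic self-intersection of a homology class on the torus instead of the informal ``must re-enter from the same side'' step. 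The trade-off is that the paper's phrasing quietly absorbs tangential intersections, whereas your sign count obliges you to treat them; your ``direct check'' does go through (contracting a maximal shared segment reproduces the same cyclic interleaving of the four external edge-ends, hence the same sign), but your inventory of degenerate configurations is slightly short: besides $Q_{i+1}$ ending with $e_{i+1}(y)$ or $Q_{i-1}$ ending with $e_{i-1}(x)$, a shared, oppositely-traversed segment can also occur at an interior common vertex $u$ whenever an edge is bi-oriented with colors $i-1$ and $i+1$, and the common vertex can be $y$ or $x$ themselves (endpoints, not interior vertices, of the respective $Q$'s). All of these succumb to the same contraction argument, so this is a gap in exposition rather than in substance.
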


\begin{proof} 
  By Lemma~\ref{lem:nocontractiblecycle}, the cycles $C_{i-1}$ and
  $C_{i+1}$ are not contractible.  Suppose that $C_{i-1}$ and
  $C_{i+1}$ are homotopic and that $C_{i-1}\neq C_{i+1}^{-1}$.  Since
  $C_{i-1}\neq C_{i+1}^{-1}$, the cycle $C_{i+1}$ is leaving
  $C_{i-1}$. By (T1) and the assumption on the sectors, $C_{i+1}$ is
  leaving $C_{i-1}$ on its right side. Since $C_{i-1}$ and $C_{i+1}$
  are homotopic, the path $Q_{i+1}$ is entering $C_{i-1}$ at least
  once from its right side. This is in contradiction with (T1).
\end{proof}

We are now able to prove Lemma~\ref{lem:contractype1}.

\emph{Proof of Lemma~\ref{lem:contractype1}.}\
  Let $u,v$ be the two extremities of $e$. Vertices $u$ and $v$ are of
  degree at least three. Let $x,y$ (resp. $z,t$) be the neighbors of
  $u$ (resp. $v$) such that $x,v,y$ (resp. $z,u,t$) appear
  consecutively and in counterclockwise order around $u$ (resp. $v$)
  (see Figure~\ref{fig:contraction}.(c)).  Note that $u$ and $v$ are
  distinct by definition of edge contraction but that $x,y,z,t$ are
  not necessarily distinct, nor necessarily distinct from $u$ and $v$. Depending
  on whether the faces incident to $e$ are triangles or not, we are,
  by symmetry, in one of the three cases of
  Figure~\ref{fig:contraction}.(a).(b).(c). Let $G'=G/e$ and consider
  a Schnyder wood of Type 1 of $G'$. Let $w$ be the vertex of $G'$
  resulting from the contraction of $e$.

  For each case (a), (b), (c).  There are different cases
  corresponding to the different possibilities of orientation and
  coloring of the edges $e_{wx}, e_{wy}, e_{wz}, e_{wt}$ in $G'$.  For
  example for case (a), there should be $6$ cases depending on if
  $e_{wx}$ and $e_{wy}$ are both entering $w$, both leaving $w$ or one
  entering $w$ and one leaving $w$ (3 cases), multiplied by the
  coloring, both of the same or not (2 cases). The case where $w$ has
  two edges leaving in the same color is impossible by (T1). So, by
  symmetry, only $5$ cases remain represented by Figure~$a.k.0$, for
  $k=1,\ldots,5$, on Figure~\ref{fig:case-a} (in the notation $\alpha.k.l$,
  $a.k$ indicates the line on the figures and $\ell$ the column).  For
  cases (b) and (c), there are more cases to consider but the analysis
  is similar. These cases are represented in the first columns of
  Figures~\ref{fig:case-b} and~\ref{fig:case-c}.  On these figure, a
  dotted half-edge represent the possibility for an edge to be uni- or
  bi-directed. In the last case of each figure, we have indicated
  where is the edge leaving in color $1$ as there are two
  possibilities (up or down).

\begin{figure}[!h]
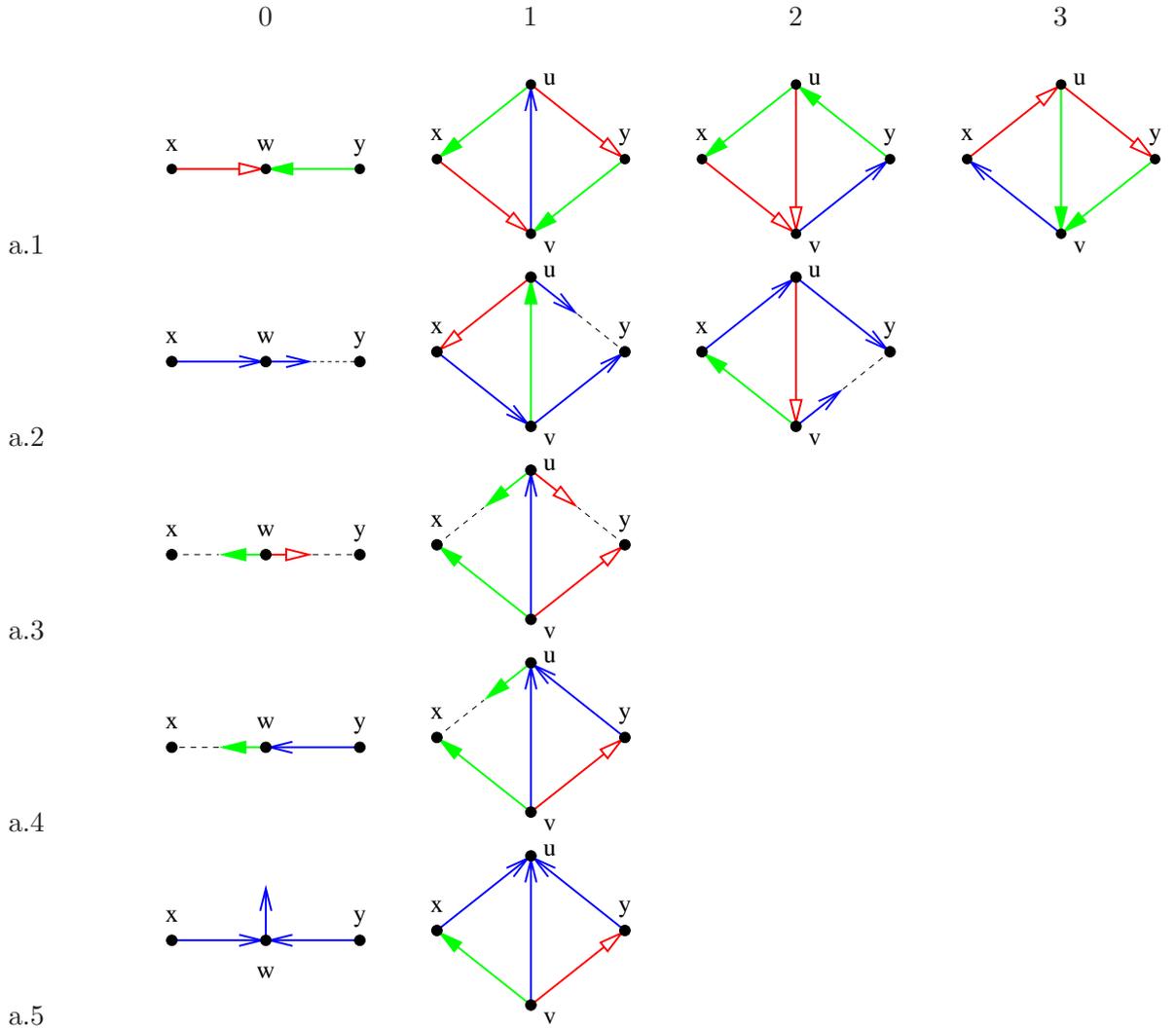

\center
  \begin{tabular}[]{ccccccccc}
 & &  0 & & 1 & & 2 & & 3\\ \\
a.1& \hspace{2em} & \includegraphics[scale=0.4]{1-1-0.eps} & \hspace{0em} &
\includegraphics[scale=0.4]{1-1-1.eps} & \hspace{0em} &
\includegraphics[scale=0.4]{1-1-2.eps} & \hspace{0em} &
\includegraphics[scale=0.4]{1-1-3.eps} \\
%                                a.1.0 & & 
%                                a.1.1 & & 
%                                a.1.2 & & 
%                                a.1.3 \\ \\ 

a.2 & & \includegraphics[scale=0.4]{1-2-0.eps} & \hspace{0em} &
\includegraphics[scale=0.4]{1-2-1.eps} & \hspace{0em} &
\includegraphics[scale=0.4]{1-2-2.eps} \\
%                                a.2.0 & & 
%                                a.2.1 & & 
%                                a.2.2 \\ \\ 
 
a.3 & &\includegraphics[scale=0.4]{1-3-0.eps} & \hspace{0em} &
\includegraphics[scale=0.4]{1-3-1.eps} \\
%                                a.3.0 & & 
%                                a.3.1 \\ \\ 

a.4 & & \includegraphics[scale=0.4]{1-5-0.eps} & \hspace{0em} &
\includegraphics[scale=0.4]{1-5-1.eps} \\
%                                a.4.0 & & 
%                                a.4.1 \\ \\ 

a.5 & & \includegraphics[scale=0.4]{1-4-0.eps} & \hspace{0em} &
\includegraphics[scale=0.4]{1-4-1.eps} \\
%                                a.5.0 & & 
%                                a.5.1 \\ \\ 
  \end{tabular}

  \caption{Decontraction rules for case (a).}
\label{fig:case-a}
\end{figure}

\begin{figure}[!h]
\center
%\tiny
  \begin{tabular}[]{ccccccc}
 & &  0 & & 1 & & 2 \\ \\
b.1& \hspace{2em} & \includegraphics[scale=0.2]{2-2-0.eps} & \hspace{2em} &
\includegraphics[scale=0.2]{2-2-1.eps} & \hspace{2em} &
\includegraphics[scale=0.2]{2-2-2.eps} \\
%                                b.2.0 & & 
%                                b.2.1 & & 
%                                b.2.2 \\ 

b.2 & &\includegraphics[scale=0.2]{2-1-0.eps} & \hspace{2em} &
\includegraphics[scale=0.2]{2-1-1.eps} & \hspace{2em} &
\includegraphics[scale=0.2]{2-1-2.eps} \\
%                                b.1.0 & & 
%                                b.1.1 & & 
%                                b.1.2 \\% \\ 

b.3 & & \includegraphics[scale=0.2]{2-7-0.eps} & \hspace{2em} &
\includegraphics[scale=0.2]{2-7-1.eps} & \hspace{2em} &
\includegraphics[scale=0.2]{2-7-2.eps} \\
%                                b.6.0 & & 
%                                b.6.1 & & 
%                                b.6.2 \\ %\\ 

b.4  & & \includegraphics[scale=0.2]{2-6-0.eps} & \hspace{2em} &
\includegraphics[scale=0.2]{2-6-1.eps} & \hspace{2em} &
\includegraphics[scale=0.2]{2-6-2.eps} \\
%                                b.10.0 & & 
%                                b.10.1 & & 
%                                b.10.2 \\ %\\ 

b.5 & & \includegraphics[scale=0.2]{2-5-0.eps} & \hspace{2em} &
\includegraphics[scale=0.2]{2-5-1.eps} \\
%& \hspace{2em} &
%\includegraphics[scale=0.2]{2-5-2.eps} \\
%                                b.3.0 & & 
%                                b.3.1 \\ \\ 
%                               b.3.2 \\ %\\ 

b.6&  & \includegraphics[scale=0.2]{2-4-0.eps} & \hspace{2em} &
\includegraphics[scale=0.2]{2-4-1.eps} \\
%                                b.5.0 & & 
%                                b.5.1 \\ %\\ 

b.7 & &\includegraphics[scale=0.2]{2-8-0.eps} & \hspace{2em} &
\includegraphics[scale=0.2]{2-8-1.eps} \\
%                                b.7.0 & & 
%                                b.7.1 \\ %\\

b.8&  &\includegraphics[scale=0.2]{2-9-0.eps} & \hspace{2em} &
\includegraphics[scale=0.2]{2-9-1.eps} \\
%                                b.8.0 & & 
%                                b.8.1 \\% \\ 

b.9 & &\includegraphics[scale=0.2]{2-10-0.eps} & \hspace{2em} &
\includegraphics[scale=0.2]{2-10-1.eps} \\
%                                b.9.0 & & 
%                                b.9.1 \\ %\\ 

b.10&  &\includegraphics[scale=0.2]{2-3-0.eps} & \hspace{2em} &
\includegraphics[scale=0.2]{2-3-1.eps} \\
%                                b.4.0 & & 
%                                b.4.1 \\% \\ 

  \end{tabular}
\caption{Decontraction rules for case (b)}
\label{fig:case-b}
\end{figure}

\begin{figure}[!h]
\center
  \begin{tabular}[]{ccccccc}
 & &  0 & & 1 & & 2 \\ \\
c.1& \hspace{2em} & \includegraphics[scale=0.2]{3-1-0.eps} & \hspace{2em} &
\includegraphics[scale=0.2]{3-1-1.eps} & \hspace{2em} &
\includegraphics[scale=0.2]{3-1-2.eps} \\
%                                3.1.0 & & 
%                                3.1.1 & & 
%                                3.1.2 \\ \\ 

c.2 & & \includegraphics[scale=0.2]{3-2-0.eps} & \hspace{2em} &
\includegraphics[scale=0.2]{3-2-1.eps} & \hspace{2em} &
\includegraphics[scale=0.2]{3-2-2.eps} \\
%                                3.2.0 & & 
%                                3.2.1 & & 
%                                3.2.2 \\ \\ 

c.3 & & \includegraphics[scale=0.2]{3-3-0.eps} & \hspace{2em} &
\includegraphics[scale=0.2]{3-3-1.eps} \\
%                                3.3.0 & & 
%                                3.3.1 & & 
%                                3.3.1 \\ \\ 

c.4 & & \includegraphics[scale=0.2]{3-4-0.eps} & \hspace{2em} &
\includegraphics[scale=0.2]{3-4-1.eps} \\
%                                3.4.0 & & 
%                                3.4.1 & & 
%                                3.4.1 \\ \\ 

c.5 & & \includegraphics[scale=0.2]{3-5-0.eps} & \hspace{2em} &
\includegraphics[scale=0.2]{3-5-1.eps} \\
%                                3.5.0 & & 
%                                3.5.1 \\ \\ 

c.6 & & \includegraphics[scale=0.2]{3-6-0.eps} & \hspace{2em} &
\includegraphics[scale=0.2]{3-6-1.eps} \\
%                                3.6.0 & & 
%                                3.6.1 \\ \\ 

c.7 & & \includegraphics[scale=0.2]{3-7-0.eps} & \hspace{2em} &
\includegraphics[scale=0.2]{3-7-1.eps} \\
%                                3.7.0 & & 
%                                3.7.1 \\ \\ 

c.8 & & \includegraphics[scale=0.2]{3-8-0.eps} & \hspace{2em} &
\includegraphics[scale=0.2]{3-8-1.eps} \\
%                                3.8.0 & & 
%                                3.8.1 \\ \\

c.9 & & \includegraphics[scale=0.2]{3-9-0.eps} & \hspace{2em} &
\includegraphics[scale=0.2]{3-9-1.eps} \\
%                                3.9.0 & & 
%                                3.9.1 \\ \\ 
  \end{tabular}
\caption{Decontraction rules for case (c)}
\label{fig:case-c}
\end{figure}

  In each case $\alpha.k$, $\alpha\in\{a,b,c\}$, we show how one can
  color and orient the edges of $G$ to obtain a Schnyder wood of $G$
  from the Schnyder wood of $G'$. Just the edges
  $e,e_{ux},e_{uy},e_{vt},e_{vz}$ of $G$ have to be specified, all the
  other edges of $G$ keep the orientation and coloring of their
  corresponding edge in $G'$.  In each case $\alpha.k$, there might be
  several possibilities to color and orient these edges to satisfy
  (T1). Just some of these possibilities, the one that are useful for
  our purpose, are represented on Figures $\alpha.k.\ell$, $\ell\geq
  1$, of Figures~\ref{fig:case-a} to~\ref{fig:case-c}. A dotted half-edge represent
  the fact that the edge is uni- or bi-directed like the corresponding
  half edge of $G'$.

  In each case $\alpha.k$, $\alpha\in\{a,b,c\}$, we show that one of
  the colorings $\alpha.k.\ell$, $\ell\geq 1$, gives a Schnyder wood
  of Type 1 of $G$. By Lemma~\ref{lem:tri-relax}, we just have to
  prove that, in each case $\alpha.k$, there is one coloring
  satisfying (T1'), (T2') and (T3'). Properties (T1') and (T3') are
  satisfied for any colorings $\alpha.k.\ell$, $\ell\geq 1$ but this
  is not the case for property (T2'). This explains why several
  possible colorings of $G$ have to be considered.

  (T1') One can easily check that in all the cases $\alpha.k.\ell$,
  Property (T1') is satisfied for every vertex of $G$. To do so one
  can consider the angle labeling around vertices $w,x,y,(z),(t)$ of
  $G'$ in the case $\alpha.k.0$. Then one can remark that this angle
  labeling exports well around vertices $u,v,x,y,(z),(t)$ of $G$ and
  thus the Schnyder property is satisfied for these vertices. On
  Figure~\ref{fig:exampledecontractionangle}, an example is given on
  how the angle labeling is modified during the decontraction process.
  It corresponds to case $c.2.1$ where the dotted half edge is
  uni-directed.  We do not detail more this part that is easy to
  check.

\begin{figure}[!h]
\center
  \begin{tabular}[]{ccccc}
\includegraphics[scale=0.4]{3-2-0-angle.eps} & \hspace{1em} &
\includegraphics[scale=0.4]{3-2-1-angle.eps} \\
%                                3.2.0 & & 
%                                3.2.1 & & 
%                                3.2.2 \\ \\ 

\end{tabular}

\caption{Example on (T1') preservation during
  decontraction.}
\label{fig:exampledecontractionangle}
\end{figure}

(T3') One can easily check that in all the cases $\alpha.k.\ell$,
Property (T3') is satisfied for $G$. If (T3') is not satisfied in $G$
after applying one of the coloring $\alpha.k.\ell$, then there exists
two monochromatic cycles $C,C'$ of different colors that are reversal.
By property (T3') of Lemma~\ref{lem:tri-relax}, there is no reversal
cycles in $G'$. Thus $C,C'$ have to use a bi-directed edge $e'$ of the
figures that is newly created and distinct from edge $e$ and distinct
from half-dotted edges (otherwise the cycles are still reversal when
$e$ is contracted). Just some cases have such an edge and one can
remark that for all this cases, the cycle, after entering $u$ or $v$
by edge $e'$ have to use the edge $e$ that is either uni-directed or
bi-directed with different colors than $e'$, a contradiction. For
example in case $c.2.1$ of Figure~\ref{fig:exampleT3}, two reversal
cycles of $G$ that do not corresponds to reversal cycles when $e$ is
contracted, have to use edge $e_{vt}$. Then one of the two cycle is
entering $v$ by $e_{vt}$ in color $1$ and thus has to continue by
using the only edge leaving $v$ in color $1$, edge $e$. As $e$ and
$e_{vt}$ are colored differently, this is no possible.  We do not
detail more this part that is easy to check.

\begin{figure}[!h]
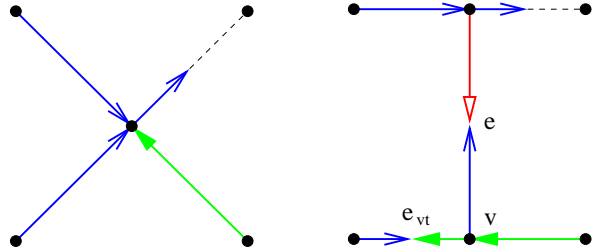

\center
  \begin{tabular}[]{ccccc}
\includegraphics[scale=0.4]{3-2-0-cycle.eps} & \hspace{1em} &
\includegraphics[scale=0.4]{3-2-1-cycle.eps} \\
%                                3.2.0 & & 
%                                3.2.1 & & 
%                                3.2.2 \\ \\ 
\end{tabular}

\caption{Example on (T3') preservation during decontraction.}
\label{fig:exampleT3}
\end{figure}

  (T2') Proving Property (T2') is the main difficulty of the proof. For
  each case $\alpha.k$, there is a case analysis considering the
  different ways that the monochromatic cycles of $G'$ go through $w$
  or not.  We say that a monochromatic cycle $C$ of $G'$ is
  \emph{safe} if $C$ does not contain $w$. Depending on whether there
  are safe monochromatic cycles or not of each colors, it may be a
  different case $\alpha.k.\ell$ and a different argument that is used
  to prove that property (T2') is preserved.

\noindent
\emph{$\bullet$ Case a.1: $e_{wx}$ and $e_{wy}$ are entering $w$ in
  different color.}

We can assume by symmetry that $e_{wx}=e_0(x)$ and $e_{wy}=e_2(y)$
(case a.1.0 of Figure~\ref{fig:case-a}). We apply one of the colorings
a.1.1, a.1.2 and a.1.3 of Figure~\ref{fig:case-a}.

We have a case analysis corresponding to whether there are
monochromatic cycles of $G'$ that are safe.

\noindent\emph{$\star$ Subcase $a.1.\{0,1,2\}$: There are safe monochromatic cycles of
  colors $\{0,1,2\}$.}

Let $C'_0,C'_1,C'_2$ be safe monochromatic cycles of color $0,1,2$ in
$G'$. As the Schnyder wood of $G'$ is of type 1, they pairwise
intersects in $G'$.  Apply the coloring a.1.1 on $G$.  As
$C'_0,C'_1,C'_2$ do not contain vertex $w$, they are not modified in
$G$. Thus they still pairwise intersect in $G$. So (T2') is satisfied.

\noindent\emph{$\star$ Subcase $a.1.\{0,2\}$: There are safe monochromatic cycles of
  colors exactly $\{0,2\}$.}

Let $C'_0,C'_2$ be safe monochromatic cycles of color $0,2$ in
$G'$. Let $C'_1$ be a $1$-cycle in $G'$.  As the Schnyder wood of $G'$
is of type 1, $C'_0,C'_1,C'_2$ pairwise intersects in $G'$. None of
those intersections contain $w$ as $C'_0$ and $C'_2$ do not contain
$w$.  By (T1), the cycle $C'_1$ enter $w$ in the sector
$]e_{wx},e_{wy}[$ and leaves in the sector $]e_{wy},e_{wx}[$.  Apply
the coloring a.1.1 on $G$. The cycle $C'_1$ is replaced by a new cycle
$C_1=C'_1\setminus\{w\}\cup\{u,v\}$.  The cycles $C'_0,C'_1,C'_2$ were
intersecting outside $w$ in $G'$ so $C'_0,C_1,C'_2$ are intersecting
in $G$. So (T2') is satisfied.

\noindent\emph{$\star$ Subcase $a.1.\{1,2\}$: There are safe monochromatic cycles of
  colors exactly $\{1,2\}$.}

Let $C'_1,C'_2$ be safe monochromatic cycles of color $1,2$ in
$G'$. Let $C'_0$ be a $0$-cycle in $G'$.  The cycles $C'_0,C'_1,C'_2$
pairwise intersects outside $w$.  The cycle $C'_0$ enters $w$ in the
sector $[e_1(w),e_{wx}[$, $[e_{wx},e_{wx}]$ or
$]e_{wx},e_2(w)]$. Apply the coloring a.1.2 
on $G$.  Depending on which of the three sectors $C'_0$ enters, it is
is replaced by one of the three following cycle
$C_0=C'_0\setminus\{w\}\cup\{u,v\}$,
$C_0=C'_0\setminus\{w\}\cup\{x,v\}$,
$C_0=C'_0\setminus\{w\}\cup\{v\}$. In any of the three possibilities,
$C_0,C'_1,C'_2$ are intersecting in $G$. So (T2') is satisfied.

\noindent\emph{$\star$ Subcase $a.1.\{0,1\}$: There are safe
  monochromatic cycles of colors exactly $\{0,1\}$.}

This case is completely symmetric to the case $a.1.\{1,2\}$.

\noindent\emph{$\star$ Subcase $a.1.\{2\}$: There are safe
  monochromatic cycles of color $2$ only.}

Let $C'_2$ be a safe $2$-cycle in $G'$.  Let $C'_0,C'_1$ be
monochromatic cycles of color $0,1$ in $G'$.

Suppose that there exists a path $Q'_0$ of color $0$, from $y$ to $w$
such that this path does not intersect $C'_2$. Suppose also that there
exists a path $Q'_1$ of color $1$, from $y$ to $w$ such that this path does not intersect
$C'_2$.  Let $C''_{0}=Q'_{0}\cup \{e_{wy}\}$ and $C''_{1}=Q'_{1}\cup
\{e_{wy}\}$. By Lemma~\ref{lem:nocontractiblecycle}, $C''_{0}, C''_1,
C'_{2}$ are not contractible. Both of $C''_{0}, C''_{1}$ does not
intersect $C'_2$ so by Lemma~\ref{lem:intersect2}, they are both
homotopic to $C'_2$. Thus by Lemma~\ref{lem:homotopic3}, cycles
$C''_{0}, C''_{1}$ are homotopic to each other, contradicting
Lemma~\ref{lem:magic} (with $i=2,w,y,Q'_0,Q'_1$).  So we can assume
that one of $Q'_0$ or $Q'_1$ as above does not exist.

Suppose that in $G'$, there does not exist a path of color $0$, from
$y$ to $w$ such that this path does not intersect $C'_2$.  Apply the
coloring a.1.1 on $G$. Cycle $C'_1$ is replaced by
$C_1=C'_1\setminus\{w\}\cup\{u,v\}$, and intersect $C'_2$. Let $C_0$
be a $0$-cycle of $G$.  Cycle $C_0$ has to contain $u$ or $v$ or both,
otherwise it is a safe cycle of $G'$ of color $0$. In any case it
intersects $C_1$.  If $C_0$ contains $v$, then
$C'_0=C_0\setminus\{v\}\cup\{w\}$ and so $C_0$ is intersecting $C'_2$
and (T2') is satisfied.  Suppose now that $C_0$ does not contain $v$.
Then $C_0$ contains $u$ and $y$, the extremity of the edge leaving $u$
in color $0$. Let $Q_0$ be the part of $C_0$ consisting of the path
from $y$ to $u$. The path $Q'_0=Q_0\setminus\{u\}\cup\{w\}$ is from
$y$ to $w$. Thus by assumption $Q'_0$ intersects $C'_2$. So $C_0$
intersects $C'_2$ and (T2') is satisfied.

Suppose now that in $G'$, there does not exist a path of color $1$,
from $y$ to $w$ such
that this path does not intersect $C'_2$. Apply the coloring a.1.2 on
$G$.  Depending on which of the three sectors $C'_0$ enters,
$[e_1(w),e_{wx}[$, $[e_{wx},e_{wx}]$ or $]e_{wx},e_2(w)]$, it is is
replaced by one of the three following cycle
$C_0=C'_0\setminus\{w\}\cup\{u,v\}$,
$C_0=C'_0\setminus\{w\}\cup\{x,v\}$,
$C_0=C'_0\setminus\{w\}\cup\{v\}$. In any of the three possibilities,
$C_0$ contains $v$ and intersect $C'_2$.  Let $C_1$ be a $1$-cycle of
$G$.  Cycle $C_1$ has to contain $u$ or $v$ or both, otherwise it is a
safe cycle of $G'$ of color $1$.  Vertex $u$ has no edge entering it
in color $1$ so $C_1$ does not contain $u$ and thus it contains $v$
and intersects $C_0$.  Then $C_1$ contains $y$, the extremity of the
edge leaving $v$ in color $1$. Let $Q_1$ be the part of $C_1$
consisting of the path from $y$ to $v$. The path
$Q'_1=Q_1\setminus\{v\}\cup\{w\}$ is from $y$ to $w$. Thus by assumption $Q'_1$ intersects
$C'_2$. So $C_1$ intersects $C'_2$ and (T2') is satisfied.

\noindent\emph{$\star$ Subcase $a.1.\{0\}$: There are safe
  monochromatic cycles of color $0$ only.}

This case is completely symmetric to the case $a.1.\{2\}$.

\noindent\emph{$\star$ Subcase $a.1.\{1\}$: There are safe
  monochromatic cycles of color  $1$ only.}

Let $C'_1$ be a safe $1$-cycle in $G'$.  Let $C'_0$ and $C'_2$ be
monochromatic cycles of color $0$ and $2$ in $G'$.

Suppose $C'_0$ is entering $w$ in the sector $]e_{wx},e_2(w)]$.  Apply
the coloring a.1.3 on $G$.  The $0$-cycle $C'_0$ is replaced by
$C_0=C'_0\setminus\{w\}\cup\{v\}$ and thus contains $v$ and still
intersect $C'_1$.  Depending on which of the three sectors $C'_2$
enters,$[e_0(w),e_{wy}[$, $[e_{wy},e_{wy}]$ or $]e_{wy},e_1(w)]$, it
is replaced by one of the three following cycles
$C_2=C'_2\setminus\{w\}\cup\{v\}$,
$C_2=C'_2\setminus\{w\}\cup\{y,v\}$,
$C_2=C'_2\setminus\{w\}\cup\{u,v\}$.  In any case, $C_2$ contains $v$
and still intersect $C'_1$. Cycle $C_0$ and $C_2$ intersect on $v$.
So (T2') is satisfied.

The case where $C'_2$ is entering $w$ in the sector $[e_0(w),e_{wy}[$
is completely symmetric and we apply the coloring a.1.2 on $G$.

It remains to deal with the case where $C'_0$ is entering $w$ in the
sector $[e_1(w),e_{wx}]$ and $C'_2$ is entering $w$ in the sector
$[e_{wy},e_1(w)]$.  Suppose that there exists a path $Q'_0$ of color
$0$, from $y$ to $w$, entering $w$ in the sector $[e_1(w),e_{wx}]$,
such that this path does not intersect $C'_1$. Suppose also that there
exists a path $Q'_2$ of color $2$, from $x$ to $w$, entering $w$ in
the sector $[e_{wy},e_1(w)]$, such that this path does not intersect
$C'_1$.  Let $C''_{0}=Q'_{0}\cup \{e_{wy}\}$ and $C''_{2}=Q'_{2}\cup
\{e_{wx}\}$. By Lemma~\ref{lem:nocontractiblecycle}, $C''_{0}, C'_1,
C''_{2}$ are not contractible. Cycles $C''_{0}, C''_{2}$ do not
intersect $C'_1$ so by Lemma~\ref{lem:intersect2}, they are 
homotopic to $C'_1$. Thus by Lemma~\ref{lem:homotopic3}, cycles
$C''_{0}, C''_{2}$ are homotopic to each other. Thus by
Lemma~\ref{lem:magic2} (with $i=1,w,x,y,Q'_0,Q'_2$), we have $C''_{0}=
(C''_{2})^{-1}$, contradicting (T3') in $G'$.  So we can assume that
one of $Q'_0$ or $Q'_2$ as above does not exist.  By symmetry, suppose
that in $G'$, there does not exist a path of color $0$, from $y$ to
$w$, entering $w$ in the sector $[e_1(w),e_{wx}]$, such that this path
does not intersect $C'_1$.  Apply the coloring a.1.3 on $G$.
Depending on which of the two sectors $C'_2$ enters, $[e_{wy},e_{wy}]$
or $]e_{wy},e_1(w)]$, it is is replaced by one of the two following
cycle $C_2=C'_2\setminus\{w\}\cup\{y,v\}$,
$C_2=C'_2\setminus\{w\}\cup\{u,v\}$.  In any case, $C_2$ still
intersects $C'_1$. Let $C_0$ be a $0$-cycle of $G$.  Cycle $C_0$ has to
contain $u$ or $v$ or both, otherwise it is a safe cycle of $G'$ of
color $0$.  Suppose $C_0$ does not contain $u$, then
$C'_0=C_0\setminus\{v\}\cup\{w\}$ and $C'_0$ is not entering $w$ in
the sector $[e_1(w),e_{wx}]$, a contradiction.  So $C_0$ contains
$u$. Thus $C_0$ contains $y$, the extremity of the edge leaving $u$ in
color $0$, and it intersects $C_2$.  Let $Q_0$ be the part of $C_0$
consisting of the path from $y$ to $u$. The path
$Q'_0=Q_0\setminus\{u\}\cup\{w\}$ is from $y$ to $w$ and entering $w$
in the sector $[e_1(w),e_{wx}]$. Thus by assumption $Q'_0$ intersects
$C'_1$. So $C_0$ intersects $C'_1$ and (T2') is satisfied.

\noindent\emph{$\star$ Subcase $a.1.\{\}$: There are no safe
  monochromatic cycle.}

Let $C'_0,C'_1,C'_2$ be monochromatic cycle of color $0,1,2$ in
$G'$. They all pairwise intersect on $w$.  

Suppose first that $C'_0$ is entering $w$ in the sector
$]e_{wx},e_2(w)]$.  Apply the coloring a.1.3 on $G$.  The $0$-cycle
$C'_0$ is replaced by $C_0=C'_0\setminus\{w\}\cup\{v\}$ and thus
contains $v$.  Depending on which of the three sectors $C'_2$ enters,
$[e_0(w),e_{wy}[$, $[e_{wy},e_{wy}]$ or $]e_{wy},e_1(w)]$, it is is
replaced by one of the three following cycle
$C_2=C'_2\setminus\{w\}\cup\{v\}$,
$C_2=C'_2\setminus\{w\}\cup\{y,v\}$,
$C_2=C'_2\setminus\{w\}\cup\{u,v\}$. In any case, $C_2$ contains $v$.
Let $C_1$ be a $1$-cycle in $G$. Cycle $C_1$ has to contain $u$ or $v$
or both, otherwise it is a safe cycle of $G'$ of color $1$. Vertex $u$
has no edge entering it in color $1$ so $C_1$ does not contain $u$ and
thus it contains $v$. So $C_0,C_1,C_2$ all intersect on $v$ and (T2')
is satisfied.

The case where $C'_2$ is entering $w$ in the sector $[e_0(w),e_{wy}[$
is completely symmetric and we apply the coloring a.1.2 on $G$.

It remains to deal with the case where $C'_0$ is entering $w$ in the
sector $[e_1(w),e_{wx}]$ and $C'_2$ is entering $w$ in the sector
$[e_{wy},e_1(w)]$. Apply the coloring a.1.1 on $G$.  Cycle $C'_1$ is
replaced by $C_1=C'_1\setminus\{w\}\cup\{u,v\}$.  Let $C_0$ be a
$0$-cycle in $G$.  Cycle $C_0$ has to contain $u$ or $v$ or both,
otherwise it is a safe cycle of $G'$ of color $0$. Suppose
$C_0\cap\{v,x\}=\{v\}$, then $C_0\setminus\{v\}\cup\{w\}$ is a
$0$-cycle of $G'$ entering $w$ in the sector $]e_{wx},e_2(w)]$,
contradicting the assumption on $C'_0$. Suppose $C_0$ contains $u$,
then $C_0$ contains $y$, the extremity of the edge leaving $u$ in
color $0$.  So $C_0$ contains $\{v,x\}$ or $\{u,y\}$.  Similarly $C_2$
contains $\{v,y\}$ or $\{u,x\}$.  In any case $C_0,C_1,C_2$ pairwise
intersect. So (T2') is satisfied.

%\

\noindent
\emph{$\bullet$ Case a.2: $e_{wx}$ and $e_{wy}$ have the same color, one
  is entering $w$, the other is leaving $w$.}

We can assume by symmetry that $e_{wx}=e_1(x)$ and $e_{wy}=e_1(w)$
(case a.2.0 of Figure~\ref{fig:case-a}).  
We apply one of the colorings
a.2.1 and a.2.2 of Figure~\ref{fig:case-a}.

 We have a case analysis corresponding to whether there are
monochromatic cycles of $G'$ that are safe. 

\noindent\emph{$\star$ Subcase $a.2.\{0,1,2\}$: There are safe
  monochromatic cycles of colors $\{0,1,2\}$.}

 Let $C'_0,C'_1,C'_2$ be safe monochromatic cycles of color $0,1,2$ in
$G'$. They pairwise intersects in $G'$.  Apply the coloring a.2.1 on
$G$.  $C'_0,C'_1,C'_2$ still pairwise intersect in $G$. So (T2') is
satisfied.

\noindent\emph{$\star$ Subcase $a.2.\{0,2\}$: There are safe
  monochromatic cycles of colors exactly $\{0,2\}$.}

Let $C'_0,C'_2$ be safe monochromatic cycles of color $0,2$ in
$G'$. Let $C'_1$ be a $1$-cycle in $G'$.  Cycles $C'_0,C'_2$ still
intersects in $G$.  Apply the coloring a.2.1
on $G$. Depending on which of the three sectors $C'_1$ enters,
$[e_2(w),e_{wx}[$, $[e_{wx},e_{wx}]$ or $]e_{wx},e_0(w)]$, it is is
replaced by one of the three following cycles
$C_1=C'_1\setminus\{w\}\cup\{u,y\}$,
$C_1=C'_1\setminus\{w\}\cup\{x,v,y\}$,
$C_1=C'_1\setminus\{w\}\cup\{v,y\}$.  In any of the three
possibilities, $C_1$ still intersect both $C'_0,C'_2$.  So (T2') is
satisfied.

\noindent\emph{$\star$ Subcase $a.2.\{1,2\}$: There are safe monochromatic cycles of
  colors exactly $\{1,2\}$.}

Let $C'_1,C'_2$ be safe monochromatic cycles of color $1,2$ in
$G'$. Let $C'_0$ be a $0$-cycle in $G'$. Cycles $C'_1,C'_2$ still
intersects in $G$.  Apply the coloring a.2.2 on $G$.  Depending on
which of the two sectors $C'_0$ enters, $[e_{wy},e_{wy}]$ or
$]e_{wy},e_2(w)]$, it is replaced by one of the two following cycles
$C_0=C'_0\setminus\{w\}\cup\{y,v\}$,
$C_0=C'_0\setminus\{w\}\cup\{u,v\}$.  In any of the two
possibilities, $C_0$ still intersect both $C'_1,C'_2$.  So (T2') is
satisfied.

\noindent\emph{$\star$ Subcase $a.2.\{0,1\}$: There are safe
  monochromatic cycles of colors exactly $\{0,1\}$.}

This case is completely symmetric to the case $a.2.\{1,2\}$.

\noindent\emph{$\star$ Subcase $a.2.\{2\}$: There are safe
  monochromatic cycles of color $2$ only.}

Let $C'_2$ be a safe $2$-cycle in $G'$.  Let $C'_0,C'_1$ be
monochromatic cycles of color $0,1$ in $G'$.

Apply the coloring a.2.2 on $G$.  Depending on which of the three
sectors $C'_1$ enters, $[e_2(w),e_{wx}[$, $[e_{wx},e_{wx}]$ or
$]e_{wx},e_0(w)]$, it is is replaced by one of the three following
cycle $C_1=C'_1\setminus\{w\}\cup\{u,y\}$,
$C_1=C'_1\setminus\{w\}\cup\{x,u,y\}$,
$C_1=C'_1\setminus\{w\}\cup\{v,y\}$. Depending on which of the two
sectors $C'_0$ enters, $[e_{wy},e_{wy}]$ or $]e_{wy},e_2(w)]$, it is
replaced by one of the two following cycles
$C_0=C'_0\setminus\{w\}\cup\{y,v\}$,
$C_0=C'_0\setminus\{w\}\cup\{u,v\}$.  In any case, $C_0$ and $C_1$
intersects each other and intersect $C'_2$.  So (T2')
is satisfied.

\noindent\emph{$\star$ Subcase $a.2.\{0\}$: There are safe
  monochromatic cycles of color $0$ only.}

This case is completely symmetric to the case $a.2.\{0\}$.

\noindent\emph{$\star$ Subcase $a.2.\{1\}$: There are safe
  monochromatic cycles of color  $1$ only.}

Let $C'_1$ be a safe $1$-cycle in $G'$.  Let $C'_0,C'_2$ be
monochromatic cycles of color $0,2$ in $G'$.  Suppose that there
exists a path $Q'_0$ of color $0$, from $x$ to $w$, that does not
intersect $C'_1$. Suppose also that there exists a path $Q'_2$ of
color $2$, from $x$ to $w$, that does not intersect $C'_1$.  Let
$C''_{0}=Q'_{0}\cup \{e_{wx}\}$ and $C''_{2}=Q'_{2}\cup
\{e_{wx}\}$. By Lemma~\ref{lem:nocontractiblecycle}, $C''_{0}, C'_1,
C''_{2}$ are not contractible. Both of $C''_{0}, C''_{2}$ does not
intersect $C'_1$, so by Lemma~\ref{lem:intersect2}, they are both
homotopic to $C'_1$. Thus by Lemma~\ref{lem:homotopic3}, cycles
$C''_{0}, C''_{2}$ are homotopic to each other, contradicting
Lemma~\ref{lem:magic} (with $i=1,w,x,Q'_0,Q'_2$).  So we can assume
that one of $Q'_0$ or $Q'_2$ as above does not exist.  

By symmetry, suppose that in $G'$, there does not exist a path of
color $0$, from $x$ to $w$, that does not intersect $C'_1$.  Apply the
coloring a.2.1 on $G$.  Depending on which of the two sectors $C'_2$
enters, $[e_1(w),e_{wy}[$, $[e_{wy},e_{wy}]$, it is replaced by one of
the two following cycles $C_2=C'_2\setminus\{w\}\cup\{v,u\}$,
$C_2=C'_2\setminus\{w\}\cup\{y,u\}$.  In any of the two possibilities
$C_2$ intersects $C'_1$.  Let $C_0$ be a $0$-cycle of $G$.  Cycle
$C_0$ has to contain $u$ or $v$ or both, otherwise it is a safe cycle
of $G'$ of color $0$.  Vertex $v$ has no edge entering it in color
$0$, so $C_0$ does not contain $v$ and so it contains $u$ and $x$, the
extremity of the edge leaving $u$ in color $0$.  Thus $C_0$ intersect
$C_2$.  Let $Q_0$ be the part of $C_0$ consisting of the path from $x$
to $u$. The path $Q'_0=Q_0\setminus\{u\}\cup\{w\}$ of $G'$ is from $x$
to $w$, thus by assumption $Q'_0$ intersects $C'_1$. So $C_0$
intersects $C'_1$ and (T2') is satisfied.

\noindent\emph{$\star$ Subcase $a.2.\{\}$: There are no safe
  monochromatic cycle.}

Let $C'_0,C'_1,C'_2$ be monochromatic cycle of color $0,1,2$ in
$G'$. 

Suppose $C'_1$ is entering $w$ in the sector $[e_2(w),e_{wx}]$.  Apply
the coloring a.2.1 on $G$.  Cycle $C'_1$ is replaced by
$C_1=C'_1\setminus\{w\}\cup\{u,y\}$ or
$C_1=C'_1\setminus\{w\}\cup\{x,v,y\}$.  Cycle $C'_0$ is replaced by
$C_0=C'_0\setminus\{w\}\cup\{u,x\}$ or
$C_0=C'_0\setminus\{w\}\cup\{y,u,x\}$.  Cycle $C'_2$ is replaced by
$C_2=C'_2\setminus\{w\}\cup\{v,u\}$ or
$C_2=C'_2\setminus\{w\}\cup\{y,u\}$.  So $C_0,C_1,C_2$ all intersect
each other and (T2') is satisfied.

The case where $C'_1$ is entering $w$ in the sector $[e_{wx},e_0(w)]$
is completely symmetric and we apply the coloring a.2.2 on $G$.

%\

  \noindent
  \emph{$\bullet$ Cases a.3, a.4, a.5:}

  The proof is simpler for the remaining cases (cases a.3.0, a.4.0,
  a.5.0 on Figure~\ref{fig:case-a}). For each situation, there is only
  one way to extend the coloring to $G$ in order to preserve (T1') and
  this coloring also preserve (T2').  Indeed, in each coloring of $G$, case
  a.3.1, a.4.1, a.5.1 on Figure~\ref{fig:case-a}, one can check that
  every non safe monochromatic cycle $C'$ is replaced by a cycle $C$
  with $C'\setminus\{w\}\cup\{u\}\subseteq C$.  Thus all non safe
  cycles intersects on $u$ and a non safe cycle of color $i$ intersect
  all safe cycles of colors $i-1$ and $i+1$. So (T2') is always
  satisfied. 

  It remains to analyze the situation for the decontraction case (b)
  and (c).  The colorings that are needed are represented on
  Figures~\ref{fig:case-b} and~\ref{fig:case-c}.  The proof is similar
  to case (a) and we omit it.
\hfill $\Box$\vspace{1em}.

Note that we are not able to prove a lemma analogous to Lemma~\ref{lem:contractype1} for Type 2
Schnyder woods.
 On the example of
Figure~\ref{fig:3-connected}, it is not possible to decontract the
graph $G'$ (Figure~\ref{fig:3-connected}.(a)), and extend its Schnyder
wood to $G$ (Figure~\ref{fig:3-connected}.(b)) without modifying the
edges that are not incident to the contracted edge $e$.  Indeed, if we
keep the edges non incident to $e$ unchanged, there are only two
possible ways to extend the coloring in order to preserve (T1), but
none of them fulfills (T2).

\begin{figure}[h!]
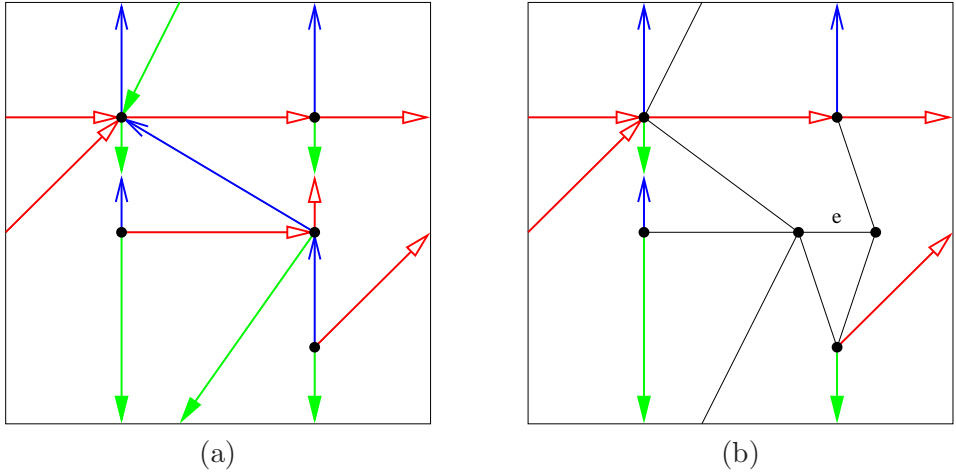

\center
\begin{tabular}{ccc}
\includegraphics[scale=0.4]{3-connected-0.eps}
& \hspace{1em} &
\includegraphics[scale=0.4]{3-connected-example.eps}\\
(a) & &(b) \\
\end{tabular}
\caption{ (a) The graph obtained by  contracting the edge
  $e$ of the graph (b). It is not possible to color 
 and orient the black edges of
  (b) to obtain a Schnyder wood.
}
\label{fig:3-connected}
\end{figure}

  \section{Existence for essentially 3-connected toroidal maps}
\label{sec:existence3connected}

Given a map $G$ embedded on a surface.  The \emph{angle map}
\cite{Rose89} of $G$ is a map $A(G)$ on this surface whose vertices
are the vertices of $G$ plus the vertices of $G^*$ (i.e. the faces of
$G$), and whose edges are the angles of $G$, each angle being incident
with the corresponding vertex and face of $G$. Note that if $G$
contains no homotopic multiple edges, then every face of $G$ has
degree at least $3$ in $A(G)$.

Mohar and Rosenstiehl~\cite{MR98} proved that a map $G$ is essentially
2-connected if and only if the angle map $A$ of $G$ has no pair of
(multiple) edges bounding a disk (i.e. no walk of length 2 bounding a
disk). As every face in an angle map is a quadrangle, such disk
contains some vertices of $G$.  The following claim naturally extends
this characterization to essentially 3-connected toroidal maps.

\begin{lemma}
\label{lem:walk}
A toroidal map $G$ is essentially 3-connected if and only if the angle
map $A(G)$ has no walk of length at most four bounding a disk which is
not a face.
\end{lemma}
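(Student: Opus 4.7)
The plan is to work in the universal cover. Since $A(G^\infty)$ is the universal cover of $A(G)$, a closed walk $W$ in $A(G)$ bounds a disk if and only if it lifts to a closed walk $\tilde W$ in $A(G^\infty)$ bounding a disk $\tilde D$, and $\tilde D$ is a face of $A(G^\infty)$ if and only if its projection is a face of $A(G)$. Note also that $A(G)$ is bipartite (between vertices of $G$ and faces of $G$), so every closed walk has even length, and every face of $A(G)$ is a quadrangle; in particular only walks of length $2$ and $4$ need be considered.

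For the direction ($\Rightarrow$), I prove the contrapositive. Suppose $A(G)$ has a walk $W$ of length $k\in\{2,4\}$ bounding a non-face disk $D$, and lift to $\tilde W$ bounding $\tilde D$ in $A(G^\infty)$. If $k=2$, then $\tilde W$ traces two distinct angles at a single corner $(v,f)$ of $G^\infty$, so $v$ occurs at least twice on $\partial f$, making $v$ a cut vertex of $G^\infty$; hence $G^\infty$ is not even 2-connected. If $k=4$, then $\tilde W$ traces four angles at corners $(v_1,f_1),(v_2,f_1),(v_2,f_2),(v_1,f_2)$ for some vertices $v_1,v_2$ and faces $f_1,f_2$ of $G^\infty$; the degenerate cases $v_1=v_2$ or $f_1=f_2$ split $\tilde W$ into two length-$2$ closed sub-walks and reduce to the previous case. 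Otherwise, $\tilde D$ must contain at least one vertex of $A(G^\infty)$ in its interior (else it would be a face), and since $G$ has no homotopic multiple edges, every face of $G^\infty$ is incident to at least $3$ vertices, which forces the existence of a vertex $u$ of $G^\infty$ strictly inside $\tilde D$. Then $\{v_1,v_2\}$ separates $u$ from the vertices of $G^\infty$ outside $\tilde D$ (which exist because $G^\infty$ is infinite), so $G^\infty$ is not 3-connected.

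For the direction ($\Leftarrow$), I again prove the contrapositive. Suppose $G^\infty$ is not 3-connected. If $G^\infty$ is not 2-connected, the Mohar--Rosenstiehl characterization cited above yields a walk of length $2$ in $A(G)$ bounding a disk, which cannot be a face since faces have length $4$. Otherwise $G^\infty$ is 2-connected with a 2-cut $\{x,y\}$, and I build a length-$4$ walk of the form $x,f_1,y,f_2,x$ in $A(G^\infty)$ bounding a disk that contains an entire component of $G^\infty-\{x,y\}$. Pick a component $C$ of $G^\infty-\{x,y\}$; by 2-connectedness $C$ is adjacent to both $x$ and $y$. In the cyclic order of edges incident to $x$, take any maximal contiguous arc of edges going to $C$; this arc is proper (else $y$ alone would disconnect $C\cup\{x\}$ from the rest). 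The two faces $f_1,f_2$ flanking this arc both contain $y$ on their boundary, because otherwise the boundary walk of such a face would give a path in $G^\infty-\{x,y\}$ between a vertex of $C$ and a vertex outside $C$, contradicting the definition of $C$. Moreover $f_1\neq f_2$, for $f_1=f_2$ would place $x$ at two distinct positions on the boundary of a single face, forcing $x$ to be a cut vertex against 2-connectedness. Drawing an arc from $x$ to $y$ inside $f_1$ and back inside $f_2$ yields a simple closed curve enclosing $C$; this is the required length-$4$ walk in $A(G^\infty)$ bounding a non-face disk, which projects to the desired walk in $A(G)$.

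The main obstacle is this last construction: extracting the two flanking faces $f_1,f_2$ from the planar 2-cut $\{x,y\}$ and verifying via 2-connectedness that they are distinct and that the resulting closed curve genuinely encloses a whole component. Beyond that, the forward direction and the reduction to Mohar--Rosenstiehl are fairly mechanical lifting arguments once the universal-cover setup is fixed.
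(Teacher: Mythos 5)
Your overall route is the same as the paper's: pass to the universal cover, translate separators of size $1$ or $2$ in $G^\infty$ into separating walks of length $2$ or $4$ in the angle map, and invoke Mohar--Rosenstiehl for the length-$2$/essential-$2$-connectivity part. Your forward direction and your explicit construction of the $4$-cycle $(x,f_1,y,f_2)$ from a $2$-cut actually supply details that the paper only asserts. However, there are two genuine soft spots in the backward direction.

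First, your opening claim that ``a closed walk $W$ in $A(G)$ bounds a disk if and only if it lifts to a closed walk $\tilde W$ in $A(G^\infty)$ bounding a disk'' is asserted without proof, and the direction you actually need --- descending from the disk-bounding $4$-cycle in $A(G^\infty)$ to a disk-bounding walk in $A(G)$ --- is not automatic. The projection can identify $x$ with $y$ or $f_1$ with $f_2$, the lifted disk need not project injectively, and distinct translates of the $4$-cycle may intersect, so the image curve on the torus can fail to be simple. This is exactly where the paper does its only real work in this direction: the projected walk is contractible, hence contains a subwalk bounding a disk; since $A(G)$ is bipartite that subwalk has even length, and by Mohar--Rosenstiehl it cannot have length $2$, so the whole length-$4$ walk bounds a disk (and that disk is not a face because its lift is not). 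Your proof is missing this argument. Second, in the $2$-cut construction the assertion that the curve ``encloses $C$'' can fail: a closed curve in the plane bounds only its bounded side, and the component $C$ may be infinite (exactly one component of $G^\infty\setminus\{x,y\}$ always is) and hence lie on the unbounded side. The conclusion survives, but by a different argument: if the bounded disk contained no vertex of $G^\infty$ in its interior, every edge of $G^\infty$ entering it from $x$ or $y$ would have to be the edge $xy$ (using simplicity of $G^\infty$), forcing all other components to attach only through $y$ and making $y$ a cut vertex, contradicting $2$-connectivity; so the bounded disk contains an interior vertex and is not a face. As written, your justification does not cover this case.
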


\begin{proof}
  In $G^\infty$ any minimal separator of size 1 or 2, $S=\{v_1\}$ or
  $S=\{v_1,v_2\}$, corresponds to a separating cycle of length 2 or 4
  in $A(G^\infty)$, $C=(v_1,f_1)$ or $C=(v_1,f_1,v_2,f_2)$, i.e. a
  cycle of length at most 4 bounding a disk $D$ which is not a face.

  ($\Longrightarrow$) Any walk in $A(G)$ of length at most 4 bounding
  a disk which is not a face lifts to a cycle of length at most 4
  bounding a disk which is not a face in $A(G^\infty)$. Thus such a
  walk implies the existence of a small separator in $G^\infty$,
  contradicting its 3-connectedness.

  ($\Longleftarrow$) According to~\cite{MR98}, if $G$ is essentially
  2-connected, $A(G)$ has no walk of length 2 bounding a disk.  Let us
  now show that if $G$ is essentially 2-connected but not essentially
  3-connected, $A(G)$ has a walk of length 4 bounding a disk which is
  not a face.

  If $G$ is essentially 2-connected but not essentially 3-connected,
  then $A(G^\infty)$ has a cycle $C$ of length 4 bounding a disk which
  is not a face, and this cycle corresponds to a contractible walk $W$
  of length 4 in $A(G)$.  Since, $W$ is contractible, it contains a
  subwalk bounding a disk. $A(G)$ being bipartite, this subwalk has
  even length, and since $A(G)$ is essentially 2-connected, it has no
  such walk of length 2. Thus $W$ bounds a disk. Finally, this disk is
  not a single face since otherwise $C$ would bound a single face in
  $A(G^\infty)$.
\end{proof}

A non-loop edge $e$ of an essentially 3-connected toroidal map is
\emph{contractible} if the contraction of $e$ keeps the map
essentially 3-connected. We have the following lemma:

\begin{lemma}
\label{lem:contractibleedges}
  An essentially 3-connected toroidal map that is not reduced to a
  single vertex has a contractible edge.
\end{lemma}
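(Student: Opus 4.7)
Suppose for contradiction that $G$ has at least two vertices, is essentially 3-connected, and admits no contractible non-loop edge. Essential 3-connectedness forces every vertex to have degree at least three, so any non-loop edge is admissible for contraction in the sense of Lemma~\ref{lem:contractype1}. The strategy is to associate to each non-loop edge a short ``bad'' witness via Lemma~\ref{lem:walk}, then minimise over all such witnesses and derive a contradiction inside the chosen one.

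Fix any non-loop edge $e = uv$ of $G$. Since $G/e$ is not essentially 3-connected, Lemma~\ref{lem:walk} yields a closed walk $W$ of length at most four in $A(G/e)$ that bounds a disk which is not a face. Because $A(G)$ admits no such short walk (by Lemma~\ref{lem:walk} applied to $G$), the walk $W$ must go through the merged vertex $w=uv$ of $G/e$. Lifting $W$ back to $A(G)$, one obtains either a path of length two in $A(G)$ from $u$ to $v$ through a single face $F$ incident to both $u$ and $v$ but not via $e$, or a path of length four $(u, f_1, x, f_2, v)$ through two faces and a third vertex $x$; in either case this path, together with the edge $e$, bounds a closed disk $D_e$ on the torus. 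Since $D_e$ does not reduce to a face of $G$, it contains at least one vertex of $G$ strictly in its interior. Call any such $(e, D_e)$ a \emph{bad pair}.

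Now pick a bad pair $(e, D_e)$ minimising the number of vertices of $G$ strictly inside $D_e$, and let $y$ be such an interior vertex. Because $\deg(y)\geq 3$, $y$ has a non-loop incident edge $e' = yy'$ with $y'$ lying in the closed disk $D_e$ (using that at most two ``boundary landmarks'' of $D_e$ are reached from $y$, leaving at least one neighbour inside $D_e$). Apply the construction above to $e'$ to obtain a bad pair $(e', D_{e'})$. I will show that $D_{e'}$ can be chosen inside $D_e$, and then strictly so, giving a bad pair with fewer interior vertices and thus contradicting the minimal choice.

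The technical heart of the argument is this confinement. If the short bounding walk witnessing $D_{e'}$ were to leave $D_e$, one could concatenate the escaping portion with an arc of the boundary walk of $D_e$ to produce in $A(G)$ itself a short closed walk bounding a non-face disk, contradicting Lemma~\ref{lem:walk} for $G$. Making this rigorous requires a case analysis according to whether the bad walk for $e'$ has length two or four, which faces $f'_1, f'_2$ and possible third vertex $x'$ it involves, and the position of $y'$ (interior to $D_e$ or on its boundary path). In every subcase one extracts a version of $D_{e'}$ contained in $D_e$, and since $y$ is an endpoint of $e'$ it lies on the boundary of $D_{e'}$ rather than in its interior, giving $D_{e'}\subsetneq D_e$ with strictly fewer interior vertices and completing the contradiction.
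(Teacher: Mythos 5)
Your outline parallels the paper's: both proofs are minimal-counterexample arguments that attach to each non-loop edge a short witness walk via Lemma~\ref{lem:walk}, pick an extremal witness, and then contract an edge incident to something in its interior to produce a smaller witness. The preliminary steps of your plan are essentially correct, if under-justified (the witness walk must pass through the merged vertex; the lifted disk $D_e$ contains an interior vertex of $G$ because an interior face-vertex alone would have degree two in $A(G/e)$; every interior vertex has a non-loop edge whose other end stays in $\overline{D_e}$; and if $D_{e'}\subseteq D_e$ then the interior vertices of $D_{e'}$ form a proper subset of those of $D_e$, since $y$ moves to the boundary). The genuine gap is the confinement claim $D_{e'}\subseteq D_e$: it is the entire content of the lemma, and you assert it rather than prove it.

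The justification you sketch for confinement does not work as stated. First, the boundary of $D_e$ is not a walk of $A(G)$ --- it uses the primal edge $e$ --- so concatenating the escaping portion of $\partial D_{e'}$ with an arc of $\partial D_e$ need not yield a closed walk of the angle map at all, and Lemma~\ref{lem:walk} cannot be applied to it. Second, even for purely angle-map concatenations the resulting closed walks generically have length six or more (the escaping arc $y,f'_1,x',f'_2,y'$ already has length four, and the complementary arc of $\partial D_e$ has length at least two), and Lemma~\ref{lem:walk} is silent about walks longer than four. Third, in the cases where a genuine $4$-walk does arise --- say $f'_1=f_1$, $f'_2=f_2$ and $x'$ lies outside $D_e$, producing the walk $(x,f_1,x',f_2)$ in $A(G)$ --- Lemma~\ref{lem:walk} only tells you that this walk bounds a face, which is not a contradiction; you must then reroute or enlarge the witness for $e'$ across that face and verify that the result is still a walk of length at most four in $A(G/e')$ bounding a non-face disk. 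Controlling exactly this situation is why the paper's proof first chooses, for each edge, the witness walk that is \emph{maximal} in the faces it encloses and only then minimises over edges; your single minimisation over all pairs $(e,D_e)$ provides no such leverage, so the announced case analysis cannot be completed as described without importing something like that maximality condition.
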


\begin{proof}
  Let $G$ be an essentially 3-connected toroidal map with at least 2
  vertices. Note that for any non-loop $e$, the map $A(G/e)$ has no
  walk of length $2$ bounding a disk which is not a face, otherwise,
  $A(G)$ contains a walk of length at most $4$ bounding a disk which
  is not a face and thus, by Lemma~\ref{lem:walk}, $G$ is not
  essentially 3-connected.

  Suppose by contradiction that contracting any non-loop edge $e$ of
  $G$ yields a non essentially 3-connected map $G/e$. By
  Lemma~\ref{lem:walk}, it means that the angle map $A(G)$ has no walk
  of length at most four bounding a disk which is not a face For any
  non-loop $e$, let $W_4(e)$ be the 4-walk of $A(G/e)$ bounding a
  disk, which is maximal in terms of the faces it contains.  Among all
  the non-loop edges, let $e$ be the one such that the number of faces
  in $W_4(e)$ is minimum.  Let $W_4(e)=(v_1,f_1,v_2,f_2)$ and assume
  that the endpoints of $e$, say $a$ and $b$, are contracted into
  $v_2$ (see Figure~\ref{fig:essentially}.(a)).  Note that by
  maximality of $W_4(e)$, $v_1$ and $v_2$ do not have any common
  neighbor $f$ out of $W_4(e)$, such that $(v_1,f,v_2,f_1)$ bounds a
  disk.

\begin{figure}[h!]
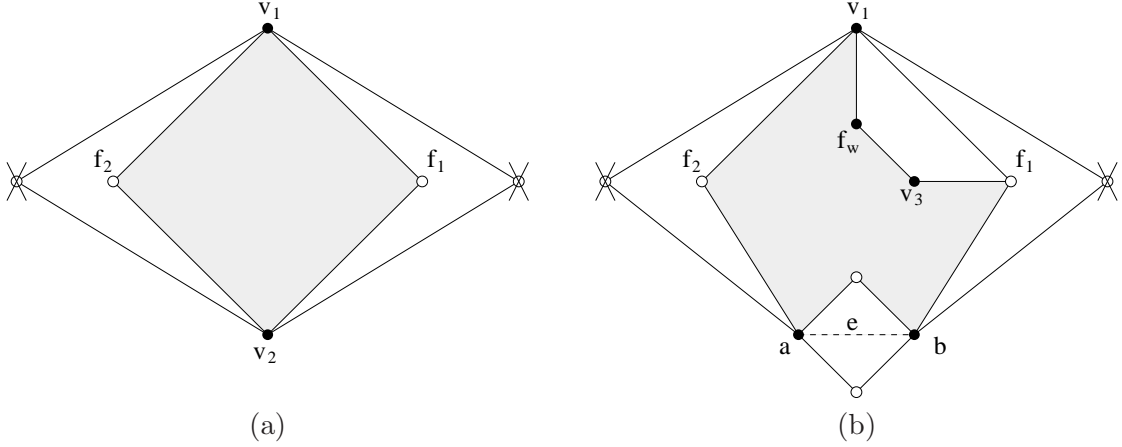

\center
\begin{tabular}{ccc}
\includegraphics[scale=0.4]{essentially-1.eps}
&  &
\includegraphics[scale=0.4]{essentially-0.eps}\\
(a) & & (b) \\
\end{tabular}
\caption{Notations of the proof of Lemma~\ref{lem:contractibleedges}}
\label{fig:essentially}
\end{figure}

Assume one of $f_1$ or $f_2$ has a neighbor inside $W_4(e)$. By
symmetry, assume $v_3$ is a vertex inside $W_4(e)$ such that there is
a face $F=(v_1,f_1,v_3,f_w)$ in $A(G/e)$, with eventually $f_w=f_2$.
Consider now the contraction of the edge $v_1v_3$.  Let
$P(v_1,v_3)=(v_1,f_x,v_y,f_z,v_3)$ be the path from $v_1$ to $v_3$
corresponding to $W_4(v_1v_3)$ and $P(a,b)=(a,f_2,v_1,f_1,b)$ the path
corresponding to $W_4(e)$.  Suppose that $f_z= f_2$, then
$(v_1,f_2,v_3,f_w)$ bounds a face by Lemma~\ref{lem:walk} and then
$f_w$ has degree $2$ in $A(G)$, a contradiction. So $f_z\neq f_2$.

  Suppose that all the faces of $W_4(v_1v_3)$ are in $W_4(e)$, then
  with $F$, $W_4(e)$ contains more faces than $W_4(v_1v_3)$, a
  contradiction to the choice of $e$. So in $A(G)$, $P(v_1,v_3)$ must
  cross the path $P(a,b)$. Then $v_y$ or $f_z$ have to intersect
  $P(a,b)$.  Suppose $f_z\neq f_1$. Then $v_y=a$ or $b$. In this case
  $(v_1,f_x,v_2,f_1)$ bounds a disk, a contradiction.  Thus $f_z=f_1$
  and the cycle $(v_1,f_x,v_y,f_1)$ bounds a face by
  Lemma~\ref{lem:walk}.  This implies that $W_4(v_1v_3)$ bounds a
  face, a contradiction.

  Assume now that none of $f_1$ or $f_2$ has a neighbor inside
  $W_4(e)$. Let $f'_1$, $f'_2$, $f_3$ and $f'_3$ be vertices of $A(G)$
  such that $(v_1,f_1,b,f'_1)$, $(v_1,f_2,a,f'_2)$ and
  $(a,f_3,b,f'_3)$ are faces (see
  Figure~\ref{fig:essentially2}). Suppose $f'_1 = f'_2 = f'_3$. Then
  in $A(G/e)$, the face $f'_1$ is deleted (among the 2 homotopic
  multiple edges between $v_1,v_2$ that are created, only one is kept
  in $G/e$). Then $W_4(e)$ bounds a face, a contradiction. Thus there
  exists some $i$ such that $f'_i \neq f'_{i+1}$. Assume that $i=1$
  (resp. $i=2$ or 3), and let $v_3$ and $f''$ be such that there is a
  face $(v_1,f'_1,v_3,f'')$ in $A(G)$ (resp. $(a,f'_2,v_3,f'')$ or
  $(b,f'_3,v_3,f'')$). As above considering the contraction of the
  edge $v_1v_3$ (resp. $av_3$ or $bv_3$) yields a contradiction.
\begin{figure}[h!]
\center
%\begin{tabular}{ccc}
\includegraphics[scale=0.4]{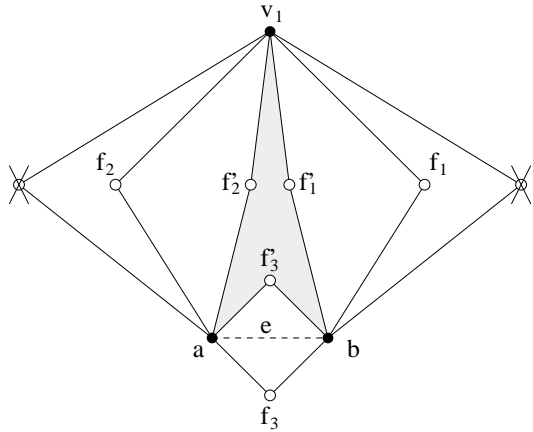}
%&  &
%\includegraphics[scale=0.4]{fig--/essentially-0.eps}\\
%(a) & & (b) \\
%\end{tabular}
\caption{Notations of the proof of Lemma~\ref{lem:contractibleedges}}
\label{fig:essentially2}
\end{figure}
\end{proof}

Lemma~\ref{lem:contractibleedges} shows that an essentially
3-connected toroidal map can be contracted step by step by keeping it
essentially 3-connected until obtaining a map with just one vertex.
The two essentially 3-connected toroidal maps on one vertex are
represented on Figure~\ref{fig:essential} with a Schnyder wood. The
graph of Figure~\ref{fig:essential}.(a),  the \emph{3-loops},
admits a Schnyder wood of Type 1, and the graph of
Figure~\ref{fig:essential}.(b),  the \emph{2-loops}, admits a
Schnyder wood of Type 2.

\begin{figure}[!h]
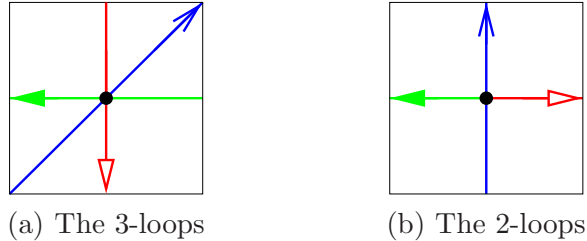
 
\center
\begin{tabular}{ccc}
\includegraphics[scale=0.5]{orientation-col.eps}
& \hspace{4em} &
\includegraphics[scale=0.5]{orientation-double.eps}\\
(a) The 3-loops & &(b) The 2-loops 
\end{tabular}
%\hspace{4em}
\caption{The two essentially 3-connected toroidal maps on one vertex.}
\label{fig:essential}
\end{figure}

It would be convenient if one could contract any essentially 3-connected
toroidal map until obtaining one of the two graphs of
Figure~\ref{fig:essential} and then decontract the graph to obtain a
Schnyder wood of the original graph. Unfortunately, for Type 2
Schnyder woods we are not able to prove that property (T2) can be
preserved during the decontraction process (see
Section~\ref{sec:lemma}).  Fortunately most essentially 3-connected
toroidal maps admits Schnyder woods of Type 1.  A toroidal map is
\emph{basic} if it consists of a non contractible cycle on $n$
vertices, $n\geq 1$, plus $n$ homotopic loops (see
Figure~\ref{fig:basic}). We prove in this section that non-basic
essentially 3-connected toroidal maps admits Schnyder woods of Type 1.
For this purpose, instead of contracting these maps to one of the two
graphs of Figure~\ref{fig:essential}, we contract them to the graph of
Figure~\ref{fig:essential}.(a) or to the graph of
Figure~\ref{fig:brique},  \emph{the brick}. (One can draw the
universal cover of the brick to understand its name.)

\begin{figure}[!h]
\center
\includegraphics[scale=0.5]{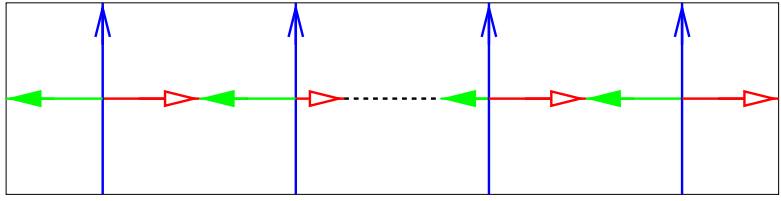}
\caption{The family of basic toroidal maps, having only Schnyder woods
  of Type 2}
\label{fig:basic}
\end{figure}

\begin{figure}[!h]
\center
%\begin{tabular}{ccc}
%\includegraphics[scale=0.5]{orientation-col.eps}
%& \hspace{4em} &
\includegraphics[scale=0.5]{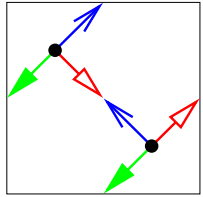}%\\
%(a)& &(b) 
%\end{tabular}
%\hspace{4em}
\caption{The brick, an essentially 3-connected toroidal map with two
  vertices.}
\label{fig:brique}
\end{figure}
%\newpage

\begin{lemma}
\label{lem:contractionbrique}
A non-basic essentially 3-connected toroidal map can be contracted to
the 3-loops (Figure~\ref{fig:essential}.(a)) or to the brick
(Figure~\ref{fig:brique}).
\end{lemma}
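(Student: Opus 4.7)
The plan is to argue by induction on the number $n$ of vertices of $G$, using Lemma~\ref{lem:contractibleedges} as the engine. For the base cases: when $n=1$, the only essentially 3-connected toroidal maps on one vertex are the 3-loops and the 2-loops (three or two pairwise non-homotopic loops respectively, with no contractible loop); since the 2-loops is the basic map on one vertex, the non-basic hypothesis forces $G$ to be the 3-loops. When $n=2$, a direct enumeration of toroidal maps on two vertices, combined with Lemma~\ref{lem:walk} to enforce essential 3-connectedness, shows that the essentially 3-connected toroidal maps on two vertices are exactly the two-vertex basic map and the brick, so non-basicity forces $G$ to be the brick.

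For the inductive step with $n\geq 3$, Lemma~\ref{lem:contractibleedges} provides a contractible edge $e$, and $G/e$ is an essentially 3-connected toroidal map on $n-1$ vertices. If $G/e$ is non-basic, the induction hypothesis applied to $G/e$ yields a contraction sequence from $G/e$ down to the 3-loops or the brick, and prepending the contraction of $e$ finishes the proof. The delicate case is when $G/e$ is basic, i.e. $G/e=L_{n-1}$, a non-contractible cycle of length $n-1$ carrying one homotopic loop per vertex; the plan is then to select a different contractible edge $e'$ of $G$ for which $G/e'$ is non-basic, after which induction applies.

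To exhibit such an $e'$, we view $G$ as obtained from $L_{n-1}$ by splitting one vertex $v^{\ast}$ into the two endpoints $u,v$ of $e$ and redistributing its four incident half-edges (two cycle edges of $L_{n-1}$ and the two endpoints of the loop at $v^{\ast}$) between $u$ and $v$. A case analysis over the possible distributions shows that either the split reproduces the basic map $L_n$ (contradicting the non-basicity of $G$) or one of the cycle edges or loop edges incident to $u$ or $v$ provides the desired alternative $e'$ whose contraction yields a non-basic map. The essential 3-connectedness of $G/e'$ is then checked through Lemma~\ref{lem:walk} by verifying that no walk of length at most four in the angle map of $G/e'$ bounds a non-face disk.

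The main obstacle is this last case analysis: it must carefully track, for each way of splitting $v^{\ast}$, both the essential 3-connectedness of $G/e'$ and the non-basicity of $G/e'$, and in particular must rule out the pathological configurations where every potential alternative contraction either collapses two homotopic parallel edges of $G$ or creates a separator of size at most two in $G^{\infty}/e'$. The fact that $n\geq 3$ (so that the deviation of $G$ from the basic pattern cannot be fully absorbed by the split of the single vertex $v^{\ast}$) is the key structural ingredient that keeps such a witness $e'$ available.
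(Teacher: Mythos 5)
Your overall strategy coincides with the paper's: induct (the paper uses the number of edges rather than the number of vertices, which is immaterial), invoke Lemma~\ref{lem:contractibleedges} to produce a contractible edge $e$, conclude immediately by induction if $G/e$ is non-basic, and otherwise exhibit an alternative contractible edge $e'$ whose contraction is non-basic. The base cases and the easy half of the inductive step are fine.

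The gap is that the step you yourself call ``the main obstacle'' --- the case analysis showing that when $G/e$ is basic one can always find a contractible $e'$ with $G/e'$ non-basic and essentially 3-connected --- is asserted rather than carried out, and that case analysis is the entire substance of the paper's proof. Concretely, the paper phrases it as a contradiction (assume every contractible edge contracts to a basic map) and splits according to whether one of the endpoints $u,v$ of $e$ carries a loop. If, say, $v$ does, then $u$ sits alone in the cylinder between the loops at $v$ and at $v_n$; since $G$ is non-basic and $u$ has degree at least three, two of its incident edges join the same vertex non-homotopically, and since after contracting $e$ only one edge survives in that cylinder, $u$ must share at least two edges with $v$; essential 3-connectedness then supplies an edge $e'=uv_n$, and $G/e'$ contains the basic map on $n$ vertices plus two non-homotopic parallel edges between the merged vertex and $v$, hence is essentially 3-connected and non-basic. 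If neither $u$ nor $v$ carries a loop, then $e$ lies on a $2$-cycle $C$, and one argues separately for $n=1$ and $n\ge 2$ that a suitable edge $f_1$ leaving $C$ can be contracted so that three pairwise non-homotopic edges remain at $v$, again giving a non-basic essentially 3-connected map. Without working through these configurations --- in particular verifying both non-basicity and essential 3-connectedness of $G/e'$, which your plan only promises to check ``through Lemma~\ref{lem:walk}'' --- the proof is not complete. Also, your closing intuition that the hypothesis $n\ge 3$ is ``the key structural ingredient'' is not how the argument actually closes: the paper's dichotomy is loops at $u$ or $v$ versus a $2$-cycle through $e$, and it covers the case $n=1$ (one vertex in $G/e$, two in $G$) explicitly rather than excluding it.
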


\begin{proof}
  Let us prove the lemma by induction on the number of edges of the
  map.  As the 3-loop and the brick are the only non-basic essentially
  3-connected toroidal maps with at most 3 edges, the lemma holds for
  the maps with at most 3 edges.  Consider now a non-basic essentially
  3-connected toroidal map $G$ with at least $4$ edges.  As $G$ has at
  least 2 vertices, it has at least one contractible edge by
  Lemma~\ref{lem:contractibleedges}.  If $G$ has a contractible edge
  $e$ which contraction yields a non-basic map $G'$, then by induction
  hypothesis on $G'$ we are done. Let us prove that such an edge
  always exists.  We assume by contradiction, that the contraction of
  any contractible edge $e$ yields a basic map $G'$. Let us denote
  $v_i$, with $1\le i \le n$, the vertices of $G'$ in such a way that
  $(v_1,v_2,\ldots,v_n)$ is a cycle of $G'$. We can assume that
  $v_1$ is the vertex resulting of  the contraction of $e$. Let
   $u$ and $v$ be the endpoints of $e$ in $G$.

   Suppose first that $u$ or $v$ is incident to a loop in $G$. By
   symmetry, we can assume that $v$ is incident to a loop and that $u$
   is in the cylinder between the loops around $v$ and $v_n$ (note
   that if $n=1$ then $v_n=v$), and note that $u$ is the only vertex
   here. Since $G$ is non-basic and $u$ has at least 3 incident edges,
   two of them go to the same vertex but are non-homotopic. Since
   after the contraction of $e$ there is only one edge left in the
   cylinder, we can deduce that $u$ has at least two edges in common
   with $v$. On the other side since $G$ is essentially 3-connected
   $u$ has an edge $e'$ with $v_n$. This edge $e'$ is contractible
   since its contraction yields a graph containing the basic graph on
   $n$ vertices. But since this graph has 2 non-homotopic edges
   linking $(uv_n)$ and $v$, it is non-basic. So $G$ has a contractible
   edge which contraction produces a non-basic graph, contradicting
   our assumption.

   Suppose now that $u$ and $v$ do not have incident loop, we thus
   have that $G$ contains a cycle $C$ of length $2$ containing
   $e$. Let $e'$ be the other edge of $C$. Since $G$ is essentially
   3-connected, both $u$ and $v$ have at least degree 3, and at least
   one of them has an incident edge on the left (resp. right) of
   $C$. If $n=1$, since $G$ has at least 4 edges there are 2
   (non-homotopic) edges, say $f_1$ and $f_2$ between $u$ and $v$ and
   distinct from $e$ and $e'$. In this case, since the cycles
   $(e,f_1)$ and $(e,f_2)$ were not homotopic, the edges $f_1$ and
   $f_2$ remain non-homotopic in $G'$. So in this case $G$ has one
   vertex and 3 edges, it is thus non-basic. Assume now that $n\ge
   2$. In this case $u$ and $v$ are contained in a cylinder bordered
   by the loops at $v_2$ and at $v_n$ (with eventually $n=2$). In this
   case, we can assume that $u$ has at least one incident edge $f_1$
   on the left of $C$ to $v_n$, and that $v$ has at least one incident
   edge $f_2$ on the right of $C$ to $v_2$. In this case one can
   contract $f_1$ and note that the obtained graph which contain at
   least 3 non-homotopic edges around $v$ ($e$, $e'$ and $f_2$) is
   essentially 3-connected, and non-basic. So $G$ has a contractible
   edge which contraction produces a non-basic graph, contradicting
   our assumption.
\end{proof}

\begin{lemma}
\label{lem:basic}  
A basic toroidal map admits only Schnyder woods of Type 2.
\end{lemma}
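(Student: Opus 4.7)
The plan is to show that a basic toroidal map cannot admit a Schnyder wood of Type 1; then by Theorem~\ref{lem:type} any Schnyder wood it admits must be of Type 2. The key observation is that a basic toroidal map has too few distinct homotopy classes of simple non-contractible cycles to support a Type 1 Schnyder wood, which requires three pairwise non-homotopic monochromatic cycles (one per color).

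The first step is to classify the simple cycles of a basic toroidal map $G$. Let $v_1,\ldots,v_n$ be the vertices of the defining cycle $C$ and $\ell_i$ the loop at $v_i$. I would argue that any simple cycle using a loop $\ell_i$ consists of $\ell_i$ alone, since $\ell_i$ alone already contributes degree two to $v_i$, so no other edge incident to $v_i$ can belong to the cycle subgraph. A simple cycle using no loop is contained in the multigraph of cycle edges of $C$; since in $G$ two distinct vertices are joined only by cycle edges of $C$ and these connect only consecutive vertices along $C$, the only such simple cycle is (an orientation of) $C$ itself (for $n=2$, the two parallel cycle edges together form $C$). Consequently every simple non-contractible cycle of $G$ lies in one of exactly two homotopy classes: that of $C$ and the common class of the $\ell_i$'s.

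The second step invokes the Schnyder wood structure. By (T1), each vertex has out-degree one in each color, so for every $i\in\{0,1,2\}$ the subgraph $G_i$ has at least one directed cycle, giving at least one $i$-cycle. By Lemma~\ref{lem:nocontractiblecycle}, these monochromatic cycles are non-contractible, and since they are simple directed cycles in $G$, they fall into one of the two homotopy classes described above. Distributing three colors over two homotopy classes, the pigeonhole principle produces two distinct colors whose monochromatic cycles are homotopic. By Theorem~\ref{lem:type}, this excludes Type 1, so the Schnyder wood is of Type 2.

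The substantive step is the classification of simple non-contractible cycles in $G$; once that is in hand, the rest reduces to pigeonhole together with Theorem~\ref{lem:type}. The classification itself is elementary, hinging on the two observations that a loop saturates its vertex's degree inside any cycle subgraph, and that the cycle edges of $C$ force a cycle built solely from them to traverse all of $C$. The only mild care needed is to handle $n=1$ (where ``$C$'' and some $\ell_i$ are both loops in genuinely different homotopy classes) and $n=2$ (where $C$ is a 2-cycle of parallel non-homotopic edges) as small cases; in all situations exactly two homotopy classes appear among simple non-contractible cycles, which is what the pigeonhole argument needs.
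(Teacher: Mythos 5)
Your proof is correct, but it takes a genuinely different route from the paper's. The paper argues by contradiction on the structure of a hypothetical Type~1 Schnyder wood: each loop, being uni- or bi-directed, carries a monochromatic cycle, and since in Type~1 monochromatic cycles of different colours are pairwise non-homotopic, all $n$ loops must be uni-directed in one and the same colour, say $1$; the cycle edges must then supply the outgoing edges of colours $0$ and $2$ at every vertex, which forces exactly the bi-oriented colouring of Figure~\ref{fig:basic}, i.e.\ the Type~2 Schnyder wood, a contradiction. You instead classify all simple non-contractible cycles of a basic map --- they fall into exactly two homotopy classes, that of the defining cycle $C$ and the common class of the loops --- and apply the pigeonhole principle to the three colours, each of which carries at least one monochromatic cycle by (T1); two colours must share a homotopy class, which rules out Type~1 by the dichotomy of Theorem~\ref{lem:type}. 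Your classification step is sound (a loop saturates the degree of its vertex inside any cycle subgraph, and the cycle edges alone support only $C$), and your treatment of the small cases $n=1,2$ is adequate. The pigeonhole argument is arguably cleaner, since it avoids reconstructing the forced colouring and does not need the observation that loops cannot be bi-oriented in a Type~1 wood. One minor remark: the paper's proof also records, via Figure~\ref{fig:basic}, that basic maps do admit a Type~2 Schnyder wood; this existence claim is not part of the exclusion you prove, but it is used downstream in establishing Theorem~\ref{th:existence}.
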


\begin{proof}
  Basic toroidal map admit Schnyder woods of Type 2 as shown by
  Figure~\ref{fig:basic}. Suppose that a basic toroidal map $G$ on $n$
  vertices admits a Schnyder wood of Type 1. Consider one of the
  vertical loop $e$ and suppose by symmetry that it is oriented upward
  in color $1$. In a Schnyder wood of type 1, all the monochromatic
  cycles of different colors are not homotopic, thus all the loops
  homotopic to $e$ are also oriented upward in color $1$ and they are
  not bi-oriented. It remains just a cycle on $n$ vertices for edges
  of color $0$ and $2$. Thus the Schnyder wood is the one of
  Figure~\ref{fig:basic}, a contradiction.
\end{proof}

We are now able to prove the following:

\begin{theorem}
\label{th:existencebasic}
A toroidal graph admits a Schnyder wood of Type 1 if and only if it is
an essentially 3-connected non-basic toroidal map.
\end{theorem}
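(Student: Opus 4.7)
The plan is to prove the two implications separately. The forward direction is immediate from earlier results: if $G$ admits a Schnyder wood of Type 1 then in particular it admits a Schnyder wood, so by Lemma~\ref{lem:essentially} it is an essentially 3-connected toroidal map, and by Lemma~\ref{lem:basic} it cannot be basic, since basic toroidal maps admit only Schnyder woods of Type~2.

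For the backward direction, I would proceed by induction on the number of edges along a contraction sequence provided by Lemma~\ref{lem:contractionbrique}. The two base cases are the 3-loops, which carries the Schnyder wood of Type~1 displayed in Figure~\ref{fig:essential}(a), and the brick, for which I would exhibit an explicit orientation and coloring and verify property (T1) at each of its two vertices together with the Type~1 characterization of Theorem~\ref{lem:type}, namely that no two monochromatic cycles of different colors are homotopic. This is a small finite check since the brick has only two vertices.

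Now let $G$ be a non-basic essentially 3-connected toroidal map. By Lemma~\ref{lem:contractionbrique}, $G$ can be contracted, one non-loop edge at a time, to the 3-loops or to the brick, through a sequence $G=G_0, G_1, \ldots, G_k$ of non-basic essentially 3-connected toroidal maps with $G_{i+1} = G_i / e_i$. Starting from a Schnyder wood of Type~1 on $G_k$ given by the base case, I would reverse the sequence and apply Lemma~\ref{lem:contractype1} inductively to lift a Schnyder wood of Type~1 from $G_{i+1}$ to $G_i$. To invoke Lemma~\ref{lem:contractype1}, I need that the endpoints of $e_i$ in $G_i$ have degree at least three; this follows because $G_i$ is essentially 3-connected, so $G_i^\infty$ is 3-connected and hence has minimum degree three, which forces minimum degree three in $G_i$ as well.

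The main obstacle is the dependence on Lemma~\ref{lem:contractype1}: the decontraction step is technical, requiring an exhaustive case analysis on how monochromatic cycles pass through the contracted vertex, and it is essential that one controls the relaxed property (T2') rather than (T2) itself, using Lemma~\ref{lem:tri-relax}. Once that lemma is in hand, the rest of the argument is a clean induction, and the fact that Lemma~\ref{lem:contractionbrique} preserves the non-basic and essentially 3-connected hypotheses throughout the sequence is exactly what makes the induction go through.
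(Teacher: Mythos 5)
Your proposal is correct and follows essentially the same route as the paper: the forward direction via Lemma~\ref{lem:essentially} and Lemma~\ref{lem:basic}, and the backward direction by contracting to the 3-loops or the brick via Lemma~\ref{lem:contractionbrique} and then decontracting with Lemma~\ref{lem:contractype1}. Your explicit check that essential 3-connectedness guarantees the degree-at-least-three hypothesis of Lemma~\ref{lem:contractype1} is a welcome detail the paper leaves implicit.
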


\begin{proof}
  ($\Longrightarrow$) If $G$ is a toroidal graph given with a Schnyder
  wood of Type 1. Then, by Lemma~\ref{lem:essentially}, $G$ is
  essentially 3-connected and by Lemma~\ref{lem:basic}, $G$ is not
  basic.

  ($\Longleftarrow$) Let $G$ be a non-basic essentially 3-connected
  toroidal map.  By Lemma~\ref{lem:contractionbrique}, $G$ can be
  contracted to the 3-loops or to the brick.  Both of these graphs
  admit Schnyder woods of type 1 (see Figure~\ref{fig:essential}.(a)
  and~\ref{fig:brique}). So by Lemma~\ref{lem:contractype1} applied
  successively, $G$ admits a Schnyder wood of Type 1.
\end{proof}

Theorem~\ref{th:existencebasic} and Lemma~\ref{lem:basic} imply
Theorem~\ref{th:existence}.  One related open problem is to
characterize which essentially 3-connected toroidal maps have Schnyder
woods of Type 2.

Here is a remark about how to compute a Schnyder wood for an
essentially 3-connected toroidal triangulation. Instead of looking
carefully at the technical proof of Lemma~\ref{lem:contractype1} to
know which coloring of the decontracted graph has to be chosen among
the possible choice. One can try the possible cases $\alpha.k.\ell$,
$\ell\geq 1$, and check after which obtained coloring is a Schnyder
wood. To do so, one just as to check if (T2') is satisfied. Checking
that (T2') is satisfied can be done by the following method: start
from any vertex $v$, walk along $P_0(v),P_1(v),P_2(v)$ and mark the
three monochromatic cycles $C_0,C_1,C_2$ reached by the three paths
$P_i$.  Property (T2') is then satisfied if the cycles $C_0,C_1,C_2$
pairwise intersect.

The existence of Schnyder wood for toroidal triangulations implies the
following theorem.

\begin{theorem}
\label{cor:cyclesedge-disjoint}
A toroidal triangulation contains three non contractible and non
homotopic cycles that are pairwise edge-disjoint.
\end{theorem}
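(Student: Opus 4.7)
The plan is to extract the three cycles directly from a Schnyder wood. Given a toroidal triangulation $G$, I would first argue that $G$ is essentially 3-connected. Since $G$ has no contractible loop and no homotopic multiple edges, its universal cover $G^\infty$ is a simple planar graph, and since every face of $G$ is a triangle, so is every face of $G^\infty$. In any planar triangulation, every vertex has degree at least $3$ (a vertex of degree at most $2$ would force a non-triangular incident face), and a simple planar triangulation with minimum degree at least $3$ is $3$-connected; this extends to the infinite triangulation $G^\infty$.

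Having essential 3-connectedness, I would invoke Theorem~\ref{th:existence} to obtain a Schnyder wood of $G$. By the last assertion of Theorem~\ref{lem:type}, this Schnyder wood has no bi-oriented edge and is of Type~1. As noted in the paragraph preceding Lemma~\ref{lem:nocontractiblecycle}, each connected component of the directed subgraph $G_i$ contains a unique directed cycle, so there exists at least one monochromatic cycle $C_i$ of each color $i\in\{0,1,2\}$; I would select any such triple.

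I would then verify the three required properties of $C_0,C_1,C_2$. Non-contractibility is immediate from Lemma~\ref{lem:allhomotopic} (equivalently Lemma~\ref{lem:nocontractiblecycle}). Pairwise non-homotopy follows directly from the defining property of Type~1 Schnyder woods in Theorem~\ref{lem:type}, namely that monochromatic cycles of distinct colors are never homotopic. Finally, pairwise edge-disjointness holds because no edge is bi-oriented, so each edge carries exactly one color, and two monochromatic cycles of distinct colors therefore cannot share an edge.

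The one non-routine point is the preliminary claim that every toroidal triangulation is essentially 3-connected; this is the step I expect to write most carefully, as it requires a separator argument using the triangular face structure of $G^\infty$. Once it is in place, the theorem is a mechanical application of the Schnyder wood framework developed in the previous sections: existence (Theorem~\ref{th:existence}), forced Type~1 for triangulations (Theorem~\ref{lem:type}), and the non-contractibility/non-homotopy properties of monochromatic cycles (Lemmas~\ref{lem:nocontractiblecycle} and \ref{lem:allhomotopic}).
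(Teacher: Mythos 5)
Your proposal is correct and follows essentially the same route as the paper: take a Schnyder wood of a toroidal triangulation, observe it must be of Type~1 with no bi-oriented edges (Theorem~\ref{lem:type}), pick one monochromatic cycle per color, and read off non-contractibility, pairwise non-homotopy, and edge-disjointness. The only difference is that you explicitly justify that a toroidal triangulation is essentially 3-connected (a premise the paper's one-line proof leaves implicit by citing Theorem~\ref{th:existencebasic} directly), and your sketched argument for that step --- $G^\infty$ is a simple infinite plane triangulation, hence 3-connected --- is sound.
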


\begin{proof}
  One just has to apply Theorem~\ref{th:existencebasic} to obtain a
  Schnyder wood of Type 1 and then, for each color $i$, choose
  arbitrarily a $i$-cycle. These cycles are edge-disjoint as, by
  Euler's formula, there is no bi-oriented edges in Schnyder woods of
  toroidal triangulations.
\end{proof}

The conclusion of Theorem~\ref{cor:cyclesedge-disjoint} is weaker than
the one of Theorem~\ref{th:fij} but it is not restricted to simple
toroidal triangulations. Recall that Theorem~\ref{th:fij} is not
true for general toroidal triangulations as shown by the graph of
Figure~\ref{fig:not-connected}.

A nonempty family $\mathcal R$ of linear orders on the vertex set $V$
of a simple graph $G$ is called a \emph{realizer} of $G$ if for every
edge $e$, and every vertex $x$ not in $e$, there is some order $<_i\in
\mathcal R$ so that $y<_ix$ for every $y\in e$. The
\emph{Dushnik-Miller dimension} \cite{DM41} of $G$, is defined as the
least positive integer $t$ for which $G$ has a realizer of cardinality
$t$. Realizers are usually used on finite graphs, but  here we allow
$G$ to be an infinite simple graph.

Schnyder woods where originally defined by Schnyder~\cite{Sch89} to
prove that a finite planar graph $G$ has Dushnik-Miller dimension at
most $3$.  A consequence of Theorem~\ref{th:existence} is an
analogous result for the universal cover of a toroidal graph:

\begin{theorem}
  The universal cover of a toroidal graph has Dushnik-Miller dimension
  at most three.
\end{theorem}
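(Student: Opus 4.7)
The plan is to use the Schnyder wood provided by Theorem~\ref{th:existence} to build three linear orders on $V(G^\infty)$ forming a realizer. First I would reduce to the case where $G$ is an essentially 3-connected toroidal map: any toroidal graph $G$ can be augmented by adding edges (and, if necessary, vertices) to obtain such a map $\widetilde G \supseteq G$, and any realizer of $\widetilde G^\infty$ restricts to a realizer of $G^\infty$, so the Dushnik--Miller dimension is monotone in this sense. Assume then that $G$ is an essentially 3-connected toroidal map, fix a Schnyder wood of $G$, and consider the induced orientation and coloring of $G^\infty$.

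For every vertex $v$ of $G^\infty$, Section~\ref{sec:universal} attaches three infinite monochromatic paths $P_0(v), P_1(v), P_2(v)$ which by Lemma~\ref{lem:nocommon} meet only at $v$, and which therefore partition $G^\infty$ into the three unbounded regions $R_0(v), R_1(v), R_2(v)$. Define a strict binary relation $<_i$ on $V(G^\infty)$ by
\[ u <_i v \ \Longleftrightarrow\ R_i(u) \subsetneq R_i(v). \]
Transitivity of $<_i$ follows from Lemma~\ref{lem:regionss}(i), and irreflexivity and antisymmetry are immediate from strict set containment, so $<_i$ is indeed a strict partial order. Let $\prec_i$ be any linear extension of $<_i$, which exists since $V(G^\infty)$ is countable. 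I claim that $(\prec_0,\prec_1,\prec_2)$ is a realizer of $G^\infty$.

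To verify the realizer condition, fix an edge $uv$ of $G^\infty$ and a vertex $x\notin\{u,v\}$. The key geometric step is: there exists $i\in\{0,1,2\}$ such that $u,v\in\overline{R_i(x)}$. This is a standard planarity argument, since the three paths $P_j(x)$ meet only at $x$ and divide the plane into three closed regions; a planar edge whose endpoints both differ from $x$ must lie in the closure of one of them. I then need the following strengthening of Lemma~\ref{lem:regionss}(ii): \emph{for every $y\in\overline{R_i(x)}\setminus\{x\}$, one has $R_i(y)\subsetneq R_i(x)$.} If $y\in R_i^\circ(x)$ this is exactly Lemma~\ref{lem:regionss}(ii), so suppose $y\in P_j(x)\setminus\{x\}$ for some $j\in\{i-1,i+1\}$. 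Then $R_i(y)\subseteq R_i(x)$ by Lemma~\ref{lem:regionss}(i), and equality would force $\partial R_i(y)=\partial R_i(x)$. But $P_{i-1}(y)$ and $P_{i+1}(y)$ are monochromatic directed paths starting at $y$ contained in $P_{i-1}(x)\cup P_{i+1}(x)$; by monochromaticity $P_{i-1}(y)\subseteq P_{i-1}(x)$ and $P_{i+1}(y)\subseteq P_{i+1}(x)$, which places $y$ on both $P_{i-1}(x)$ and $P_{i+1}(x)$ and hence, by Lemma~\ref{lem:nocommon}, forces $y=x$, a contradiction. Applying this to both $u$ and $v$ gives $u\prec_i x$ and $v\prec_i x$, completing the verification.

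The main obstacle I expect is the boundary case of the strengthened strict-containment claim: on $R_i^\circ(x)$ one may directly quote Lemma~\ref{lem:regionss}(ii), but on the boundary paths one has to use monochromaticity of the $P_j$'s together with Lemma~\ref{lem:nocommon} to rule out a degenerate shift of the paths that would leave $R_i(y)$ and $R_i(x)$ equal. The augmentation reduction and the planar decomposition of $G^\infty$ into three closed regions are more routine once the combinatorics of the regions developed in Section~\ref{sec:universal} is in hand.
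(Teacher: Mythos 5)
Your overall strategy is the same as the paper's (the region-containment orders $u<_i v \Leftrightarrow R_i(u)\subsetneq R_i(v)$ from a Schnyder wood, followed by linear extensions), but there is a genuine gap in your strengthened strict-containment claim, and it stems precisely from the point where you deviate from the paper: you reduce only to an essentially 3-connected toroidal map instead of to a toroidal triangulation. In an essentially 3-connected map that is not a triangulation the Schnyder wood may have bi-oriented edges (e.g.\ the map of Figure~\ref{fig:example-schnyder} has two), and for such edges your claim ``for every $y\in\overline{R_i(x)}\setminus\{x\}$ one has $R_i(y)\subsetneq R_i(x)$'' is simply false. Indeed, suppose $xy$ is oriented from $x$ to $y$ in color $i+1$ and from $y$ to $x$ in color $i-1$. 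Then $P_{i+1}(x)$ consists of the edge $xy$ followed by $P_{i+1}(y)$, and $P_{i-1}(y)$ consists of the edge $yx$ followed by $P_{i-1}(x)$; hence $P_{i-1}(x)\cup P_{i+1}(x)$ and $P_{i-1}(y)\cup P_{i+1}(y)$ coincide as point sets and $R_i(x)=R_i(y)$ with $y\neq x$. The step ``by monochromaticity $P_{i-1}(y)\subseteq P_{i-1}(x)$'' is where your argument breaks: it tacitly assumes each edge carries a single color, whereas a bi-oriented edge lies on $P_{i+1}(x)$ in one direction and on $P_{i-1}(y)$ in the other. Consequently your verification of the realizer condition can fail: for an edge $uv$ with $u$ joined to $x$ by such a bi-oriented edge and $v\in R_i^{\circ}(x)$, color $i$ is the natural candidate but gives only $R_i(u)=R_i(x)$, not $u\prec_i x$.

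The paper closes exactly this hole by first completing $G$ to a toroidal triangulation (which also guarantees essential 3-connectedness): by Euler's formula $m=3n$, a Schnyder wood of a toroidal triangulation has no bi-oriented edges, so for $u\in R_i(v)$ with $u\neq v$ the boundary point-set argument above cannot occur and $R_i(u)\subsetneq R_i(v)$ follows, giving the strictness needed for the realizer directly from Lemma~\ref{lem:regionss}.(i). Your reduction step already concedes that edges may be added and that dimension is monotone under taking subgraphs, so the fix costs nothing: triangulate rather than merely 3-connect, and the rest of your argument (including the observation that any edge avoiding $x$ lies in the closure of some $R_i(x)$) goes through as in the paper.
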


\begin{proof}
  By eventually adding edges to $G$ we may assume that $G$ is a
  toroidal triangulation.  By Theorem~\ref{th:existence}, it admits a
  Schnyder wood.  For $i\in\{0,1,2\}$, let $<_i$ be the order induced
  by the inclusion of the regions $R_i$ in $G^\infty$. That is
  $u<_{i}v$ if and only if $R_i(u)\subsetneq R_i(v)$. Let $<'_i$ be
  any linear extension of $<_i$ and consider $\mathcal
  R=\{<'_0,<'_1,<'_2\}$.  Let $e$ be any edge of $G^{\infty}$ and $v$
  be any vertex of $G^{\infty}$ not in $e$. Edge $e$ is in a region
  $R_i(v)$ for some $i$, thus $R_i(u)\subseteq R_i(v)$ for every $u\in
  e$ by Lemma~\ref{lem:regionss}.(i).  As there is no edges oriented
  in two directions in a Schnyder wood of a toroidal triangulation, we
  have $R_i(u)\neq R_i(v)$ and so $u <_{i}v$.  Thus $\mathcal R$ is a
  realizer of $G^\infty$.
\end{proof}

\section{Orthogonal surfaces}
\label{sec:ortho}

Given two points $u=(u_0,u_1,u_2)$ and $v=(v_0,v_1,v_2)$ in $\mathbb
R^3$, we note $u\vee v=(\max(u_i,v_i))_{i=0,1,2}$ and $u\wedge
v=(\min(u_i,v_i))_{i=0,1,2}$. We define an order $\geq$ among the
points in $\mathbb R^3$, in such a way that $u\geq v$ if $u_i\geq v_i$
for $i=0,1,2$.

Given a set $\mv$ of pairwise incomparable elements in $\mr^3$, we
define the set of vertices that dominates $\mv$ as
$\md_\mv=\{u\in\mathbb R^3\ |\ \exists\, v\in \mathcal V$ such that
$u\geq v\}$. The \emph{orthogonal surface $\ms$} generated by $\mv$ is
the boundary of $\md_{\mv}$. (Note that orthogonal surfaces are well
defined even when $\mv$ is an infinite set.)  If $u,v\in \mv$ and
$u\vee v\in \ms$, then $\ms$ contains the union of the two line
segment joining $u$ and $v$ to $u\vee v$. Such arcs are called
\emph{elbow geodesic}. The \emph{orthogonal arc} of $v\in \mv$ in the
direction of the standard basis vector $e_i$ is the intersection of
the ray $v+\lambda e_i$ with $\ms$.

Let $G$ be a planar map. A \emph{geodesic embedding} of $G$ on the
orthogonal surface $\ms$ is a drawing of $G$ on $\ms$ satisfying the
following:

\begin{itemize}
\item[(D1)] There is a bijection between the vertices of $G$ and $\mv$.

\item[(D2)] Every edge of $G$ is an elbow geodesic.

\item[(D3)] Every 
orthogonal arc in $\ms$ is part of an edge of $G$.

\item[(D4)] There are no crossing edges in the embedding of $G$ on $\ms$.
\end{itemize}

Miller~\cite{Mil02} (see also \cite{Fel03, FZ08}) proved that a
geodesic embedding of a planar map $G$ on an orthogonal surface $\ms$
induces a Schnyder wood of $G$. The edges of $G$ are colored with the
direction of the orthogonal arc contained in the edge. An orthogonal
arc intersecting the ray $v+\lambda e_i$ corresponds to the edge
leaving $v$ in color $i$. Edges represented by two orthogonal arcs
corresponds to edges oriented in two directions.

Conversely, it has been proved that a Schnyder wood of a planar map
$G$ can be used to obtain a geodesic embedding of $G$.  Let $G$ be a
planar map given with a Schnyder wood.  The method is the following
(see \cite{Fel03} for more details): For every vertex $v$, one can
divide $G$ into the three regions bounded by the three monochromatic
path going out from $v$. The \emph{region vector} associated to $v$ is
the vector obtained by counting the number of faces in each of these
three regions.  The mapping of each vertex on its region vector gives
the geodesic embedding. (Note that in this approach, the vertices are
all mapped on the same plane as the sum of the coordinates of each
region vector is equal to the total number of inner faces of the map.)

Our goal is to generalize geodesic embedding to the torus.  More
precisely, we want to represent the universal cover of a toroidal map
on an infinite and periodic orthogonal surface.  

Let $G$ be a toroidal map.  Consider any flat torus representation of
$G$ in a parallelogram $P$.  The graph $G^\infty$ is obtained by
replicating $P$ to tile the plane.  Given any of these parallelograms
$Q$, let $Q^{top}$ (resp. $Q^{right}$) be the copy of $P$ just above
(resp. on the right of) $Q$.  Given a vertex $v$ in $Q$, we note
$v^{top}$ (resp. $v^{right}$) its copies in $Q^{top}$
(resp. $Q^{right}$).

A mapping of the vertices of $G^\infty$ in $\mr^d$, $d\in\{2,3\}$, is
\emph{periodic} with respect to vectors $S$ and $S'$ of $\mr^d$, if
there exists a flat torus representation $P$ of $G$ such that for any
vertex $v$ of $G^\infty$, vertex $v^{top}$ is mapped on $v+S$ and
$v^{right}$ is mapped on $v+S'$.  A \emph{geodesic embedding} of a
toroidal map $G$ is a geodesic embedding of $G^\infty$ on $\mathcal
S_{\mathcal V^\infty}$, where $\mathcal V^\infty$ is a periodic
mapping of $G^\infty$ with respect to two non collinear vectors (see
example of Figure~\ref{fig:example-primal}).

\begin{figure}[h!]
\center
\includegraphics[scale=0.2]{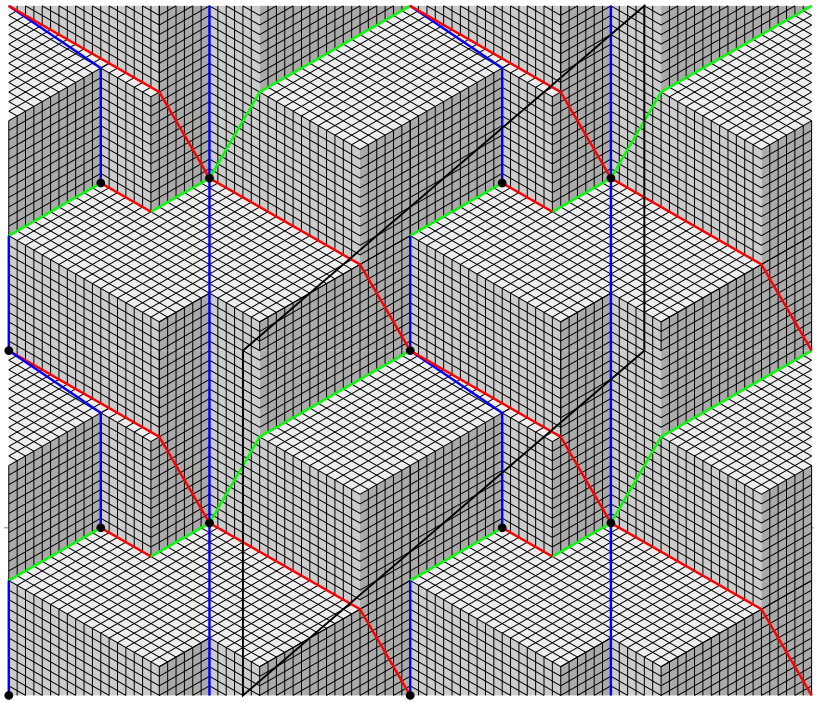}
\caption{Geodesic embedding of the toroidal map of Figure~\ref{fig:example-schnyder}.}
\label{fig:example-primal}
\end{figure}

Like in the plane Schnyder woods can be used to obtain geodesic
embeddings of toroidal maps. For that purpose, we need to generalize
the region vector method.  The idea is to use the regions $R_i(v)$ to
compute the coordinates of the vertex $v$ of $G^\infty$. The problem
is that contrarily to the planar case, these regions are unbounded and
contains an infinite number of faces. The method is thus generalized
by the following.

Let $G$ be a toroidal map, given with a Schnyder wood and a flat torus
representation in a parallelogram $P$.

Recall that $\mathcal C_i=\{C_i^0,\ldots,C_i^{k_i-1}\}$ denotes the
set of $i$-cycles of $G$ such that there is no $i$-cycle in the region
$R(C_i^j,C_i^{j+1})$. Recall that $\mathcal L_i^j$ denotes the set of
$i$-lines of $G^\infty$ corresponding to $C_i^j$. The \emph{positive
  side} of a $i$-line is define as the right side while ``walking''
along the directed path by following the orientation of the edges
colored $i$.

\begin{lemma}
\label{lem:linesintersection}
For any vertex $v$, the two monochromatic lines $L_{i-1}(v)$ and
$L_{i+1}(v)$ intersect. Moreover, if the Schnyder wood is of Type 2.i,
then $L_{i+1}(v)=(L_{i-1}(v))^{-1}$ and $v$ is situated on the right
of $L_{i+1}(v)$.
\end{lemma}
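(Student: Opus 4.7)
My plan is to split according to the type of the Schnyder wood given by Theorem~\ref{lem:type}. In every case except Type~2.$i$ the two lines come from cycles of $G$ with distinct homotopy classes on the torus, and the intersection follows from a topological argument. The Type~2.$i$ case is the substantive one, and there both conclusions (reversal and right-side position) are obtained simultaneously via a local analysis of (T1) along a shared cycle.

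\emph{Case~1: Type~1, Type~2.$(i-1)$, or Type~2.$(i+1)$.} In each of these three cases Theorem~\ref{lem:type} implies that no $(i-1)$-cycle is homotopic to any $(i+1)$-cycle (the only color pairs that can coincide up to reversal are the ones flagged by the special Type~2 index, which is not $i$). Each monochromatic line is a bi-infinite path in $G^{\infty}$ whose asymptotic direction is the homotopy class of the underlying cycle, and two bi-infinite paths in $\mathbb R^{2}$ with distinct asymptotic directions must cross. Hence $L_{i-1}(v)\cap L_{i+1}(v)\neq\emptyset$.

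\emph{Case~2: Type~2.$i$.} Here $\mathcal C_{i-1}=(\mathcal C_{i+1})^{-1}$, so $(i-1)$-lines and $(i+1)$-lines in $G^{\infty}$ share the same underlying unoriented lines; for such a shared line $L$, write $L^{+}$ for its $(i+1)$-orientation and $L^{-}=(L^{+})^{-1}$ for its $(i-1)$-orientation. The key local claim I would prove is that at any $w\in L$ both $e_{i+1}(w)$ and $e_{i-1}(w)$ lie on $L$ in opposite directions; the counterclockwise order imposed by (T1) therefore forces $e_{i}(w)$ to lie strictly on the right side of $L^{+}$, and the incoming-sector conditions of (T1) for colors $i+1$ and $i-1$ both place every such incoming edge in that same open half-plane. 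In particular no $(i\pm1)$-colored path in $G^{\infty}$ can cross $L$, and any such path that reaches $L$ does so from the right of $L^{+}$.

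It follows that $P_{i+1}(v)$ is trapped in the strip between the two consecutive shared lines around $v$ and, being infinite, must join one of them; the local claim pins down which one, namely the unique bounding shared line whose $L^{+}$-orientation has $v$ on its right. The identical argument applied to $P_{i-1}(v)$ gives the same unoriented line, approached now as $L^{-}$. Therefore $L_{i+1}(v)=(L_{i-1}(v))^{-1}$; the two lines share every vertex and so intersect; and $v$ lies on the right of $L_{i+1}(v)$ by construction (``right'' being interpreted weakly, to accommodate the degenerate case where $v$ itself sits on a shared cycle). The main obstacle I foresee is the sector bookkeeping inside the local claim: verifying that the two incoming-sector conditions of (T1), one for color $i+1$ and one for color $i-1$, really collapse to the same half-plane. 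This collapse is precisely where the Type~2.$i$ hypothesis $\mathcal C_{i-1}=(\mathcal C_{i+1})^{-1}$ is used; once it is established, the rest is routine consequences of (T1) and Lemma~\ref{lem:nocontractiblecycleuniversal}.
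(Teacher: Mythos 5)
Your proposal is correct and follows essentially the same route as the paper: the non-Type-2.$i$ cases are dispatched by noting that the underlying $(i-1)$- and $(i+1)$-cycles are non-homotopic so their lifts must cross, and the Type 2.$i$ case is handled by the same trapping argument, observing via (T1) that at a vertex of a shared line the outgoing $i$-edge and all incoming $(i\pm1)$-edges lie on the right of its $(i+1)$-orientation, so $P_{i+1}(v)$ and $P_{i-1}(v)$ can only merge into the bounding line having $v$ on its right. Your write-up merely makes explicit the sector computation that the paper compresses into ``By property (T1), $P_{i+1}(v)$ and $P_{i-1}(v)$ cannot reach $L'_{i+1}$.''
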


\begin{proof}
  Let $j,j'$ be such that $L_{i-1}(v)\in \mathcal L_{i-1}^{j}$ and
  $L_{i+1}(v)\in \mathcal L_{i+1}^{j'}$.  If the Schnyder wood is of
  Type 1 or Type 2.j with $j\neq i$, then the two cycles $C_{i-1}^{j}$
  and $C_{i+1}^{j'}$ are not homotopic, and so the two lines
  $L_{i-1}(v)$ and $L_{i+1}(v)$ intersect.  

  If the Schnyder wood is of Type 2.i, we consider the case where
  $v\in L_{i-1}(v)$, and the case where $v$ does not belong to
  $L_{i-1}(v)$ nor $L_{i+1}(v)$.  Then $v$ lies between two
  consecutive $(i+1)$-lines (which are also $(i-1)$-lines). Let us
  denote those two lines $L_{i+1}$ and $L'_{i+1}$, such that
  $L'_{i+1}$ is situated on the right of $L_{i+1}$ and $v\notin
  L'_{i+1}$. By property (T1), $P_{i+1}(v)$ and $P_{i-1}(v)$ cannot
  reach $L'_{i+1}$. Thus $L_{i+1}=L_{i+1}(v)=(L_{i-1}(v))^{-1}$.
\end{proof}

The \emph{size} of the region $R(C_i^j,C_i^{j+1})$ of $G$, denoted
$f_i^j=|R(C_i^j,C_i^{j+1})|$, is equal to the number of faces in
$R(C_i^j,C_i^{j+1})$.  Remark that for each color, we have
$\sum_{j=0}^{k_i-1} f_i^j$ equals the total number of faces $f$ of
$G$.  If $L$ and $L'$ are consecutive $i$-lines of $G^\infty$ with
$L\in \mathcal L_i^j$ and $L'\in \mathcal L_i^{j+1}$, then the
\emph{size} of the (unbounded) region $R(L,L')$, denoted $|R(L,L')|$,
is equal to $f_i^j$. If $L$ and $L'$ are any $i$-lines, the
\emph{size} of the (unbounded) region $R(L,L')$, denoted $|R(L,L')|$,
is equal to the sum of the size of all the regions delimited by
consecutive $i$-lines inside $R(L,L')$.  For each color $i$, choose
arbitrarily a $i$-line $L_i^*$ in $\mathcal L_i^0$ that is used as an
origin for $i$-lines.  Given a $i$-line $L$, we define the value
$f_i(L)$ of $L$ as follows: $f_i(L)=|R(L,L_i^*)|$ if $L$ is on the
positive side of $L_i^*$ and $f_i(L)=-|R(L,L_i^*)|$ otherwise.

Consider two vertices $u,v$ such that $L_{i-1}(u)=L_{i-1}(v)$ and
$L_{i+1}(u)=L_{i+1}(v)$. Even if the two regions $R_{i}(u)$ and
$R_{i}(v)$ are unbounded, their \emph{difference} is bounded.  Let
$d_i(u,v)$ be the number of faces in $R_{i}(u)\setminus R_{i}(v)$
minus the number of faces in $R_{i}(v)\setminus R_{i}(u)$.  For any
vertex, by Lemma~\ref{lem:linesintersection}, there exists $z_i(v)$ a
vertex on the intersection of the two lines $L_{i-1}(v)$ and
$L_{i+1}(v)$.  Let $N$ be a constant $\geq n$ (in this section we can
have $N=n$ but in Section~\ref{sec:straight} we need to choose $N$
bigger).  We are now able to define the region vector of a vertex 
of $G^\infty$, that is a mapping of this vertex in $\mathbb R^3$.

\begin{definition}[Region vector]
  The $i$-th coordinate of the \emph{region vector} of a vertex $v$ of
  $G^\infty$ is equal to $v_i\ =\ d_i(v,z_i(v))\ +\
  N\times\big(f_{i+1}(L_{i+1}(v))-f_{i-1}(L_{i-1}(v))\big)$ (see
  Figure~\ref{fig:region}).
\end{definition}

\begin{figure}[!h]
\center
\input{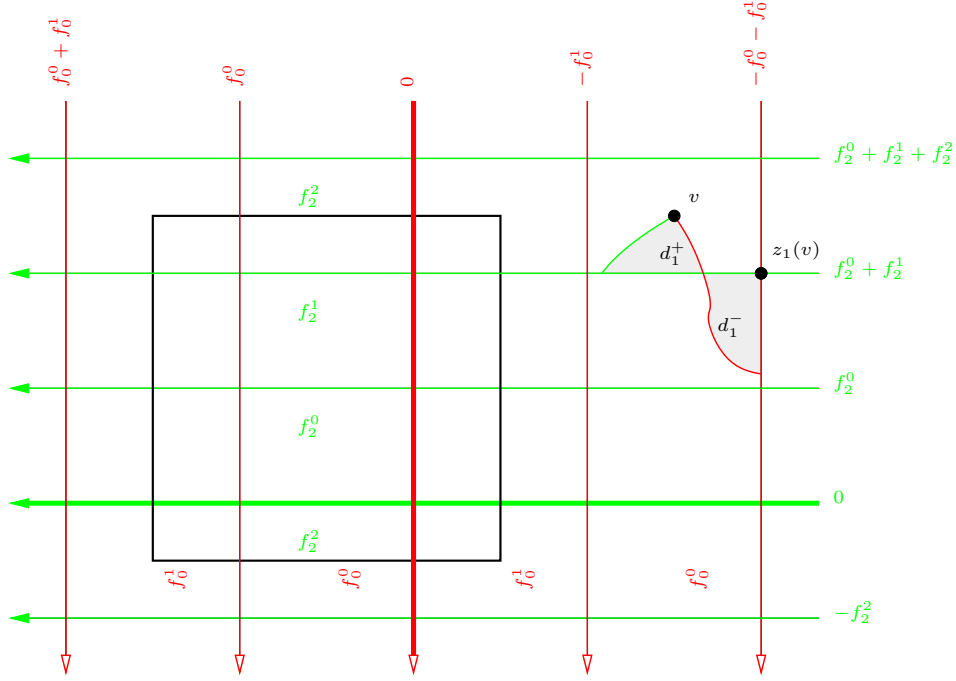}
\caption{Coordinate $1$ of vertex $v$, is equal to the number of faces
  in the region $d_1^+$, minus the number of faces in the region
  $d_1^-$, plus $N$ times $(f_2^0+f_2^1)-(-f_0^0-f_0^1)$.}
\label{fig:region}
\end{figure}

\begin{lemma}
\label{lem:sum}
The sum of the coordinates of a vertex $v$ equals the number of faces
in the bounded region delimited by the lines $L_0(v)$, $L_1(v)$ and
$L_2(v)$ if the Schnyder wood is of Type 1 and this sum equals zero if
the Schnyder wood is of Type 2.
\end{lemma}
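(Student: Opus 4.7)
The first step is a telescoping identity. Expanding $\sum_{i=0}^{2} v_i$ using the definition of $v_i$ gives
$$\sum_{i=0}^{2} v_i = \sum_{i=0}^{2} d_i(v, z_i(v)) + N\sum_{i=0}^{2} \bigl(f_{i+1}(L_{i+1}(v)) - f_{i-1}(L_{i-1}(v))\bigr).$$
The second sum telescopes cyclically modulo $3$: each $f_j(L_j(v))$ appears exactly once positively (when $j = i+1$) and once negatively (when $j = i-1$), so it vanishes. Thus $\sum_i v_i = \sum_i d_i(v, z_i(v))$, and the claim reduces to a purely geometric face count.

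Second, I will interpret $d_i(v, z_i(v))$ as a signed count of faces in a bounded region. Since $z_i(v)$ lies on both $L_{i-1}(v)$ and $L_{i+1}(v)$, the paths $P_{i-1}(z_i(v))$ and $P_{i+1}(z_i(v))$ coincide with the half-lines of $L_{i-1}(v)$ and $L_{i+1}(v)$ starting at $z_i(v)$. Comparing the boundaries of $R_i(v)$ and $R_i(z_i(v))$, their symmetric difference is exactly the bounded region $T_i$ enclosed by: the piece of $P_{i-1}(v)$ from $v$ up to $L_{i-1}(v)$, the arc of $L_{i-1}(v)$ from there to $z_i(v)$, the arc of $L_{i+1}(v)$ from $z_i(v)$ to where $P_{i+1}(v)$ meets it, and the piece of $P_{i+1}(v)$ back to $v$. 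The value $d_i(v, z_i(v))$ equals $\pm |T_i|$, the sign being determined by whether $T_i$ lies inside $R_i(v)$ or inside $R_i(z_i(v))$, which is read off from Property (T1) at $z_i(v)$.

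Third, for a Type 1 Schnyder wood, the three lines $L_0(v), L_1(v), L_2(v)$ are pairwise non-homotopic, hence pairwise cross at $z_0(v), z_1(v), z_2(v)$, bounding a central triangular region $T$ whose sides lie on these lines. The vertex $v$ lies in the interior of $T$, and its three outgoing paths $P_i(v)$ subdivide $T$; together with the corner slivers $T_i$ from step two, each face of $T$ is accounted for exactly once with a positive sign in $\sum_i d_i(v, z_i(v))$. A region-by-region check on the six exterior components of the three-line arrangement (three "corner" components beyond each $z_j(v)$, three "edge" components across each side of $T$) shows that the positive contributions from membership in some $R_i(v)$ are exactly canceled by the negative contributions from membership in $R_j(z_j(v))$. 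Hence $\sum_i d_i(v, z_i(v)) = |T|$, as required.

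Fourth, for a Type 2.$i$ Schnyder wood, Lemma~\ref{lem:linesintersection} gives $L_{i+1}(v) = (L_{i-1}(v))^{-1}$, so these two lines coincide as point sets and the would-be triangle $T$ degenerates to a segment, enclosing no faces. Choosing $z_i(v)$ on the common line makes $T_i$ trivial, while the two remaining pairs of lines still cross so $T_{i-1}$ and $T_{i+1}$ are well-defined; the sign analysis of step three shows their signed face counts cancel, yielding $\sum_i d_i(v, z_i(v)) = 0$. The hard part will be step three: establishing rigorously, through a careful case analysis on the six exterior regions and the signs dictated by (T1) at each $z_j(v)$, that the faces outside the triangle contribute zero — the picture is transparent but the bookkeeping is where the real work lies.
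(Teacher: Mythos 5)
Your reduction in the first step is correct and matches the paper's (the $N$-terms telescope cyclically, leaving $\sum_i v_i=\sum_i d_i(v,z_i(v))$), but from there the proposal has a genuine gap, and also a flawed intermediate claim. The flaw: in your second step you assert that $d_i(v,z_i(v))=\pm|T_i|$ with a single sign determined by which of $R_i(v)$, $R_i(z_i(v))$ contains the symmetric difference. In general neither region contains the other: the difference splits into a part counted positively and a part counted negatively (this is exactly the picture of Figure~\ref{fig:region}, where both $d_1^+$ and $d_1^-$ are nonempty), so $d_i(v,z_i(v))$ is a genuinely signed quantity, not $\pm|T_i|$. The gap: the entire content of the lemma is the cancellation of the exterior contributions, and your third step only asserts that "a region-by-region check on the six exterior components" establishes it; you explicitly defer this bookkeeping as "where the real work lies." As written, the proof is therefore an outline of a plausible strategy rather than a proof, and the strategy is made harder by the sign issue above (each exterior component can itself be cut by the paths $P_j(v)$ and $P_j(z_j(v))$ into sub-pieces with different signs).

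The paper avoids all of this with one identity that you should adopt: writing $\mathbf 1(R)$ for the characteristic function of a region on the faces of $G^\infty$, one has $d_i(v,z_i(v))=|\mathbf 1(R_i(v)\setminus R_i(z_i(v)))-\mathbf 1(R_i(z_i(v))\setminus R_i(v))|$, and adding and subtracting $\mathbf 1(R_i(v)\cap R_i(z_i(v)))$ gives $\sum_i d_i(v,z_i(v))=\bigl|\sum_i \mathbf 1(R_i(v))-\sum_i\mathbf 1(R_i(z_i(v)))\bigr|$. The three regions $R_0(v),R_1(v),R_2(v)$ tile the whole plane, while the three regions $R_0(z_0(v)),R_1(z_1(v)),R_2(z_2(v))$ are disjoint and tile the plane minus the bounded region $T$ delimited by $L_0(v),L_1(v),L_2(v)$ (a wedge beyond each corner $z_i(v)$). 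Hence the difference of the two sums is exactly $\mathbf 1(T)$, and $\sum_i v_i=|T|$, with $T=\emptyset$ in Type~2. This single global tiling statement is the idea your proposal is missing; it replaces the six-region case analysis and the per-region sign tracking entirely.
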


\begin{proof}
  We have $v_0+v_1+v_2 = d_0(v,z_0(v))+ d_1(v,z_1(v))+ d_2(v,z_2(v))=
  \sum_i(|R_i(v) \setminus R_i(z_i(v))|-|R_i(z_i(v)) \setminus
  R_i(v)|) $.  We use the characteristic function $\bf 1$ to deal with
  infinite regions. We note ${\bf 1}(R)$, the function defined on the
  faces of $G^\infty$ that has value $1$ on each face of region $R$
  and $0$ elsewhere. Given a function $g:F(G^\infty)\longrightarrow
  \mathbb Z$, we note $|g|=\sum_{F\in F(G^\infty)} g(F)$ (when the sum
  is finite).  Thus 
$\sum_i v_i
=\sum_i(|\mathbf 1(R_i(v) \setminus
  R_i(z_i(v)))|-|\mathbf 1(R_i(z_i(v)) \setminus R_i(v))|)
=
  |\sum_i(\mathbf 1(R_i(v) \setminus R_i(z_i(v)))-\mathbf
  1(R_i(z_i(v)) \setminus R_i(v)))|$. Now we compute $g=\sum_i(\mathbf
  1(R_i(v) \setminus R_i(z_i(v)))-\mathbf 1(R_i(z_i(v)) \setminus
  R_i(v)))$.  We have:
$$ g =\sum_i(\mathbf 1(R_i(v) \setminus
R_i(z_i(v)))+ \mathbf 1(R_i(v) \cap R_i(z_i(v))) -\mathbf
1(R_i(z_i(v)) \setminus R_i(v)) -\mathbf 1(R_i(v) \cap R_i(z_i(v))))
$$
As $R_i(v) \setminus
R_i(z_i(v))$ and $R_i(z_i(v)) \setminus R_i(v)$
are disjoint from $R_i(v) \cap R_i(z_i(v))$, we have
$$
g= \sum_i(\mathbf 1(R_i(v)) -\mathbf
1(R_i(z_i(v))))=\sum_i\mathbf 1(R_i(v)) -
\sum_i\mathbf 1(R_i(z_i(v)))
$$
The interior of the three regions $R_i(v)$, for $i=0,1,2$, being
disjoint and spanning the whole plane $\mathbb P$ (by definition), we
have $\sum_i\mathbf 1(R_i(v))= \mathbf 1(\cup_i(R_i(v)))=\mathbf
1(\mathbb P)$.  Moreover the regions $R_i(z_i(v))$, for $i=0,1,2$, are also disjoint
and $\sum_i\mathbf 1(R_i(z_i(v)))= \mathbf
1(\cup_i(R_i(z_i(v))))=\mathbf 1(\mathbb P\setminus T)$ where $T$ is
the bounded region delimited by the lines $L_0(v)$, $L_1(v)$ and
$L_2(v)$.  So $ g= \mathbf 1(\mathbb P) - \mathbf 1(\mathbb P\setminus
T)= \mathbf 1( T) $.  And thus $\sum_i v_i=|g|=| \mathbf 1( T)|$.
\end{proof}

Lemma~\ref{lem:sum} shows that if the Schnyder wood is of Type 1,
then the set of points are not necessarily coplanar like in the planar
case~\cite{Fel-book}, but all the copies of a vertex lies on the
same plane (the bounded region delimited by the lines $L_0(v)$,
$L_1(v)$ and $L_2(v)$ has the same number of faces for any copies of a
vertex $v$). Surprisingly, for Schnyder woods of Type 2, all the
points are coplanar.

For each color $i$, let $c_i$ (resp. $c'_i$), be the algebraic number
of times a $i$-cycle is traversing the vertical (resp. horizontal)
side of the parallelogram $P$ (that was the parallelogram containing
the flat torus representation of $G$) from right to left (resp. from top to
bottom).  This number increases by one each time a monochromatic cycle
traverses the side in the given direction and decreases by one when it
traverses in the other way.  Let $S$ and $S'$ be the two vectors of
$\mr^3$ with coordinates $S_i=N(c_{i+1}-c_{i-1})f$ and
$S'_i=N(c'_{i+1}-c'_{i-1})f$.  Note that $S_0+S_1+S_2=0$ and
$S'_0+S'_1+S'_2=0$

\begin{lemma}
\label{lem:periodic}
The mapping is periodic with respect to $S$ ans $S'$.
\end{lemma}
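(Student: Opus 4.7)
I plan to verify directly that $v^{top}_i - v_i = S_i$ for every vertex $v \in V(G^\infty)$ and every color $i \in \{0,1,2\}$; the case of $v^{right}$ and $S'$ will be analogous. Fix such a $v$ and $i$.

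The first step is to exploit the fact that the plane translation sending every vertex $u$ to $u^{top}$ is a deck transformation of $G^\infty$ and hence preserves the combinatorial structure and the Schnyder wood. Writing $\vec{\sigma}$ for this translation vector, this immediately yields $L_j(u^{top}) = L_j(u) + \vec{\sigma}$, $R_j(u^{top}) = R_j(u) + \vec{\sigma}$, and $z_j(u^{top}) = z_j(u) + \vec{\sigma}$ for every vertex $u$ and color $j$. Applying this with $u=v$ and $u=z_i(v)$, the symmetric difference $R_i(v^{top}) \bigtriangleup R_i(z_i(v^{top}))$ is just the $\vec{\sigma}$-translate of $R_i(v) \bigtriangleup R_i(z_i(v))$; since face-counts are translation-invariant, the local term $d_i(v^{top}, z_i(v^{top}))$ coincides with $d_i(v, z_i(v))$ and cancels out in the difference $v^{top}_i - v_i$.

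The heart of the argument is then the identity $f_j(L_j(v^{top})) - f_j(L_j(v)) = c_j \, f$ for every color $j$, from which $v^{top}_i - v_i = N(c_{i+1} - c_{i-1})f = S_i$ follows by unfolding the definition of $v_i$. Writing $L = L_j(v)$ and $L' = L + \vec{\sigma} = L_j(v^{top})$, the left-hand side is by construction the signed face-count of the strip $R(L, L')$, with sign determined by which side of $L$ contains $L'$. I would compute it as follows: choose a path $\gamma$ in $G^\infty$ from a point of $L$ to its translate on $L'$, transversal to all $j$-lines; projecting $\gamma$ to the torus produces a loop of vertical homology class, whose algebraic intersection number with any single $j$-cycle of $G$ equals $c_j$ by the very definition of $c_j$. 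Since the $j$-cycles of $G$ are all fully-homotopic (Lemma~\ref{lem:allhomotopic}) and hence contribute the same intersection count, $\gamma$ crosses a total algebraic $k_j \, c_j$ many $j$-lines in $G^\infty$. Each crossing advances through one strip bounded by consecutive $j$-lines, and the strip sizes cycle through $f_j^0, \ldots, f_j^{k_j-1}$ with sum $f$; because $k_j c_j$ is an integer multiple of $k_j$, these cyclic partial sums telescope to exactly $c_j f$, proving the identity.

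The computation for $v^{right}$ is formally identical after substituting the horizontal side vector of $P$ for $\vec{\sigma}$ and $c'_j$ for $c_j$, giving $v^{right} = v + S'$. The main delicate point, and the step that will demand the most care, is the sign tracking throughout the global-term computation: one must consistently align the orientation of the transversal path $\gamma$, the convention that the positive side of a $j$-line is its right side (in the direction of its orientation), and the sign conventions for $c_j$ (vertical side from right to left) and $c'_j$ (horizontal side from top to bottom), so that all orientation choices collapse into the clean expressions $c_j f$ and $c'_j f$ rather than their negatives.
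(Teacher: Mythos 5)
Your proof is correct and follows essentially the same route as the paper, which simply asserts the chain of equalities $v^{top}_i-v_i = N\bigl(f_{i+1}(L_{i+1}(v^{top}))-f_{i+1}(L_{i+1}(v))\bigr)-N\bigl(f_{i-1}(L_{i-1}(v^{top}))-f_{i-1}(L_{i-1}(v))\bigr)=N(c_{i+1}-c_{i-1})f$ without further comment. You additionally justify the two steps the paper leaves implicit, namely the cancellation of the local terms $d_i$ under the deck transformation and the identity $f_j(L_j(v^{top}))-f_j(L_j(v))=c_jf$ via the intersection-number count, both of which are sound.
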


\begin{proof}
  Let $v$ be any vertex of $G^\infty$.  Then $v^{top}_i-v_i
  =N(f_{i+1}(L_{i+1}(v^{top}))-f_{i+1}(L_{i+1}(v)))-N(f_{i-1}(L_{i-1}(v^{top}))-f_{i-1}(L_{i-1}(v)))=N(c_{i+1}-c_{i-1})f$. So
  $v^{top}=v+S$. Similarly $v^{right}=v+S'$.
\end{proof}

For each color $i$, let $\gamma_i$ be the integer such that two
monochromatic cycles of $G$ of respective colors $i-1$ and $i+1$
intersect exactly $\gamma_i$ times, with the convention that
$\gamma_i=0$ if the Schnyder wood is of Type 2.i. By
Lemma~\ref{lem:allhomotopic}, $\gamma_i$ is properly defined and do
not depend on the choice of the monochromatic cycles.  Note that if
the Schnyder wood is of Type 2.i, then $\gamma_{i-1}=\gamma_{i+1}$ and
if the Schnyder wood is not of Type 2.i, then $\gamma_i\neq 0$. Let
$\gamma = \max(\gamma_0,\gamma_1,\gamma_2)$. Let $Z_0=(
(\gamma_1+\gamma_2) Nf, - \gamma_1 Nf, - \gamma_2 Nf)$ and $Z_1=(-
\gamma_0Nf, (\gamma_0+\gamma_2) Nf, - \gamma_2 Nf)$ and $Z_2=(-
\gamma_0Nf,- \gamma_1 Nf,(\gamma_0+\gamma_1)Nf)$.

\begin{lemma}
\label{lem:copies}
For any vertex $u$, we have $\{u+k_0Z_0+k_1Z_1+k_2Z_2\, |\,
k_0,k_1,k_2\in \mz\} \subseteq \{u+kS+k'S'\, |\, k,k'\in \mz\}$.
\end{lemma}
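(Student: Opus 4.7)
The plan is to reduce the desired inclusion to the single assertion that each $Z_i$ lies in the $\mz$-lattice $\Lambda:=\{kS+k'S':k,k'\in\mz\}$. Once this is known, any integer combination $k_0Z_0+k_1Z_1+k_2Z_2$ lies in $\Lambda$, and adding $u$ to both sides gives exactly the claimed inclusion $\{u+k_0Z_0+k_1Z_1+k_2Z_2 : k_0,k_1,k_2\in\mz\}\subseteq\{u+kS+k'S':k,k'\in\mz\}$.

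To show $Z_i\in\Lambda$, I would first relate the intersection counts $\gamma_j$ to the integers $c_k,c'_k$. By Lemma~\ref{lem:allhomotopic}, all $k$-cycles are fully-homotopic, hence share a common homology class in $H_1$ of the torus, which in the basis formed by the sides of the parallelogram $P$ is $(c_k,c'_k)$ up to a global sign fixed by the chosen orientations. The algebraic intersection number of a $(i-1)$-cycle with a $(i+1)$-cycle in the torus is then $c_{i-1}c'_{i+1}-c_{i+1}c'_{i-1}$, up to the same global sign. Property~(T1) forces every crossing between $C_{i-1}$ and $C_{i+1}$ to occur in the same sense --- one always enters the other from its right side, as exploited in the proof of Lemma~\ref{lem:twonothomotopic} --- so the geometric count $\gamma_i$ matches the absolute value of the algebraic one, giving
\[
\gamma_i \;=\; \varepsilon\bigl(c_{i-1}c'_{i+1}-c_{i+1}c'_{i-1}\bigr),\qquad \varepsilon\in\{+1,-1\}\text{ independent of }i.
\]
The Type~2.$i$ case is consistent, since then $(c_{i-1},c'_{i-1})=-(c_{i+1},c'_{i+1})$ and both sides vanish.

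I would then verify by a direct coordinate computation the identity $Z_i=\varepsilon(c'_iS - c_iS')$. The $j$-th coordinate of $c'_iS-c_iS'$ equals $Nf\bigl(c'_i(c_{j+1}-c_{j-1})-c_i(c'_{j+1}-c'_{j-1})\bigr)$; expanding, cancelling the diagonal terms and substituting the formula for $\gamma_i$ collapses this in the cases $j=i,\,i-1,\,i+1$ to $\varepsilon(\gamma_{i-1}+\gamma_{i+1})Nf$, $-\varepsilon\gamma_{j}Nf$ and $-\varepsilon\gamma_{j}Nf$ respectively, which are $\varepsilon$ times the three coordinates of $Z_i$ as given in the definition. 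Since $c_i,c'_i\in\mz$, this exhibits $Z_i$ as an integer combination of $S$ and $S'$ (with coefficients $\varepsilon c'_i$ and $-\varepsilon c_i$), hence $Z_i\in\Lambda$, which concludes the proof.

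The hard part is the first step: carefully pinning down the sign convention that identifies the pair $(c_k,c'_k)$ with the homology class of $C_k$ in the torus, and invoking Property~(T1) to ensure that every crossing of $C_{i-1}$ with $C_{i+1}$ contributes with the same sign, so that $\gamma_i$ agrees up to a uniform sign with the algebraic intersection number. The coordinate verification in the second step is then a short, essentially mechanical, calculation.
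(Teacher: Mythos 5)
Your proof is correct in substance but takes a genuinely different route from the paper's. Both arguments reduce the claim to showing that each $Z_i$ lies in the lattice $\{kS+k'S'\,|\,k,k'\in\mz\}$, but the paper does this combinatorially: it shows that $u+Z_0$ is exactly the region vector of the first copy of $u$ encountered along $L_0(u)$ (the term $d_i(\cdot,z_i(\cdot))$ is unchanged between translated copies, and one period of $L_0$ sweeps across $\gamma_2$ full periods of $1$-lines and $\gamma_1$ full periods of $2$-lines, so that $|R(L_1(v),L_1(u))|=\gamma_2 f$ and $|R(L_2(v),L_2(u))|=\gamma_1 f$), and then invokes Lemma~\ref{lem:periodic} to place that copy inside $\{u+kS+k'S'\}$; similarly for the other two colors. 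You instead prove the explicit algebraic identity $Z_i=\varepsilon(c'_iS-c_iS')$. Your coordinate check is right: granting $\gamma_j=\varepsilon\bigl(c_{j-1}c'_{j+1}-c_{j+1}c'_{j-1}\bigr)$ with $\varepsilon$ independent of $j$, the diagonal terms cancel in the coordinates $j=i\pm1$, which become $-\varepsilon\gamma_{i\pm1}Nf$, and the $j=i$ coordinate regroups into $\varepsilon(\gamma_{i-1}+\gamma_{i+1})Nf$, matching $Z_i$ up to the overall factor $\varepsilon$. The paper's route is shorter, stays entirely inside the region-vector machinery, and identifies as a by-product which copy of $u$ realizes $u+Z_i$; yours exhibits the integer coefficients $\varepsilon c'_i$ and $-\varepsilon c_i$ explicitly and makes the lattice inclusion transparent. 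The one obligation you flag but do not discharge is the uniformity of $\varepsilon$ over the three colors; it does hold --- by (T1) the counterclockwise order of the four edge-ends at any crossing of $C_{j-1}$ with $C_{j+1}$ is the same pattern for every $j$, namely out-$(j{-}1)$, in-$(j{+}1)$, in-$(j{-}1)$, out-$(j{+}1)$, so every local crossing sign, hence the global algebraic intersection sign, is independent of $j$ --- but as written this step is an assertion rather than a proof, and without it the identity $Z_i=\varepsilon(c'_iS-c_iS')$ does not follow.
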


\begin{proof}
  Let $u,v$ be two copies of the same vertex, such that $v$ is the
  first copy of $u$ in the direction of $L_0(u)$.  (That is
  $L_0(u)=L_0(v)$ and on the path $P_0(u)\setminus P_0(v)$ there is no
  two copies of the same vertex.)  Then $v_i-u_i
  =N(f_{i+1}(L_{i+1}(v))-f_{i+1}(L_{i+1}(u)))-N(f_{i-1}(L_{i-1}(v))-f_{i-1}(L_{i-1}(u)))$.
  We have $|R(L_{0}(v),L_{0}(u))|=0$, $|R(L_{1}(v),L_{1}(u))|=\gamma_2
  f$ and $|R(L_{2}(v),L_{2}(u))|=\gamma_1 f$. So $v_0-u_0= N
  (\gamma_1+\gamma_2) f$ and $v_1-u_1= -N \gamma_1 f$ and $v_2-u_2= -
  N \gamma_2 f$. So $v=u+Z_0$.  Similarly for the other colors. So the
  first copy of $u$ in the direction of $L_i(u)$ is equal to $u+Z_i$.
  By Lemma~\ref{lem:periodic}, all the copies of $u$ are mapped on
  $\{u+kS+k'S'\, |\, k,k'\in \mz\}$, so we have the result.
\end{proof}

\begin{lemma}
\label{lem:dim}
We have $dim(Z_0,Z_1,Z_2)=2$ and if the Schnyder wood is not of Type
2.i, then $dim(Z_{i-1},Z_{i+1})=2$.
\end{lemma}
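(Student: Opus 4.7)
The plan is to observe first that each $Z_i$ has the form $(a_0, a_1, a_2)$ with $a_0 + a_1 + a_2 = 0$: indeed, summing the three entries of $Z_0$ gives $(\gamma_1+\gamma_2) - \gamma_1 - \gamma_2 = 0$, and the same holds by symmetry for $Z_1$ and $Z_2$. Hence all three vectors live in the two-dimensional hyperplane $\Pi = \{x_0+x_1+x_2 = 0\} \subset \mathbb R^3$, so $\dim(Z_0,Z_1,Z_2) \le 2$ and $\dim(Z_{i-1},Z_{i+1}) \le 2$ automatically. Consequently both statements of the lemma reduce to \emph{producing} two linearly independent vectors among the relevant ones.

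Next I would record the arithmetic constraints on the $\gamma_i$ already discussed in the paper: in Type~1 each $\gamma_i \ge 1$; in Type~2.$i$ one has $\gamma_i = 0$ and $\gamma_{i-1} = \gamma_{i+1} \ge 1$. For the second assertion, by cyclic symmetry assume $i=0$, so the hypothesis ``not of Type~2.0'' gives $\gamma_0 \ge 1$. I would then show $Z_1$ and $Z_2$ are independent by contradiction: if $Z_1 = \lambda Z_2$, comparing the $0$-th coordinates $-\gamma_0 Nf = -\lambda \gamma_0 Nf$ forces $\lambda = 1$ (since $\gamma_0 \ne 0$); but then the second coordinates give $\gamma_0+\gamma_2 = -\gamma_1$, contradicting $\gamma_0 \ge 1$ and $\gamma_1, \gamma_2 \ge 0$. (The vectors are also nonzero: $Z_2 = 0$ would require $\gamma_0 = \gamma_1 = 0$, again impossible.)

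For the first assertion, I would split on the Type. In Type~1 the same argument as above applied to $Z_0$ and $Z_1$ works: $Z_0 = \lambda Z_1$ together with the third coordinate equation $-\gamma_2 = -\lambda\gamma_2$ and $\gamma_2 \ge 1$ forces $\lambda = 1$, and then the first coordinates yield $\gamma_1+\gamma_2 = -\gamma_0 < 0$. In Type~2.$j$, by symmetry take $j=0$; then $\gamma_0=0$ and setting $\gamma := \gamma_1 = \gamma_2 \ge 1$ one has $Z_0 = (2\gamma Nf, -\gamma Nf, -\gamma Nf)$ while $Z_1 = (0, \gamma Nf, -\gamma Nf)$, and these are manifestly independent because the first coordinate of $Z_0$ is nonzero whereas that of $Z_1$ vanishes.

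There is no real obstacle here: the statement is a short linear-algebra check on three explicit vectors in $\mathbb R^3$, and the only thing to be careful about is to handle the Type~2 cases, where one of the $\gamma_i$ vanishes by convention, so that the pair of vectors chosen to witness $\dim = 2$ must be selected in a way that avoids degeneracy (e.g.\ in Type~2.0 one should compare $Z_0$ with $Z_1$ rather than $Z_1$ with $Z_2$, since $Z_2 = -Z_1$ in that case).
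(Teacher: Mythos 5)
Your proof is correct and follows essentially the same elementary linear-algebra route as the paper: bound the dimension above by two, then exhibit an independent pair of the $Z_i$ using $\gamma_i\ge 0$ together with the fact that at most one $\gamma_i$ vanishes. The paper merely organizes this more compactly, obtaining the upper bound from the relation $\gamma_0 Z_0+\gamma_1 Z_1+\gamma_2 Z_2=0$ and, after choosing a color $i$ with $\gamma_i\neq 0$, running a single computation showing $\dim(Z_{i-1},Z_{i+1})=2$, which yields both assertions at once and avoids your case split on the Type.
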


\begin{proof}
  We have $\gamma_0Z_0+\gamma_1Z_1+\gamma_2Z_2=0$ and so
  $dim(Z_0,Z_1,Z_2)\leq 2$.  We can assume by symmetry that the
  Schnyder wood is not of Type 2.1 and so $\gamma_1\neq 0$. Thus
  $Z_0\neq 0$ and $Z_2\neq 0$.  Suppose by contradiction that
  $dim(Z_0,Z_2)=1$. Then there exist $\alpha\neq 0$, $\beta\neq 0$,
  such that $\alpha Z_0 + \beta Z_2=0$.  The sum of this equation for
  the coordinates $0$ and $2$ gives $(\alpha+\beta)\gamma_1=0$ and
  thus $\alpha = -\beta$. Then the equation for coordinate $0$ gives
  $\gamma_0+\gamma_1+\gamma_2=0$ contradicting the fact that
  $\gamma_1>1$ and $\gamma_0,\gamma_2\geq 0$.
\end{proof}

\begin{lemma}
\label{lem:notcolinear}
The vectors $S,S'$ are not collinear.
\end{lemma}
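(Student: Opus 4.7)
The plan is to deduce linear independence of $S$ and $S'$ from the two preceding lemmas, using the $Z_i$ vectors as the bridge between the periodicity lattice $\langle S, S'\rangle$ and the internal geometry of the Schnyder wood. The key observation is that Lemma~\ref{lem:copies}, applied with $(k_0,k_1,k_2)=(1,0,0)$, $(0,1,0)$ and $(0,0,1)$ respectively, gives integers $\alpha_i,\beta_i$ with
\[
Z_i \;=\; \alpha_i S + \beta_i S' \qquad \text{for } i=0,1,2,
\]
so in particular each $Z_i$ lies in the real span $\mathrm{span}_{\mr}(S,S')$, and hence the whole subspace $\mathrm{span}_{\mr}(Z_0,Z_1,Z_2)$ is contained in $\mathrm{span}_{\mr}(S,S')$.

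From there I would finish by dimension counting. By Lemma~\ref{lem:dim}, we have $\dim\mathrm{span}_{\mr}(Z_0,Z_1,Z_2)=2$, and therefore $\dim\mathrm{span}_{\mr}(S,S')\geq 2$. Since $S$ and $S'$ are two vectors spanning at least a plane, they must be linearly independent, i.e.\ not collinear. I expect no real obstacle here: the inclusion above is immediate from the statement of Lemma~\ref{lem:copies}, and the dimension lower bound is exactly what Lemma~\ref{lem:dim} was set up to provide. In effect, all of the actual work has been done already, and this lemma simply packages the two previous ones into the geometric statement needed to give the flat torus a well-defined fundamental parallelogram in $\mr^3$.
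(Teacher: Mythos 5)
Your proof is correct and is essentially identical to the paper's: both deduce from Lemma~\ref{lem:copies} that each $Z_i$ lies in the lattice generated by $S$ and $S'$, and then use $\dim(Z_0,Z_1,Z_2)=2$ from Lemma~\ref{lem:dim} to force $\dim(S,S')=2$. You merely spell out the instantiation of Lemma~\ref{lem:copies} a bit more explicitly than the paper does.
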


\begin{proof}
  By Lemma~\ref{lem:copies}, the set $\{u+k_0Z_0+k_1Z_1+k_2Z_2\, |\,
  k_0,k_1,k_2\in \mz\}$ is a subset of $\{u+kS+k'S'\, |\, k,k'\in
  \mz\}$.  By Lemma~\ref{lem:dim}, we have $dim(Z_0,Z_1,Z_2)=2$, thus
  $dim(S,S')=2$.
\end{proof}

\begin{lemma}
  \label{lem:regionslongue}
  If $u,v$ are two distinct vertices such that $v$ is in $L_{i-1}(v)$,
  $u$ is in $P_{i-1}(v)$, both $u$ and $v$ are in the region
  $R(L_{i+1}(u),L_{i+1}(v))$ and $L_{i+1}(u)$ and $L_{i+1}(v)$ are two
  consecutive $(i+1)$-lines with $L_{i+1}(u)\in \mathcal L_{i+1}^{j}$
  (see Figure~\ref{fig:longregion}).  Then
  $d_i(z_i(v),v)+d_i(u,z_i(u))< (n-1)\times f_{i+1}^{j}$.
\end{lemma}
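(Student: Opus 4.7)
The plan starts by pinning down the geometry. By Lemma~\ref{lem:nocontractiblecycleuniversal} applied to $G^\infty$, the line $L_{i-1}(v)$ can cross each $(i+1)$-line at most once: two crossings $x,y$ would give, via the portion of $L_{i-1}(v)$ from $x$ to $y$ together with the reversed $(i+1)$-line from $y$ to $x$, a directed cycle in $G_{i-1}^\infty\cup(G_{i+1}^\infty)^{-1}$. Combined with the hypothesis that $u,v$ both lie in the strip and $u\in P_{i-1}(v)$, this forces the $(i-1)$-ordering $z_i(v),\,v,\,u,\,z_i(u)$ along $L_{i-1}(v)$, up to the symmetric case. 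Lemma~\ref{lem:regionss}(i) then yields $R_i(v)\subseteq R_i(z_i(v))$ and $R_i(z_i(u))\subseteq R_i(u)$, so
\begin{equation*}
d_i(z_i(v),v)+d_i(u,z_i(u)) \;=\; |R_i(z_i(v))\setminus R_i(v)|\,+\,|R_i(u)\setminus R_i(z_i(u))|.
\end{equation*}

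Next I would describe each of these two bounded regions explicitly. Let $Q_v$ be the initial subpath of $P_{i+1}(v)$ from $v$ up to its first vertex $w_v$ on $L_{i+1}(v)$, and define $Q_u,w_u$ analogously. Then $R_i(z_i(v))\setminus R_i(v)$ is the bounded topological disk enclosed by $Q_v$, the segment of $L_{i-1}(v)$ from $z_i(v)$ to $v$, and the segment of $L_{i+1}(v)$ from $z_i(v)$ to $w_v$; a symmetric description applies for $R_i(u)\setminus R_i(z_i(u))$. Both $Q_v$ and $Q_u$ stay within the strip $R(L_{i+1}(u),L_{i+1}(v))$ since distinct $(i+1)$-lines do not cross, and they are simple paths by (U2). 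Projecting $Q_v$ (resp.\ $Q_u$) to $G$ yields the transient prefix of a color-$(i+1)$ path before it enters its terminal $(i+1)$-cycle, which is necessarily a simple walk visiting at most $n-1$ distinct vertices of $G$.

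The main obstacle, and the crux of the proof, is to convert the length bound on $Q_v,Q_u$ into a face bound. My plan is to tile the strip by its fundamental periods (each of exactly $f_{i+1}^j$ faces) and argue that the union of the two bounded regions occupies strictly fewer than $n-1$ such periods. Because neither $Q_v$ nor $Q_u$ can traverse a full period of the strip (that would require two copies of the same vertex on a simple path with identical projection to $G$), each contributes at most $n-2$ complete periods' worth of faces. Combined with the segment of $L_{i-1}(v)$ from $v$ to $u$, whose vertices also have pairwise distinct projections to $G$ (two copies of the same vertex would force the segment to complete a full period of $L_{i-1}$, crossing some $(i+1)$-line and exiting the strip), the total face count collapses strictly below $(n-1)f_{i+1}^j$. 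Making this counting rigorous will be the hardest step, and I expect it to proceed via an Euler-formula accounting inside each enclosed disk, carefully tracking which vertex types of $G$ appear on the boundary paths and which faces of the spanned periods truly lie in one of the two bounded regions.
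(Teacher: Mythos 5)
Your setup is sound and matches the paper's: the ordering $z_i(v),v,u,z_i(u)$ along $L_{i-1}(v)$, the reduction of $d_i(z_i(v),v)+d_i(u,z_i(u))$ to the sizes of two disjoint bounded regions, and the observation that the transient paths $Q_v=Q_{i+1}(v)$ and $Q_u=Q_{i+1}(u)$ contain no two copies of the same vertex of $G$ and hence have at most $n$ vertices. But the counting step --- which you yourself flag as the crux and leave unfinished --- is based on a strategy that does not work. You propose to bound the number of fundamental periods of the strip that the two regions occupy by roughly $n-2$, and multiply by $f_{i+1}^{j}$. The paper's own sharpness example (Figure~\ref{fig:examplelongue}) defeats this: there the region is a staircase that meets $\Theta(n)$ distinct fundamental domains of the strip while covering each only partially, so no bound of the form ``contained in at most $n-2$ periods'' can hold. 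The inference ``$Q_v$ cannot traverse a full period, hence the region it bounds contributes at most $n-2$ periods' worth of faces'' is a non sequitur: a path with $n$ distinct projections to $G$ can perfectly well extend across many periods geometrically, and the region it bounds can intersect all of them.

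The missing idea is to count \emph{per face of $G$} rather than per period. Inside the strip between two consecutive $(i+1)$-lines, any two copies of the same face of $G$ lying in $R_i(z_i(v))\setminus R_i(v)$ are separated by a copy of $L_{i-1}(v)$, and every such copy of $L_{i-1}(v)$ that meets this region must cross its boundary path $Q_{i+1}(v)$ at a distinct vertex; since $Q_{i+1}(v)$ has at most $n$ vertices, each face of $G$ appears at most $n-1$ times in the region (and symmetrically for the other region). One then checks that a given face of $G$ can contribute to only one of the two regions, that only the $f_{i+1}^{j}$ faces of $R(C_{i+1}^{j},C_{i+1}^{j+1})$ can occur at all, and that at least one of them (a face incident to $v$) never occurs, yielding $d_i(z_i(v),v)+d_i(u,z_i(u))\leq (n-1)(f_{i+1}^{j}-1)<(n-1)f_{i+1}^{j}$. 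Without this multiplicity argument your plan stalls exactly where the real work begins, so as it stands the proposal has a genuine gap.
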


\begin{figure}[!h]
\center
\input{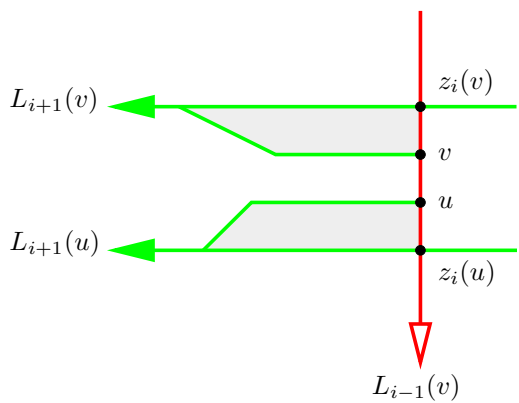}
\caption{The gray area, corresponding to the quantity
  $d_i(z_i(v),v)+d_i(u,z_i(u))$, has size bounded by $(n-1)\times
  f_{i+1}^{j}$.
}
\label{fig:longregion}
\end{figure}

\begin{proof}
  Let $Q_{i+1}(v)$ the subpath of $P_{i+1}(v)$ between $v$ and
  $L_{i+1}(v)$ (maybe $Q_{i+1}(v)$ has length $0$ if $v=z_i(v)$).  Let
  $Q_{i+1}(u)$ the subpath of $P_{i+1}(u)$ between $u$ and $L_{i+1}(u)$
  (maybe $Q_{i+1}(u)$ has length $0$ if $u=z_i(u)$).  The path
  $Q_{i+1}(v)$ cannot contain two different copies of a vertex of $G$,
  otherwise $Q_{i+1}(v)$ will corresponds to a non contractible cycle
  of $G$ and thus will contain an edge of $L_{i+1}(v)$. So the length
  of $Q_{i+1}(v)$ is $\leq n-1$.
  
  The total number of times a copy of a given face of $G$ can appear
  in the region $R=R_i(z_i(v))\setminus R_i(v)$, corresponding to
  $d_i(z_i(v),v)$, can be bounded as follow. Region $R$ is between two
  consecutive copies of $L_{i+1}(u)$. So in $R$, all the copies of a
  given face are separated by a copie of $L_{i-1}(v)$. Each copy of
  $L_{i-1}(v)$ intersecting $R$ have to intersect $Q_{i+1}(v)$ on a
  specific vertex. As $Q_{i+1}(v)$ has at most $n$ vertices. A given
  face can appear at most $n-1$ times in $R$.  Similarly, the total
  number of times that a copy of a given face of $G$ can appear in the
  region $R_i(u)\setminus R_i(z_i(u))$, corresponding to
  $d_i(u,z_i(u))$, is $\leq (n-1)$.
  % n_{i+1}^j$.

  A given face of $G$ can appear in only one of the two gray regions
  of Figure~\ref{fig:longregion}. So a face is counted $\leq n-1$
  times in the quantity $d_i(z_i(v),v)+d_i(u,z_i(u))$.
%n_{i+1}^j$ 
  Only the faces of the region $R(C_{i+1}^{j},C_{i+1}^{j+1})$ can be
  counted. And there is at least one face of
  $R(C_{i+1}^{j},C_{i+1}^{j+1})$ (for example one incident to $v$)
  that is not counted. So in total $d_i(z_i(v),v)+d_i(u,z_i(u))\leq
  (n-1)\times (f_{i+1}^{j}-1)< (n-1)\times f_{i+1}^{j}$.
\end{proof}

Clearly, the symmetric of Lemma~\ref{lem:regionslongue},  where
the role of $i+1$ and $i-1$ are exchanged, is also true.

The bound of Lemma~\ref{lem:regionslongue} is somehow sharp. In the
example of Figure~\ref{fig:examplelongue}, the rectangle represent a
toroidal map $G$ and the universal cover is partially represented.  If
the map $G$ has $n$ vertices and $f$ faces ($n=5$ and $f=5$ in the
example), then the gray region, representing the quantity
$d_1(z_1(v),v)+d_1(u,z_1(u))$, has size
$\frac{n(n-1)}{2}=\Omega(n\times f)$.

 \begin{figure}[!h]
 \center
 \includegraphics[scale=0.3]{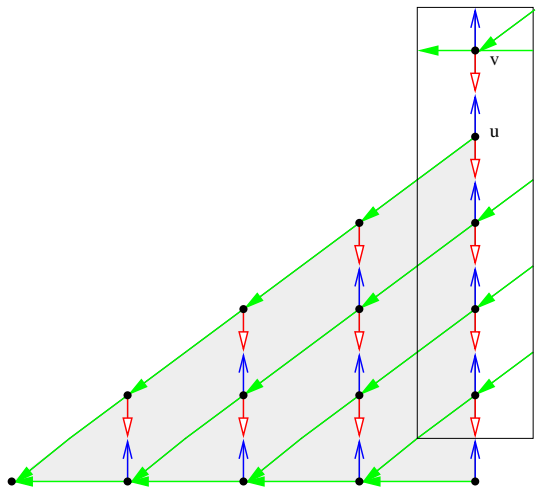}
 \caption{Example of a toroidal map where $d_1(u,z_1(u))$ has size
   $\Omega(n\times f)$.}
 \label{fig:examplelongue}
 \end{figure} 

\begin{lemma}
  \label{lem:regionorder}
  Let $u,v$ be vertices of $G^\infty$ such that $R_i(u)\subseteq
  R_i(v)$, then $u_i\leq v_i$. Moreover if $R_i(u)\subsetneq R_i(v)$,
  then
  $v_i-u_i>(N-n)(|R(L_{i-1}(u),L_{i-1}(v))|+|R(L_{i+1}(u),L_{i+1}(v))|)\geq
  0$.
\end{lemma}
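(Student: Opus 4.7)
The plan is to decompose $v_i-u_i$ into a ``line contribution'' and a ``corner contribution,'' and analyze each separately. First, the Schnyder property (T1) and the counterclockwise convention $e_0,e_1,e_2$ around each vertex imply that the wedge $R_i(v)$ lies on the positive (right) side of $L_{i-1}(v)$ and on the non-positive (left) side of $L_{i+1}(v)$. Consequently, if $R_i(u)\subseteq R_i(v)$, then $L_{i+1}(u)$ and $L_{i-1}(u)$, being contained in the closure of $R_i(v)$, lie respectively on the non-positive side of $L_{i+1}(v)$ and on the positive side of $L_{i-1}(v)$. Unraveling the definition of $f_j$ one obtains
$$f_{i+1}(L_{i+1}(v))-f_{i+1}(L_{i+1}(u))=|R(L_{i+1}(u),L_{i+1}(v))|\;\text{ and }\; f_{i-1}(L_{i-1}(u))-f_{i-1}(L_{i-1}(v))=|R(L_{i-1}(u),L_{i-1}(v))|.$$
Setting $S:=|R(L_{i-1}(u),L_{i-1}(v))|+|R(L_{i+1}(u),L_{i+1}(v))|$, the definition of the region vector yields the clean decomposition $v_i-u_i=D+NS$ where $D:=d_i(v,z_i(v))-d_i(u,z_i(u))$.

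The second step bounds $D$ from below by $-(n-1)S$, strictly whenever $S>0$. The key tool is Lemma~\ref{lem:regionslongue} together with its symmetric variant exchanging $i-1$ and $i+1$: each single strip-crossing between two suitably related vertices contributes strictly less than $(n-1)$ times the strip size to $d_i(z_i(\cdot),\cdot)+d_i(\cdot,z_i(\cdot))$. I will build a finite chain of vertices $v=w_0,w_1,\dots,w_k=u$ inside the closure of $R_i(v)$ such that each transition $w_j\to w_{j+1}$ satisfies the hypotheses of Lemma~\ref{lem:regionslongue} (or its symmetric variant): move from $w_j$ sitting on $L_{i-1}(w_j)$ (resp.\ $L_{i+1}(w_j)$) along $P_{i-1}(w_j)$ (resp.\ $P_{i+1}(w_j)$) into the adjacent $(i+1)$-strip (resp.\ $(i-1)$-strip) in the direction of $u$. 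Applying the lemma at each step and telescoping, the sizes of strips crossed in the two directions sum to exactly $|R(L_{i+1}(u),L_{i+1}(v))|$ and $|R(L_{i-1}(u),L_{i-1}(v))|$ respectively, producing the strict bound $D>-(n-1)S$ as soon as $S>0$.

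Combining the two estimates, for $S>0$,
$$v_i-u_i\;=\;D+NS\;>\;-(n-1)S+NS\;=\;(N-n+1)S\;>\;(N-n)S\;\geq\;0.$$
For $S=0$, the two pairs of bounding lines coincide and one can choose $z_i(v)=z_i(u)=:z$ to be a common intersection lying far enough along $L_{i-1}(v)\cap L_{i+1}(v)$ so that $R_i(z)\subseteq R_i(u)\subseteq R_i(v)$ (the wedge at such a line-intersection is contained in both outer wedges). A direct set-theoretic computation then gives $D=|R_i(v)\setminus R_i(z)|-|R_i(u)\setminus R_i(z)|=|R_i(v)\setminus R_i(u)|$, which is $\geq 1$ exactly when $R_i(u)\subsetneq R_i(v)$ and $0$ when $u=v$. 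In either case $v_i-u_i=D\geq(N-n)\cdot 0$, strict whenever $R_i(u)\subsetneq R_i(v)$, finishing the proof. The hard part will be the chain construction in the second step: one must check that the intermediate vertices can be chosen inside the closure of $R_i(v)$ and that consecutive pairs satisfy the precise hypotheses of Lemma~\ref{lem:regionslongue}, so the telescoping captures exactly the region sizes making up $S$.
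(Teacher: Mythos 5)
Your overall strategy coincides with the paper's: the same decomposition $v_i-u_i=\bigl(d_i(v,z_i(v))-d_i(u,z_i(u))\bigr)+N\cdot S$ with $S=|R(L_{i-1}(u),L_{i-1}(v))|+|R(L_{i+1}(u),L_{i+1}(v))|$, the same reliance on Lemma~\ref{lem:regionslongue} to control the corner term, and the same telescoping through intermediate vertices placed at intersections of monochromatic lines. In the case where the Schnyder wood is not of Type 2.$i$, your sketch matches the paper's treatment (its Cases 1.1--1.4 followed by the chaining step), and the part you flag as hard is exactly what the paper fills in: one projects $u$ and $v$ onto the strip boundary along $P_{i\pm1}$, applies Lemma~\ref{lem:regionslongue} to the projected pair, and pays an error of at most one strip size when transferring back to $u,v$; that one-strip error plus the $(n-1)$ factor from Lemma~\ref{lem:regionslongue} is precisely what degrades the coefficient from $N$ to $N-n$.

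The genuine gap is the Type 2.$i$ case, which you do not address and where your construction as described breaks down. There, by Lemma~\ref{lem:linesintersection}, $L_{i+1}(w)=(L_{i-1}(w))^{-1}$ for every vertex $w$, so the $(i-1)$-strips and the $(i+1)$-strips are the \emph{same} strips and $S=2\,|R(L_{i+1}(u),L_{i+1}(v))|$; a single geometric strip crossing must account for both summands of $S$ at once, and the hypotheses of Lemma~\ref{lem:regionslongue} (which require one vertex on an $(i-1)$-line, the other on its $P_{i-1}$-path, both between two consecutive $(i+1)$-lines) cannot be arranged by walking through $(i\pm1)$-strips as in your chain. The paper instead brackets $u$ between two consecutive $i$-lines, projects $u$ onto them along $P_{i+1}(u)$ and $P_{i-1}(u)$, and applies Lemma~\ref{lem:regionslongue} twice, which only yields $d_i(u,z_i(u))<(2(n-1)+1)f_{i+1}^{j}$, i.e.\ a corner bound of $-(2n-1)f_{i+1}^{j}$ rather than your claimed $-(n-1)S=-(2n-2)f_{i+1}^{j}$; the extra unit comes from the middle region between the two projections and your stronger intermediate bound is not established (it is also not needed, since $2Nf_{i+1}^{j}-(2n-1)f_{i+1}^{j}>(N-n)S$ still holds). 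So you need a separate argument for Type 2.$i$, and it necessarily goes through a different projection than the one in your chain.
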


\begin{proof}
  We distinguish two cases depending of the fact that the Schnyder
  wood is of type 2.i or not.

  \noindent
\emph{$\bullet$ Case 1: The Schnyder wood is not of Type 2.i.}

  Suppose first that $u$ and $v$ are both in a region delimited by two
  consecutive lines of color $i-1$ and two consecutive lines of color
  $i+1$.  Let $L_{i-1}^j,L_{i-1}^{j+1}$, $L_{i+1}^{j'},L_{i+1}^{j'+1}$
  be these lines such that $L_{i-1}^{j+1}$ is on the positive side of
  $L_{i-1}^j$, $L_{i+1}^{j'+1}$ is on the positive side of
  $L_{i+1}^{j'}$, and $L_k^\ell \in \mathcal L_k^\ell$ (see
  Figure~\ref{fig:regionrectangle}).
  We distinguish cases corresponding to equality or not between lines
  $L_{i-1}(u)$, $L_{i-1}(v)$ and $L_{i+1}(u)$, $L_{i+1}(v)$.

\begin{figure}[!h]
\center
\input{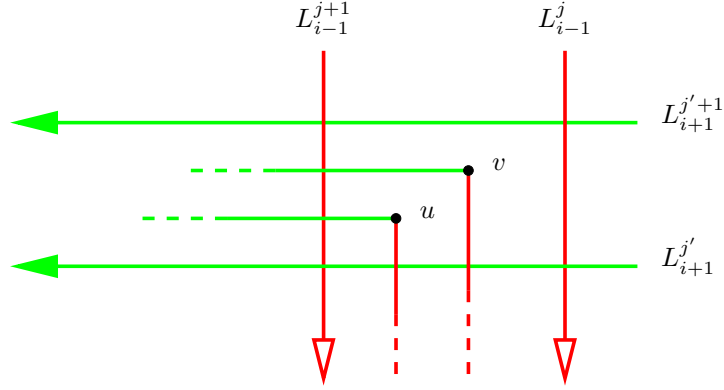}
\caption{Position of $u$ and $v$ in the proof of
  Lemma~\ref{lem:regionorder}}
\label{fig:regionrectangle}
\end{figure}

\noindent
\emph{$\star$ Case 1.1: $L_{i-1}(u)=L_{i-1}(v)$ and
  $L_{i+1}(u)=L_{i+1}(v)$.}  Then
$v_i-u_i=d_i(v,z_i(v))-d_i(u,z_i(u))=d_i(v,u)$. Thus clearly, if
$R_i(u)\subseteq R_i(v)$, then $u_i\leq v_i$ and if $R_i(u)\subsetneq
R_i(v)$, 
$v_i-u_i>0=(N-n)(|R(L_{i-1}(u),L_{i-1}(v))|+|R(L_{i+1}(u),L_{i+1}(v))|)$.

\noindent
\emph{$\star$ Case 1.2: $L_{i-1}(u)=L_{i-1}(v)$ and $L_{i+1}(u)\neq
  L_{i+1}(v)$.}  As $u\in R_i(v)$, we have $L_{i+1}(u)=L_{i+1}^{j'}$
and $L_{i+1}(v)=L_{i+1}^{j'+1}$.  Then
$v_i-u_i=d_i(v,z_i(v))-d_i(u,z_i(u)) +
N(f_{i+1}(L_{i+1}(v))-f_{i+1}(L_{i+1}(u)))=d_i(v,z_i(v))-d_i(u,z_i(u))
+ N f_{i+1}^{j'}$. Let $u'$ be the intersection of $P_{i+1}(u)$ with
$L_{i-1}^{j+1}$ (maybe $u=u'$). Let $v'$ be the intersection of
$P_{i+1}(v)$ with $L_{i-1}^{j+1}$ (maybe $v=v'$). Since
$L_{i+1}(u)\neq L_{i+1}(v)$, we have $u'\neq v'$. Since $u\in R_i(v)$,
we have $u'\in R_i(v')$ and so $u'\in P_{i-1}(v')$. Then, by
Lemma~\ref{lem:regionslongue}, $d_i(z_i(v'),v')+d_i(u',z_i(u'))<(n-1)
f_{i+1}^{j'}$.  If $L_{i-1}(u)=L_{i-1}^{j+1}$ then, one can see that
$d_i(v,z_i(v))-d_i(u,z_i(u))\geq d_i(v',z_i(v'))-d_i(u',z_i(u'))$.  If
$L_{i-1}(u)=L_{i-1}^j$, one can see that
$d_i(v,z_i(v))-d_i(u,z_i(u))\geq
d_i(v',z_i(v'))-d_i(u',z_i(u'))-f_{i+1}^{j'}$.  So finally,
$v_i-u_i=d_i(v,z_i(v))-d_i(u,z_i(u)) + N f_{i+1}^{j'}\geq
d_i(v',z_i(v'))-d_i(u',z_i(u')) + (N-1) f_{i+1}^{j'} >(N-n)
f_{i+1}^{j'}=(N-n)(|R(L_{i-1}(u),L_{i-1}(v))|+|R(L_{i+1}(u),L_{i+1}(v))|)\geq 0$.

\noindent
\emph{$\star$ Case 1.3: $L_{i-1}(u)\neq L_{i-1}(v)$ and $L_{i+1}(u)=
  L_{i+1}(v)$.}  
This case is completely symmetric to the previous
case.

\noindent
\emph{$\star$ Case 1.4: $L_{i-1}(u)\neq L_{i-1}(v)$ and
  $L_{i+1}(u)\neq L_{i+1}(v)$.}  As $u\in R_i(v)$, we have
$L_{i+1}(u)=L_{i+1}^{j'}$, $L_{i+1}(v)=L_{i+1}^{j'+1}$,
$L_{i-1}(u)=L_{i-1}^{j+1}$, and $L_{i-1}(v)=L_{i-1}^{j}$.  Then
$v_i-u_i=d_i(v,z_i(v))-d_i(u,z_i(u)) +
N(f_{i+1}(L_{i+1}(v))-f_{i+1}(L_{i+1}(u)))-
N(f_{i-1}(L_{i-1}(v))-f_{i-1}(L_{i-1}(u)))
=d_i(v,z_i(v))-d_i(u,z_i(u)) + N f_{i+1}^{j'}+Nf_{i-1}^{j}$.  Let $u'$
be the intersection of $P_{i+1}(u)$ with $L_{i-1}^{j+1}$ (maybe
$u=u'$).  Let $u''$ be the intersection of $P_{i-1}(u)$ with
$L_{i+1}^{j'}$ (maybe $u=u''$).  Let $v'$ be the intersection of
$P_{i+1}(v)$ with $L_{i-1}^{j+1}$ (maybe $v=v'$).  Let $v''$ be the
intersection of $P_{i-1}(v)$ with $L_{i+1}^{j'}$ (maybe $v=v''$).
Since $L_{i+1}(u)\neq L_{i+1}(v)$, we have $u'\neq v'$. Since $u\in
R_i(v)$, we have $u'\in R_i(v')$ and so $u'\in P_{i-1}(v')$. Then, by
Lemma~\ref{lem:regionslongue}, $d_i(z_i(v'),v')+d_i(u',z_i(u'))<(n-1)
f_{i+1}^{j'}$.  Symmetrically,
$d_i(z_i(v''),v'')+d_i(u'',z_i(u''))<(n-1) f_{i-1}^{j}$.  Moreover, we
have $d_i(v,z_i(v))-d_i(u,z_i(u))\geq
d_i(v',z_i(v'))-d_i(u',z_i(u'))+d_i(v'',z_i(v''))-d_i(u'',z_i(u''))-f_{i+1}^{j'}-f_{i-1}^j$.
So finally, $v_i-u_i=d_i(v,z_i(v))-d_i(u,z_i(u)) + N
f_{i+1}^{j'}+Nf_{i-1}^{j}\geq
d_i(v',z_i(v'))-d_i(u',z_i(u'))+d_i(v'',z_i(v''))-d_i(u'',z_i(u'')) +
(N-1) f_{i+1}^{j'} + (N-1) f_{i-1}^{j}> (N-n) f_{i+1}^{j'}+ (N-n)
f_{i-1}^{j}=(N-n)(|R(L_{i-1}(u),L_{i-1}(v))|+|R(L_{i+1}(u),L_{i+1}(v))|) \geq 0$.

Suppose now that $u$ and $v$ do not lie in a region delimited by two
consecutive lines of color $i-1$ and/or in a region delimited by two
consecutive lines of color $i+1$.  One can easily find distinct
vertices $w_0,\ldots, w_r$ ($w_i$, $1\leq i< r$  chosen at
intersections of monochromatic lines of colors $i-1$ and $i+1$) such
that $w_0=u$, $w_r=v$, and for $0\leq \ell \leq r-1$, we have
$ R_i(w_\ell)\subsetneq R_i(w_{\ell+1})$ and $w_\ell,w_{\ell+1}$ are both in a
region delimited by two consecutive lines of color $i-1$ and in a
region delimited by two consecutive lines of color $i+1$. Thus by the
first part of the proof, $(w_\ell)_i-(w_{\ell+1})_i>
(N-n)(|R(L_{i-1}(w_{\ell+1}),L_{i-1}(w_\ell))|+|R(L_{i+1}(w_{\ell+1}),L_{i+1}(w_\ell))|)$.
Thus $v_i-u_i> (N-n)\sum_\ell
(|R(L_{i-1}(w_{\ell+1}),L_{i-1}(w_\ell))|+|R(L_{i+1}(w_{\ell+1}),L_{i+1}(w_\ell))|)$.
For any $a,b,c$ such $R_i(a)\subseteq R_i(b)\subseteq R_i(c)$, we have
$|R(L_{j}(a),L_{j}(b))|+|R(L_{j}(b),L_{j}(c))|=|R(L_{j}(a),L_{j}(c))|$.
Thus we obtain the result by summing the size of the regions.

\noindent
\emph{$\bullet$ Case 2: The Schnyder wood is of Type 2.i.}

Suppose first that $u$ and $v$ are both in a region delimited by two
consecutive lines of color $i+1$. 

Let $L_{i+1}^{j},L_{i+1}^{j+1}$ be these lines such that
$L_{i+1}^{j+1}$ is on the positive side of $L_{i+1}^{j}$, and
$L_{i+1}^\ell \in \mathcal L_{i+1}^\ell$.  We can assume that we do
not have both $u$ and $v$ in $L_{i+1}^{j+1}$ (by eventually choosing
other consecutive lines of color $i+1$). We consider two cases :

\noindent
\emph{$\star$ Case 2.1: $v\notin L_{i+1}^{j+1}$.}  Then by
Lemma~\ref{lem:linesintersection},
$L_{i+1}^{j}=L_{i+1}(u)=(L_{i-1}(u))^{-1}=L_{i+1}(v)=(L_{i-1}(v))^{-1}$.
Then $v_i-u_i=d_i(v,z_i(v))-d_i(u,z_i(u))=d_i(v,u)$. Thus clearly, if
$R_i(u)\subseteq R_i(v)$, then $u_i\leq v_i$ and if $R_i(u)\subsetneq
R_i(v)$, then
$v_i-u_i>0=(N-n)(|R(L_{i-1}(u),L_{i-1}(v))|+|R(L_{i+1}(u),L_{i+1}(v))|)$.

\noindent
\emph{$\star$ Case 2.2: $v\in L_{i+1}^{j+1}$.}  Then
$L_{i+1}^{j+1}=L_{i+1}(v)=(L_{i-1}(v))^{-1}$ and $d_i(v,z_i(v))=0$.
By assumption $u\notin L_{i+1}^{j+1}$ and by
Lemma~\ref{lem:linesintersection},
$L_{i+1}^{j}=L_{i+1}(u)=(L_{i-1}(u))^{-1}$.  Then
$v_i-u_i=d_i(v,z_i(v))-d_i(u,z_i(u))+
N(f_{i+1}(L_{i+1}(v))-f_{i+1}(L_{i+1}(u)))-
N(f_{i-1}(L_{i-1}(v))-f_{i-1}(L_{i-1}(u))) = - d_i(u,z_i(u))+ 2N
f_{i+1}^{j}$.  Let $L_i$ and $L'_i$ be two consecutive $i$-lines such
that $u$ lies in the region between them and $L'_i$ is on the right of
$L_i$.  Let $u'$ be the intersection of $P_{i+1}(u)$ with $L_{i}$
(maybe $u=u'$).  Let $u''$ be the intersection of $P_{i-1}(u)$ with
$L'_{i}$ (maybe $u=u''$).  Then, by Lemma~\ref{lem:regionslongue},
$d_i(u',z_i(u'))<(n-1) f_{i+1}^{j}$ and $d_i(u'',z_i(u''))<(n-1)
f_{i+1}^{j}$.  Thus we have $d_i(u,z_i(u))\leq
d_i(u',z_i(u'))+d_i(u'',z_i(u''))+ f_{i+1}^{j}< (2(n-1)+1)
f_{i+1}^{j}$.  So finally, $v_i-u_i> -(2n-1) f_{i+1}^{j} + 2N
f_{i+1}^{j}> 2(N-n)f_{i+1}^{j} =
(N-n)(|R(L_{i-1}(u),L_{i-1}(v))|+|R(L_{i+1}(u),L_{i+1}(v))|)\geq 0$.

If $u$ and $v$ do not lie in a region delimited by two consecutive
lines of color $i+1$, then as in case 1, one can find intermediate
vertices to obtain the result.
\end{proof}

 \begin{lemma}
\label{lem:edgesbounded}
   If two vertices $u,v$ are adjacent, then for each color $i$, we
   have $|v_i - u_i |\leq 2Nf$.
 \end{lemma}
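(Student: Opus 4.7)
The goal is to bound $|v_i - u_i|$ for two adjacent vertices $u,v$ of $G^\infty$. The plan is to expand
\[
v_i - u_i = \big[d_i(v,z_i(v)) - d_i(u,z_i(u))\big] + N\big[f_{i+1}(L_{i+1}(v))-f_{i+1}(L_{i+1}(u))\big] - N\big[f_{i-1}(L_{i-1}(v))-f_{i-1}(L_{i-1}(u))\big]
\]
and bound each contribution by reusing the case analysis of Lemma~\ref{lem:regionorder}.

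The crucial preliminary is to observe that for adjacent $u,v$ and any color $k$, the monochromatic lines $L_k(u)$ and $L_k(v)$ are either equal or consecutive. Indeed, a $k$-line of $G^\infty$ is a bi-infinite simple path of edges, and by planarity the edge $uv$ cannot cross it, so $u$ and $v$ lie in the same strip (or in two consecutive strips, one of them containing $u$ or $v$ on the boundary). Consequently $|f_k(L_k(v))-f_k(L_k(u))| = |R(L_k(u),L_k(v))| \leq f$ for each color $k$, so each of the two $N$-weighted shifts in the formula for $v_i - u_i$ has absolute value at most $Nf$, contributing in total at most $2Nf$.

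It then remains to control the $d_i$-difference $A := d_i(v,z_i(v)) - d_i(u,z_i(u))$. Assuming without loss of generality by Lemma~\ref{lem:regionss} that $R_i(u)\subseteq R_i(v)$, I would mirror the case analysis of the proof of Lemma~\ref{lem:regionorder}, split according to whether $L_{i\pm 1}(u) = L_{i\pm 1}(v)$ and whether the Schnyder wood is of Type~2.$i$. In the simplest sub-case the lines coincide pairwise, one can take $z_i(u) = z_i(v)$, and $A = d_i(v,u)$ counts the signed area of the bounded region $R_i(u)\triangle R_i(v)$, which lies near the edge $uv$ and has size at most $f$; together with the vanishing $N$-terms this gives $|v_i - u_i| \leq f \leq 2Nf$. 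In the other sub-cases, Lemma~\ref{lem:regionslongue} bounds $A$ in terms of the shift sizes $f_{i+1}^{j'}$ and $f_{i-1}^{j}$, and combining these with the $N$-terms yields the cancellation that keeps $|v_i-u_i|$ within $2Nf$---precisely the mechanism by which the proof of Lemma~\ref{lem:regionorder} extracts its $(N-n)$-lower bound, now read in the opposite direction.

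The main obstacle is the sub-case where both pairs of lines differ: there the piece $A$ itself can be as large as $\Theta(nf)$, so a crude triangle inequality would exceed $2Nf$. The tight bound must come from exploiting the sign of the inclusion $R_i(u)\subseteq R_i(v)$ together with the signs of the line-shifts, exactly as in Case~1.4 of the proof of Lemma~\ref{lem:regionorder}. The Type~2.$i$ case further requires the adaptation used in Case~2 there, because consecutive $(i+1)$-lines coincide with $(i-1)$-lines in reverse, so the two $N$-shifts must be handled together rather than separately.
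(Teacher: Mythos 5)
Your proposal correctly identifies the decomposition of $v_i-u_i$ into the $d_i$-difference $A$ and the two $N$-weighted line shifts, and correctly flags the central obstacle: $A$ can be of order $\Theta(nf)$, so a triangle inequality on the three pieces overshoots $2Nf$. But you never resolve that obstacle. Saying that the bound ``must come from exploiting the sign of the inclusion \ldots exactly as in Case~1.4 of the proof of Lemma~\ref{lem:regionorder}, now read in the opposite direction'' is not an argument: Case~1.4 of Lemma~\ref{lem:regionorder} establishes a \emph{lower} bound $v_i-u_i>(N-n)(\cdots)$, and there is no symmetric statement in that proof that can simply be ``read in the opposite direction'' to produce the required \emph{upper} bound. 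In addition, your treatment of the simplest sub-case is wrong: when $L_{i\pm1}(u)=L_{i\pm1}(v)$ the quantity $d_i(v,u)$ counts faces of $G^\infty$ in $R_i(v)\setminus R_i(u)$, and a single face of $G$ can appear up to $\Theta(n)$ times in that region (this is exactly the phenomenon of Lemma~\ref{lem:regionslongue} and Figure~\ref{fig:examplelongue}), so the claim that this region ``has size at most $f$'' does not hold.

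The paper avoids all of this with a different and much shorter argument that you should compare against. Since $u$ and $v$ are adjacent they both lie in one strip between consecutive $(i-1)$-lines $L_{i-1}^{j},L_{i-1}^{j+1}$ and one strip between consecutive $(i+1)$-lines $L_{i+1}^{j'},L_{i+1}^{j'+1}$. Taking $z$ on $L_{i-1}^{j+1}\cap L_{i+1}^{j'}$ and $z'$ on $L_{i-1}^{j}\cap L_{i+1}^{j'+1}$, one gets $R_i(z)\subseteq R_i(u),R_i(v)\subseteq R_i(z')$, hence by Lemma~\ref{lem:regionorder} both $u_i$ and $v_i$ are sandwiched between $z_i$ and $z'_i$. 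Because $z$ and $z'$ sit on intersections of monochromatic lines, their $d_i$-terms vanish and $z'_i-z_i=Nf_{i+1}^{j'}+Nf_{i-1}^{j}\leq 2Nf$ exactly. This sandwich makes the problematic term $A$ disappear entirely rather than requiring a delicate cancellation, which is the idea missing from your proof.
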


\begin{proof}
  Since $u,v$ are adjacent, they are both in a region delimited by two
  consecutive lines of color $i-1$ and in a region delimited by two
  consecutive lines of color $i+1$.  Let $L_{i-1}^j,L_{i-1}^{j+1}$ be
  these two consecutive lines of color $i-1$ and
  $L_{i+1}^{j'},L_{i+1}^{j'+1}$ these two consecutive lines of color
  $i+1$ with $L_k^\ell \in \mathcal L_k^\ell$, $L_{i-1}^{j+1}$ is on
  the positive side of $L_{i-1}^j$ and $L_{i+1}^{j'+1}$ is on the
  positive side of $L_{i+1}^{j'}$ (see
  Figure~\ref{fig:regionrectangle} when the Schnyder wood is not of
  Type 2.i).  If the Schnyder wood is of Type 2.i we assume that
  $L_{i-1}^{j+1}=(L_{i+1}^{j'})^{-1}$ and
  $L_{i-1}^{j}=(L_{i+1}^{j'+1})^{-1}$.  Let $z$ be a vertex on the
  intersection of $L_{i-1}^{j+1}$ and $L_{i+1}^{j'}$.  Let $z'$ be a
  vertex on the intersection of $L_{i-1}^{j}$ and $L_{i+1}^{j'+1}$.
  Thus we have $R_i(z)\subseteq R_i(u)\subseteq R_i(z')$ and
  $R_i(z)\subseteq R_i(v)\subseteq R_i(z')$. So by
  Lemma~\ref{lem:regionorder}, $z_i\leq u_i\leq z'_i$ and $z_i\leq
  v_i\leq z'_i$. So $|v_i-u_i|\leq
  z'_i-z_i=N(f_{i+1}(L_{i+1}^{j'+1})-f_{i+1}(L_{i+1}^{j'})-
  N(f_{i-1}(L_{i-1}^{j})-f_{i-1}(L_{i-1}^{j+1})) =N
  f_{i+1}^{j'}+Nf_{i-1}^{j}\leq 2Nf$.
\end{proof}

We are now able to prove the following :

\begin{theorem}
\label{th:geodesic}
If $G$ is a toroidal map given with a Schnyder wood, then the mapping
of each vertex of $G^\infty$ on its region vector gives a geodesic
embedding of $G$.
\end{theorem}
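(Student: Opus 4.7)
The plan is to verify the four conditions (D1)--(D4) of a geodesic embedding for the mapping $v\mapsto (v_0,v_1,v_2)$ of $G^\infty$ into $\mr^3$; periodicity with respect to two non-collinear vectors is already provided by Lemmas~\ref{lem:periodic} and~\ref{lem:notcolinear}. For (D1), I would take two distinct vertices $u,v$ of $G^\infty$ and apply Lemma~\ref{lem:regionss}.(iii) to obtain indices $i,j$ with $R_i(u)\subsetneq R_i(v)$ and $R_j(v)\subsetneq R_j(u)$; Lemma~\ref{lem:regionorder} then yields $u_i<v_i$ and $v_j<u_j$. So the image points are pairwise distinct and incomparable, which in turn places each of them on the orthogonal surface $\ms$.

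For (D2), I would consider an edge $e=uv$ of $G^\infty$ and a color $i$ in which $e$ is oriented from $u$ to $v$ (for a bi-directed edge, each of the two directions is treated separately). The Schnyder property combined with the fact that $v$ is the next vertex on $P_i(u)$ gives $R_i(u)\subsetneq R_i(v)$, $R_{i-1}(v)\subseteq R_{i-1}(u)$ and $R_{i+1}(v)\subseteq R_{i+1}(u)$, and since $u,v$ are adjacent they lie in a common cell of the $(i-1)$- and $(i+1)$-line arrangements, so $L_{i-1}(u)=L_{i-1}(v)$ and $L_{i+1}(u)=L_{i+1}(v)$. The $N$-weighted parts of the $(i\pm 1)$-coordinates of $u$ and $v$ therefore coincide and Lemma~\ref{lem:regionorder} delivers $u_i<v_i$, $u_{i-1}\geq v_{i-1}$, $u_{i+1}\geq v_{i+1}$. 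Consequently $u\vee v$ agrees with $u$ in its $(i\pm 1)$-coordinates and with $v$ in its $i$-th coordinate, so the pair of segments $[u,u\vee v]\cup[v,u\vee v]$ is a candidate elbow geodesic. To confirm that $u\vee v$ lies on $\ms$, I would rule out any third vertex $w$ satisfying $w<u\vee v$: the gap $(N-n)(|R(L_{i-1}(u),L_{i-1}(w))|+|R(L_{i+1}(u),L_{i+1}(w))|)$ from Lemma~\ref{lem:regionorder} (using $N\geq n$) forces the line addresses of $w$ in directions $i\pm 1$ to match those of $u$ and $v$, and the same argument in color $i$ forces $L_i(w)=L_i(u)=L_i(v)$, so that $w$ would have to lie on $P_i(u)$ strictly between $u$ and $v$, contradicting that $uv$ is a single edge.

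For (D3), at each vertex $v$ and each axis direction $e_i$, the orthogonal arc from $v$ in direction $e_i$ must hit $\ms$ at some join point; by (D2) applied to the outgoing edge $e_i(v)$ (which always exists by the Schnyder property), this join point is exactly $v\vee w$ with $w$ the head of $e_i(v)$, and the entire $e_i$-leg of that elbow coincides with the orthogonal arc in question. For (D4), I would argue by contradiction: an interior crossing of two elbow geodesics $uv$ and $xy$ in $\mr^3$ would project back to a combinatorial configuration in $G^\infty$ in which one of $x,y$ must lie strictly inside a monochromatic sector of $u$ or $v$, violating either the Schnyder property or the planarity of $G^\infty$.

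The hard step is the second one, specifically the argument that $u\vee v\in\ms$. This is where the choice $N\geq n$ in the definition of the region vector is crucial: Lemmas~\ref{lem:regionslongue} and~\ref{lem:regionorder} together show that every change of a line address contributes at least $(N-n)$ times a positive region size to the corresponding coordinate, a positive gap which dominates the $d_i$ correction terms (themselves bounded by $(n-1)f_j^\ell$) and so prevents any spurious third vertex from sneaking below $u\vee v$. The remaining verifications are mostly combinatorial bookkeeping, separating Type 1 and Type 2 Schnyder woods and, within each type, checking the mono- and bi-directed edge cases.
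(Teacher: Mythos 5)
Your overall skeleton (periodicity via Lemmas~\ref{lem:periodic} and~\ref{lem:notcolinear}, then (D1)--(D4)) is the paper's, and your (D1) and (D3) are essentially the paper's arguments. The genuine gap is in the step you yourself identify as the hard one, the proof that $u\vee v$ lies on $\ms$ in (D2). You try to deduce, from the coordinate inequalities $w<u\vee v$, that the line addresses of $w$ must agree with those of $u$ and $v$ and hence that $w$ sits on $P_i(u)$ strictly between $u$ and $v$. This reverses the direction of Lemma~\ref{lem:regionorder}, which only converts region containments into coordinate gaps; nothing in the paper lets you invert coordinate information into statements about lines or regions, and even if all line addresses of $w$ did agree with those of $u$, this would not place $w$ on $P_i(u)$ --- a cell of the line arrangement can contain many vertices. (Your preliminary claim that adjacency forces $L_{i\pm1}(u)=L_{i\pm1}(v)$, so that the $N$-weighted parts coincide, is also false in general: Case 1.2 of Lemma~\ref{lem:regionorder} and Lemma~\ref{lem:edgesbounded} are written precisely to handle adjacent vertices whose lines differ. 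The conclusions $u_i<v_i$ and $v_{i\pm1}\le u_{i\pm1}$ still follow from Lemma~\ref{lem:regionorder} via $R_i(u)\subsetneq R_i(v)$ and $R_{i\pm1}(v)\subseteq R_{i\pm1}(u)$, with no claim about lines.) The paper's argument is simpler and runs in the correct direction: if $x\in\mathcal V^\infty$ satisfied $x<u\vee v$, then the edge $uv$ lies entirely in one closed region $R_i(x)$, so $R_i(u),R_i(v)\subseteq R_i(x)$ by Lemma~\ref{lem:regionss}.(i), whence $(u\vee v)_i=\max(u_i,v_i)\le x_i$ by Lemma~\ref{lem:regionorder}, a contradiction. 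No fine control by the constant $N$ is needed for this theorem.

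Your (D4) is also too thin: asserting that a crossing "violates the Schnyder property or the planarity of $G^\infty$" does not name a working mechanism, since the two edges do not cross in the planar embedding of $G^\infty$ but only on the surface. The paper's proof first uses (D3) to rule out crossings on orthogonal arcs, reduces to a crossing in a flat face with $u_1=u'_1$, $u_2>u'_2$, $v_2<v'_2$, extracts from the chain of orthogonal arcs between $u$ and $u'$ a bi-directed path that forces $u'\in R_2^\circ(v)$ (via Lemmas~\ref{lem:nocontractiblecycleuniversal} and~\ref{lem:nocommon}), hence $v'\in R_2(v)$ and $v'_2\le v_2$ by Lemma~\ref{lem:regionorder} --- a coordinate contradiction, not a topological one. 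Both of these steps need to be supplied before the proposal is a proof.
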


\begin{proof}
%  The proof follows the same steps as in the planar case (see
%  \cite{Fel-book}).
  By Lemmas~\ref{lem:periodic} and~\ref{lem:notcolinear}, the mapping
  of $G^\infty$ on its region vector is periodic with respect to $S$,
  $S'$ that are not collinear.  For any pair $u,v$ of distinct vertices
  of $G^\infty$, by Lemma~\ref{lem:regionss}.(iii), there exists $i,j$
  with $R_i(u) \subsetneq R_i(v)$ and $R_j(v) \subsetneq R_j(u)$ thus,
  by Lemma~\ref{lem:regionorder}, $u_i<v_i$ and $v_j<u_j$. So
  $\mv^\infty$ is a set of pairwise incomparable elements of $\mr^3$.

  (D1) $\mv^\infty$ is a set of pairwise incomparable elements so
  the mapping between vertices of $G^\infty$ and
  $\mv^\infty$ is a bijection.

  (D2) Let $e=uv$ be an edge of $G^\infty$. We show that $w=u\vee v$
  is on the surface $S_{\mathcal V^\infty}$. By definition $u\vee v$
  is in $\mathcal D_{\mathcal V^\infty}$. Suppose, by contradiction
  that $w\notin S_{\mathcal V^\infty}$. Then there exist $x\in
  \mathcal V^\infty$ with $x<w$.  Let $x$ also denote the
  corresponding vertex of $G^\infty$.  Edge $e$ is in a region
  $R_i(x)$ for some $i$. So $u,v\in R_i(x)$ and thus by
  Lemma~\ref{lem:regionss}.(i), $R_i(u)\subseteq R_i(x)$ and
  $R_i(v)\subseteq R_i(x)$. Then by Lemma~\ref{lem:regionorder},
  $w_i=\max(u_i,v_i)\leq x_i$, a contradiction.  Thus the elbow
  geodesic between $u$ and $v$ is  on the surface.

  (D3) Consider a vertex $v\in\mathcal V$ and a color $i$.  Let $u$ be
  the extremity of the arc $e_i(v)$. We have $u\in R_{i-1}(v)$ and
  $u\in R_{i+1}(v)$, so by Lemma~\ref{lem:regionss}.(i),
  $R_{i-1}(u)\subseteq R_{i-1}(v)$ and $R_{i+1}(u)\subseteq
  R_{i+1}(v)$. Thus by Lemma~\ref{lem:regionss}.(iii),
  $R_{i}(v)\subsetneq R_{i}(u)$. So, by Lemma~\ref{lem:regionorder},
  $v_i<u_i$, $u_{i-1}\leq v_{i-1}$ and $u_{i+1}\leq v_{i+1}$. So the
  orthogonal arc of vertex $v$ in direction of the basis vector $e_i$
  is part of the elbow geodesic of the edge $e_i(v)$.

  (D4) Suppose there exists a pair of crossing edges $e=uv$ and
  $e'=u'v'$ on the surface $S_{\mathcal V^\infty}$. The two edges
  $e,e'$ cannot intersect on orthogonal arcs so they intersects on a
  plane orthogonal to one of the coordinate axis. Up to symmetry we
  may assume that we are in the situation of Figure~\ref{fig:crossing}
  with $u_1=u'_1$, $u_2>u'_2$ and $v_2<v'_2$. Between $u$ and $u'$,
  there is a path consisting of orthogonal arcs only. With (D3), this
  implies that there is a bi-directed path $P^*$ colored $0$ from $u$
  to $u'$ and colored $2$ from $u'$ to $u$. We have $u\in R_2(v)$, so
  by Lemma~\ref{lem:regionss}.(i), $R_2(u)\subseteq R_2(v)$.  We have
  $u'\in R_2(u)$, so $u'\in R_2(v)$.  If $P_0(v)$ contains $u'$, then
  there is a contractible cycle containing $v,u,u'$ in $G_1\cup
  G_{0}^{-1}\cup G_{2}^{-1}$, contradicting
  Lemma~\ref{lem:nocontractiblecycle}, so $P_0(v)$ does not contain
  $u'$.  If $P_1(v)$ contains $u'$, then $u'\in P_1(u)\cap P_0(u)$,
  contradicting Lemma~\ref{lem:nocommon}.  So $u'\in
  R_2^\circ(v)$. Thus the edge $u'v'$ implies that $v'\in R_2(v)$. So
  by Lemma~\ref{lem:regionorder}, $v'_2\leq v_2$, a contradiction.
\end{proof}

\begin{figure}[h!]
\center
\includegraphics[scale=0.5]{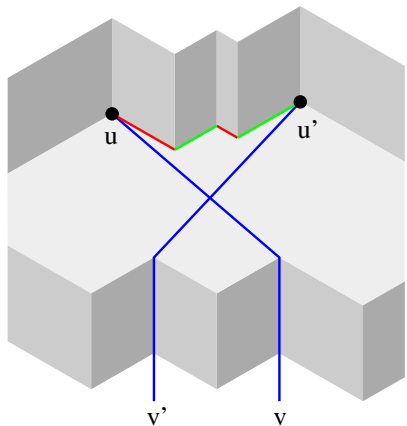}
\caption{A pair of crossing elbow geodesic}
\label{fig:crossing}
\end{figure}

%If Conjecture~\ref{conj:essentially} is true, then

Theorem~\ref{th:existence} and~\ref{th:geodesic} implies
Theorem~\ref{th:triortho}.

One can ask what is the ``size'' of the obtained geodesic embedding of
Theorem~\ref{th:geodesic} ?  Of course this mapping is infinite so
there is no real size, but as the object is periodic one can consider
the smallest size of the vectors such that the mapping is periodic
with respect to them.  There are several such pairs of vectors, one is
$S,S'$.  Recall that $S_i=N(c_{i+1}-c_{i-1})f$ and
$S'_i=N(c'_{i+1}-c'_{i-1})f$. Unfortunately the size of $S,S'$ can be
arbitrarily large. Indeed, the values of $c_{i+1}-c_{i-1}$ and
$c'_{i+1}-c'_{i-1}$ are unbounded as a toroidal map can be
artificially ``very twisted'' in the considered flat torus
representation (independently from the number of vertices or faces).
Nevertheless we can prove existence of bounded size vectors for which
the mapping is periodic with respect to them.

\begin{lemma}
\label{th:gamma}
If $G$ is a toroidal map given with a Schnyder wood, then the mapping
of each vertex of $G^\infty$ on its region vector gives a periodic
mapping of $G^\infty$ with respect to non collinear vectors $Y$ and
$Y'$ where the size of $Y$ and $Y'$ is in $\mathcal O(\gamma
Nf)$.  In general we have $\gamma\leq n$ and in the case where $G$ is
a simple toroidal triangulation given with a Schnyder wood obtained by
Theorem~\ref{th:schnydersimple}, we have $\gamma=1$.
\end{lemma}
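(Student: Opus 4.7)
The plan is to take $Y$ and $Y'$ to be two well-chosen vectors among $Z_0,Z_1,Z_2$. By Lemma~\ref{lem:dim}, at least one pair $(Z_a,Z_b)$ with $a\neq b$ consists of linearly independent vectors (this is the whole statement when the Schnyder wood is of Type~1, and in Type~2.i one instead takes the unique non-zero pair among $(Z_{i-1},Z_{i+1})$ and some $Z_i$). Setting $Y=Z_a$ and $Y'=Z_b$, periodicity follows from Lemma~\ref{lem:copies}, which precisely says that every integer combination of the $Z_k$ maps a vertex of $G^\infty$ to another copy of itself; hence translating the region-vector mapping by $Y$ or by $Y'$ preserves it, so the mapping is periodic with respect to $(Y,Y')$ in the sense of Section~\ref{sec:ortho}.

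The size bound is then immediate by inspection of the explicit formulas for $Z_0,Z_1,Z_2$ given just before Lemma~\ref{lem:copies}: every coordinate of every $Z_i$ is one of $\pm\gamma_j Nf$ or $(\gamma_{j-1}+\gamma_{j+1})Nf$, so its absolute value is at most $2\gamma Nf$, which yields $\|Y\|_\infty,\|Y'\|_\infty\in\mathcal O(\gamma Nf)$. The main point to watch is that the parallelogram spanned by $Y,Y'$ may contain more than one copy of $G$, but this only affects the constant in front of $\gamma Nf$ and is harmless for the straight-line application of Section~\ref{sec:straight}.

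For the generic bound $\gamma\le n$, I rely on Property (T1): at every vertex $v$ there is a unique outgoing edge of each colour, so any monochromatic cycle is a simple cycle that visits each vertex at most once. Two monochromatic cycles of colours $i-1$ and $i+1$ can therefore meet only at vertices they both visit, of which there are at most $n$; since $\gamma_i$ counts these meeting points, $\gamma_i\le n$ and thus $\gamma\le n$.

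Finally, for the refined statement about simple toroidal triangulations: Theorem~\ref{th:schnydersimple} produces a Schnyder wood together with three monochromatic cycles $C_0,C_1,C_2$, of the three colours, that all pass through a single common vertex and are pairwise disjoint elsewhere. Hence $C_{i-1}$ and $C_{i+1}$ intersect exactly once for every $i$. Since the paragraph defining $\gamma_i$ shows that this intersection number is independent of the chosen representative monochromatic cycles (this is a consequence of Lemma~\ref{lem:allhomotopic}), we get $\gamma_i=1$ for each $i$, hence $\gamma=1$. The subtle step will be the first one, checking that $Z_a$ and $Z_b$ really do give a valid periodic pair (rather than just a sub-periodic pair); this is where Lemma~\ref{lem:copies} is essential and where the proof should be slightly more careful than the routine size computation that occupies the rest.
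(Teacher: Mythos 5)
Your proposal follows essentially the same route as the paper's proof: pick an independent pair among the $Z_j$ via Lemma~\ref{lem:dim}, derive periodicity from Lemma~\ref{lem:copies}, read the $\mathcal O(\gamma Nf)$ bound off the explicit coordinates of the $Z_j$, bound $\gamma\leq n$ by noting that each intersection of a $(i-1)$-cycle with a $(i+1)$-cycle is a distinct vertex of $G$, and get $\gamma=1$ in the simple case from Theorem~\ref{th:schnydersimple} together with the well-definedness of $\gamma_i$ (Lemma~\ref{lem:allhomotopic}). All of that matches the paper.

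The one step you flag but do not actually close is the genuine issue, and your remark that it ``only affects the constant'' is not the right diagnosis. Lemma~\ref{lem:copies} only gives the inclusion $\{u+k_0Z_0+k_1Z_1+k_2Z_2\}\subseteq\{u+kS+k'S'\}$, so the lattice spanned by $Z_a,Z_b$ may be a \emph{proper} sublattice of the true period lattice $\langle S,S'\rangle$ of Lemma~\ref{lem:periodic}. In that case $Y=Z_a$, $Y'=Z_b$ is not a periodic pair in the sense defined in Section~\ref{sec:ortho}: that definition requires a flat torus representation of $G$ itself, i.e.\ one copy of $G$ per fundamental parallelogram, with $v^{top}=v+Y$ and $v^{right}=v+Y'$, whereas a fundamental parallelogram of the sublattice contains $[\langle S,S'\rangle:\langle Z_a,Z_b\rangle]$ copies of $G$. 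So with your choice the lemma's statement is not literally established. The paper avoids this by \emph{not} taking $Y,Y'$ equal to $Z_{i-1},Z_{i+1}$: it chooses $Y,Y'$ to be the sides of a parallelogram containing exactly one copy of $V$ (hence a basis of the full lattice $\langle S,S'\rangle$) lying inside the parallelogram spanned by $Z_{i-1},Z_{i+1}$; such a basis exists because the sublattice relation goes the right way, and its size is still $\mathcal O(\gamma Nf)$ since it fits inside that parallelogram. Adding this one reduction step makes your argument complete.
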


\begin{proof}
  By Lemma~\ref{lem:copies}, the vectors $Z_{i-1},Z_{i+1}$ (when the
  Schnyder wood is not of Type 2.i) span a subset of $S,S'$ (it can
  happen that this subset is strict).  Thus in the parallelogram
  delimited by the vectors $Z_{i-1},Z_{i+1}$ (that is a parallelogram
  by Lemma~\ref{lem:dim}), there is a parallelogram with sides $Y,Y'$
  containing a copy of $V$. The size of the vectors $Z_i$ is in
  $\mathcal O(\gamma Nf)$ and so $Y$ and $Y'$ also.

  In general we have $\gamma_i\leq n$ as each intersection between two
  monochromatic cycles of $G$ of color $i-1$ and $i+1$ corresponds to
  a different vertex of $G$ and thus $\gamma \leq n$.  In the case of
  simple toroidal triangulation given with a Schnyder wood obtained by
  Theorem~\ref{th:schnydersimple}, we have, for each color $i$,
  $\gamma_i=1$, and thus $\gamma=1$.
\end{proof}

We use the example of the toroidal map $G$ of
Figure~\ref{fig:example-schnyder} to illustrate the region vector
method.  This toroidal map has $3$ vertices, $4$ faces and $7$
edges. There are two edges that are oriented in two directions. The
Schnyder wood is of Type 1, with two $1$-cycles.  We choose as origin
the three bold monochromatic lines of
Figure~\ref{fig:example-coordinate}.  Then the region vectors of the
vertices of $G^\infty$ are $\{u\in \mr^3\,|\,\exists\, v \in \mv,\
k_1,k_2\in \mz$ such that $u=v+kS+k'S\}$, with $N=n=3$, $\mathcal
V=\{(0,0,0), (0,12,-11), (6,12,-18)\}$, $S=(-12,24,-12)$,
$S'=(12,24,-36)$, $c_0=-1$, $c_1=0$, $c_2=1$, $c'_0=-2$, $c'_1=1$,
$c'_2=0$. The points are not coplanar. They are on the two different
planes of equation $x+y+z=0$ and $x+y+z=1$.  The geodesic embedding
that is obtained by mapping each vertex to its region vector is the
geodesic embedding of Figure~\ref{fig:example-primal}. The
parallelogram has sides the vectors $S,S'$.

\begin{figure}[h!]
\center
\includegraphics[scale=0.3]{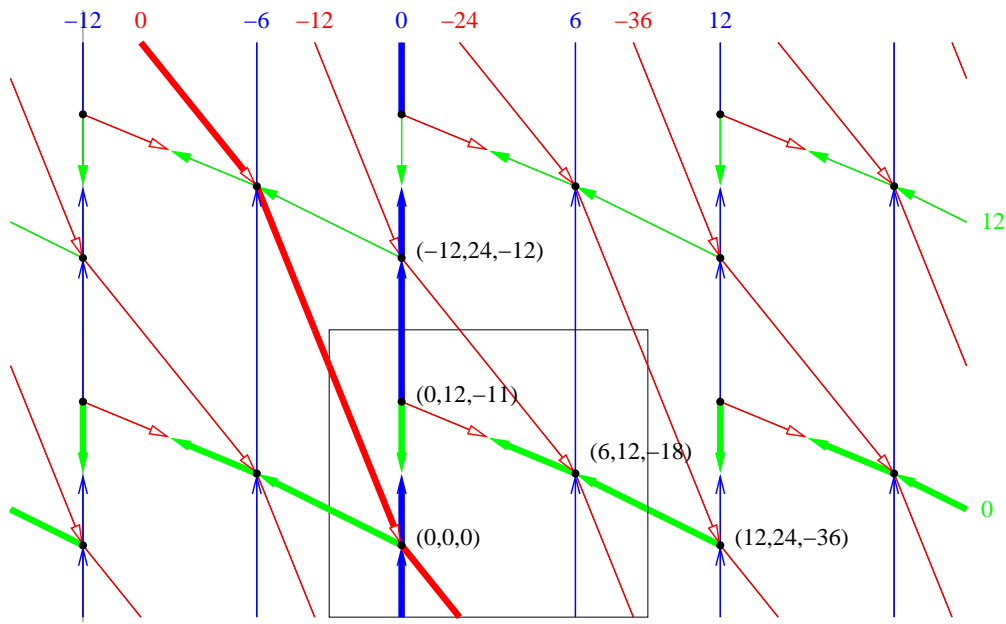}
\caption{Coordinates of the vertices.}
\label{fig:example-coordinate}
\end{figure}

Like in the plane, one can give weights to faces of $G$.
Then all their copies in $G^\infty$ have the same weight and instead
of counting the number of faces in each region one can compute the
weighted sum. 

Note that the geodesic embeddings of Theorem~\ref{th:geodesic} are not
necessarily rigid. A geodesic embedding is \emph{rigid}~\cite{FZ08,Mil02} if for every
pair $u,v\in \mathcal V$ such that $u\vee v$ is in $\mathcal
S_{\mathcal V}$, then $u$ and $v$ are the only elements of $\mathcal
V$ that are dominated by $u\vee v$.  The
geodesic embedding of Figure~\ref{fig:example-primal} is not rigid has
the bend corresponding to the loop of color $1$ is dominated by three
vertices of $G^\infty$. We do not know if it is possible to build a
rigid geodesic embedding from the Schnyder wood of a toroidal
map. Maybe a technique similar to the one presented in \cite{FZ08} can
be generalized to the torus. 

It has been already mentioned that in the geodesic embeddings of
Theorem~\ref{th:geodesic} the points corresponding to vertices are not
coplanar. The problem to build a coplanar geodesic embedding from the
Schnyder wood of a toroidal map is open. In the plane, there are some
examples of maps $G$~\cite{FZ08} for which it is no possible to require both
rigidity and co-planarity. Thus the same is  true
in the torus for the graph $G^+$.

Another question related to co-planarity is whether one can require that
the points of the orthogonal surface corresponding to edges of the
graph (i.e. bends) are coplanar.  This property is related to contact
representation by homotopic triangles \cite{FZ08}. It is known
that in the plane, not all Schnyder woods are supported by such
surfaces.  Kratochvil's conjecture~\cite{Kra07}, recently
proved~\cite{GLP11b}, states that every 4-connected planar
triangulation admits a contact representation by homothetic
triangles. Can this be extended to the torus ?

When considering non necessarily homothetic triangles, it has been
proved~\cite{FOR94} that there is a bijection between Schnyder woods
of planar triangulations and contact representations by
triangles. This results has been generalized to internally 3-connected
planar map~\cite{GLP11} by exhibiting a bijection between Schnyder
woods of internally 3-connected planar maps and primal-dual contact
representations by triangles (i.e. representations where both the
primal and the dual are represented). It would be interesting to
generalize these results to the torus.

\section{Duality of orthogonal surfaces}
\label{sec:dualortho}

Given an orthogonal surface generated by $\mv$, let $\mathcal
F_{\mathcal V}$ be the maximal points of $\ms$, i.e. the points of
$\ms$ that are not dominated by any vertex of $\ms$.  If $A,B\in
\mathcal F_{\mathcal V}$ and $A\wedge B\in \ms$, then $\ms$ contains
the union of the two line segments joining $A$ and $B$ to $A\wedge
B$. Such arcs are called \emph{dual elbow geodesic}.  The \emph{dual
  orthogonal arc} of $A\in \mathcal F_{\mathcal V}$ in the direction
of the standard basis vector $e_i$ is the intersection of the ray
$A+\lambda e_i$ with $\ms$.

Given a toroidal map $G$, let $G^{\infty*}$ be the dual of $G^\infty$.
A \emph{dual geodesic embedding} of $G$ is a drawing of $G^{\infty*}$
on the orthogonal surface $\mathcal S_{\mathcal V^\infty}$, where
$\mathcal V^\infty$ is a periodic mapping of $G^\infty$ with respect
to two non collinear vectors, satisfying the following (see example of
Figure~\ref{fig:example-dual}):

\begin{itemize}
\item[(D1*)] There is a bijection between the vertices of
  $G^{\infty*}$ and $\mathcal F_{\mathcal V^\infty}$.

\item[(D2*)] Every edge of $G^{\infty*}$ is a dual elbow geodesic.

\item[(D3*)] Every dual orthogonal arc in $\mathcal S_{\mathcal V^\infty}$ is part of an edge of $G^{\infty*}$.

\item[(D4*)] There are no crossing edges in the embedding of $G^{\infty*}$ on $\mathcal S_{\mathcal V^\infty}$.
\end{itemize}

\begin{figure}[h!]
\center
\includegraphics[scale=0.2]{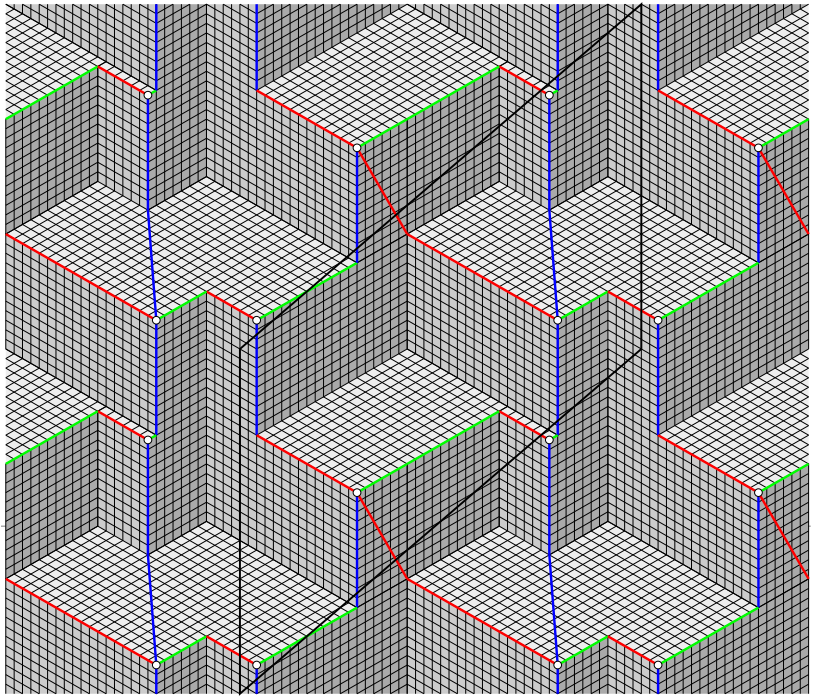}
\caption{Dual geodesic embedding of the toroidal map of
  Figure~\ref{fig:example-schnyder}.}
\label{fig:example-dual}
\end{figure}

Let $G$ be a toroidal map given with a Schnyder wood. Consider the
mapping of each vertex on its region vector.  We consider the dual of
the Schnyder wood of $G$. By Lemma~\ref{lem:dual}, it is a Schnyder
wood of $G^*$.  A face $F$ of $G^{\infty}$ is mapped on the point
$\bigvee_{v\in F}v$.  Let $\widetilde{G^\infty}$ be a simultaneous
drawing of $G^\infty$ and $G^{\infty*}$ such that only dual edges
intersect. To avoid confusion, we note $R_i$ the regions of the primal
Schnyder wood and $R_i^*$ the regions of the dual Schnyder wood.

\begin{lemma}
\label{lem:maximalpoint}
  For any face $F$ of $G^{\infty}$, we have that $\bigvee_{v\in F}v$ is
  a maximal point of $\mathcal S_{\mathcal V^\infty}$.
\end{lemma}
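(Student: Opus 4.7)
The plan is to show both that $p := \bigvee_{v \in V(F)} v$ lies on the orthogonal surface $\mathcal S_{\mathcal V^\infty}$ and that no element of $\mathcal V^\infty$ dominates it. The starting observation uses Lemma~\ref{lem:angle}: around any face $F$ the angle labels form three counterclockwise intervals, so for each $i \in \{0,1,2\}$ the set $W_i \subseteq V(F)$ of vertices whose angle at $F$ is labeled $i$ is nonempty. For any $w \in W_i$ the face $F$ locally sits in the counterclockwise sector of $w$ from $e_{i+1}(w)$ to $e_{i-1}(w)$, so $F \subseteq R_i(w)$; hence every other vertex $v' \in V(F)$ lies in $R_i(w)$, giving $R_i(v') \subseteq R_i(w)$ by Lemma~\ref{lem:regionss}.(i) and thus $v'_i \leq w_i$ by Lemma~\ref{lem:regionorder}. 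Consequently $p_i = w_i$ for every $w \in W_i$.

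To show $p \in \mathcal S_{\mathcal V^\infty}$, observe that $p \in \mathcal D_{\mathcal V^\infty}$ is immediate (since $p \geq v$ for any $v \in V(F)$), so it suffices to rule out $p$ lying in the interior of $\mathcal D_{\mathcal V^\infty}$. Suppose for contradiction there is $v^* \in \mathcal V^\infty$ with $v^*_i < p_i$ for all $i$. Every vertex of $F$ lies in some $W_i$ and so attains $p_i$ in coordinate $i$, hence $v^* \notin V(F)$. The interior of $F$ is a connected open disk disjoint from $v^*$ and from every edge of $G^\infty$, in particular disjoint from $P_0(v^*) \cup P_1(v^*) \cup P_2(v^*)$, so it lies in exactly one of the three open regions $R_j^\circ(v^*)$. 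All vertices of $F$ are therefore contained in $R_j(v^*)$; picking any $w \in W_j$ yields $R_j(w) \subseteq R_j(v^*)$ (Lemma~\ref{lem:regionss}.(i)) and hence $p_j = w_j \leq v^*_j$ (Lemma~\ref{lem:regionorder}), contradicting $v^*_j < p_j$.

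For maximality, if some $w \in \mathcal V^\infty$ satisfies $w \geq p$, then $w \geq v'$ for every $v' \in V(F)$. Since $\mathcal V^\infty$ is a set of pairwise incomparable elements (as established in the proof of Theorem~\ref{th:geodesic}), this forces $w = v'$ for every $v' \in V(F)$, which is impossible because $|V(F)| \geq 3$ (faces of $G^\infty$ are simple cycles since $G^\infty$ is 3-connected by Lemma~\ref{lem:essentially}). Hence no $w \in \mathcal V^\infty$ dominates $p$.

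The only delicate point is the topological claim that the interior of $F$ lies in a single $R_j^\circ(v^*)$, but this is just the observation that a connected open set avoiding the separating 1-skeleton falls into one component of the complement. All remaining work reduces to Lemma~\ref{lem:angle} (to extract the nonempty intervals $W_i$) together with Lemmas~\ref{lem:regionss} and~\ref{lem:regionorder} (to convert region-containments into coordinate inequalities), so I expect no further obstacles.
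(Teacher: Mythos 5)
Your argument that $p=\bigvee_{v\in F}v$ lies on the surface $\mathcal S_{\mathcal V^\infty}$ is correct and is essentially the paper's own: you rule out a vertex $v^*$ strictly dominated by $p$ in every coordinate by placing the face inside a single region $R_j(v^*)$ and invoking Lemmas~\ref{lem:regionss} and~\ref{lem:regionorder}; the extra topological care about the face interior falling into one $R_j^\circ(v^*)$ is fine, as is the identity $p_i=w_i$ for $w\in W_i$.

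The gap is in the maximality step. ``Maximal point of $\mathcal S_{\mathcal V^\infty}$'' has to be read as a point not dominated by any \emph{other point of the surface}: this is how the notion is used afterwards (in (D1*) there are exactly $f$ such points per period, and in (D3*) two distinct maximal points are declared incomparable), and under your reading the notion degenerates --- a point of $\mathcal D_{\mathcal V^\infty}$ that dominates some $u\in\mathcal V^\infty$ and is dominated by some $v\in\mathcal V^\infty$ forces $u=v$ by pairwise incomparability, so \emph{every} point of $\mathcal S_{\mathcal V^\infty}\setminus\mathcal V^\infty$ would be ``maximal''. Your third paragraph only excludes domination by elements of $\mathcal V^\infty$, which is automatic and says nothing about, say, a point further up an orthogonal arc. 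What is actually needed, and what the paper proves, is: if $\alpha\in\mathcal S_{\mathcal V^\infty}$ satisfies $\alpha\geq p$ with $\alpha_j>p_j$ for some $j$, then $\alpha$ strictly dominates the vertex $z^j\in W_j$ in all three coordinates and hence lies in the interior of $\mathcal D_{\mathcal V^\infty}$, a contradiction. This requires the \emph{strict} inequalities $p_{j-1}=(z^{j-1})_{j-1}>(z^j)_{j-1}$ and $p_{j+1}=(z^{j+1})_{j+1}>(z^j)_{j+1}$, which you never establish: they come from $R_{j\pm1}(z^j)\subsetneq R_{j\pm1}(z^{j\pm1})$ (strict because $F$ lies in $R_j(z^j)$ hence not in $R_{j\pm1}(z^j)$, while it does lie in $R_{j\pm1}(z^{j\pm1})$) combined with the strict part of Lemma~\ref{lem:regionorder}. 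Your first paragraph records only the non-strict inequalities $v'_i\leq w_i$, so this piece of the argument is genuinely missing.
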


\begin{proof}
  Let $F$ be a face of $G^{\infty}$.  For any vertex $u$ of $\mathcal
  V^\infty$, there exists a color $i$, such that the face $F$ is in
  the region $R_i(u)$. Thus for $v\in F$, we have $v\in R_i(u)$. By
  Lemma~\ref{lem:regionorder}, we have $v_i\leq u_i$ and so $F_i\leq
  u_i$.  So $F=\bigvee_{v\in F} v$ is a point of $\mathcal S_{\mathcal
    V^\infty}$.

  Suppose, by contradiction, that $F$ is not a maximal point of
  $\mathcal S_{\mathcal V^\infty}$. Then there is a point $\alpha \in
  \mathcal S_{\mathcal V^\infty}$ that dominates $F$ and for at least
  one coordinate $j$, we have $F_j < \alpha_j$.  By
  Lemma~\ref{lem:angle}, the angles at $F$ form, in counterclockwise
  order, nonempty intervals of $0$'s, $1$'s and $2$'s.  For each color,
  let $z^i$ be a vertex of $F$ with angle $i$. We have $F$ is in the
  region $R_i(z^i)$. So $z^{i-1}\in R_i(z^i)$ and by
  Lemma~\ref{lem:regionss}.(i), we have $R_i(z^{i-1})\subseteq
  R_i(z^i)$.  Since $F$ is in $R_{i-1}(z^{i-1})$, it is not in
  $R_i(z^{i-1})$ and thus $R_i(z^{i-1})\subsetneq R_i(z^i)$.  Then by
  Lemma~\ref{lem:regionorder}, we have $(z^{i-1})_i<(z^{i})_i$ and
   symmetrically $(z^{i+1})_i<(z^{i})_i$.
  So $F_{j-1}=(z^{j-1})_{j-1}>(z^j)_{j-1}$ and 
  $F_{j+1}>(z^j)_{j+1}$. Thus $\alpha$ strictly dominates $z^j$, a
  contradiction to $\alpha \in \mathcal S_{\mathcal V^\infty}$. Thus
  $F$ is a maximal point of $\mathcal S_{\mathcal V^\infty}$
\end{proof}

\begin{lemma}
\label{lem:dualorder}
   If two faces $A,B$ are such that $R_i^*(B)\subseteq R_i^*(A)$,
    then $A_i\leq B_i$.
\end{lemma}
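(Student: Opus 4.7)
The statement is the dual analogue of Lemma~\ref{lem:regionorder}, adapted to face-points rather than primal vertices. The plan is to reduce the claim to an assertion about primal vertices, using the characterization of face-coordinates obtained in Lemma~\ref{lem:maximalpoint}, and then conclude via the primal Lemma~\ref{lem:regionorder} together with the structural correspondence between primal and dual Schnyder woods given by Lemma~\ref{lem:dual}.

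The proof of Lemma~\ref{lem:maximalpoint} actually shows more than its statement: for any face $F$ of $G^\infty$ and any color $j$, the maximum $F_j=\max_{v\in F}v_j$ is attained precisely at the vertices of $F$ whose angle at $F$ is colored $j$ (by Lemma~\ref{lem:angle}, at least one such vertex exists). Writing $z_F^j$ for any such vertex, one has $F_j=(z_F^j)_j$. It therefore suffices, given $R_i^*(B)\subseteq R_i^*(A)$, to exhibit a vertex $w$ of $B$ with $R_i(z_A^i)\subseteq R_i(w)$: Lemma~\ref{lem:regionorder} then yields $A_i=(z_A^i)_i\leq w_i\leq B_i$.

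To produce such a $w$, I would first reduce to the case of adjacent faces. Since the dual orientation and coloring is itself a Schnyder wood of $G^*$ by Lemma~\ref{lem:dual}, the dual analogues of Lemma~\ref{lem:regionss}.(i) and~(ii) hold for $R_i^*$, and from $R_i^*(B)\subseteq R_i^*(A)$ one obtains a path $A=F_0,F_1,\ldots,F_k=B$ in $G^{\infty*}$, contained in $R_i^*(A)$, along which the dual regions are nested: $R_i^*(F_0)\supseteq R_i^*(F_1)\supseteq\cdots\supseteq R_i^*(F_k)$. It then suffices to treat the case of adjacent faces $F,F'$ sharing a primal edge $e=vv'$, with $R_i^*(F')\subseteq R_i^*(F)$. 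Using the dual Schnyder rules of Figure~\ref{fig:rule-dual} together with the angle description around $e$, one can locate the angle-$i$ vertex $z_F^i$ of $F$ and check that either one of $\{v,v'\}$, or the angle-$i$ vertex $z_{F'}^i$ of $F'$, lies in $R_i^{-1}$-position to $z_F^i$, i.e.\ satisfies $z_F^i\in R_i(w)$; one then concludes by Lemma~\ref{lem:regionss}.(i).

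The main obstacle is the case analysis at the adjacent-pairs step: the shared edge $e$ may be colored any of $\{0,1,2\}$ (not necessarily $i$), may be uni- or bi-directed, and the angle-$i$ vertices $z_F^i,z_{F'}^i$ need not be among $\{v,v'\}$. Each sub-configuration is determined by the angle labelings around $e$ induced by the Schnyder wood (as in the tables appearing in Section~\ref{sec:lemma}), and in each one the dual region boundary $P_{i-1}^*(F)\cup P_{i+1}^*(F)$ crosses $F$ in a specific way that constrains where $z_{F'}^i$ sits relative to $z_F^i$; combined with Lemma~\ref{lem:dual}.(i), which identifies the geometric position of dual $i$-lines with reversed primal $i$-lines, this yields the required containment $R_i(z_F^i)\subseteq R_i(w)$ in each case.
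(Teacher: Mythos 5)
Your endgame coincides with the paper's: reduce the inequality on face coordinates to the primal Lemma~\ref{lem:regionorder}, using the fact that $F_i=\max_{u\in F}u_i$ is controlled by the angle-$i$ vertex of $F$ (the paper in fact only needs that its witness vertex lies in $B$, so that $v_i\le B_i$ is automatic). But the middle of your argument has two genuine gaps. First, the nested chain $A=F_0,\ldots,F_k=B$ of adjacent faces with $R_i^*(F_0)\supseteq\cdots\supseteq R_i^*(F_k)$ does not follow from the dual analogues of Lemma~\ref{lem:regionss}.(i)--(ii): those lemmas say that every face lying in $R_i^*(A)$ has a nested region, not that some dual path from $A$ to $B$ is monotone for this nesting (two adjacent faces inside $R_i^*(A)$ need not have comparable regions). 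Such a chain can be constructed -- e.g.\ follow $P_{i+1}^*(A)$, along which the regions $R_i^*$ decrease, to the point where $P_i^*(B)$ exits $R_i^*(A)$, then walk $P_i^*(B)$ backwards to $B$ -- but this has to be proved, and it already requires knowing how monochromatic dual paths may cross each other.

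Second, and more seriously, the adjacent-faces step, which you yourself identify as the main obstacle, is not a local verification. The containment you need there, $z_F^i\in R_i(w)$ for some vertex $w$ of $F'$, concerns the regions $R_i(w)$, which are bounded by the infinite paths $P_{i-1}(w)$ and $P_{i+1}(w)$; the angle labels around the shared edge $e$ do not determine where those paths eventually travel relative to $z_F^i$, so no finite case analysis of the configurations at $e$ can establish it. What is actually required is the global crossing argument that the paper applies directly and once: take $v$ the angle-$i$ vertex of $B$, observe $v\in R_i^*(B)\subseteq R_i^*(A)$, and show in the simultaneous drawing $\widetilde{G^\infty}$ that $P_i(v)$ cannot leave $R_i^*(A)$ while $P_{i-1}(v)$ and $P_{i+1}(v)$ cannot cross $P_{i-1}^*(A)$ and $P_{i+1}^*(A)$ respectively; this forces the whole face $A$ into $R_i(v)$, whence $u_i\le v_i$ for every $u\in A$ by Lemmas~\ref{lem:regionss}.(i) and~\ref{lem:regionorder}, and $A_i\le v_i\le B_i$. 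Once that crossing argument is in hand, the reduction to adjacent faces buys you nothing, so you should drop it and argue globally as the paper does.
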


\begin{proof}
  Let $v\in B$ be a vertex whose angle at $B$ is labeled $i$. We have
  $v\in R^*_i(B)$ and so $v\in R^*_i(A)$. In $\widetilde{G^\infty}$,
  the path $P_{i}(v)$ cannot leave $R^*_i(A)$, the path $P_{i+1}(v)$
  cannot intersect $P_{i+1}(A)$ and the path $P_{i-1}(v)$ cannot
  intersect $P_{i-1}(A)$. Thus $P_{i+1}(v)$ intersect $P_{i-1}(A)$ and
  the path $P_{i-1}(v)$ cannot intersect $P_{i+1}(A)$.  So $A\in
  R_i(v)$. Thus for all $u\in A$, we have $u\in R_i(v)$, so
  $R_i(u)\subseteq R_i(v)$, and so $u_i\leq v_i$. Then $A_i=\max_{u\in
    A} u_i\leq v_i\leq \max_{w\in B} w_i=B_i$.
\end{proof}

\begin{theorem}
\label{th:dualgeodesic}
If $G$ is a toroidal map given with a Schnyder wood and each vertex of
$G^\infty$ is mapped on its region vector, then the mapping of each
face of $G^{\infty*}$ on the point $\bigvee _{v\in F} v$ gives a dual
geodesic embedding of $G$.
\end{theorem}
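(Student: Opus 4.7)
The plan is to mirror the proof of Theorem~\ref{th:geodesic}, with the roles of vertices and faces of $G^\infty$ exchanged. By Lemma~\ref{lem:dual} the induced orientation and coloring on $G^{\infty*}$ is a Schnyder wood, so each face $A$ of $G^\infty$ has well-defined dual regions $R_i^*(A)$, and Lemma~\ref{lem:dualorder} plays the role of Lemma~\ref{lem:regionorder} (with reversed monotonicity, since the map $F\mapsto \bigvee_{v\in F}v$ is order-reversing on dual region inclusion).

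For (D1*), Lemma~\ref{lem:maximalpoint} already sends each face to a maximal point of $\mathcal S_{\mathcal V^\infty}$. To prove injectivity, for distinct faces $A\neq B$ I would apply the analog of Lemma~\ref{lem:regionss}(iii) to the dual Schnyder wood to obtain colors $i,j$ with $R_i^*(A)\subsetneq R_i^*(B)$ and $R_j^*(B)\subsetneq R_j^*(A)$; a strict refinement of Lemma~\ref{lem:dualorder} (obtained by picking a primal vertex witnessing the strict dual region inclusion and invoking the strict part of Lemma~\ref{lem:regionorder}) then yields $A_i>B_i$ and $A_j<B_j$, so distinct faces have distinct and incomparable images. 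Surjectivity onto $\mathcal F_{\mathcal V^\infty}$ follows because every maximal point of a staircase surface must be realized as $\bigvee_{v\in F}v$ for some face $F$, by the local structure of a ceiling corner combined with the Schnyder property.

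For (D2*), given adjacent faces $A,B$ with shared primal edge $e=xy$, both endpoints satisfy $x,y\leq A\wedge B$, so $A\wedge B\in \mathcal D_{\mathcal V^\infty}$. If $A\wedge B$ were in the interior, some vertex $w$ would be strictly dominated by $A\wedge B$ in every coordinate; using Lemma~\ref{lem:regionss}(i),(iii) to locate $w$ with respect to the three regions around an endpoint of $e$ and applying Lemma~\ref{lem:regionorder}, we would obtain $w_k\geq \min(A_k,B_k)$ for some $k$, a contradiction. Property (D3*) is then established by matching the dual orthogonal arc at $A$ in direction $e_i$ with the dual edge $e^*=AB$ of color $i$: Lemma~\ref{lem:dualorder} applied along the dual region inclusions forced by $e^*$ yields the required coordinate equalities in directions $i\pm 1$ and a strict inequality in direction $i$, so the dual elbow from $A$ to $A\wedge B$ contains the arc. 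Finally, (D4*) dualizes the argument for (D4) in Theorem~\ref{th:geodesic}: a putative crossing of two dual elbow geodesics, occurring on a plane orthogonal to some axis $e_j$, would force a bi-directed dual path of two colors between distinct faces, contradicting Lemma~\ref{lem:nocontractiblecycleuniversal} applied to the dual Schnyder wood.

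The main obstacle is (D2*): the hypothetical vertex $w$ strictly below $A\wedge B$ must be located precisely with respect to $A$ and $B$ through the orientation and coloring of $e$, and a short case analysis (on whether $e$ is uni- or bi-directed and on its color) is required to reach the contradiction via Lemma~\ref{lem:regionorder}.
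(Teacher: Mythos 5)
Your overall architecture is the right one and matches the paper for (D2*) and (D3*): dualize via Lemma~\ref{lem:dual}, use the dual regions $R_i^*$ and Lemma~\ref{lem:dualorder} as the order-reversing analogue of Lemma~\ref{lem:regionorder}. (For (D2*) the paper actually works inside $\mathcal D^*_{\mathcal V^\infty}$, the downset of the maximal points, and locates the dual edge $AB$ in a region $R_i^*(C)$ of the offending maximal point $C$; your route through the primal downset also works, and is in fact simpler than you fear --- no case analysis on the orientation of $e$ is needed, since if a vertex $w$ satisfied $w<A\wedge B$ then the face $A$ would lie in some region $R_i(w)$ and Lemma~\ref{lem:regionorder} would give $A_i\leq w_i<A_i$, exactly as in Lemma~\ref{lem:maximalpoint}.)

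There are, however, two genuine gaps. First, your (D4*) argument is wrong as stated: the bi-directed dual path $P^*$ (colored $2$ from $A$ to $A'$ and $0$ from $A'$ to $A$) is a \emph{path}, not a directed cycle, so it does not contradict Lemma~\ref{lem:nocontractiblecycleuniversal}. In the paper that lemma is invoked only for the sub-step showing $P_2(B)$ cannot contain $A'$; the actual contradiction comes from combining the path with the coordinates of the other two endpoints: one shows $A'\in R_0^{\circ}(B)$, hence $B'\in R_0^*(B)$, hence $B'_0\geq B_0$ by Lemma~\ref{lem:dualorder}, against $B'_0<B_0$ from the crossing configuration. Second, your (D1*) is incomplete. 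The surjectivity claim (``every maximal point of a staircase surface must be realized as $\bigvee_{v\in F}v$ for some face'') is asserted, not proved, and your injectivity argument rests on a \emph{strict} form of Lemma~\ref{lem:dualorder}, which is not established and is delicate: the proof of that lemma bounds $A_i$ by $v_i$ for a vertex $v$ of $B$ with angle label $i$, and when $v$ also lies on $A$ and realizes the maximum in coordinate $i$ for both faces, strictness can fail coordinate-wise. The paper sidesteps both issues with a counting argument: there are $3m-3n=3f$ dual orthogonal arcs per period, each maximal point carries exactly three of them and no arc is shared, so there are exactly $f$ maximal points, which together with Lemma~\ref{lem:maximalpoint} yields the bijection. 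That counting step is the essential missing idea in your (D1*).
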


\begin{proof}
  By Lemmas~\ref{lem:periodic} and~\ref{lem:notcolinear}, the mapping
  is periodic with respect to non collinear vectors.

  (D1*) Consider a counting of elements on the orthogonal surface,
  where we count two copies of the same object just once (note that we
  are on an infinite and periodic object).  We have that the sum of
  primal orthogonal arcs plus dual ones is exactly $3m$. There are
  $3n$ primal orthogonal arcs and thus there are $3m-3n=3f$ dual
  orthogonal arcs.  Each maximal point of $\mathcal S_{\mathcal
    V^\infty}$ is incident to $3$ dual orthogonal arcs and there is no
  dual orthogonal arc incident to two distinct maximal points. So
  there is $f$ maximal points. Thus by Lemma~\ref{lem:maximalpoint},
  we have a bijection between faces of $G^\infty$ and maximal points
  of $\mathcal S_{\mathcal V^\infty}$.

  Let $\mathcal V^{\infty*}$ be the maximal points of $\mathcal
  S_{\mathcal V^\infty}$.  Let $\mathcal D^*_{\mathcal V^\infty}
  =\{A\in\mathbb R^3\ |\ \exists\, B\in \mathcal V^{\infty*}$ such
  that $A\leq B\}$. Note that the boundary of $\mathcal D^*_{\mathcal V^\infty}$
  is $\mathcal S_{\mathcal V^\infty}$.

  (D2*) Let $e=AB$ be an edge of $G^{\infty*}$.  We show that
  $w=A\wedge B$ is on the surface $S_{\mathcal V^\infty}$.  By
  definition $w$ is in $\mathcal D^*_{\mathcal V^\infty}$.  Suppose,
  by contradiction that $w\notin S_{\mathcal V^\infty}$. Then there
  exist $C$ a maximal point of $S_{\mathcal V^\infty}$ with $w<C$.  By
  the bijection (D1*) between maximal point and vertices of
  $G^{\infty*}$, the point $C$ corresponds to a vertex of
  $G^{\infty*}$, also denoted $C$.  Edge $e$ is in a region $R_i^*(C)$
  for some $i$. So $A,B\in R_i^*(C)$ and thus, by
  Lemma~\ref{lem:regionss}.(i), $R_i^*(A)\subseteq R_i^*(C)$ and
  $R_i^*(B)\subseteq R_i^*(C)$. Then by Lemma~\ref{lem:dualorder}, we
  have $C_i\leq \min(A_i,B_i)=w_i$, a contradiction.  Thus the dual
  elbow geodesic between $A$ and $B$ is also on the surface.

  (D3*) Consider a vertex $A$ of $G^{\infty*}$ and a color $i$.  Let
  $B$ be the extremity of the arc $e_i(A)$. We have $B\in
  R_{i-1}^*(A)$ and $B\in R_{i+1}^*(A)$, so by
  Lemma~\ref{lem:regionss}.(i), $R_{i-1}^*(B)\subseteq R_{i-1}^*(A)$
  and $R_{i+1}^*(B)\subseteq R_{i+1}^*(A)$. So by
  Lemma~\ref{lem:dualorder}, $A_{i-1}\leq B_{i-1}$ and $A_{i+1}\leq
  B_{i+1}$. As $A$ and $B$ are distinct maximal point of $\mathcal
  S_{\mathcal V^\infty}$, they are incomparable, thus $B_i<A_i$.  So
  the dual orthogonal arc of vertex $A$ in direction of the basis
  vector $e_i$ is part of edge $e_i(A)$.

  (D4*) Suppose there exists a pair of crossing edges $e=AB$ and
  $e'=A'B'$ of $G^{\infty*}$ on the surface $S_{\mathcal
    V^\infty}$. The two edges $e,e'$ cannot intersect on orthogonal
  arcs so they intersects on a plane orthogonal to one of the
  coordinate axis. Up to symmetry we may assume that we are in the
  situation $A_1=A'_1$, $A'_0>A_0$ and $B'_0<B_0$. Between $A$ and
  $A'$, there is a path consisting of orthogonal arcs only. With
  (D3*), this implies that there is a bi-directed path $P^*$ colored
  $2$ from $A$ to $A'$ and colored $0$ from $A'$ to $A$. We have $A\in
  R_0(B)$, so by Lemma~\ref{lem:regionss}.(i), $R_0(A)\subseteq
  R_0(B)$.  We have $A'\in R_0(A)$, so $A'\in R_0(B)$.  If $P_2(B)$
  contains $A'$, then there is a contractible cycle containing
  $A,A',B$ in $G_1^*\cup G_{0}^{*-1}\cup G_{2}^{*-1}$, contradicting
  Lemma~\ref{lem:nocontractiblecycle}, so $P_2(B)$ does not contain
  $A'$.  If $P_1(B)$ contains $A'$, then $A'\in P_1(A)\cap P_2(A)$,
  contradicting Lemma~\ref{lem:nocommon}.  So $A'\in
  R_0^\circ(B)$. Thus the edge $A'B'$ implies that $B'\in R_0(B)$. So
  by Lemma~\ref{lem:dualorder}, $B'_0\geq B_0$, a contradiction.
\end{proof}

Theorems~\ref{th:geodesic} and~\ref{th:dualgeodesic} can be combined to
obtain a simultaneous representation of a Schnyder wood and its dual
on an orthogonal surface. The projection of this 3-dimensional object
on the plane of equation $x+y+z=0$ gives a representation of the
primal and the dual where edges are allowed to have one bend and two
dual edges have to cross on their bends (see example of
Figure~\ref{fig:example-primal-dual}). 

\begin{theorem}
\label{cor:primal-dual}
An essentially 3-connected toroidal map admits a simultaneous flat
torus representation of the primal and the dual where edges are
allowed to have one bend and two dual edges have to cross on their
bends. Such a representation is contained in a (triangular) grid of
size $\mathcal O(n^2f)\times \mathcal O(n^2f)$ in general and
$\mathcal O(nf)\times\mathcal O(nf)$ if the map is a simple
triangulation. Furthermore the length of the edges are in $\mathcal
O(nf)$.
\end{theorem}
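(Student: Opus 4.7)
The plan is to combine Theorems~\ref{th:geodesic} and~\ref{th:dualgeodesic} with an orthogonal projection onto the plane $\Pi$ of equation $x+y+z=0$, and then to control the dimensions using Lemmas~\ref{th:gamma} and~\ref{lem:edgesbounded}. First I would invoke Theorem~\ref{th:existence} to obtain a Schnyder wood of $G$: in the general case any such wood suffices, while for a simple toroidal triangulation I would use the special Schnyder wood produced by Theorem~\ref{th:schnydersimple}, for which the three colour classes of monochromatic cycles pairwise intersect exactly once, so that $\gamma=1$ in the notation of Lemma~\ref{th:gamma}. Set $N=n$.

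Next, apply Theorem~\ref{th:geodesic} to embed $G^\infty$ on the orthogonal surface $\mathcal S_{\mathcal V^\infty}$, and Theorem~\ref{th:dualgeodesic} to embed $G^{\infty *}$ on the same surface. Each primal edge and each dual edge is then an elbow geodesic made of two orthogonal arcs, and by (D4) and (D4*) the only crossings between a primal and a dual edge occur on their bends. Projecting $\mathcal S_{\mathcal V^\infty}$ orthogonally onto $\Pi$ maps the three axis directions to the three directions of a triangular grid, so each orthogonal arc becomes a straight segment in one of these three directions; the projection is injective on $\mathcal V^\infty \cup \mathcal F_{\mathcal V^\infty}$ because every two points of $\mathcal S_{\mathcal V^\infty}$ are incomparable with respect to~$\geq$. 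Consequently, the projection yields a straight-line drawing of the primal and of the dual in which every edge has exactly one bend, and primal/dual edges still cross only at their bends.

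For the grid size, Lemma~\ref{th:gamma} provides two non-collinear periodicity vectors $Y, Y'$ of norm $\mathcal O(\gamma N f)=\mathcal O(\gamma n f)$; their images under the projection span a fundamental parallelogram of the periodic drawing on $\Pi$, which fits inside a triangular grid of the same order. In general $\gamma\leq n$, giving $\mathcal O(n^2 f)\times \mathcal O(n^2 f)$; for a simple toroidal triangulation equipped with the Schnyder wood from Theorem~\ref{th:schnydersimple} we have $\gamma=1$, giving $\mathcal O(nf)\times \mathcal O(nf)$. The flat torus representation is then obtained by restricting the periodic drawing to this fundamental parallelogram. Finally, Lemma~\ref{lem:edgesbounded} bounds $|u_i-v_i|$ by $2Nf=\mathcal O(nf)$ for every primal edge $uv$; since each primal edge projects to two grid-direction segments, its projected length is $\mathcal O(nf)$, and an analogous coordinate-difference bound (via Lemma~\ref{lem:dualorder} applied to the neighbours of a face) yields the same $\mathcal O(nf)$ bound for dual edges.

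I expect the main obstacle to be checking that the projection preserves the non-crossing structure: one must verify that no two primal edges cross after projection, and likewise for dual edges, and that a primal/dual crossing in the plane can only occur at a common bend. This is precisely what (D4) and (D4*) guarantee on $\mathcal S_{\mathcal V^\infty}$, but transporting this to $\Pi$ requires the incomparability of points of $\mathcal S_{\mathcal V^\infty}$ so that the projection is a homeomorphism of the orthogonal surface onto $\Pi$; once this is established, the absence of spurious crossings in $\Pi$ is immediate from the absence of crossings on the surface.
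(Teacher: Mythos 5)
Your proposal is correct and follows essentially the same route as the paper's proof: obtain a Schnyder wood via Theorem~\ref{th:existence} (or Theorem~\ref{th:schnydersimple} in the simple case), combine Theorems~\ref{th:geodesic} and~\ref{th:dualgeodesic}, project onto the plane $x+y+z=0$, and bound the grid and edge lengths via Lemma~\ref{th:gamma} and Lemma~\ref{lem:edgesbounded}. The extra care you take about injectivity of the projection and the dual edge lengths only makes explicit what the paper leaves implicit.
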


\begin{proof}
  Let $G$ be an essentially 3-connected toroidal map.  By
  Theorems~\ref{th:existence} (or Theorem~\ref{th:schnydersimple} if
  $G$ is a simple triangulation), $G$ admits a Schnyder wood (where
  monochromatic cycles of different colors intersect just once if $G$
  is simple).  By Theorems~\ref{th:geodesic}
  and~\ref{th:dualgeodesic}, the mapping of each vertex of $G^\infty$
  on its region vector gives a primal and dual geodesic
  embedding. Thus the projection of this embedding on the plane of
  equation $x+y+z=0$ gives a representation of the primal and the dual
  of $G^\infty$ where edges are allowed to have one bend and two dual
  edges have to cross on their bends.

  By Lemma~\ref{th:gamma}, the obtained mapping is a periodic
  mapping of $G^\infty$ with respect to non collinear vectors $Y$ and
  $Y'$ where the size of $Y$ and $Y'$ is in $\mathcal O(\gamma Nf)$,
  with $\gamma\leq n$ in general and $\gamma=1$ in case of a simple
  triangulation. Let $N=n$. The embedding gives a representation in
  the flat torus of sides $Y,Y'$ where the size of the vectors $Y$ and
  $Y'$ is in $\mathcal O(n^2f)$ in general and in $\mathcal O(nf)$ if
  the graph is simple and the Schnyder wood is obtained by
  Theorem~\ref{th:schnydersimple}.  By Lemma~\ref{lem:edgesbounded}
  the length of the edges in this representation are in $\mathcal
  O(nf)$.
\end{proof}

\begin{figure}[h!]
\center
\includegraphics[scale=0.2]{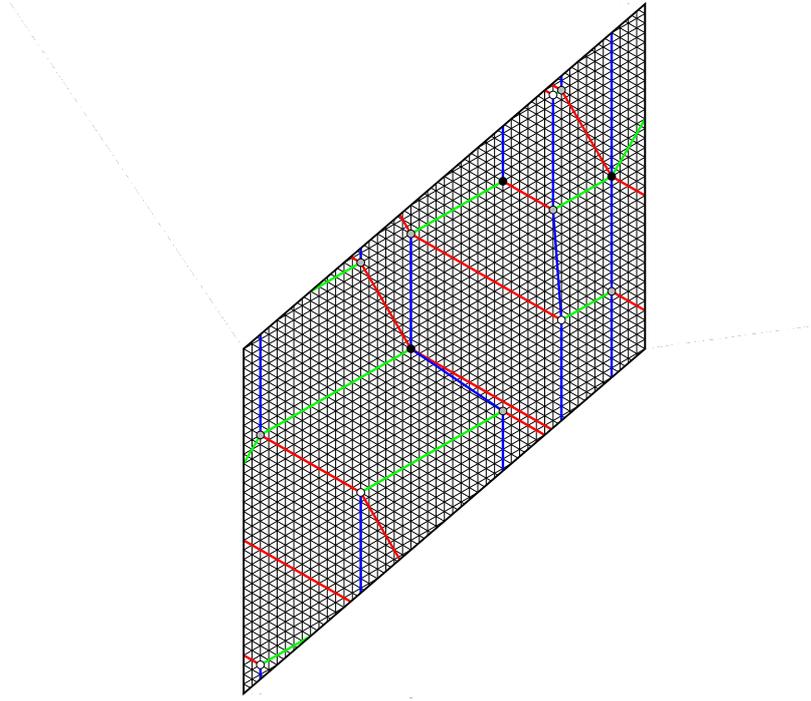}
\caption{Simultaneous representation of the primal and the dual of the
  toroidal map of Figure~\ref{fig:example-schnyder} with edges having
  one bend (in gray).}
\label{fig:example-primal-dual}
\end{figure}

\section{Straight-line representation of toroidal maps}
\label{sec:straight} 

The geodesic embedding obtained by the region vector method can be
used to obtain a straight-line representation of a toroidal map (see
Figure~\ref{fig:example-droit-primal}). For this purpose, we have to
choose $N$ bigger than previously.  Note that
Figure~\ref{fig:example-droit-primal} is the projection of the
geodesic embedding of Figure~\ref{fig:example-primal} obtained with
the value of $N=n$. In this particular case this gives a straight-line
representation but in this section we only prove that such a technique
works for triangulations and for $N$ sufficiently large. To obtain a
straight-line representation of a general toroidal map, one first has
to triangulate it.

\begin{figure}[h!]
\center
\includegraphics[scale=0.2]{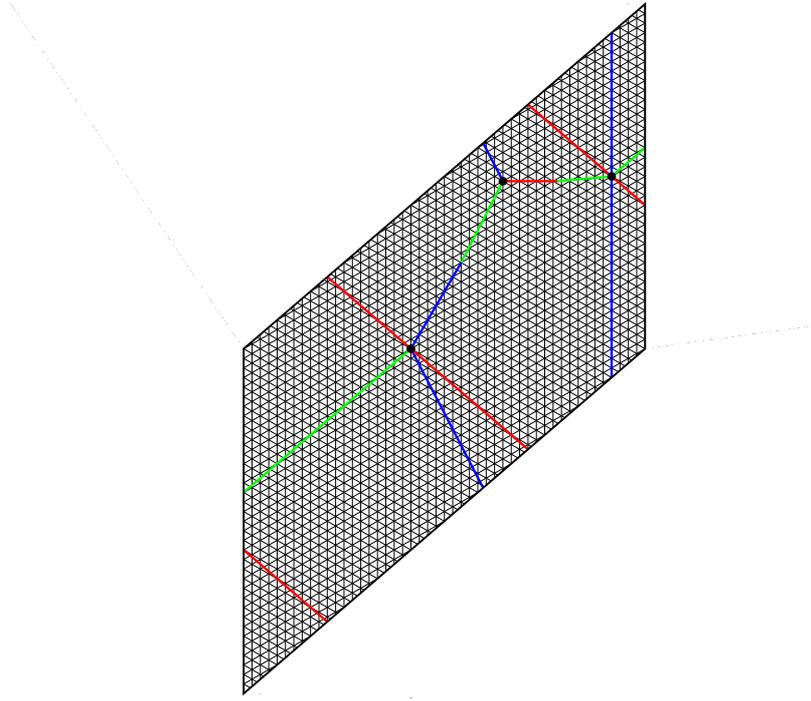}
\caption{Straight-line representation of the graph of
  Figure~\ref{fig:example-schnyder} obtained by projecting the
  geodesic embedding  of Figure~\ref{fig:example-primal}}
\label{fig:example-droit-primal}
\end{figure}

Let $G$ be a toroidal triangulation given with a Schnyder wood and
$V^\infty$ the set of region vectors of vertices of $G^\infty$.  The
Schnyder wood is of Type 1 by Theorem~\ref{lem:type}.  Recall that
$\gamma_i$ is the integer such that two monochromatic cycles of $G$ of
colors $i-1$ and $i+1$ intersect exactly $\gamma_i$ times.

\begin{lemma}
\label{lem:boundedtriangle}
For any vertex $v$, the number of faces in the bounded region
delimited by the three lines $L_{i}(v)$ is strictly less than
$(5\min(\gamma_i)+ \max(\gamma_i))f$.
\end{lemma}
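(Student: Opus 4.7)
By Lemma~\ref{lem:sum}, the number of faces in the bounded region $T$ delimited by $L_0(v), L_1(v), L_2(v)$ equals $\sum_i d_i(v, z_i(v))$, so it suffices to bound this sum. The three paths $P_0(v), P_1(v), P_2(v)$ partition $T$ into three sub-regions $T'_i = T\cap R_i(v)$, each one a quadrilateral bounded by the initial portions of $P_{i-1}(v)$ and $P_{i+1}(v)$ (from $v$ to the first meeting points $w_{i-1}, w_{i+1}$ with $L_{i-1}(v)$ and $L_{i+1}(v)$) together with the segments of $L_{i-1}(v), L_{i+1}(v)$ joining $w_{i\pm 1}$ to the common corner $z_i(v)$. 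As in the argument of Lemma~\ref{lem:sum}, one sees that $|T'_i| = d_i(v, z_i(v))$, so $|T| = \sum_i |T'_i|$.

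The plan is to bound each $|T'_i|$ by a period-counting argument. First, since there is no monochromatic cycle in $G^\infty$ by (U2), the sub-path of $P_{i\pm 1}(v)$ from $v$ to $w_{i\pm 1}$ is simple and therefore contains at most one lift of each vertex of $G$; in particular it spans less than one full period of $L_{i\pm 1}$. Second, I would argue that the segment of $L_{i-1}(v)$ from $w_{i-1}$ to $z_i(v)$ and the segment of $L_{i+1}(v)$ from $z_i(v)$ to $w_{i+1}$ each span at most a bounded number of periods of the corresponding monochromatic line: this follows because $z_i(v)$ is the \emph{first} intersection of $L_{i-1}(v)$ with $L_{i+1}(v)$ reachable from $w_{i-1}$ (respectively $w_{i+1}$), combined with the Schnyder property (T1) applied to every vertex along the monochromatic lines.

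Third, I would combine these length bounds with the observation that a parallelogram spanned by one period of $L_{i-1}$ and one period of $L_{i+1}$ contains exactly $\gamma_i f$ faces of $G^\infty$: this is because the torus homology classes of $C_{i-1}$ and $C_{i+1}$ have algebraic intersection number $\gamma_i$, so the corresponding spanned lattice parallelogram covers $\gamma_i$ fundamental domains of $G^\infty$. Hence $|T'_i|$ is at most a small integer multiple of $\gamma_i f$. The asymmetric form $(5\min(\gamma_i)+\max(\gamma_i))f$ arises because, after summing the three contributions, in two of the directions the sub-region is confined to a ``thin'' strip (contributing $\min_i \gamma_i$ factors) whereas in the remaining direction the triangle is most stretched (contributing a $\max_i \gamma_i$ factor).

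The main obstacle is getting the exact constants $5$ and $1$: this requires the detailed combinatorial analysis of how the merging points $w_{i\pm 1}$ and the corner $z_i(v)$ are positioned on the monochromatic lines, together with a symmetric version of the estimate in Lemma~\ref{lem:regionslongue} applied to the corners of $T'_i$. This technical case analysis is what forces the specific asymmetric form of the bound rather than a more naive $\mathcal{O}(\gamma\, f)$ estimate.
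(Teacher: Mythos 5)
Your plan correctly identifies one key ingredient the paper also uses: a fundamental parallelogram of the grid formed by one period of the $(i-1)$-lines and one period of the $(i+1)$-lines contains exactly $\gamma_i f$ faces, so the problem reduces to counting how many such cells the region $T$ can meet. But that reduction is precisely where your argument stops being a proof. The step ``the segment of $L_{i-1}(v)$ from $w_{i-1}$ to $z_i(v)$ spans at most a bounded number of periods, by (T1) applied along the lines'' is asserted, not proved, and it is the entire content of the lemma: without an explicit bound on how far the third line $L_1(v)$ (equivalently, the corners $z_i(v)$) can wander across the $(L_0,L_2)$-grid, you get no bound at all, let alone $(5\min(\gamma_i)+\max(\gamma_i))f$. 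The paper's proof does exactly this missing work: it takes $\gamma_1=\min_i\gamma_i$, builds the grid of lines $L_0(k),L_2(j)$ with cells $S(j,k)$ of size $\gamma_1 f$, replaces $L_1(v)$ by the first $1$-line $L'_1$ crossing the grid edge $P(1,1)$ (so that $T\subseteq T'$, the region bounded by $L_0,L'_1,L_2$), proves that $L'_1$ must exit the cell $S(1,1)$ through a $2$-line, and then shows $L'_1$ cannot reach both $S(0,3)$ and $S(3,0)$ because superposing the two traversals in a single cell would produce two crossing $1$-lines, contradicting Lemma~\ref{lem:allhomotopic}. That crossing argument is what confines the ``bulk'' of $T'$ to five cells (the $5\gamma_1 f$ term), and a separate argument confines the remaining thin tail between two consecutive $2$-lines, where it can cross at most $\gamma_2$ consecutive $0$-lines (the $\max(\gamma_i)f$ term). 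None of these ideas appears in your plan, and you explicitly concede you cannot produce the constants.

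Two smaller points. First, your justification that the initial subpath of $P_{i\pm1}(v)$ contains at most one lift of each vertex of $G$ ``because it is simple'' is wrong: a simple path in $G^\infty$ can contain many lifts of the same vertex. The correct reason (used in Lemma~\ref{lem:regionslongue}) is that two lifts of the same vertex on a monochromatic path would project to a monochromatic cycle of $G$, forcing the path to have already merged into $L_{i\pm1}(v)$. Second, the identity $|T\cap R_i(v)|=d_i(v,z_i(v))$ needs care: $d_i$ is a \emph{signed} difference of two region sizes (the regions $d_i^+$ and $d_i^-$ of Figure~\ref{fig:region}), and only the sum $\sum_i d_i(v,z_i(v))=|T|$ is established by Lemma~\ref{lem:sum}; bounding each summand separately is not the same as bounding each $|T\cap R_i(v)|$, and in any case the paper bounds $|T|$ directly by covering an enclosing region rather than by splitting along the paths $P_i(v)$.
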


\begin{proof}
  Suppose by symmetry that $\min(\gamma_i)=\gamma_1$.  Let $L_i=L_i(v)$
  and $z_i=z_i(v)$. Let $T$ be the bounded region delimited by the three
  monochromatic lines $L_{i}$.  The boundary of $T$ is a cycle $C$
  oriented clockwise or counterclockwise. Assume that $C$ is oriented
  counterclockwise (the proof is similar if oriented clockwise).  The
  region $T$ is on the left sides of the lines $L_i$.  We have
  $z_{i-1}\in P_{i}(z_{i+1})$.

  We define, for $j,k \in \mathbb{N}$, monochromatic lines $L_2(j)$,
  $L_0(k)$ and vertices $z(j,k)$ as follows (see
  Figure~\ref{fig:triangleprooff}).  Let $L_2(1)$ be the first
  $2$-line intersecting $L_0\setminus\{z_1\}$ while walking from
  $z_1$, along $L_0$ in the direction of $L_0$.  Let $L_0(1)$ be the
  first $0$-line of color $0$ intersecting $L_2\setminus\{z_1\}$ while
  walking from $z_1$, along $L_2$ in the reverse direction of $L_2$.
  Let $z(1,1)$ be the intersection between $L_2(1)$ and $L_0(1)$.  Let
  $z(j,1)$, $j\geq 0$, be the consecutive copies of $z(1,1)$ along
  $L_0(1)$ such that $z(j+1,1)$ is after $z(j,1)$ in the direction of
  $L_0(1)$. Let $L_2(j)$, $j\geq 0$, be the $2$-line of color $2$
  containing $z(j,1)$. Note that we may have $L_2=L_2(0)$ but in any
  case $L_2$ is between $L_2(0)$ and $L_2(1)$.  Let $z(j,k)$, $k\geq
  0$, be the consecutive copies of $z(j,1)$ along $L_2(j)$ such that
  $z(j,k+1)$ is after $z(j,k)$ in the reverse direction of $L_2(j)$.
  Let $L_0(k)$, $k\geq 0$, be the $0$-line containing $z(1,k)$. Note
  that we may have $L_0=L_0(0)$ but in any case $L_0$ is between
  $L_0(0)$ and $L_0(1)$.  Let $S(j,k)$ be the region delimited by
  $L_2(j),L_2(j+1),L_0(k),L_0(k+1)$. All the region $S(j,k)$ are
  copies of $S(0,0)$. The region $S(0,0)$ may contain several copies
  of a face of $G$ but the number of copies of a face in $S(0,0)$ is
  equal to $\gamma_1$.  Let $R$ be the unbounded region situated on
  the right of $L_0(1)$ and on the right of $L_2(1)$. As $P_0(v)$
  cannot intersect $L_0(1)$ and $P_2(v)$ cannot intersect $L_2(1)$,
  vertex $v$ is in $R$.  Let $P(j,k)$ be the subpath of $L_0(k)$
  between $z(j,k)$ and $z(j+1,k)$. All the lines $L_0(k)$ are composed
  only of copies of $P(0,0)$.  The interior vertices of the path
  $P(0,0)$ cannot contains two copies of the same vertex, otherwise
  there will be a vertex $z(j,k)$ between $z(0,0)$ and $z(1,0)$. Thus
  all interior vertices of a path $P(j,k)$ corresponds to distinct
  vertices of $G$.

  The Schnyder wood is of Type 1, thus $1$-lines are crossing
  $0$-lines.  As a line $L_0(k)$ is composed only of copies of
  $P(0,0)$, any path $P(j,k)$ is crossed by a $1$-line. Let $L'_1$ be
  the first $1$-line crossing $P(1,1)$ on a vertex $x$ while walking
  from $z(1,1)$ along $L_0(1)$. By (T1), line $L'_1$ is not
  intersecting $R\setminus \{z(1,1)\}$. As $v\in R$ we have $L_1$ is
  on the left of $L'_1$ (maybe $L_1=L'_1$). Thus the region $T$ is
  included in the region $T'$ delimited by $L_0,L'_1,L_2$.

  Let $y$ be the vertex where $L'_1$ is leaving $S(1,1)$.  We claim
  that $y\in L_2(1) $.  Note that by (T1), we have $y\in L_2(1)\cup
  P(1,2)$.  Suppose by contradiction that $y$ is an interior vertex of
  $P(1,2)$. Let $d_x$ be the length of the subpath of $P(1,1)$ between
  $z(1,1)$ and $x$.  Let $d_y$ be the length of the subpath of
  $P(1,2)$ between $z(1,2)$ and $y$.  Suppose $d_y<d_x$, then there
  should be a distinct copy of $L'_1$ intersecting $P(1,1)$ between
  $z(1,1)$ and $x$ on a copy of $y$, a contradiction to the choice of
  $L'_1$. So $d_x\leq d_y$. Let $A$ be the subpath of $L'_1$ between
  $x$ and $y$. Let $B$ be the subpath of $P(1,1)$ between $x$ and the
  copy of $y$ (if $d_x=d_y$, then $B$ is just a single vertex).
  Consider all the copies of $A$ and $B$ between lines $L_2(1)$ and
  $L_2(2)$, they form an infinite line $L$ situated on the right of
  $L_2(1)$ that prevents $L'_1$ from crossing $L_2(1)$, a
  contradiction.

  By the position of $x$ and $y$. We have $L'_1$ intersects $S(0,1)$
  and $S(1,0)$. We claim that $L'_1$ cannot intersect both $S(0,3)$
  and $S(3,0)$.  Suppose by contradiction that $L'_1$ intersects both
  $S(0,3)$ and $S(3,0)$. Then $L'_1$ is crossing $S(0,2)$ without
  crossing $L_2(0)$ or $L_2(1)$.  Similarly $L'_1$ is crossing
  $S(2,0)$ without crossing $L_0(0)$ or $L_0(1)$. Thus by superposing
  what happen in $S(0,2)$ and $S(2,0)$ in a square $S(j,k)$, we have
  that there are two crossing $1$-lines, a contradiction. Thus $L'_1$
  intersects at most one of $S(0,3)$ and $S(3,0)$.

  Suppose that $L'_1$ does not intersect $S(3,0)$.  Then the part of
  $T'$ situated right of $L_0(2)$ (left part on
  Figure~\ref{fig:triangleprooff}) is strictly included in
  $(S(0,0)\cup S(1,0) \cup S(2,0)\cup S(0,1)\cup S(1,1))$.  Thus this
  part of $T'$ contains at most $5\gamma_1 f$ faces.  Now consider the
  part of $T'$ situated on the left of $L_0(2)$ (right part on
  Figure~\ref{fig:triangleprooff}).  Let $y'$ be the intersection of
  $L'_1$ with $L_2$. Let $Q$ be the subpath of $L'_1$ between $y$ and
  $y'$. By definition of $L_2(1)$, there are no $2$-lines between
  $L_2$ and $L_2(1)$. So $Q$ cannot intersect a $2$-line on one of its
  interior vertices.  Thus $Q$ is crossing at most $\gamma_2$
  consecutive $0$-lines (that are not necessarily lines of type
  $L_0(k)$). Let $L'_0$ be the $\gamma_2+1$-th consecutive $0$-line
  that is on the left of $L_0(2)$ (counting $L_0(2)$).  Then the part
  of $T'$ situated on the left of $L_0(2)$ is strictly included in the
  region delimited by $L_0(2),L'_0,L_2,L_2(1)$, and thus contains at
  most $\gamma_2$ copies of a face of $G$.  Thus $T'$ contains at most
  $(\gamma_2+5\gamma_1)f$ faces.

  Symmetrically if $L'_1$ does not intersect $S(0,3)$ we have that
  $T'$ contains at most $(\gamma_0+5\gamma_1)f$ faces.  Then in any
  case, $T'$ contains at most $(\max(\gamma_0,\gamma_2)+5\gamma_1)f$
  faces and the lemma is true.
\end{proof}

\begin{figure}[!h]
\center
\input{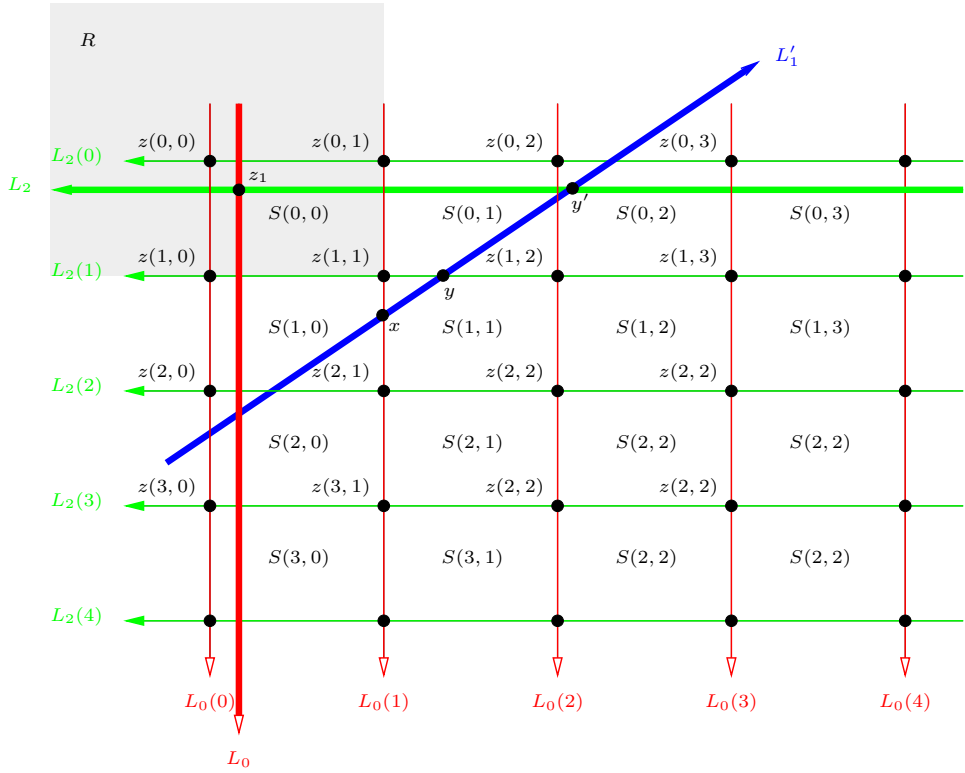}
\caption{Notations of the proof of Lemma~\ref{lem:boundedtriangle}.}
\label{fig:triangleprooff}
\end{figure}

The bound of Lemma~\ref{lem:boundedtriangle} is somehow sharp. In the
example of Figure~\ref{fig:exampletriangle}, the rectangle represent a
toroidal triangulation $G$ and the universal cover is partially
represented.  For each value of $k\geq 0$, there is a toroidal
triangulation $G$ with $n=4(k+1)$ vertices, where the gray region,
representing the region delimited by the three monochromatic lines
$L_{i}(v)$ contains $4\sum_{j=1}^{2k+1}+3(2k+2)=\Omega(n\times f)$
faces. Figure~\ref{fig:exampletriangle} represent such a triangulation
for $k=2$.

 \begin{figure}[!h]
 \center
 \includegraphics[scale=0.3]{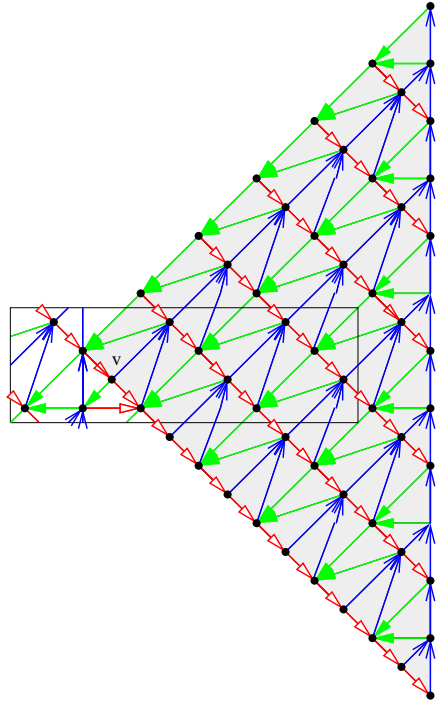}
 \caption{Example of a toroidal triangulation where the number of faces
   in the region delimited by the three monochromatic lines $L_{i}(v)$
   contains $\Omega(n\times f)$ faces.}
 \label{fig:exampletriangle}
 \end{figure} 

 For planar graphs the region vector method gives vertices that all lie on
 the same plane. This property is very helpful in proving that the
 position of the points on $P$ gives straight-line representations.  In the
 torus, things are more complicated as our generalization of the
 region vector method does not give coplanar points.  But
 Lemma~\ref{lem:sum} and~\ref{lem:boundedtriangle} show that all the
 points lie in the region situated between the two planes of equation
 $x+y+z=0$ and $x+y+z=t$, with $t=(5\min(\gamma_i)+
 \max(\gamma_i))f$. Note that $t$ is bounded by $6nf$ by
 Lemma~\ref{th:gamma} and this is independent from $N$. Thus from
 ``far away'' it looks like the points are coplanar and by taking $N$
 sufficiently large, non coplanar points are ``far enough'' from each
 other to enable the region vector method to give straight-line
 representations.  

 Let $N=t+n$.

\begin{lemma}
  \label{lem:shortregionedge}
  Let $u,v$ be two vertices such that $e_{i-1}(v)=uv$,
  $L_i=L_i(u)=L_i(v)$, and such that both $u$, $v$ are in the region
  $R(L_i,L'_i)$ for $L'_i$ a $i$-line consecutive to $L_i$. Then
  $v_{i+1}-u_{i+1}<|R(L_i,L'_i)|$ and $e_{i-1}(v)$ is going
  counterclockwise around the closed disk bounded by
  $\{e_{i-1}(v)\}\cup P_i(u)\cup P_i(v)$.
\end{lemma}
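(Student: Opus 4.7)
The plan is to recast $v_{i+1}-u_{i+1}$ as a concrete face count in $G^\infty$ and bound it by projecting to a quotient cylinder in $G$. Since $uv=e_{i-1}(v)$, vertex $u$ lies on $P_{i-1}(v)$, hence $P_{i-1}(u)\subseteq P_{i-1}(v)$ and $L_{i-1}(u)=L_{i-1}(v)$. Combined with the hypothesis $L_i(u)=L_i(v)=L_i$, the two $N$-scaled summands in the region-vector formulas for $v_{i+1}$ and $u_{i+1}$ cancel (taking $z_{i+1}(u)=z_{i+1}(v)=:z$ at a common vertex of $L_i\cap L_{i-1}(v)$, non-empty by Lemma~\ref{lem:linesintersection}), so $v_{i+1}-u_{i+1}=d_{i+1}(v,z)-d_{i+1}(u,z)$. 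By Lemma~\ref{lem:regionss}(i), $R_{i+1}(u)\subseteq R_{i+1}(v)$, and the difference equals $|R_{i+1}(v)\setminus R_{i+1}(u)|$, i.e.\ the number of $G^\infty$-faces in $D:=R_{i+1}(v)\setminus R_{i+1}(u)$. An inspection of boundaries (using $P_{i-1}(u)\subseteq P_{i-1}(v)$) identifies $D$ with the closed disk bounded by $e_{i-1}(v)$, the portion of $P_i(v)$ from $v$ to its first shared vertex $w$ with $P_i(u)$, and the portion of $P_i(u)$ from $u$ to $w$ (with $w$ on $L_i$, or earlier if the two paths meet before reaching $L_i$).

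The orientation claim follows from the angle labeling: at $v$, the angles inside $R_{i+1}(v)$ are labeled $i+1$, which is precisely the counterclockwise sector bounded by $e_{i-1}(v)$ and $e_i(v)$; hence $R_{i+1}(v)$ sits on the counterclockwise side of $e_{i-1}(v)$ traversed from $v$ to $u$. Since $D\subseteq R_{i+1}(v)$, $D$ lies on that same side, which is exactly the statement that $e_{i-1}(v)$ goes counterclockwise around $D$.

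For the strict bound, note that $D\subseteq R(L_i,L'_i)\cup L_i$ because the paths $P_i(v),P_i(u)$ cannot cross $L'_i$ (since $L_i$ is, by hypothesis, their first $i$-line). The closed strip is the $\mz$-cover of the cylindrical region $R(C_i^j,C_i^{j+1})\cup C_i^j$ in $G$. The boundary $\partial D$ is a simple closed curve in $G^\infty$ and hence null-homotopic there; since $C_i^j$ is non-contractible in $G$ by Lemma~\ref{lem:nocontractiblecycle}, the inclusion $\pi_1(\text{cylinder})\hookrightarrow\pi_1(G)$ is injective, so the projection $\pi(\partial D)$ is null-homotopic in the cylinder. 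Provided $\pi(\partial D)$ is simple, it bounds a topological disk $D'$ in the cylinder, and $\pi$ restricts to a degree-$1$, hence homeomorphic, map $D\to D'$, giving $|D|_{\text{faces}}=|D'|_{\text{faces}}$; as $D'$ is a topological disk inside an annulus, it is a proper subset of the cylinder, yielding $|D'|_{\text{faces}}<f_i^j$. The main obstacle I expect is verifying simplicity of $\pi(\partial D)$: each subarc projects injectively to $G$ (for $P_i(v)|_v^w$ and $P_i(u)|_u^w$ since $L_i$ is the first $i$-line they meet, so a repeat would create an $i$-cycle other than $C_i^j$ that $P_i(v)$ or $P_i(u)$ would reach earlier), and a short case analysis on whether $w$ lies before or on $L_i$, using the Schnyder property at $u$ and $v$, should rule out crossings between different subarcs and, when $w$ lies on $L_i$, force the shared $L_i$-segment from $w_v$ to $w_u$ to be shorter than one period of $C_i^j$.
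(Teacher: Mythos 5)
Your reduction is sound and agrees with the paper: since $u$ is the head of $e_{i-1}(v)$ you get $L_{i-1}(u)=L_{i-1}(v)$, the $N$-scaled terms cancel, and $v_{i+1}-u_{i+1}=d_{i+1}(v,u)=|R_{i+1}(v)\setminus R_{i+1}(u)|$, which is the number of faces of the closed disk $D$ bounded by $e_{i-1}(v)$ and the two initial segments $Q_v\subseteq P_i(v)$, $Q_u\subseteq P_i(u)$ up to their first common vertex. Your orientation argument is also correct and is a legitimate alternative to the paper's: you read off from the angle labeling that $R_{i+1}(v)$, hence $D$, lies locally in the counterclockwise sector from $e_{i-1}(v)$ to $e_i(v)$ at $v$, i.e.\ to the left of $e_{i-1}(v)$ traversed from $v$ to $u$; the paper instead supposes the boundary cycle is clockwise and derives a cycle in $G^\infty_{i+1}\cup (G_{i}^{\infty})^{-1}\cup (G_{i-1}^{\infty})^{-1}$, contradicting Lemma~\ref{lem:nocontractiblecycleuniversal}. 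Either works.

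The problem is the bound $|D|<f_i^j$, which is the actual content of the lemma, and there your argument is not a proof. Everything hinges on the claim that $\pi(\partial D)$ is a simple closed curve in $G$, which you explicitly leave as ``the main obstacle I expect.'' Your injectivity argument for the subarcs only shows that a repeated vertex of $G$ on $Q_u$ or $Q_v$ must lie on an $i$-cycle reached by that path, i.e.\ on $C_i^j$ itself; it therefore says nothing about the genuinely dangerous configuration, namely that $P_i(u)$ and $P_i(v)$ first meet only after both have joined $L_i$, so that $\partial D$ contains a segment of $L_i$ that could a priori span one or more full periods of $C_i^j$ --- which is exactly the situation in which $D$ would contain two copies of the same face and the bound would fail. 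Ruling this out is equivalent to the inequality you are trying to prove, so deferring it to ``a short case analysis'' that ``should'' work leaves the proof with a hole precisely at its core. For comparison, the paper closes this step by arguing directly that $D$ contains at most one copy of each face of $R(C_i^j,C_i^{j+1})$: two copies of the same face inside the strip between the consecutive $i$-lines $L_i,L'_i$ are separated by a monochromatic line transversal to the strip, and property (T1) constrains how such a line can cross the boundary of $D$ (which consists of two directed paths of color $i$, crossed by $(i\pm1)$-lines in only one direction each, plus the single edge $e_{i-1}(v)$), in the style of the counting in Lemma~\ref{lem:regionslongue}. Some argument of this kind --- controlling the monochromatic lines of the other two colors relative to $\partial D$ --- is what you need to establish either your simplicity claim or, directly, the at-most-one-copy claim; without it the covering-space framing (which is otherwise fine, including the derivation of strictness from the fact that a disk is a proper subset of the annulus) does not get off the ground.
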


\begin{proof}
  Let $y$ be the first
  vertex of $P_i(v)$ that is also in $P_i(u)$.  Let $Q_u$
  (resp. $Q_v$) the part of $P_i(u)$ (resp. $P_i(v)$) between $u$
  (resp. $v$) and $y$.

  Let $D$ be the closed disk bounded by the cycle $C=(Q_v)^{-1}\cup
  \{e_{i-1}(v)\}\cup Q_u$. If $C$ is going clockwise around $D$, then
  $P_{i+1}(v)$ is leaving $v$ in $D$ and thus has to intersect $Q_u$
  or $Q_v$. In both cases, there is a cycle in $G_{i+1}\cup
  (G_i)^{-1}\cup (G_{i-1})^{-1}$, a contradiction to
  Lemma~\ref{lem:nocontractiblecycleuniversal}.  So $C$ is going
  clockwise around $D$.

  As $L_i(u)=L_i(v)$ and $L_{i-1}(u)=L_{i-1}(v)$, we have
  $v_{i+1}-u_{i+1}=d_{i+1}(v,u)$ and this is equal to the number of
  faces in $D$.  We have $D\subsetneq R(L_i,L'_i)$.  Suppose $D$
  contains two copies of a given face. Then, these two copies are on
  different sides of a $1$-line. By property (T1), it is not possible
  to have a $1$-line entering $D$.  So $D$ contains at most one copy
  of each face of $R(L_i,L'_i)$.
\end{proof}

\begin{lemma}
\label{lem:vectoriel}
For any face $F$ of $G^\infty$, incident to vertices $u,v,w$ (given in
counterclockwise order around $F$), the cross product
$\overrightarrow{vw}\wedge \overrightarrow{vu}$ has strictly positive
coordinates.
\end{lemma}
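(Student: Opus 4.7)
The plan is to use the angle structure at $F$ together with Lemma~\ref{lem:regionorder} to derive strict coordinate inequalities for $u,v,w$, then show that these combine to give positivity in each cross-product coordinate. Since $G$ is a triangulation carrying a Schnyder wood of Type~1 (by Theorem~\ref{lem:type}), Lemma~\ref{lem:angle} gives that the three angles of $F$ are labeled with the three distinct colors $0,1,2$. By the cyclic symmetry of the statement (the cross product $\overrightarrow{vw}\wedge\overrightarrow{vu}$ is unchanged under cyclic shift of the apex in a triangle), I may assume $u,v,w$ carry angle labels $0,1,2$ in CCW order around $F$, so that $F\subseteq R_0(u)\cap R_1(v)\cap R_2(w)$. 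Then Lemma~\ref{lem:regionss}(i) gives the nonstrict inclusions $R_0(v),R_0(w)\subseteq R_0(u)$, $R_1(u),R_1(w)\subseteq R_1(v)$, $R_2(u),R_2(v)\subseteq R_2(w)$, and $F$ itself (whose interior lies in exactly one of $R_0(x),R_1(x),R_2(x)$ at each vertex $x$ of $F$) witnesses the symmetric difference for each, making each inclusion strict. Lemma~\ref{lem:regionorder} therefore yields the six strict inequalities $v_0,w_0<u_0$, $u_1,w_1<v_1$, and $u_2,v_2<w_2$, each with a $(N-n)\cdot(\text{positive integer})$ quantitative gap.

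These six inequalities alone are not enough, since for instance
\[(\overrightarrow{vw}\wedge\overrightarrow{vu})_1=(w_2-v_2)(u_0-v_0)-(w_0-v_0)(u_2-v_2)\]
has a strictly positive first term but a second term whose sign is not determined (the signs of $w_0-v_0$ and $u_2-v_2$ being a priori free). I would decompose each coordinate as $v_i=Ng_i(v)+d_i(v)$, with $g_i(v)=f_{i+1}(L_{i+1}(v))-f_{i-1}(L_{i-1}(v))$ the integer ``line index'' part and $d_i(v)=d_i(v,z_i(v))$ the local correction, and do a case analysis on the Schnyder configuration of each edge of $F$ (two possible configurations per edge since no bi-oriented edges exist in a triangulation, giving $2^3=8$ subcases) to determine which of the nine lines $L_j(\cdot)$ for $j\in\{0,1,2\}$ and $\cdot\in\{u,v,w\}$ coincide. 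In each subcase, the leading $N^2$-part of each cross-product coordinate is an integer multiple of the face count in a specific bounded region containing $F$, and is either strictly positive, in which case the $O(Nnf)$ lower-order corrections cannot flip the sign, or exactly zero, in which case the surviving $O(N)$-term picks up a positive gap from the strict quantitative form of Lemma~\ref{lem:regionorder}.

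The main obstacle will be closing the degenerate subcases where the $N^2$-leading part vanishes. This is precisely where the specific choice $N=t+n$ with $t=(5\min_i\gamma_i+\max_i\gamma_i)f$ enters: Lemma~\ref{lem:boundedtriangle} establishes $t$ as an upper bound on the number of faces in the triangular region delimited by the three $L_j$-lines through any common vertex (and hence through any vertex of $F$), and Lemma~\ref{lem:regionslongue} bounds the corrections $d_i$ by $(n-1)$ times similar face counts. The resulting $(N-n)=t$ slack from Lemma~\ref{lem:regionorder} is then exactly what is needed to beat the correction terms and force strict positivity of each cross-product coordinate in every subcase.
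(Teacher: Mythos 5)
Your overall route is the paper's route: reduce via the angle labeling (Lemma~\ref{lem:angle}) and cyclic symmetry to the case where $u,v,w$ carry the angle labels $0,1,2$, extract strict coordinate inequalities from Lemmas~\ref{lem:regionss}(i) and~\ref{lem:regionorder}, observe that one cross-product coordinate remains of undetermined sign, and resolve it by a case analysis on the orientations of the edges of $F$ together with the quantitative slack coming from the choice $N=t+n$. The first half of your proposal is correct and matches the paper, which after symmetry is left with two configurations and, in the second one, a single problematic coordinate $X_1=AB-CD$ with $A=w_2-v_2$, $B=u_0-v_0$, $C=v_0-w_0$, $D=v_2-u_2$ all strictly positive.

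The gap is in how you propose to close that coordinate. The assertion that the leading $N^2$-part of each cross-product coordinate is ``either strictly positive, in which case the $O(Nnf)$ lower-order corrections cannot flip the sign, or exactly zero, in which case the surviving $O(N)$-term picks up a positive gap'' is exactly the hard step, and as an order-of-magnitude claim it does not close: with $N=t+n=\Theta(nf)$, a product such as $CD$ in which one factor has vanishing $N$-leading part but a correction of generic size $\Theta(nf)$ (which is all that Lemma~\ref{lem:regionslongue} gives) and the other factor is up to $\Theta(Nf)$ by Lemma~\ref{lem:edgesbounded} can dominate the positive product $AB$ when $A$ is small. The paper needs two ingredients your sketch never invokes: (i) Lemma~\ref{lem:shortregionedge}, which for the specific edges of $F$ yields the much sharper bound $v_{i+1}-u_{i+1}<|R(L_i,L'_i)|$ (a single strip's face count, independent of $n$ and $N$), to be played against the gap $(N-n)|R(L_i,L'_i)|$ from Lemma~\ref{lem:regionorder}; and (ii) Lemma~\ref{lem:sum} combined with Lemma~\ref{lem:boundedtriangle}, i.e.\ the near-coplanarity $0\leq \sum_i x_i<t$, which is what controls the mixed quantities $B-D=(v_1-u_1)+\bigl(\sum_i u_i-\sum_i v_i\bigr)$ and $A-C=(v_1-w_1)+\bigl(\sum_i w_i-\sum_i v_i\bigr)$ appearing after the algebraic regrouping $X_1=A(B-D)+D(A-C)$ (respectively $B(A-C)+C(B-D)$). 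That regrouping and these two lemmas are the real content of the proof; without them, your bookkeeping of which lines coincide does not by itself force strict positivity in the degenerate subcases.
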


\begin{proof}
  Consider the angle labeling corresponding to the Schnyder wood.  By
  Lemma~\ref{lem:angle}, the angles at $F$ are labeled in
  counterclockwise order $0,1,2$.  As $\overrightarrow{uv}\wedge
  \overrightarrow{uw}=\overrightarrow{vw}\wedge
  \overrightarrow{vu}=\overrightarrow{wu}\wedge \overrightarrow{wv}$,
  we may assume that $u,v,w$ are such that $u$ is in the angle
  labeled $0$, vertex $v$ in the angle labeled $1$ and vertex $w$ in
  the angle labeled $2$.  The face $F$ is either a cycle completely
  directed into one direction or it has two edges oriented in one
  direction and one edge oriented in the other.
Let
$$\overrightarrow{X}=\overrightarrow{vw}\wedge
\overrightarrow{vu}= \begin{pmatrix}
  (w_1-v_1)(u_2-v_2)-(w_2-v_2)(u_1-v_1)\\
  -(w_0-v_0)(u_2-v_2)+(w_2-v_2)(u_0-v_0)\\
  (w_0-v_0)(u_1-v_1)-(w_1-v_1)(u_0-v_0)
\end{pmatrix}$$

By symmetry, we consider the following two cases:

 \begin{figure}[!h]
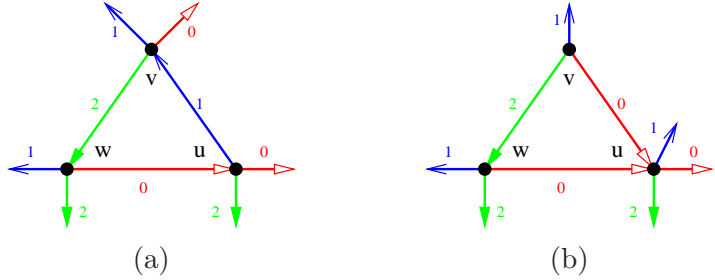

 \center
\begin{tabular}{ccc}
 \includegraphics[scale=.5]{vectoriel.eps}& \hspace{2em} &
 \includegraphics[scale=.5]{vectoriel-1.eps}\\
(a)& &(b) \\
\end{tabular}
\caption{(a) case 1 and (b) case 2 of the proof of Lemma~\ref{lem:vectoriel}}
 \label{fig:vectoriel}
 \end{figure} 

 \noindent $\bullet$ \emph{Case 1: the edges of the face $F$ are in
   counterclockwise order $e_1(u)$, $e_2(v)$, $e_0(w)$ (see
   Figure~\ref{fig:vectoriel}.(a)).}

 We have $v\in P_1(u)$, so $v\in R_0(u)\cap R_2(u)$ and $u\in
 R_1^{\circ}(v)$ (as there is no edges oriented in two direction). By
 Lemma~\ref{lem:regionss}, we have $R_0(v)\subseteq R_0(u)$ and
 $R_2(v)\subseteq R_2(u)$ and $R_1(u)\subsetneq R_1(v)$. In fact the
 first two inclusions are strict as $u\notin R_0(v)\cup R_2(v)$. So by
 Lemma~\ref{lem:regionorder}, we have $v_0<u_0$, $v_2<u_2$,
 $u_1<v_1$. We can prove similar inequalities for the other pairs of
 vertices and we obtain $w_0<v_0<u_0$, $u_1<w_1<v_1$,
 $v_2<u_2<w_2$. By just studying the signs of the different terms
 occurring in the value of the coordinates of $\overrightarrow{X}$, it
 is clear that $\overrightarrow{X}$ as strictly positive
 coordinates. (For the first coordinates, it is easier if written in
 the following form $X_0=(u_1-w_1)(v_2-w_2)-(u_2-w_2)(v_1-w_1)$.)

  \noindent $\bullet$ \emph{Case 2: the edges of the face $F$ are in
    counterclockwise order $e_0(v)$, $e_2(v)$, $e_0(w)$.(see
    Figure~\ref{fig:vectoriel}.(b)).}

  As in the previous case, one can easily obtain the following
  inequalities: $w_0<v_0<u_0$, $u_1<w_1<v_1$, $u_2<v_2<w_2$ (the only
  difference with case $1$ is between $u_2$ and $v_2$).  Exactly like
  in the previous case, it is clear to see that $X_0$ and $X_2$ are
  strictly positive.  But there is no way to reformulate $X_1$ to have
  a similar proof.  Let $A=w_2-v_2$, $B=u_0-v_0$, $C=v_0-w_0$
  and $D=v_2-u_2$, so $X_1=AB-CD$ and $A,B,C,D$ are all strictly
  positive.

  Vertices $u,v,w$ are in the region $R(L_1,L'_1)$ for $L'_1$ a
  $1$-line consecutive to $L_1$. We consider two cases depending on
  equality or not between $L_1(u)$ and $L_1(v)$.

\noindent\emph{$\star$ Subcase 2.1: $L_1(u)=L_1(v)$.}

We have $X_1=A(B-D)+D(A-C)$.  

We have $B-D=(u_0+u_2)-(v_0+v_2)=(v_1-u_1)+(\sum u_i - \sum
v_i)$. Since $u\in P_0(v)$, we have $L_0(u)=L_0(v)$.  Suppose that
$L_2(u)=L_2(v)$, then by Lemma~\ref{lem:sum}, we have $\sum u_i = \sum
v_i$, and thus $B-D=v_1-u_1>0$. Suppose now that $L_2(u)\neq
L_2(v)$. By Lemmas~\ref{lem:sum} and~\ref{lem:boundedtriangle}, $\sum
u_i - \sum v_i>-t$.  By Lemma~\ref{lem:regionorder}, $v_1-u_1>
(N-n)|R(L_2(u),L_2(v))|\geq N-n$. So $B-D>N-n-t\geq 0$.

We have $A-C=(w_0+w_2)-(v_0+v_2)=(v_1-w_1)+(\sum w_i - \sum v_i)>\sum
w_i - \sum v_i$.  Suppose that $L_1(v)=L_1(w)$, then by
Lemma~\ref{lem:sum}, we have $\sum v_i = \sum w_i$ and thus
$A-C=v_1-w_1>0$. Then $X_1>0$.  Suppose now that $L_1(v)\neq L_1(w)$.
By Lemma~\ref{lem:shortregionedge}, $D=v_2-u_2<|R(L_1,L'_1)|$.  By
Lemma~\ref{lem:regionorder}, $A=w_2-v_2>(N-n)|R(L_1,L'_1)|$.  By
Lemma~\ref{lem:sum} and~\ref{lem:boundedtriangle}, $\sum w_i - \sum
v_i> -t$, so $A-C>-t$.  Then $X_1>(N-n-t)|R(L_1,L'_1)|>0$.

\noindent\emph{$\star$ Subcase 2.2: $L_1(u)\neq L_1(v)$.}

We have $X_1=B(A-C)+C(B-D)$.  

Suppose that $L_1(w)\neq L_1(v)$. Then $L_1(w)= L_1(u)$. By
Lemma~\ref{lem:shortregionedge} $e_0(w)$ is going counterclockwise
around the closed disk $D$ bounded by $\{e_0(w)\}\cup P_1(w)\cup
P_1(u)$. Then $v$ is inside $D$ and $P_1(v)$ has to intersect
$P_1(w)\cup P_1(u)$, so $L_1(v)=L_1(u)$, contradiction our
assumption. So $L_1(v)= L_1(w)$.

By Lemma~\ref{lem:regionorder}, $B=u_0-v_0>(N-n)|R(L_1,L'_1)|$.  We
have $A-C=(w_0+w_2)-(v_0+v_2)=(v_1-w_1)+(\sum w_i - \sum v_i)$. By
Lemma~\ref{lem:sum}, we have $\sum v_i = \sum w_i$ and thus
$A-C=v_1-w_1>0$.  By (the symmetric of)
Lemma~\ref{lem:shortregionedge}, $C=v_0-w_0<|R(L_1,L'_1)|$.  By
Lemma~\ref{lem:sum} ~\ref{lem:boundedtriangle},
$B-D=(u_0+u_2)-(v_0+v_2)=(v_1-u_1)+(\sum u_i - \sum v_i)>-t$.  So
$X_1>(N-n-t)|R(L_1,L'_1)|>0$.
\end{proof}

Let $G$ be an essentially 3-connected toroidal map.  Consider a
periodic mapping of $G^{\infty}$ embedded graph $H$ (finite or
infinite) and a face $F$ of $H$. Denote $(f_1,f_2,\ldots,f_t)$ the
counterclockwise facial walk around $F$. Given a mapping of the
vertices of $H$ in $\mathbb{R}^2$, we say that $F$ is \emph{correctly
  oriented} if for any triplet $1\le i_1 < i_2 < i_3 \le t$, the
points $f_{i_1}$, $f_{i_2}$, and $f_{i_3}$ form a counterclockwise
triangle.  Note that a correctly oriented face is drawn as a convex
polygon.

\begin{lemma}
\label{th:faceorientation}
  Let $G$ be an essentially 3-connected toroidal map given with a
  periodic mapping of $G^{\infty}$ such that every face of
  $G^{\infty}$ is correctly oriented. This mapping gives a
  straight-line representation of $G^{\infty}$.
\end{lemma}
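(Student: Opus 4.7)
The plan is a local-to-global argument. I will first use the correct-orientation hypothesis to establish local correctness at each vertex and across each edge of the drawing, then propagate this to global planarity using periodicity together with essential 3-connectedness of $G$.

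First, at any vertex $v$ of $G^\infty$, two cyclically consecutive incident edges bound a common face $F$, which is drawn as a convex polygon with vertices in counterclockwise order by the correct orientation assumption. Hence the interior angle at $v$ inside $F$ lies in $(0,\pi)$. Summing over all faces around $v$ gives a total angle of exactly $2\pi$, which forces the cyclic order of edges at $v$ in the straight-line drawing to match the combinatorial cyclic order of $G^\infty$. Moreover, if two faces $F, F'$ share an edge $e=uv$, they traverse $e$ in opposite directions in their counterclockwise facial walks, so the two convex polygons lie on opposite sides of the segment $uv$. Together, these statements say the straight-line drawing is a local homeomorphism near each vertex, across each edge, and inside each face.

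Next, define the covering count $d:\mathbb{R}^2 \setminus E \to \mathbb{Z}_{\geq 0}$, where $E$ denotes the union of the drawn edge-segments, by $d(p) = \#\{F : p \in \mathrm{int}(F)\}$. The angle-sum equality around each vertex, combined with the two-sidedness across each edge, makes $d$ locally constant across every vertex and every edge, hence globally constant on $\mathbb{R}^2 \setminus E$. To identify the value of $d$, I would compare areas in one period parallelogram $P$ of the periodic mapping: by essential 3-connectedness (Lemma~\ref{lem:essentially}) the faces of $G^\infty$ tile the plane combinatorially, and each face $F$ contributes its positive (counterclockwise) area. Adding face areas over a set of representatives for one fundamental domain, and using the shoelace/Stokes calculation along the boundary of $P$ (where contributions cancel by periodicity), one obtains $\sum_F \mathrm{Area}(F) = \mathrm{Area}(P)$, whereas the same sum equals $d\cdot \mathrm{Area}(P)$ by the constancy of $d$. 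Therefore $d \equiv 1$, so every point not on an edge lies in the interior of exactly one face, which forbids any two edges from crossing. Restricting to a fundamental domain then gives the desired straight-line flat torus representation of $G$.

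The main obstacle is the area-identity step: one must carefully pick a fundamental domain and track which face-copies are counted, and verify that boundary contributions of $P$ cancel in pairs under the periodicity vectors. Once this is done, $d \equiv 1$ drops out and global planarity follows with no case analysis. The rest of the proof — local homeomorphism at vertices and two-sidedness at edges — is essentially immediate from the convexity and correct counterclockwise orientation of the faces.
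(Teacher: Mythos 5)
Your strategy is genuinely different from the paper's. The paper proceeds by induction on $n$: it first proves, via an angle count over a fundamental domain and Euler's formula, that every vertex has winding number $1$; it then deletes a vertex with no incident loop, re-triangulates the convex polygon spanned by its neighbours, applies induction, and reinserts the vertex (with a separate direct argument for the case where every vertex carries a loop). Your covering-number/degree argument avoids the induction and the case analysis entirely, and the area identity you need does hold: since the translation lattice is by definition generated by the two periodicity vectors, the induced map on first homology of the two tori is an isomorphism, so the degree is $\pm 1$, and positivity of the (counterclockwise) face areas pins it to $+1$; unwinding this via Stokes gives exactly $\sum_F \mathrm{Area}(F)=\mathrm{Area}(P)$ over a set of orbit representatives. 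Once $d\equiv 1$ is in hand you do get planarity, though you should still say a word about why it also excludes an edge segment passing through a vertex image and two vertices sharing an image, not only two edges crossing at interior points.

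The one step you present as immediate but which is not is the assertion that the angles around a vertex sum to exactly $2\pi$. Convexity and correct orientation of the incident faces only give that each angle lies in $(0,\pi)$, hence that the sum is $2k_v\pi$ for some winding number $k_v\ge 1$; nothing local excludes a drawing in which some link winds twice around its vertex. This is precisely the Claim inside the paper's own proof, and it needs the global computation $\sum_v k_v=m-f=n$ (Euler's formula on the torus) to conclude $k_v=1$ for every $v$. In your architecture you can either import that computation, or reorganize the logic: establish global constancy of $d$ using only generic paths that avoid vertex images and crossings, relying solely on the two-sidedness across edges (which you do prove correctly); then $d\equiv 1$ forces $k_v=1$ a posteriori. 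As written, however, the $2\pi$ claim is a gap, and the area identity, which you candidly flag as the main obstacle, is described but not carried out.
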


\begin{proof}
  We proceed by induction on the number of vertices $n$ of $G$. Note
  that the theorem holds for $n=1$, so we assume that $n>1$.  Given
  any vertex $v$ of $G$, let $(u_0,u_1,\ldots,u_{d-1})$ be the
  sequence of its neighbors in counterclockwise order (subscript
  understood modulo $d$). Every face being correctly oriented, for
  every $i\in [0,d-1]$ the oriented angle (oriented counterclockwise)
  $(\overrightarrow{vu_i},\overrightarrow{vu_{i+1}}) < \pi$.  Let the
  winding number $k_v$ of $v$ be the integer such that $2k_v\pi =
  \sum_{i\in[0,d-1]} (\overrightarrow{vu_i},\overrightarrow{vu_{i+1}})
  $. It is clear that $k_v \ge 1$. Let us prove that $k_v=1$ for every
  vertex $v$.
\begin{claim}
For any vertex $v$, its winding number $k_v=1$.
\end{claim}
\begin{proofclaim}
  In a flat torus representation of $G$, we can sum up all the angles
  by grouping them around the vertices or around the faces.
  $$\sum_{v\in V(G)} \sum_{u_i\in N(v)}
(\overrightarrow{vu_i},\overrightarrow{vu_{i+1}})
=
\sum_{F\in F(G)} \sum_{f_i\in F} (\overrightarrow{f_if_{i-1}},\overrightarrow{f_if_{i+1}})$$
The face being correctly oriented, they form convex polygons. Thus the
angles of a face $F$ sum at $(|F|-2)\pi$.
$$\sum_{v\in V(G)} 2k_v\pi = \sum_{F\in F(G)} (|F|-2)\pi$$
$$\sum_{v\in V(G)} k_v =\frac{1}{2}\sum_{F\in F(G)} |F| -f$$
$$\sum_{v\in V(G)} k_v = m - f$$
So by Euler's formula $\sum_{v\in V(G)} k_v =n$, and thus $k_v=1$ for
every vertex $v$.
\end{proofclaim}

Let $v$ be a vertex of $G$ that minimizes the number of loops whose
ends are on $v$. Thus either $v$ has no incident loop, or every vertex
is incident to at least one loop.

Assume that $v$ has no incident loop.  Let $v'$ be any copy of $v$ in
$G^\infty$ and denote its neighbors $(u_0,u_1,\ldots,u_{d-1})$ in
counterclockwise order. As $k_v =1$, the points
$u_0,u_1,\ldots,u_{d-1}$ form a polygon $P$ containing the point $v'$
and the segments $[v',u_i]$ for any $i\in [0,d-1]$.  It is well known
that any polygon, admits a partition into triangles by adding some of
the chords. Let us call $O$ the  outerplanar graph with outer
boundary $(u_0,u_1,\ldots, u_{d-1})$, obtained by this
``triangulation'' of $P$. Let us now consider the toroidal map $G' =
(G\setminus \{v\})\cup O$ and its periodic embedding obtained from the
mapping of $G^\infty$ by removing the copies of $v$.  It is easy to
see that in this embedding every face of $G'$ is correctly oriented
(including the inner faces of $O$, or the faces of $G$ that have been
shortened by an edge $u_iu_{i+1}$). Thus by induction hypothesis, the
mapping gives a straight-line representation of $G'^\infty$.  It is
also a straight-line representation of $G^\infty$ minus the copies of
$v$ where the interior of each copy of the polygons $P$ are pairwise
disjoint and do not intersect any vertex or edge. Thus one can add the
copies of $v$ on their initial positions and add the edges with their
neighbors without intersecting any edge. The obtained drawing is thus
a straight-line representation of $G^\infty$.

Assume now that every vertex is incident to at least one loop.  Since
these loops are non-contractible and do not cross each other, they
form homothetic cycles.  Thus $G$ is as depicted in
Figure~\ref{fig:polygonG}, where the dotted segments stand for edges
that may be in $G$ but not necessarily.  Since the mapping is periodic
the edges corresponding to loops of $G$ form several parallel lines,
cutting the plane into infinite strips.  Since for any $1\leq i\leq
n$, $k_{v_i}=1$, a line of copies of $v_i$ divides the plane, in such
a way that their neighbors which are copies of $v_{i-1}$ and their
neighbors which are copies of $v_{i+1}$ are in distinct
half-planes. Thus adjacent copies of $v_i$ and $v_{i+1}$ are on two
lines bounding a strip. Then one can see that the edges between copies
of $v_i$ and $v_{i+1}$ are contained in this strip without
intersecting each other.  Thus the obtained mapping of $G^\infty$ is a
straight-line representation.
\end{proof}

\begin{figure}[!h]
\begin{center}
\input{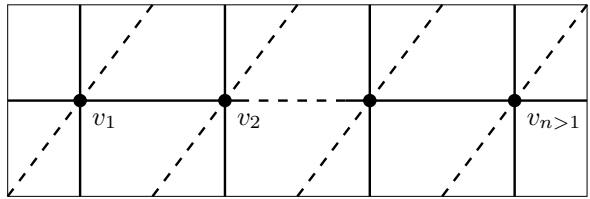}
\caption{The graph $G$ if every vertex is incident to a loop.}
\label{fig:polygonG}
\end{center}
\end{figure}

A plane is \emph{positive} if it has equation $\alpha x +\beta y +
\gamma z=0$ with $\alpha,\beta,\gamma \geq 0$.

\begin{theorem}
\label{th:straightline}
If $G$ is a toroidal triangulation given with a Schnyder wood, and
$V^\infty$ the set of region vectors of vertices of $G^\infty$. Then
the projection of $V^\infty$ on a positive plane gives a straight-line
representation of $G^\infty$.
\end{theorem}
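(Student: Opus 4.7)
The plan is to apply Lemma~\ref{th:faceorientation}: since $G$ admits a Schnyder wood it is an essentially 3-connected toroidal map by Lemma~\ref{lem:essentially}, and since $G$ is a triangulation every face of $G^\infty$ has three vertices, so being \emph{correctly oriented} in the sense of that lemma reduces to the single projected triangle being nondegenerate and counterclockwise. Once every face is correctly oriented and the projected mapping is periodic with respect to two non-collinear vectors in the plane, Lemma~\ref{th:faceorientation} delivers the straight-line representation.

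The core of the argument is a short computation. Let $\Pi$ be the positive plane of equation $\alpha x+\beta y+\gamma z = 0$ with $\alpha,\beta,\gamma\ge 0$ not all zero, let $\mathbf{n}$ be its (nonzero, nonnegative) unit normal, and let $\pi$ denote the orthogonal projection onto $\Pi$ oriented by $\mathbf{n}$. The standard identity for signed $2$D area gives that the signed area of $\pi(u),\pi(v),\pi(w)$ in $\Pi$ equals $\tfrac{1}{2}\,\mathbf{n}\cdot(\overrightarrow{vw}\wedge \overrightarrow{vu})$. For a face $F=(u,v,w)$ of $G^\infty$ listed counterclockwise, Lemma~\ref{lem:vectoriel} guarantees that $\overrightarrow{vw}\wedge \overrightarrow{vu}$ has strictly positive coordinates, so its dot product with the nonzero nonnegative vector $\mathbf{n}$ is strictly positive. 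Hence every projected face is a nondegenerate counterclockwise triangle, i.e.\ correctly oriented.

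It remains to check that $\pi(S)$ and $\pi(S')$ are non-collinear in $\Pi$. By Lemma~\ref{lem:periodic} the region-vector mapping is periodic with respect to $S,S'$, which are non-collinear in $\mathbb{R}^3$ by Lemma~\ref{lem:notcolinear}, so its projection is periodic with respect to $\pi(S),\pi(S')$. Directly from the definitions, $S_0+S_1+S_2=0$ and $S'_0+S'_1+S'_2=0$, so both $S$ and $S'$ lie in the plane $x+y+z=0$, which (by non-collinearity) is exactly their linear span. The projections $\pi(S),\pi(S')$ degenerate to collinear vectors only if $\mathbf{n}$ lies in this span, i.e.\ only if $\alpha+\beta+\gamma=0$, which is impossible for a nonzero positive normal. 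Invoking Lemma~\ref{th:faceorientation} now completes the proof. The only step that I expect to require a little care is this last non-degeneracy of the projected periods; thanks to $S$ and $S'$ already living in the plane $x+y+z=0$, it collapses to a one-line linear-algebra check, so no deep obstacle arises, and all of the real geometric content has already been absorbed into Lemma~\ref{lem:vectoriel}, whose strictly-positive cross-product is precisely engineered so that every projection onto a positive plane preserves face orientations.
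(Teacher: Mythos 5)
Your proof is correct and follows essentially the same route as the paper: project onto the positive plane, use Lemma~\ref{lem:vectoriel} to conclude that the dot product of the face's cross product with the nonnegative normal is strictly positive (hence every face is correctly oriented), and invoke Lemma~\ref{th:faceorientation}. Your additional verification that the projected period vectors $\pi(S),\pi(S')$ remain non-collinear (using $S_0+S_1+S_2=S'_0+S'_1+S'_2=0$ and $\alpha+\beta+\gamma>0$) is a detail the paper leaves implicit, and it is a welcome, correct addition.
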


\begin{proof}
  Let $\alpha,\beta,\gamma \geq 0$ and consider the projection of
  $V^\infty$ on the plane $P$ of equation $\alpha x +\beta y + \gamma
  z=0$. A normal vector of the plane is given by the vector
  $\overrightarrow{n}=(\alpha,\beta,\gamma)$. Consider a face $F$ of
  $G^\infty$. Suppose that $F$ is incident to vertices $u,v,w$ (given
  in counterclockwise order around $F$).  By
  Lemma~\ref{lem:vectoriel}, $(\overrightarrow{uv}\wedge
  \overrightarrow{uw}).\overrightarrow{n}$ is positive. Thus the
  projection of the face $F$ on $P$ is correctly oriented. So by
  Lemma~\ref{th:faceorientation}, the projection of $V^\infty$ on
  $P$ gives a straight-line representation of $G^\infty$.
\end{proof}

%We can now combine the different results to prove
%Theorem~\ref{cor:straightline}.
%\noindent \emph{Proof of Theorem~\ref{cor:straightline}.}  

Theorems~\ref{th:existence} and~\ref{th:straightline} implies
Theorem~\ref{cor:straightline}. Indeed, any toroidal graph $G$ can be
transformed into a toroidal triangulation $G'$ by adding a linear
number of vertices and edges and such that $G'$ is simple if and only
if $G$ is simple (see for example the proof of Lemma~2.3
of~\cite{Moh96}).  Then by Theorem~\ref{th:existence},
%(or Theorem~\ref{th:schnydersimple} if the graph is simple), 
$G'$ admits a
Schnyder wood.
% (where monochromatic cycles of different colors intersect just once
% if $G'$ is simple).
By Theorem~\ref{th:straightline}, the projection of the set of region
vectors of vertices of $G'^\infty$ on a positive plane gives a
straight-line representation of $G'^\infty$.  The grid where the
representation is obtained can be the triangular grid, if the
projection is done on the plane of equation $x+y+z=0$, or the square
grid, if the projection is done on one of the plane of equation $x=0$,
$y=0$ or $z=0$.
By Lemma~\ref{th:gamma}, and the choice of $N$, the obtained mapping
is a periodic mapping of $G^\infty$ with respect to non collinear
vectors $Y$ and $Y'$ where the size of these vectors is in $\mathcal
O(\gamma^2 n^2)$ with
$\gamma\leq n$ in general and $\gamma=1$ if the graph is simple and
the Schnyder wood obtained by Theorem~\ref{th:schnydersimple}.
By Lemma~\ref{lem:edgesbounded}, the
length of the edges in this representation are in $\mathcal O(n^3)$ in
general and in $\mathcal O(n^2)$ if the graph is simple.  When the
graph is not simple, there is a non contractible cycle of length $1$
or $2$ and thus the size of one of the two vectors $Y$, $Y'$ is in
$\mathcal O(n^3)$. Thus the grid obtained in
Theorem~\ref{cor:straightline} has size in $\mathcal O(n^3)\times
\mathcal O(n^4)$ in general and $\mathcal O(n^2)\times\mathcal O(n^2)$
if the graph is simple.

The method presented here gives a polynomial algorithm to obtain flat torus
straight-line representation of any toroidal maps in polynomial size
grids. Indeed, all the proofs leads to polynomial
algorithms, even the proof of Theorem~\ref{th:fij}~\cite{Fij} which  uses
results from Robertson and Seymour~\cite{RS86} on disjoint paths
problems.

It would be nice to extend Theorem~\ref{th:straightline}
to obtain convex straight-line representation for essentially
3-connected toroidal maps.

\section{Conclusion}

We have proposed a generalization of Schnyder woods to toroidal maps
with application to graph drawing. Along these lines, several
questions were raised. We recall some briefly:
\begin{itemize}
\item 
Does the set of Schnyder woods of a given toroidal map has a kind of
lattice structure ?
\item Does any simple toroidal triangulation
admits a Schnyder wood where the set of edges of each color induces a
connected subgraph ?
\item Is it possible to use Schnyder woods to embed the universal
  cover of a toroidal map on rigid or coplanar orthogonal surfaces ?
\item Which toroidal maps admits (primal-dual) contact
representation by (homothetic) triangles in a flat torus ?
\item Can geodesic embeddings be
used to obtain convex straight-line representation for essentially
3-connected toroidal maps ?
\end{itemize}

The guideline of Castelli Aleardi et al.~\cite{CFL09} to generalize
Schnyder wood to higher genus was to preserve the tree structure of
planar Schnyder woods and to use this structure for efficient
encoding. For that purpose they introduce several special rules (even
in the case of genus $1$).  Our main guideline while working on this
paper was that the surface of genus $1$, the torus, seems to be the
perfect surface to define Schnyder woods. Euler's formula gives
exactly $m=3n$ for toroidal triangulations. Thus a simple and
symmetric object can be defined by relaxing the tree constraint.  For
genus 0, the plane, there are not enough edges in planar
triangulations to have outdegree three for every vertex. For higher
genus (the double torus, ...) there are too many edges in
triangulations.  An open problem is to find what would be the natural
generalization of our definition of toroidal Schnyder woods to higher
genus.

The results presented here motivated~Castelli~Aleardi
and~Fusy~\cite{CF12} to developed direct methods to obtain straight
line representations for toroidal maps. They manage to generalize
planar canonical ordering to the cylinder to obtain straight-line
representation of simple toroidal triangulations in grids of size
$\mathcal O(n)\times \mathcal O(n^2)$ thus improving the size of our
grid that is $\mathcal O(n^2)\times \mathcal O(n^2)$ in the case of a
simple toroidal map. It should be interesting to investigate further
the links between the two methods as canonical ordering are strongly
related to Schnyder woods.

Planar Schnyder woods appear to have many applications in various
areas like enumeration~\cite{Bon05}, compact coding~\cite{PS06},
representation by geometric objects~\cite{FOR94, GLP11}, graph
spanners~\cite{BGHI10}, graph drawing~\cite{Fel01, Kan96}, etc. In
this paper we use a new definition of Schnyder wood for graph drawing
purpose, it would also be interesting to see if it can be used in
other computer science domains.

\section*{Acknowledgments}
The authors thank Nicolas Bonichon, Luca Castelli Aleardi and Eric
Fusy for fruitful discussions about this work. They also thank a
 student Chloé Desdouits for developing a software to visualize
orthogonal surfaces.

\end{document}